\newtheorem{theorem}{Theorem}[section]
\newtheorem{corollary}[theorem]{Corollary}
\newtheorem{lemma}[theorem]{Lemma}
\newtheorem{assumption}{Assumption}
\theoremstyle{remark}
\newtheorem{remark}[theorem]{Remark}
\newtheorem{example}[theorem]{Example}
\theoremstyle{definition}
\newtheorem{definition}[theorem]{Definition}
\newcommand{\R}{\mathbb{R}}
\newcommand{\cG}{\mathcal{G}}
\newcommand{\indep}{\rotatebox[origin=c]{90}{$\models$}}
\renewcommand{\P}{\mathbb{P}}
\newcommand{\E}{\mathbb{E}}
\newcommand{\TDP}{\mathrm{TDP}}
\newcommand{\TDR}{\mathrm{TDR}}
\newcommand{\FDR}{\mathrm{FDR}}
\newcommand{\BH}{\mathrm{BH}}
\newcommand{\FDP}{\mathrm{FDP}}
\newcommand{\mtc}{\mathcal}
\newcommand{\wh}[1]{{\widehat{#1}}}
\newcommand{\ol}[1]{\overline{#1}}
\newcommand{\ind}[1]{{\mathds{1}\left\{#1\right\}}}
\newcommand{\cH}{{\mtc{H}}}
\renewcommand{\l}{\ell}
\newcommand{\rev}[1]{\textcolor{black}{#1}}
\definecolor{mulberry}{rgb}{0.77, 0.29, 0.55}
\newcommand{\stkout}[1]{\ifmmode\text{\sout{\ensuremath{#1}}}\else\sout{#1}\fi}
\newcommand{\lrt}{r}
\DeclareMathOperator*{\argmax}{arg\,max}
\DeclareMathOperator*{\argmin}{arg\,min}
\definecolor{mygreen}{rgb}{0.82, 1.0, 0.82}
\definecolor{myred}{rgb}{ 1.0, 0.84, 0.84}
\begin{document}

\begin{frontmatter}

\title{
Adaptive Novelty Detection with false discovery rate guarantee\\
}
\runtitle{
Adaptive Novelty Detection with FDR guarantee
}

%* "Adaptive Novelty Detection with FDR guarantee"
%* "Novelty Detection with FDR guarantee: classifiers help"
%* "AdaDetect: Classification-based test statistics for Novelty Detection with FDR Guarantee"
%* "FDR control with classification-based test statistics"

\centerline{Ariane Marandon, Lihua Lei, David Mary and Etienne Roquain} 

\begin{abstract}
%Classical false discovery rate (FDR) controlling procedures offer strong and interpretable guarantees but often lack flexibility to work with complex data.  By contrast, machine learning-based classification algorithms % , as those based on random forests (RF) or neural networks (NN),
%have superior performances on modern datasets but typically fall short of error-controlling guarantees. In this paper, we make these two meet by introducing a new adaptive novelty detection procedure with FDR control, called AdaDetect. It extends the scope of recent works of multiple testing literature to the high dimensional setting, notably  the one in \cite{yang2021bonus}. 
%We prove that AdaDetect comes with finite sample guarantees: it controls the FDR strongly and approximates the oracle in terms of the power, with explicit remainder terms that are small under mild conditions.
%%When combined with machine learning-based classifiers, AdaDetect achieves high power on synthetic data, benchmark real-world datasets, and an astrophysical application. 
%In practice, AdaDetect can be used in combination with {\it any} machine learning-based classifier, which allows the user to choose the most relevant classification approach. We illustrate this with classical real-world datasets, for which random forest and neural network classifiers are particularly efficient. The versatility of our method is also shown with an astrophysical application.
  This paper studies the semi-supervised novelty detection problem where a set of ``typical'' measurements is available to the researcher. Motivated by recent advances in multiple testing and conformal inference, we propose AdaDetect, a flexible method that is able to wrap around any probabilistic classification algorithm and control the false discovery rate (FDR) on detected novelties in finite samples without any distributional assumption other than exchangeability. In contrast to classical FDR-controlling procedures that are often committed to a pre-specified $p$-value function, AdaDetect learns the transformation in a data-adaptive manner to focus the power on the directions that distinguish between inliers and outliers. Inspired by the multiple testing literature, we further propose variants of AdaDetect that are adaptive to the proportion of nulls while maintaining the finite-sample FDR control. The methods are illustrated on synthetic datasets and real-world datasets, including an application in astrophysics.
\end{abstract}

\begin{keyword}
\kwd{adaptive multiple testing}\kwd{novelty detection}\kwd{false discovery rate}\kwd{conformal $p$-values} \kwd{machine learning} \kwd{classification} \kwd{neural network}\kwd{knockoff}
%\kwd{machine learning classification}
\end{keyword}

%This work presents a framework and a general method (“wrapper”) for FDR control in multiple hypothesis testing (which the authors call “novelty detection”), that relax many of the assumptions underlying the conventional treatment of the problem. Specifically, (i) the statistician is assumed to have access to only samples from the null distribution, rather than knowing the distribution itself, and (ii) the methods rely on (conditional) exchangeability of the (augmented set of) null statistics rather than independence. The main point is that this allows to control FDR while learning the score statistic (equivalently, the $p$-value function) from the data rather than using a pre-specified score function, hence to increase power. The basic idea has already been presented in Yang et al (2021); the main methodological novelty here is that the proposed framework allows to deal with an unknown null distribution, by employing a simple holdout strategy. Supporting theory is given by analyzing Type I and Type II errors. Some of the results on FDR control do seem to be of potential independent interest (although I am not fluent enough to tell how significant the contributions are over the referenced existing work); and the results given for power comparisons to the “oracle” really strengthen the appeal of the paper.

\end{frontmatter}

%\tableofcontents

\section{Introduction}

\subsection{Novelty detection}\label{sec:setting}

In this paper, we consider a novelty detection problem (see, e.g., \cite{BLS2010} and references therein) where we observe:
\begin{itemize}
\item a null training sample (NTS hereafter) $Y = (Y_1,\dots,Y_n)$ of ``typical" measurements where $Y_i$s share a common marginal distribution $P_0$ which we refer to as the null distribution;
\item and a test sample $X = (X_1,\dots,X_m)$ of ``unlabeled" measurements for which the marginal distribution of $X_i$ is denoted by $P_i$, which might be different from $P_0$.
\end{itemize}
These measurements are assumed to take values in a general space $\mathcal{Z}$ endowed with a prescribed $\sigma$-field. For example, the space can be the set of real matrices ($\mathcal{Z}=\R^{d\times d'}$) or real vectors ($\mathcal{Z}=\R^d$), whose dimension is potentially large.

Putting two samples together, we observe
$
Z = (Z_1,\dots,Z_{n+m})=(Y_1,\dots,Y_{n},X_1,\dots,X_m).
$
The aim is to detect novelties, namely $X_i$s with $P_i\neq P_0$.
This task is illustrated in Figure~\ref{fig:intro} on a classical image dataset, where we want to detect hand-written digit `$9$'s in the test sample based on an NTS of digits `$4$'. 
The procedure, that declares as novelties the images with red boxes, can make false discoveries (digit `$4$') and true discoveries (digit `$9$'). 

To avoid false positives that might be costly in practice, we seek to control the false discovery rate (FDR), defined as the average proportion of errors among the discoveries, while attempting to maximize the true discovery rate (TDR), defined as the average portion of detected novelties. 
FDR has been a very popular criterion in multiple testing and exploratory analysis since its introduction by \cite{BH1995}; see \cite{benjamini2010discovering} for a detailed discussion and  \cite{BC2015,Slope2015,javanmard2019false,Barber2020robust,ma2021global} for recent developments, among others.

\begin{figure}[t!]
\begin{center}
\begin{tabular}{ccc}
 \hspace{-1.5cm}
Null sample&\hspace{-3cm}
Test sample&\hspace{-2cm} 
Procedure
 \vspace{-.5cm}
\\
\hspace{-1.5cm}
\includegraphics[scale=0.2]{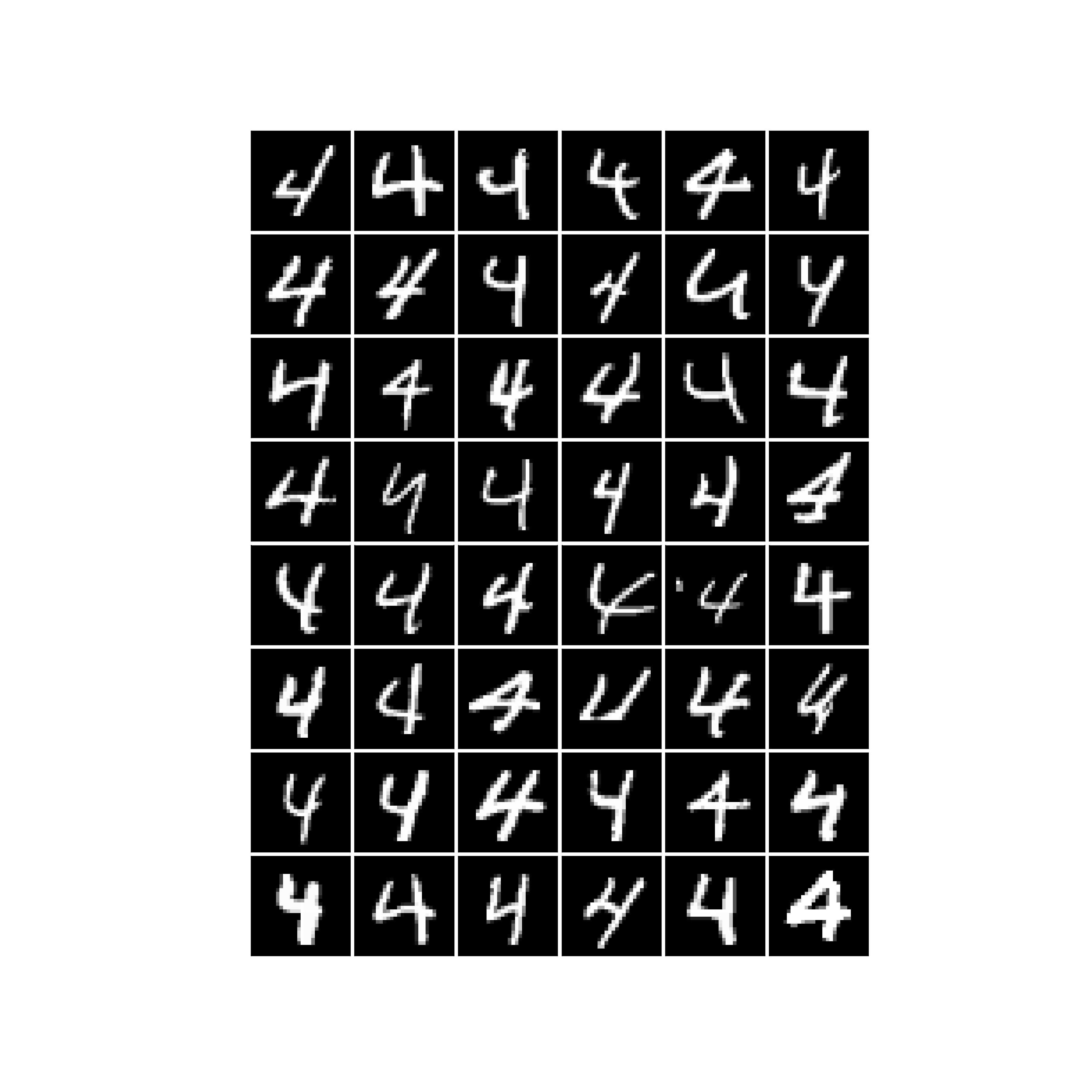}&\hspace{-3cm}
\includegraphics[scale=0.2]{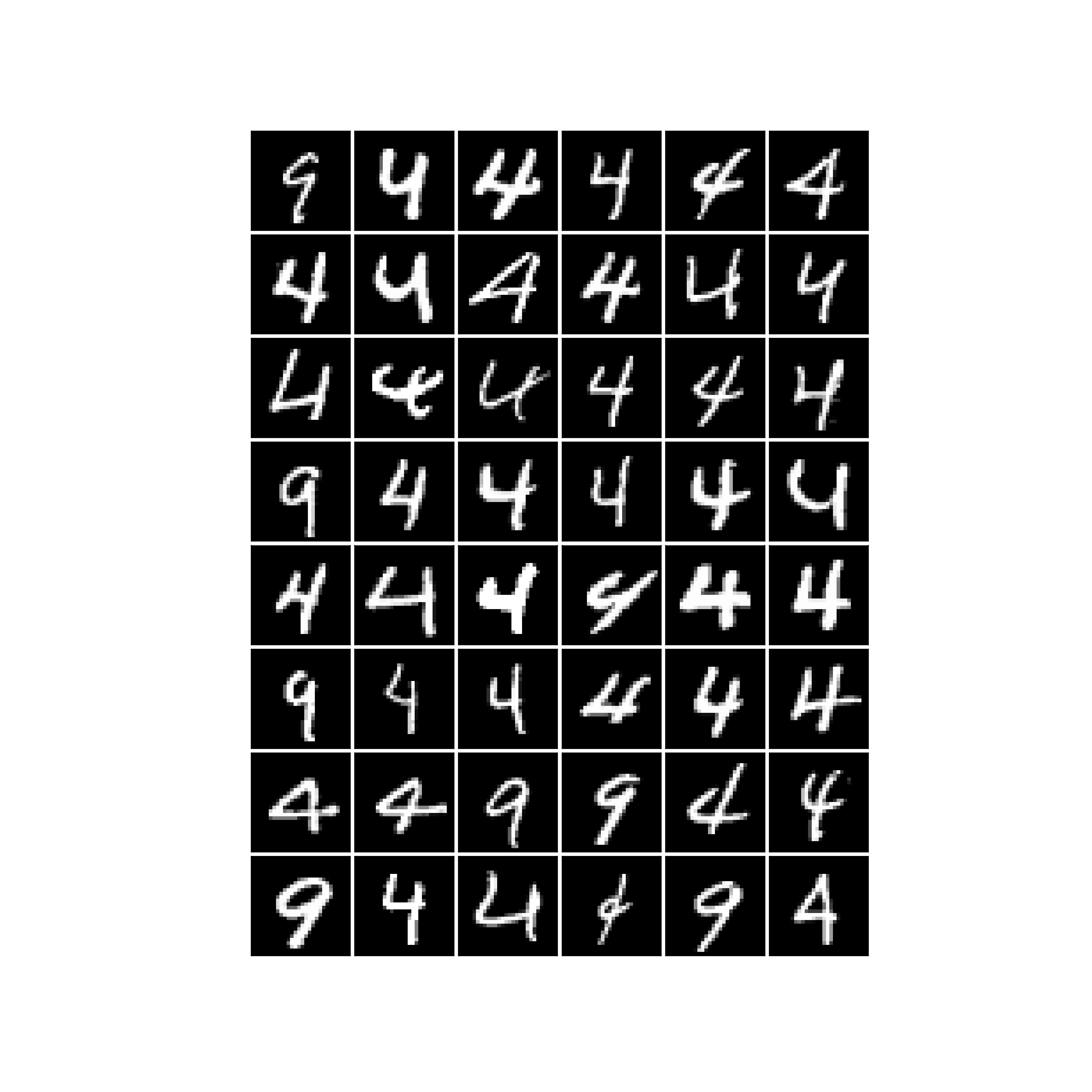}&\hspace{-2cm} 
\includegraphics[scale=0.2]{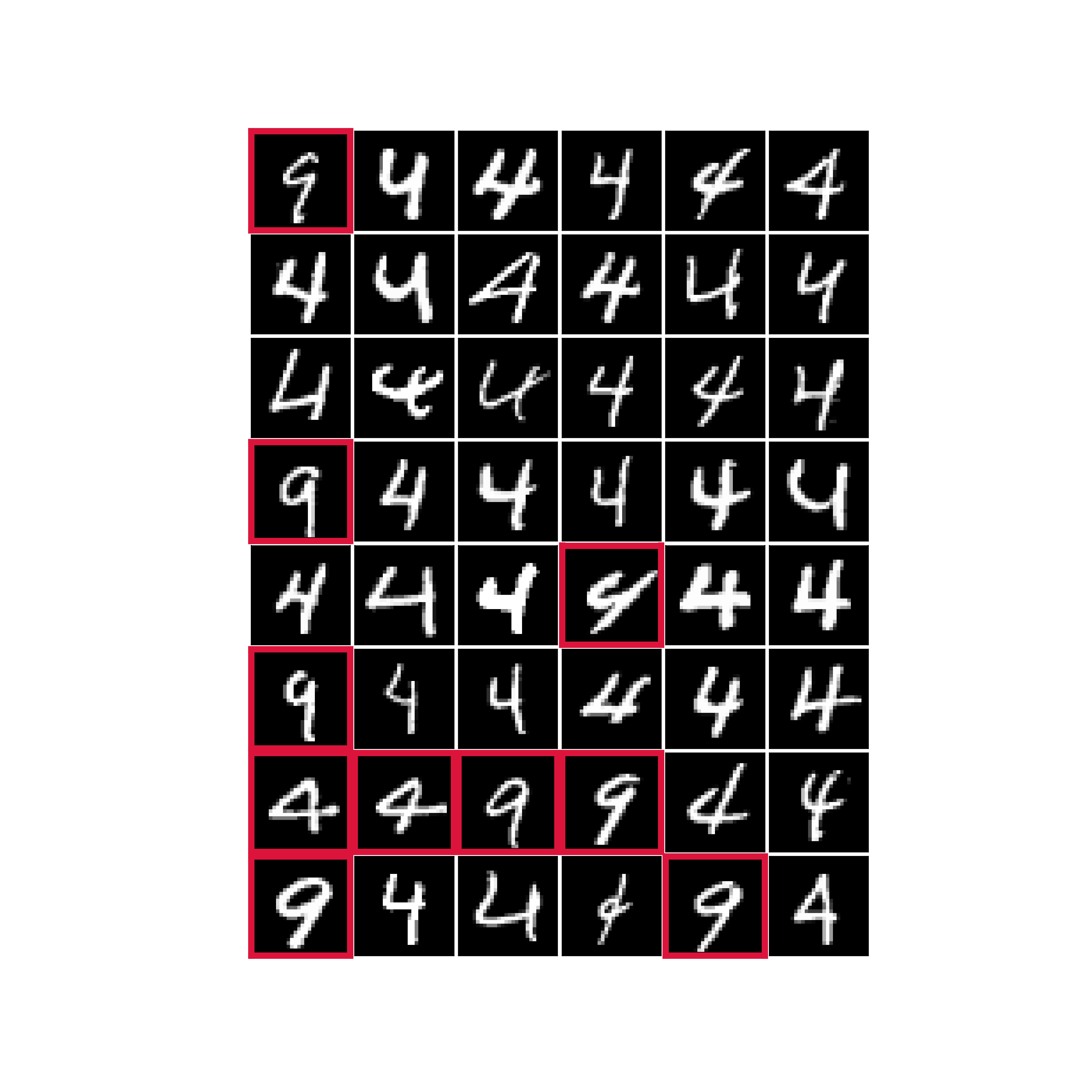} 
\end{tabular}
\end{center}
 \vspace{-.5cm}
\caption{Illustration of the novelty detection task on the MNIST dataset \citep{MNIST}; see Section~\ref{sec:realdata} for more details on the setting.} 
\label{fig:intro}
\end{figure}

\subsection{Existing strategies}

 For a standard multiple testing problem where the null distribution $P_0$ is known, the celebrated Benjamini-Hochberg (BH) procedure \citep{BH1995} controls the FDR in finite samples uniformly over all alternative distributions, when the test statistics are independent or satisfy the positive regression dependency on each one from a subset (PRDS) property; see \cite{BH1995,BY2001}.   
Variants of the BH procedure have been proposed to relax the conservatism when the fraction of true nulls is not close to $1$, such as the Storey-BH \citep{STS2004} or Quantile-BH procedure  \citep{BKY2006,Sar2008,BR2009}, and to robustify the FDR control under more general dependence structures (see \citealp{fithian2020conditional} and references therein).

%Despite this generality, the BH-like methods are based on $p$-values or, more generally, univariate scores with a known distribution under the null that are often hard to come by for novelty detection problems with multivariate measurements, for mainly two reasons:
Despite this generality,  BH-like methods have two major limitations in novelty detection problems with multivariate measurements:
\begin{itemize}
 \item[(i)] it is  based on $p$-values or, more generally, univariate scores with a {\it known distribution under the null}, which is typically out of reach for such problems;
% {\it the null distribution of the test statistics is unknown} and potentially complex;
\item[(ii)] the score function that transforms the multivariate measurements into univariate test statistics (e.g., the $p$-value transformation) is {\it pre-specified}, while it should be {\it learned from data}  for the sake of power.
%when $d > 1$, it would be ideal to {\it learn from data} the score function that transforms the multivariate measurements into $p$-values or univariate test statistics for the sake of power. However, it results in {\it double-dipping}, a potential threat of FDR inflation. 
\end{itemize}
We now discuss several existing solutions that partially circumvent these limitations. Table~\ref{Table1} provides a summary of the properties of each method, along with the corresponding applicable settings. 

A popular solution in the multiple testing literature is the empirical Bayes approach, which operates on the local FDR instead of the $p$-values. Assuming a two-group mixture model \citep{ETST2001}, the local FDR is defined as the probability of being null conditional on the observed measurement values. The latter can be estimated by estimating the null and alternative densities together with the proportion of nulls; see \cite{Efron2004, Efron2007b,  Efron2008,Efron2009b}. Combining local FDRs appropriately controls FDR asymptotically, under the assumptions that allow the model to be consistently estimated, and achieves optimal power, as shown in a series of paper by \cite{SC2007,CS2009,SC2009,CARS2019}. We refer to this procedure as the SC procedure hereafter.
Despite the appealing optimality guarantees, the model assumptions tend to be fragile when the dimension $d$ of the test statistics is moderately high. In such cases, accurate model estimation is hard to come by and the FDR of the SC procedure can thus be inflated; see our numerical experiments in Section~\ref{sec:numexp} for an illustration. 

Another line of research stems from conformal inference. While this technique is designed for prediction inference (see \cite{angelopoulos2021gentle} for a recent review), it can also be employed in the novelty detection problem. In particular, it can generate {\it conformal $p$-values} that are super-uniform under the null without any model assumption beyond that the data are exchangeable \citep[e.g.,][]{vovk2005algorithmic, balasubramanian2014conformal, bates2021testing}.
This approach starts by transforming $Z_j$ into a univariate score $S_j$, called the {\it non-conformity score}, that measures the conformity to the data and then computes an {\it empirical $p$-value}, also known as the conformal $p$-value, to evaluate the statistical evidence of being a novelty:
\begin{equation}\label{equemppvalues}
p_j=(n+1)^{-1}\Big(1+\sum_{i=1}^n \ind{S_i\geq S_{n+j}}\Big).
\end{equation} 
Each $p$-value is marginally super-uniform under the null due to exchangeability and hence yields a valid test. Nonetheless, since the conformal $p$-values all use the same null sample, the above operation induces dependence between the $p$-values, making it unclear whether common multiple testing procedures are guaranteed to control FDR. \cite{bates2021testing} carefully study the dependence structure and show that the split (or inductive) conformal $p$-values are PRDS. As a consequence, BH procedure applied on these conformal $p$-values controls the FDR. 
However, the approach limits the construction of the scores to be based solely on null examples and hence cannot learn the patterns of novelties in the mixed samples, unless extra labelled novelties are available \citep{liang2022integrative}, which are not always possible. Even when labelled novelties are present, they may behave differently than the ones in the mixed sample that we aim to detect. For this reason, 
\cite{bates2021testing} apply the one class classification techniques \citep[e.g.][]{scholkopf2001estimating} that are not adaptive to the novelties. In sum, while the method successfully solves the issue (i), it falls short of adequately addressing issue (ii). On the other hand, while other versions of conformal $p$-values, like full conformal $p$-values \citep{vovk2005algorithmic} and cross conformal $p$-values \citep{vovk2015cross, barber2021predictive}, can use test samples and yield marginally valid $p$-values, they generally fail to satisfy the PRDS property, making it unclear whether the BH procedure would control FDR. 

A subsequent work by \cite{yang2021bonus} proposes the Bag Of Null Statistics (BONuS) procedure for multiple testing problems with high dimensional test statistics, which largely motivates our method. The BONuS procedure learns a score function of the form $S_i=g(Z_i,(Z_1,\dots,Z_n))$ and the method is valid as long as $g(Z_i,\cdot)$ is permutation invariant thereby allowing the transformation to be adapted to novelties. While the framework is flexible, they focus on the parametric setting where the null distribution is known, like Gaussian or multinomial, and the measurements are independent. In these cases, they propose using the estimated local FDR as the score function for which the alternative distribution and null proportion are learned by an empirical Bayes approach. The BONuS procedure controls the FDR in finite samples regardless of the quality of the estimates, even if the working model is completely wrong. However, for novelty detection problems, the local FDR involves unknown null and alternative densities, which are difficult to fit in high dimensions. %\cite{yang2021bonus} alluded to several potential learning schemes \citep[e.g.][]{alishahi2016generalized, fithian2017family}, but it is unclear if those methods are general enough to handle highly complex and structured data. 
Hence, point (ii) mentioned earlier remains partially addressed.

Lastly, we briefly review other related work that study different settings. The ``counting knockoffs" procedure introduced by \cite{weinstein2017power} is designed for multiple testing for high-dimensional linear models with random design matrices. \cite{mary2021semisupervised} show that it is equivalent to applying the BH procedure to the scores $S_1, \ldots, S_{n+m}$ and closely related to the BONuS procedure. More recently, \cite{rava2021burden} develop a method that is equivalent to applying the BH procedure on the conformal $p$-values to obtain a finite sample control of the false selection rate (FSR) for the task of (supervised) classification.

%\et{Finally, a contemporary paper ... quick comparison to Matteo's work. Different setting... Added Labeled outlier and used them for the decision. The outlier examples are quite messy so we have to be careful... here we }

%\ar{\cite{liang2022integrative}: extension of \cite{bates2021testing} in a different setting where a labeled sample of novelties is available. 
%\cite{liang2022integrative} argues that a given class of models is not necessarily the best depending on the sample sizes. They propose to integrate the labeled outlier sample in a way that allows the use of both binary and one-class classification algorithms. Then automatic model selection may be performed similarly as in previous work \citep{yang2021bonus}. Note that this is an instance of split conformal novelty detection, i.e. the alternative is not learned. 
%}

\begin{table}
\begin{tabular}{|c|c|c|c|c|}
\hline
 & Finite sample & Adaptative  & Learning   & Unknown  \\
$^{\displaystyle{\textrm{Method}}}$ &  FDR control &  score &  alternative  &  null \\
\hline\hline
\cite{BH1995} &\cellcolor{mygreen} yes & \cellcolor{myred}no & \cellcolor{myred}no & \cellcolor{myred}no\\
\hline
\cite{SC2007} &  \cellcolor{myred}no &\cellcolor{mygreen}yes & \cellcolor{mygreen}yes & \cellcolor{mygreen}yes \\
\hline
\cite{weinstein2017power} & \cellcolor{mygreen} & \cellcolor{myred} & \cellcolor{myred}&  \cellcolor{mygreen} \\
\cite{mary2021semisupervised} & \cellcolor{mygreen} $^{\displaystyle{\textrm{yes}}}$ & \cellcolor{myred}$^{\displaystyle{\textrm{no}}}$ & \cellcolor{myred} $^{\displaystyle{\textrm{no}}}$&  \cellcolor{mygreen}$^{\displaystyle{\textrm{yes}}}$ \\
\hline
%\multirow{2}{2.75cm}{\centering \cite{weinstein2017power} \\ \cite{mary2021semisupervised}}  & \multirow{2}{*} {\cellcolor{mygreen} yes }&  \multirow{2}{*}  {\cellcolor{myred}no} &  \multirow{2}{*} {\cellcolor{myred} no}& \multirow{2}{*}  {\cellcolor{mygreen}yes} \\
%{\centering \cite{weinstein2017power}}  &\cellcolor{mygreen} &  \cellcolor{myred} & \cellcolor{myred} & \cellcolor{mygreen} \\
%\cite{mary2021semisupervised} &\cellcolor{mygreen}  yes &  \cellcolor{myred}no & \cellcolor{myred}no & \cellcolor{mygreen} yes \\
%	\multirow{2}{0.75cm}{\centering Test\\ No} & \multirow{2}{*}{ yes}  & \multirow{2}{*}{ yes}  & \multirow{2}{*}{ yes} & \multirow{2}{*}{ yes} \\
%	\hline
\cite{bates2021testing} &  \cellcolor{mygreen}yes & \cellcolor{mygreen}yes &  \cellcolor{myred}no &\cellcolor{mygreen}yes \\
\hline
\cite{yang2021bonus} &\cellcolor{mygreen}yes & \cellcolor{mygreen}yes &\cellcolor{mygreen}yes & \cellcolor{myred}no    \\
\hline
AdaDetect (our approach) & \cellcolor{mygreen}yes &\cellcolor{mygreen}yes &\cellcolor{mygreen}yes &\cellcolor{mygreen}yes\\
\hline
\hline
\end{tabular}\\
\caption{Properties of different methods and the specific settings in which they can be applied for novelty detection.}
\label{Table1}
\end{table}

\subsection{Contributions}

In this work we introduce AdaDetect, an extension\footnote{More precisely, we extend the version of BONuS where the score function is fit only in the initial stage; see the discussion in Section~\ref{sec:discussion} for more details.} of the BONuS procedure for novelty detection problems. In particular, we show how to leverage flexible off-the-shelf classification algorithms in machine learning to address both issues (i) and (ii) without compromising the FDR-controlling guarantees. 
\rev{In a nutshell, AdaDetect operates by initially splitting the null sample in two parts, $(Y_1, \ldots, Y_k)$ and $(Y_{k+1}, \ldots, Y_n)$, generating a membership label $A_{j} = -1$ if $Z_{j}\in \{Y_1, \ldots, Y_k\}$ and $A_{j} = 1$ otherwise, and subsequently calculating a score function using a binary classifier trained on $(Z_i, A_i)_{i=1}^{n+m}$ and applying the BH procedure on the empirical $p$-values. For the example illustrated in Figure~\ref{fig:intro}, Adadetect would split the null samples (digits `$4$') into two subsets and train a probabilistic classifier using both the null and test samples to distinguish the first subset of the null sample and the mix of the second subset of the null sample and the test sample (digits `$4$' and `$9$'). The predicted probability to be in the mixed sample is taken as the score. When the classification algorithm performs well, the scores tend to be larger for novelties than for nulls, because novelties are only present in the mixed sample. A comprehensive description of the procedure can be found in Section~\ref{sec:adaptiveteststat}.}
%\et{Explain AdaDetect? coming back to \eqref{fig:intro}}

We summarize our main results below. % Our main findings can be summarized as follows.

\begin{itemize}

\item In Section \ref{sec:control}, we revisit the theoretical guarantees in \cite{weinstein2017power,mary2021semisupervised,bates2021testing} and provide new FDR bounds based on an extension of the leave-one-out technique in the multiple testing literature. The bounds show that AdaDetect, as well as its $\pi_0$-adaptive variants Storey-AdaDetect and Quantile-AdaDetect, controls the FDR in finite samples with {\it arbitrary} classification algorithms even if the algorithm performs poorly. This is in sharp contrast to the SC procedure which heavily relies on correct model specification and consistent estimation. 
%   \et{This shows that {\it any} classification algorithm can be used into AdaDetect while keeping the  finite sample FDR control, even when the model is wrong or the estimation is low-quality.
% This is in sharp contrast to other existing methods with data-adaptive score functions which may substantially inflate the FDR under model misspecification or large estimator error (e.g., for the SC procedure).
% }
\item In addition, we extend the result in \cite{bates2021testing} to show that the empirical $p$-values are PRDS under a  more general exchangeability assumption, even if the score function depends on both null and test samples. 
For instance, our condition covers the Gaussian distributions with equi-correlation (Example~\ref{ex:equi}).
This PRDS property suggests that the resulting $p$-values can be applied in other contexts beyond the FDR control (e.g., \citealp{GS2011}).

%\et{Talk about equi-correlation Example 3.1 here?}

\item In Section~\ref{sec:calibration}, %\sout{we show that {\it any} score function that is monotone in the Bayes classifier yields the optimal power. As a consequence, the Bayes classifier to distinguish between the null and mixed samples is efficient in large samples despite that the mixed sample is contaminated by nulls and the data splitting would not reduce the asymptotic power} 
{we show that {\it any} score function that is monotone in the ratio between the average density of novelties and the null density yields the optimal power. In particular, the optimal classifier to distinguish between the null and mixed samples is efficient despite that the null training is split and that the mixed sample is contaminated by nulls}. The optimal score function can be obtained by minimizing certain loss function such as the cross-entropy loss that is commonly used in neural networks (NN hereafter). % \et{This justifies that the training step of AdaDetect aims at approaching the correct oracle, and in particular it shows that the splitting operation does not deteriorate the quality of the targeted  function.}
  % Motivated by this observation, our proposal  is to train a classifier that distinguishes $(Y_1, \ldots, Y_k)$ from the remaining observations $(Y_{k+1},\dots,Y_n,X_1,\dots,X_m)$ as the score function. 
%Combining this with the first contribution listed above, this shows that {\it any} classification algorithm can be used into AdaDetect while keeping the  finite sample FDR control, even when the model is wrong or the estimation is low-quality. 
%this shows that {\it any} classification algorithm can be used into AdaDetect while keeping the  finite sample FDR control, even when the model is wrong or the estimation is low-quality.
%This is in sharp contrast to other existing methods with data-adaptive score functions which may substantially inflate the FDR under model misspecification or large estimator error (e.g., for the SC procedure).

\item We provide non-asymptotic power analyses for AdaDetect in Section~\ref{sec:power}. First, we investigate AdaDetect with the score function given by a constrained empirical risk minimizer (ERM) of the 0-1 loss and show it approaches the optimal likelihood ratio test in an appropriate sense. Next, we provide an oracle inequality for general score functions and conditions under which the procedure mimics its oracle version. We apply the results to analyze power for AdaDetect procedures based on NN and on non-parametric kernel density estimation. 
\item  We demonstrate the efficiency, flexibility, and robustness of AdaDetect\footnote{The code is publicly available at \href{https://github.com/arianemarandon/adadetect}{https://github.com/arianemarandon/adadetect}} in Sections~\ref{sec:numexp}~and~\ref{sec:appli} on synthetic, semi-synthetic, and real datasets, including the MNIST image dataset and an astronomy dataset from the 'Sloan Digital Sky Survey'.
\end{itemize}

%\et{Summary of R2 on power: The power analysis uses a classical argument to show that the most efficient decision rule for separating inliers from outliers in the test set, while controlling a modified version of the FDR (the mFDR), is that based on an imaginary oracle classifier which knows the likelihood ratio between the clean reference sample (which contains only inliers) and the mixed test set (which may contain both inliers and outliers). Thus, AdaDetect can be seen as approximating this ideal oracle when applied using machine learning classifiers minimizing certain loss functions, such as the cross-entropy loss, which ideally should lead to classification scores approximating (up to some irrelevant monotone transformation) the desired likelihood ratio. Finally, numerical experiments with synthetic and real data show empirically several examples in which AdaDetect performs favorably compared to the “standard” conformal inference method based on one- class classifiers.}

\section{Preliminaries}

\subsection{Notation}

As in Section \ref{sec:setting}, we let $Y = (Y_1, \ldots, Y_n)$ denote the null training sample (NTS) with a common marginal distribution $P_0$, $X = (X_1, \ldots, X_m)$ the test sample with $X_i\sim P_i$ ($1\leq i\leq m$), $Z = (Z_1, \ldots, Z_{n+m}) =(Y_1, \ldots, Y_n, X_1, \ldots, X_m)$  the full sample, $\cH_0=\{1\leq i\leq m\::\: P_i = P_0\}$ the set of nulls in the test sample  with $m_0 = |\cH_0|, \pi_0 = m_0 / m$, and $\cH_1=\{1,\dots,m\}\setminus \cH_0$ the set of novelties  with $m_1 = |\cH_1|, \pi_1 = m_1 / m$. 
For notational convenience, we write $n+\cH_0$ for the set $\{n+i,i\in \cH_0\}$.
Furthermore, we denote by $P$ the joint distribution of $Z$, which belongs to a family of distributions $\mathcal{P}$ (model). 

Throughout the paper, we consider the semi-supervised setting \citep{mary2021semisupervised} where the null distribution $P_0$ is unknown and one can access it only through the measurements in the NTS. In practice, the NTS can be obtained from external data, past experiments or black-box samplers. % that can generate samples from the nulls. 

% there are different settings where the null sample is available.
% \begin{itemize}
% \item it can be available from external data/previous experiments, as it is the case in Figure~\ref{fig:intro};
% \item the user can have a ``null sampling machine'' ({e.g., built from experts}), which allows to generate examples of nulls;
% \item the null distribution $P_0$ can in fact  be known (classical multiple testing case), and the NTS is obtained by simply sampling from it. 
% \end{itemize}

\subsection{Criteria}

A novelty detection procedure is a measurable function $R(\cdot)$ that takes $Z$ as input and returns a subset of $\{1,\dots,m\}$ corresponding to the indices of detected novelties within $\{X_1, \ldots, X_m\}$. Throughout the paper, we will slightly abuse the notation by using $R$ to refer to both the procedure and the rejection set given by the procedure. Ideally, we want $R(Z)$ to capture novelties (i.e., alternative hypotheses in $\cH_1$) and avoid inliers (i.e., null hypotheses in $\cH_0$). 
Given a procedure $R$, the false discovery rate (FDR) is defined as the expectation of the false discovery proportion (FDP) with respect to the distribution $P\in \mathcal{P}$:
\begin{align}
\FDR(P,R)&= \E_{Z\sim P}[\FDP(P,R)],\:\:\: \FDP(P,R)=\frac{\sum_{i\in \cH_0} \ind{i\in R}}{1\vee |R|}.\label{equFDRFDP}
\end{align}
Similarly, the true discovery rate (TDR) is defined as the expectation of the true discovery proportion (TDP):
\begin{align}
\TDR(P,R)&= \E_{Z\sim P}[\TDP(P,R)],\:\:\: 
\TDP(P,R)=\frac{\sum_{i\in \cH_1} \ind{i\in R}}{1\vee m_1(P)}.\label{equTDRTDP}
\end{align}
Note that $m_1(P)=0$ implies $\TDP(P,R)=0$. Our goal is to build a procedure $R$ that controls the FDR and maximizes the TDR to the fullest extent.

\subsection{BH algorithm and its $\pi_0$-adaptive variants}

Suppose a set of $p$-values $(p_i,1\leq i\leq m)$ is available, the BH algorithm \citep{BH1995} returns
$
R = \{i\in \{1,\dots,m\}\::\:p_i\leq \alpha \hat{k}/m\},
$ where $\alpha$ is the target FDR level and 
\begin{equation}\label{equkchapeau}
\hat{k} = \max\left\{k\in \{0,\dots,m\}\::\: \sum_{i=1}^m \ind{p_i\leq \alpha k/m}\geq k\right\}.
\end{equation}
When the null $p$-values $(p_i,i\in\cH_0)$ are independent, super-uniform, and independent of alternative $p$-values $(p_i, i\in \cH_1)$, the BH procedure is proved to control the FDR at level $\alpha \pi_0$ in finite samples \citep{BH1995}. The independence assumption can be further relaxed to the PRDS condition \citep{BY2001}.

When $\pi_0$ is not close to $1$, the BH procedure is conservative because $\alpha \pi_0 < \pi_0$. When $\pi_0$ is known, it can be applied at level $\alpha / \pi_0$ to close the gap. In practice, $\pi_0$ is usually unknown though. Nonetheless, there exists estimators $\hat{\pi}_0$ of $\pi_0$ such that the BH procedure with level $\alpha / \hat{\pi}_0$ continues to control the FDR under independence. Two celebrated estimators are introduced by \cite{STS2004} and \cite{BKY2006}: 
\begin{align}
\wh{\pi}^{Storey}_0 &=\frac{1+\sum_{i=1}^{m} \ind{p_i\geq \lambda}}{m(1-\lambda)},\quad \lambda>0;\label{estiSto}\\
\text{or }\quad \wh{\pi}^{Quant}_0 &=\frac{m-k_0+1 }{m(1-p_{(k_0)})},\quad k_0\in \{1,\dots,m\}\label{estiQuant},
\end{align}
where $p_{(k_0)}$ is the $k_0$-th smallest\footnote{In this paper, a convention is to order the $p$-values from the smallest to the largest, while the test statistics are ordered from the largest to the smallest.} $p$-value.
These procedures are often called the $\pi_0$-adaptive versions of the BH algorithm. 

\subsection{Our method}\label{sec:adaptiveteststat}

\begin{figure}[t!]
\begin{center}
% \begin{tikzpicture}
% \draw [thick,dotted] (-0.13,0.5) -- (4.5,0.5);
% \draw [thick,dotted] (-0.13,-0.5) -- (4.5,-0.5);
%  \draw [thick,dotted] (-0.13,-0.5) -- (-0.13,0.5);
%  \draw [thick,dotted]  (4.5,-0.5)-- (4.5,0.5);
%    \node[rectangle,draw=none] (r) at (-3,0) {Learn $g$\:\:\:\:};
%    \node[rectangle,draw] (r) at (0,0) {$Y_1,\dots,Y_k\:,\:Y_{k+1},\dots,Y_n$};
%   \node[rectangle,draw] (r) at (3.2,0) {\,\:$X_1\:,\:\:\dots\:\:,\:X_m$\:\,};
%   \node[rectangle,draw=none] (r) at (5,0) {$\in \mathcal{Z}$};
% \draw [thick,dotted] (-0.29,0.5) -- (-1.85,0.5);
% \draw [thick,dotted] (-0.29,-0.5) -- (-1.85,-0.5);
%  \draw [thick,dotted] (-0.29,0.5) -- (-0.29,-0.5);
%  \draw [thick,dotted]  (-1.85,0.5) -- (-1.85,-0.5);
%%%%%%%%%%%%
%     \node[rectangle,draw=none] (r) at (-3,-1) {Transform raw data into scores};
%    \node[rectangle,draw] (r) at (0.8,-1) {$S_{k+1},\dots,S_n$};
%   \node[rectangle,draw] (r) at (3.2,-1) {$\,S_{n+1},\dots,S_{n+m}$};
%\node[rectangle,draw=none] (r) at (5,-1) {$\in \R\:\:$};
%%%%%%%%%%%%%%%%%%%%%%
%    \node[rectangle,draw=none] (r) at (-3,-2) {Applying counting knockoffs};
%  \node[rectangle,draw] (r) at (2.05,-2) {$\bullet\bullet\bullet\bullet\circ\bullet\circ\circ\circ\bullet\bullet\circ\circ\circ\bullet\circ\bullet\circ\circ\circ$};
%    \draw [thick,dashed]  (1.15,-1.7) -- (1.15,-2.3);
%    %%%%%%%%%%%%%%%%%%%%%
%    \draw[->,bend left]  (6,0) -- (6,-1);
%    \node[rectangle,draw=none] (r) at (6.2,-0.5) {$g$};
%     \draw[->,bend left]  (6,-1.2) -- (6,-2.2);
%      \node[rectangle,draw=none] (r) at (6.5,-1.7) {order};
%  \end{tikzpicture}
\includegraphics[scale=0.25]{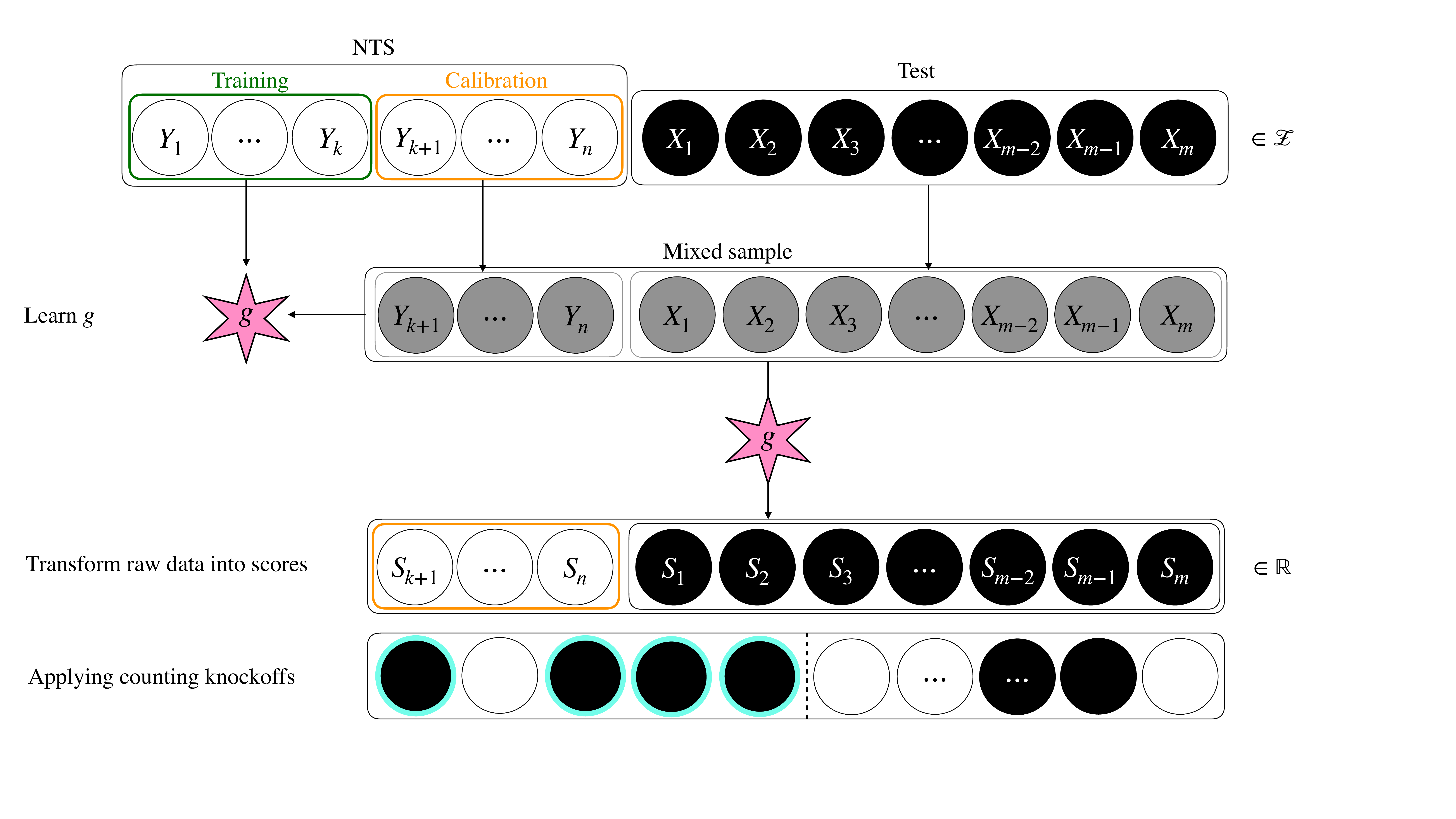}
\vspace{-1cm}
\end{center}
\caption{A schematic illustration of AdaDetect: $\bullet$/$\circ$ stands for a test/null observation, respectively. The vertical dashed line corresponds to the largest threshold $t$ for which $\wh{\FDP}(t)\leq \alpha$ and the $\bullet$  circled in blue correspond to the discoveries of AdaDetect procedure.} 
\label{figBONuS}
\end{figure}

In this paper, we propose a method called AdaDetect. It is an adaptive novelty detection procedure that extends the  existing strategies described in \cite{weinstein2017power}, \cite{yang2021bonus}, \cite{mary2021semisupervised}, and \cite{bates2021testing}. It starts by splitting the null sample $(Y_1,\dots,Y_n)$  in two samples $(Y_1,\dots,Y_k)$ and $(Y_{k+1},\dots,Y_{n})$ with $k\geq 0$. To avoid cluttering our notation, we define $\ell$ as the size of the second null sample, i.e.,
$\ell=n - k.$
It proceeds with the following steps.
\begin{enumerate}
\item Compute a data-driven score function of form
\begin{equation}\label{scorefunction}
g(z)=g(z,(Z_{1},\dots,Z_{k}), (Z_{k+1},\dots,Z_{n +m})),\:\:\: z\in \mathcal{Z},
\end{equation}
which satisfies the following invariance property: for any permutation $\pi$ of $\{k+1,\dots,n+m\}$ and $z,z_1,\dots,z_{n+m}\in \mathcal{Z}$, we have
\begin{equation}\label{constrainedg}
g(z,(z_{1},\dots,z_{k}), (z_{\pi(k+1)},\dots,z_{\pi(n +m)}))=g(z,(z_{1},\dots,z_{k}), (z_{k+1},\dots,z_{n +m})).
\end{equation}
\item Transform the raw data into univariate scores
\begin{equation}\label{equscores}
S_i = g(Z_i; (Z_1,\dots,Z_k) , (Z_{k+1},\dots,Z_{n+m})), \quad i \in \{ k+1, \ldots,n+m\}.
\end{equation}  
Here, we assume that novelties typically have large scores.
\item For each test point $X_j$, generate the empirical $p$-value by comparing $S_i$ with the scores in the NTS: 
\begin{equation}\label{emppvalues}
p_j = \frac{1}{\ell+1}\left(1+\sum_{i=k+1}^{n}\ind{S_i>S_{n+j}}\right), \:\:j\in\{1,\dots,m\}.
\end{equation}
\item Apply the BH algorithm to $(p_1,\dots,p_m)$ at the target level $\alpha$.
\end{enumerate}

We will call this procedure $\mbox{AdaDetect}_\alpha$ in the sequel to emphasize the target level.  
By simple algebra, the last two steps together are equivalent to the  ``counting knockoff'' algorithm proposed by \cite{weinstein2017power} applied to the scores $S_{k+1}, \ldots, S_{n+m}$. Specifically, the method declares $i$ as a novelty if $S_i \ge \hat{t}$ where $\hat{t}$ is the threshold defined by $ \min\big\{t\in \{S_i: k+1\leq i\leq n+m\}\::\: \wh{\FDP}(t) \leq \alpha \big\}$ for $\wh{\FDP}(t)=\frac{m}{\ell+1}\left(1+\sum_{i=k+1}^{n} \ind{S_i\geq t}\right) /\sum_{i=n+1}^{n+m} \ind{S_{i}\geq t}$.
Therefore, the counting knockoff procedure can be seen as a shortcut that avoids computing the empirical $p$-values explicitly. The pipeline for AdaDetect is illustrated in Figure~\ref{figBONuS}.

AdaDetect offers greater flexibility than existing methods in the types of score functions that can be employed. 
\begin{itemize}
\item Prespecified $p$-value transformations are score functions that do not depend on $(Z_{1},\dots,Z_{k})$ and 
$(Z_{k+1},\dots,Z_{n +m})$. For example, when $\mathcal{Z}=\R^d$, the $\chi^2$ test chooses the non-adaptive score $g(z)=\sum_{j=1}^d z_j^2$, $z\in \R^d$.
\item The one-class classification approach considered by \cite{bates2021testing} corresponds to score functions that only depend on $(Z_{1},\dots,Z_{k})$, but not $(Z_{k+1}, \dots, Z_{n+m})$.
\item The BONuS procedure \citep{yang2021bonus} considers empirical Bayes-based score functions that depend on the pooled sample $\{Z_1, \ldots, Z_{n+m}\}$ without distinguishing between the null and mixed samples. 
\item Our proposed method constructs the score function $g(\cdot,(Z_{1},\dots,Z_{k}), (Z_{k+1},\dots,Z_{n +m}))$ as the estimated probability by any probabilistic classifier that distinguishes between $(Z_{1},\dots,Z_{k})$ and $(Z_{k+1},\dots,Z_{n +m})$; see Section~\ref{sec:calibration} for details.
\end{itemize}

Lastly, we propose the Storey-AdaDetect and Quantile-AdaDetect as the $\pi_0$-adaptive versions of AdaDetect applied at level $\alpha / \wh{\pi}^{Storey}_0$ and $\alpha / \wh{\pi}^{Quant}_0$, respectively, in which the $p$-values have been replaced by the empirical ones. 
 
 \begin{remark}\label{rem:invariance} An appealing property of Adadetect and its adaptive versions is that the rejection is invariant to strictly increasing transformations of score function. This feature proves useful in the power analysis of AdaDetect, see Section~\ref{sec:calibration}.
 \end{remark}
 
 \begin{remark}\label{rem:samplesize}
%   \sout{By construction, the empirical $p$-values are multiples of $1/(\ell + 1)$. As \cite{mary2021semisupervised} point out, the number of null samples $\ell$ needs to be of the same order as the size of the test sample $m$ in order to guarantee sufficient resolution of the $p$-values for the BH procedure.}
   \rev{By construction, the empirical $p$-values are multiples of $1/(\ell + 1)$. As \cite{mary2021semisupervised} point out, the number of null samples $\ell$ needs to be larger than $m/(\alpha (1\vee M))$ in order to guarantee sufficient resolution of the $p$-values for the BH procedure, where $M\geq 0$ is some high-probability lower bound on the number of rejections. Typically, if $M$ is of the order of $m$, a constant $\ell$ would suffice, while if $M=0$ (i.e., without any prior knowledge on the number of rejections), $\ell$ should be larger than $m/\alpha$. }
\rev{In general practical situations where $n \gtrsim m$, we recommend setting $\ell = m$ and this choice works reasonably well in our numerical experiments. When $m > n$, it might be more appropriate to impose further assumptions on the distribution (e.g., the knowledge of $M$).}
%\rev{In practice, we recommend setting $\ell = m$ when $n > m$ and this choice works reasonably well in our numerical experiments. When $m > n$, it might be more appropriate to impose further assumptions on the distribution. We leave it for future research.}
\end{remark}

\section{FDR control}\label{sec:control}

In this section, we prove that AdaDetect and its $\pi_0$-adaptive variants control the FDR.
In Section \ref{sec:exch}, we state the key assumption of exchangeability and show it translates to the scores as long as $g$ satisfies the condition \eqref{constrainedg}. Based on this observation, we prove in Section \ref{sec:PRDS} that the empirical $p$-values are PRDS, which is a highly non-trivial extension of the results by \cite{bates2021testing}. Though the PRDS property implies the FDR control of AdaDetect as a result of \cite{BY2001}, we present in Section~\ref{sec:FDR} an alternative proof based on a new FDR expression that unify and extend the previous FDR bounds. %yielded by classical leave-one-out approaches \citep[e.g.][]{weinstein2017power,mary2021semisupervised}  \et{not clear?}. 
Lastly, in Section \ref{sec:adaptFDR}, we prove the FDR control for Storey-AdaDetect and Quantile-AdaDetect based on an FDR bound for general $\pi_0$-adaptive versions of AdaDetect.

\subsection{Exchangeability}\label{sec:exch}

We make the following assumption on the raw measurements throughout the paper.
 \begin{assumption}\label{as:exchangeable0}
\mbox{$(Y_1,\dots,Y_{n},X_i,i\in \cH_0)$ are exchangeable conditional on $(X_i,i\in \cH_1)$.} \footnote{Note that such an assumption implicitly assumes that such a conditional distribution exists, which is always the case for instance when $\mathcal{Z}=\R^d$ or $\mathcal{Z}$ is discrete.}
\end{assumption}
Clearly, Assumption~\ref{as:exchangeable0} holds when the measurements are independent, as assumed by \cite{yang2021bonus} and \cite{bates2021testing}. In general, Assumption~\ref{as:exchangeable0} allows for dependencies among the measurements.

\begin{example}\label{ex:equi}
Consider the observation where
$
Z_i=\mu_i + \rho^{1/2} \xi + (1-\rho)^{1/2} \varepsilon_i, 
$
$1\leq i\leq n+m$, 
with the variables $\xi, \varepsilon_1,\dots,\varepsilon_{n+m}$ being i.i.d. $\sim \mathcal{N}(0,I_d)$, $\rho$ being a nonnegative correlation coefficient, and $\mu_i=0$ for $i\in \{1,\dots,n\}\cup (n+\cH_0)$ (hence $\mathcal{Z}=\R^d$). 
Then Assumption~\ref{as:exchangeable0} holds. 
The case $d=1$ corresponds to the Gaussian equi-correlated case, which is widely studied in the multiple testing literature \citep[e.g.,][]{Korn2004}. 
\end{example}

For our results, a necessary assumption is exchangeability of the scores under the null:
\begin{assumption}\label{as:newexch}
\mbox{$(S_{k+1}, \ldots, S_{n}, S_{n+i},i\in \cH_0)$ is exchangeable conditionally on $(S_{n+i},i\in \cH_1)$.}
\end{assumption}

It turns out the exchangeability of the raw measurements translates to the scores. 
\begin{lemma}\label{lem:ariane}
Under Assumption~\ref{as:exchangeable0}, the adaptive scores defined by \eqref{equscores} satisfy Assumption~\ref{as:newexch} for any score function that satisfies the condition \eqref{constrainedg}.
\end{lemma}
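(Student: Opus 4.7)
The plan is to show that any permutation of the non-alternative indices of $(S_{k+1}, \ldots, S_{n+m})$ leaves the conditional law invariant, by lifting the permutation to the raw measurements and exploiting Assumption~\ref{as:exchangeable0} together with the invariance property \eqref{constrainedg} of $g$.

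First, I would fix an arbitrary permutation $\sigma$ of the index set $I := \{k+1, \dots, n\} \cup (n+\cH_0)$ and extend it to a permutation $\tau$ of $\{1, \dots, n+m\}$ by taking $\tau$ to be the identity on $\{1, \dots, k\} \cup (n+\cH_1)$. Since $\tau$ permutes only the indices in $\{1,\dots,n\}\cup(n+\cH_0)$, Assumption~\ref{as:exchangeable0} yields
\[
(Z_{\tau(1)}, \dots, Z_{\tau(n+m)}) \;\stackrel{d}{=}\; (Z_1, \dots, Z_{n+m}) \quad \text{conditional on } (Z_{n+i}, i\in \cH_1).
\]
Next, I would analyze the scores recomputed from the permuted sample. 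Because $\tau$ fixes $\{1,\dots,k\}$, the second argument of $g$ is unchanged. Because $\tau$ restricted to $\{k+1, \dots, n+m\}$ is a permutation of that index set, the tuple $(Z_{\tau(k+1)}, \dots, Z_{\tau(n+m)})$ is a permutation of $(Z_{k+1}, \dots, Z_{n+m})$; the invariance property \eqref{constrainedg} then guarantees that the third argument of $g$ produces the same value. Consequently, the $i$-th score recomputed from the permuted data equals $S_{\tau(i)}$ for every $i \in \{k+1, \dots, n+m\}$.

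Combining the two previous steps yields
\[
(S_{\tau(k+1)}, \dots, S_{\tau(n+m)}) \;\stackrel{d}{=}\; (S_{k+1}, \dots, S_{n+m}) \quad \text{conditional on } (Z_{n+i}, i \in \cH_1).
\]
Since $\tau$ fixes $n + \cH_1$, we have $S_{\tau(n+i)} = S_{n+i}$ for every $i \in \cH_1$, so the last $|\cH_1|$ coordinates agree on both sides. Integrating out $(Z_{n+i}, i \in \cH_1)$ gives an unconditional equality of the joint law of $((S_j)_{j \in I}, (S_{n+i})_{i \in \cH_1})$ with that of $((S_{\tau(j)})_{j \in I}, (S_{n+i})_{i \in \cH_1})$, from which conditioning on the second block $(S_{n+i}, i \in \cH_1)$ yields the desired identity in law for $(S_j)_{j \in I}$. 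Since $\sigma$ (and hence its restriction of $\tau$ to $I$) was arbitrary, Assumption~\ref{as:newexch} follows.

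The only subtle point is the last conditioning step: we obtain invariance under $\tau$ given $(Z_{n+i}, i\in \cH_1)$, but the target statement conditions on $(S_{n+i}, i \in \cH_1)$, and these scores are not measurable with respect to $(Z_{n+i}, i \in \cH_1)$ alone. The trick is to first package $(S_{n+i}, i \in \cH_1)$ together with the permuted block into a joint distributional equality conditional on $(Z_{n+i}, i \in \cH_1)$, then integrate out, and only then condition back on the coarser $\sigma$-field generated by $(S_{n+i}, i\in \cH_1)$. This manipulation is routine once set up correctly but is the one place where care is required.
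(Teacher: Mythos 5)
Your proposal is correct and follows essentially the same route as the paper. You lift the permutation $\sigma$ of the score indices $\{k+1,\dots,n\}\cup(n+\cH_0)$ to a permutation $\tau$ of the raw measurements that fixes $\{1,\dots,k\}\cup(n+\cH_1)$, invoke Assumption~\ref{as:exchangeable0} for the distributional invariance given $(X_i, i\in\cH_1)$, use \eqref{constrainedg} to identify the recomputed scores with $(S_{\tau(i)})$, and then integrate out before conditioning back on $(S_{n+i},i\in\cH_1)$. The paper packages the same argument with the auxiliary objects $U$, $V$, and $W=h(U,V)$ where $W$ encodes the first block as a tuple and the mixed block as a set (making the invariance \eqref{constrainedg} automatic); the one place the paper is terser than you is exactly the coarsening-of-$\sigma$-fields step you flag as subtle, which both treatments handle correctly.
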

This result substantially simplifies the FDR analysis presented in the next section. To avoid unnecessary mathematical complications, we make the following mild assumption.
\begin{assumption}\label{as:newnoties}\label{as:noties}
\mbox{$(S_{k+1}, \ldots, S_{n+m})$ have no ties almost surely.}
\end{assumption}

\subsection{The $p$-values are PRDS}\label{sec:PRDS}

Following \cite{BY2001}, we say a family of $p$-values $(p_i,1\leq i\leq m)$ is PRDS on $\cH_0$ if, for any $i\in \cH_0$ and nondecreasing\footnote{A set $D\subset [0,1]^m$ is said to be nondecreasing if for any $x\in D$ and $y\in [0,1]^m$, we have $y\in D$ provided that $y_i\geq x_i$ for all $i$.  } measurable set $D\subset [0,1]^m$, the function $u\in [0,1]\mapsto \P((p_j,1\leq j\leq m)\in D\:|\: p_i= u)$ is nondecreasing.

\begin{theorem}\label{thm:PRDS_multidim}
For any family of scores $(S_{k+1},\dots,S_{n+m})$ satisfying Assumptions~\ref{as:newexch}~and~\ref{as:noties}, the empirical $p$-values defined in \eqref{emppvalues} are PRDS on $\cH_0$ and the null $p$-values are super-uniform.
In particular, under Assumptions~\ref{as:exchangeable0}~and~\ref{as:noties}, this result holds for the $p$-values generated by AdaDetect with a score function satisfying \eqref{constrainedg}.
 \end{theorem}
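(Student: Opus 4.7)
The plan is to prove the main statement for a generic score family satisfying Assumptions~\ref{as:newexch}--\ref{as:noties} and then deduce the AdaDetect case directly from Lemma~\ref{lem:ariane}. The super-uniformity of each null $p$-value is the easy half: for $i\in\cH_0$, Assumption~\ref{as:newexch} combined with Assumption~\ref{as:noties} implies that conditionally on the alternative scores $(S_{n+j})_{j\in\cH_1}$ the rank of $S_{n+i}$ among $(S_{k+1},\ldots,S_n,S_{n+i})$ is uniform on $\{1,\ldots,\ell+1\}$, so $p_i$ is uniformly distributed on $\{1/(\ell+1),\ldots,1\}$ and in particular super-uniform on $[0,1]$.

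The PRDS part is where the work lies. Fix $i^*\in\cH_0$. My strategy is to condition on the $\sigma$-algebra $\mathcal{F}$ generated by the unordered null pool $\mathcal{B}=\{S_{k+1},\ldots,S_n\}\cup\{S_{n+j}:j\in\cH_0\}$ together with the alternative scores, establish PRDS conditionally on $\mathcal{F}$, and then derandomize. Conditionally on $\mathcal{F}$, Assumption~\ref{as:newexch} and the absence of ties force the null scores to be a uniform random bijection of the $\ell+m_0$ null positions onto the elements of $\mathcal{B}$, while the alternative scores are frozen. The $p$-values then become deterministic functions of the rank vector $(R_j)_{j\in\cH_0}$, where $R_j$ is the rank of $S_{n+j}$ in $\mathcal{B}$, together with the frozen alternative scores. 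This places me in exactly the combinatorial setting analyzed in \cite{bates2021testing} for split conformal $p$-values, and I would import their argument to conclude that $u\mapsto\P(\mathbf{p}\in D\mid p_{i^*}=u,\mathcal{F})$ is nondecreasing on the support of $p_{i^*}$ for every nondecreasing measurable $D\subset[0,1]^m$.

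To lift this to unconditional PRDS, I would observe that $p_{i^*}$ is actually independent of $\mathcal{F}$. A short calculation, obtained by conditioning further on the $(\ell+1)$-element subpool $\tilde{\mathcal{B}}=\{S_{k+1},\ldots,S_n,S_{n+i^*}\}\subset\mathcal{B}$ and then using uniform assignment within it, shows that $p_{i^*}\mid\mathcal{F}$ is uniform on $\{1/(\ell+1),\ldots,1\}$ irrespective of $\mathcal{F}$, whence independence. Bayes' rule then yields $\P(\mathbf{p}\in D\mid p_{i^*}=u)=\E[\P(\mathbf{p}\in D\mid p_{i^*}=u,\mathcal{F})]$, an expectation of a nondecreasing function of $u$, hence nondecreasing. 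The AdaDetect case is then immediate from Lemma~\ref{lem:ariane}. The main obstacle is the combinatorial conditional PRDS step: a naive swap coupling in $\mathcal{B}$ fails whenever the swap concerns two test-null positions, since this flips the corresponding $p_{j_0}$ in the direction opposite to $p_{i^*}$; circumventing this requires the averaging argument of \cite{bates2021testing}, which factors any test-null/test-null swap through an intermediate calibration position to restore monotonicity.
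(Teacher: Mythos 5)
Your proof plan takes a genuinely different conditioning than the paper's, and this is where the real difficulty hides. The paper conditions on $W_i = \big(\{S_{k+1},\dots,S_n,S_{n+i}\},\,(S_{n+j})_{j\in\cH_0\setminus\{i\}},\,(S_{n+j})_{j\in\cH_1}\big)$, which is strictly finer than your $\mathcal{F}$ because it retains the \emph{individual values} of the other test-null scores rather than merging them into the unordered pool $\mathcal{B}$. The payoff of the finer conditioning is decisive: given $W_i$, every other $p$-value is a \emph{deterministic} nondecreasing function of $p_i$, namely $p_j = C_{i,j} + \ind{S_{((\ell+1)p_i)} \leq S_{n+j}}/(\ell+1)$, where $C_{i,j}$ and $S_{n+j}$ are $W_i$-measurable and the indicator increases with $p_i$ because a larger $p_i$ picks a smaller order statistic from the $(\ell+1)$-point pool. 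Monotonicity of $u\mapsto\P(\mathbf{p}\in D\mid p_i=u,W_i)$ is then a one-line observation, with no coupling or combinatorics needed. The derandomization via $p_i\perp W_i$ is the same as yours.

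By contrast, your coarser $\mathcal{F}$ leaves the assignment of $\mathcal{B}$ to the $\ell+m_0$ null slots genuinely random given $p_{i^*}$, so the conditional PRDS becomes a true combinatorial claim rather than a tautology. You acknowledge the difficulty ("naive swap coupling fails") and defer to \cite{bates2021testing}, but this deferral is the gap: (a) Bates et al.\ state PRDS unconditionally, so you must verify their argument localizes to a $\P(\cdot\mid\mathcal{F})$ statement, and (b) the argument they actually use conditions on an object analogous to $W_i$ — i.e., on the individual other test scores, not merely the pool — so following it through would collapse your route into the paper's anyway, making $\mathcal{F}$ an unnecessary intermediate step. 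Your super-uniformity argument and the $p_{i^*}\perp\mathcal{F}$ derandomization are correct. The idea you are missing is simply to include the other test-null scores \emph{individually} in the conditioning $\sigma$-field; once that is done, the combinatorics you are worried about disappear.
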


 We present a proof of Theorem~\ref{thm:PRDS_multidim} in Section~\ref{sec:proofthm:PRDS_multidim}. It extends Theorem~2 in \cite{bates2021testing} to dependent scores. Notably, the AdaDetect scores are dependent in general even if the measurements $Z_i$'s are independent because the data-adaptive score function depends on the entire dataset.

Theorem~\ref{thm:PRDS_multidim} has interesting consequences. First, the celebrated result for the BH procedure \citep{BY2001, RW2005} implies that AdaDetect {strongly controls the FDR at level $\alpha \pi_0$}. 
Second, the PRDS property is also useful for other purposes, such as post hoc inference \citep{GS2011}, FDR control with structural constraints \citep{ramdas2019sequential,loper2019smoothed}, online FDR control \citep{zrnic2021asynchronous, fisher2021saffron}, hierarchical FDR control \citep{foygel2015p} and weighted FDR control with prior knowledge \citep{ramdas2019unified}. Hence, our result paves the way for developing similar AdaDetect-style procedures in these contexts.

\subsection{A new FDR expression}\label{sec:FDR}
While the PRDS property implies the FDR control for AdaDetect, we pursue an alternative way based on a new expression for the FDR of the BH procedure in our setting, which would also yield a lower bound for FDR that is not implied by the PRDS property.

\begin{theorem}\label{thm:FDRBONuS}
Consider any family of scores $(S_{k+1},\dots,S_{n+m})$ satisfying Assumptions~\ref{as:newexch}~and~\ref{as:noties}. Let $R_\alpha$ denote the rejection set of BH procedure applied to $p$-values defined in \eqref{emppvalues} at level $\alpha$. Then, for any distribution $P\in \mathcal{P}$,
\begin{align}\label{equnew}
\FDR(P,R_\alpha)=  \sum_{i\in \cH_0}  \E\left(\frac{ \lfloor \alpha (\l+1)K_i/m\rfloor }{(\ell +1) K_i}\right),
\end{align}
where $K_i$ is a random variable that takes values in $\{1,\dots,m\}$ for any $i \in \cH_0$. In particular, under Assumptions~\ref{as:exchangeable0}~and~\ref{as:noties}, \eqref{equnew} holds with $R_\alpha=\mbox{AdaDetect}_{\alpha}$, the AdaDetect procedure at level $\alpha$.
\end{theorem}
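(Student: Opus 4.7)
My plan is to condition on a sigma-algebra under which $S_{n+i}$ is uniformly distributed over a natural $(\ell+1)$-element multiset, and then match the BH rejection indicator with a count $K_i$ measurable with respect to this sigma-algebra. The starting point is the standard decomposition $\FDR(P, R_\alpha) = \sum_{i\in\cH_0} \E[\ind{i\in R_\alpha}/(1\vee|R_\alpha|)]$; I fix $i \in \cH_0$ throughout.

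Let $\mathcal{M}_i = \{S_{k+1},\ldots,S_n,S_{n+i}\}$ (a multiset of size $\ell+1$) and let $\mathcal{G}_i$ be the sigma-algebra generated by $\mathcal{M}_i$ together with $(S_{n+j})_{j\neq i}$. Combining Assumption~\ref{as:newexch} with the fact that conditioning an exchangeable tuple on some of its coordinates leaves the remaining coordinates exchangeable, $S_{n+i}$ is uniformly distributed over $\mathcal{M}_i$ given $\mathcal{G}_i$; hence by Assumption~\ref{as:newnoties} the rank $r_i$ of $S_{n+i}$ in $\mathcal{M}_i$ from the top is uniform on $\{1,\ldots,\ell+1\}$. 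Writing $\mathcal{M}_i=\{a_1>\cdots>a_{\ell+1}\}$ and letting $R^{(r)}$ be the rejection set of BH when $r_i=r$, this yields $\E[\ind{i\in R_\alpha}/(1\vee|R_\alpha|)\mid\mathcal{G}_i]=(\ell+1)^{-1}\sum_{r=1}^{\ell+1}\ind{i\in R^{(r)}}/(1\vee|R^{(r)}|)$.

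I define $K_i$ as the BH rejection count at level $\alpha$ applied to the $p$-values where the $i$-th entry is set to $0$ and the other entries are computed using $\mathcal{M}_i\setminus\{a_1\}$ as the null sample (equivalently, the leave-one-out count under $r_i=1$). By construction $K_i \in \{1,\ldots,m\}$ and is $\mathcal{G}_i$-measurable. The theorem reduces to showing that the conditional sum above equals $\lfloor\alpha(\ell+1)K_i/m\rfloor/((\ell+1)K_i)$. My key technical tool is the explicit representation $p_j^{(r)} = (q_j - \ind{r < q_j})/(\ell+1)$ for $j\neq i$, where $q_j$ denotes the rank of $S_{n+j}$ in $\mathcal{M}_i \cup \{S_{n+j}\}$ from the top; this displays $p_j^{(r)}$ as a step function of $r$ with a single upward jump of $1/(\ell+1)$ at $r=q_j$ (when $q_j\geq 2$), so in particular the leave-one-out count $K_i^{(r)}$ at level $r$ is non-increasing in $r$ with $K_i^{(1)} = K_i$.

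The hard part will be showing that $K_i^{(r)}=K_i$ for every $r\leq\lfloor\alpha(\ell+1)K_i/m\rfloor$ and $i\notin R^{(r)}$ for every strictly larger $r$. For such larger $r$, the bound $K_i^{(r)}\leq K_i$ combined with $p_i^{(r)} = r/(\ell+1) > \alpha K_i/m \geq \alpha K_i^{(r)}/m$ yields $i\notin R^{(r)}$. The reverse inequality $K_i^{(r)} \geq K_i$ for $r\leq\lfloor\alpha(\ell+1)K_i/m\rfloor$ I establish by contradiction: a strict drop from $K_i$ to $K_i^{(r)}$ forces some $p_{j'}$ to cross the threshold $\alpha K_i/m$ between $r=1$ and $r$, which via the step-function representation requires both $q_{j'}\leq r$ and $q_{j'}>\alpha(\ell+1)K_i/m$, whence $r\geq\lfloor\alpha(\ell+1)K_i/m\rfloor+1$, contradicting the assumption. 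Once the claim holds, each rejecting value of $r$ contributes $1/K_i$ to the conditional sum, which then has exactly $\lfloor\alpha(\ell+1)K_i/m\rfloor$ nonzero terms; taking the outer expectation and summing over $\cH_0$ completes the proof.
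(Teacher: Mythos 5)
Your proof is correct and follows essentially the same leave-one-out approach as the paper's proof: condition on the multiset $\{S_{k+1},\dots,S_n,S_{n+i}\}$ together with the other test scores, exploit the conditional uniformity of the rank of $S_{n+i}$ in that multiset, and identify $K_i$ as a leave-one-out BH count measurable with respect to this conditioning. Where the paper modularizes this into a representation theorem for the $p$-values (yielding $p_j = C_{i,j} + \ind{S_{n+i}\le S_{n+j}}/(\ell+1)$ and the uniformity of $(\ell+1)p_i$ given $W_i$) plus a separate structural lemma for BH under lowering of rejected $p$-values, you re-derive the same facts inline from the rank formula $p_j^{(r)}=(q_j-\ind{r<q_j})/(\ell+1)$; the two routes are equivalent in substance, yours being more self-contained and the paper's more modular. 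One step you leave implicit is passing from $K_i^{(r)}=K_i$ for $r\le\lfloor\alpha(\ell+1)K_i/m\rfloor$ to $i\in R^{(r)}$ and $|R^{(r)}|=K_i$: since $p_i^{(r)}=r/(\ell+1)\le\alpha K_i^{(r)}/m$, replacing $p_i$ by $0$ does not change the BH count, and then $i$ is rejected at the resulting threshold; this is what justifies the claim that each rejecting rank contributes exactly $1/K_i$ to the conditional sum.
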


The proof of Theorem~\ref{thm:FDRBONuS} is presented in Section~\ref{proof:thm:FDRBONuS}. It is similar to the classical leave-one-out technique to prove the FDR control for step-up procedure \citep[e.g.][]{FZ2006,RV2011,ramdas2019unified,Gir2022}, though it is non-trivial to handle empirical $p$-values. 
Since for any $x>0$ and integer $k$, we have  $ \lfloor x\rfloor k \leq \lfloor x k\rfloor  \leq xk$, 
expression \eqref{equnew} immediately implies the following bounds.

\begin{corollary}\label{FDRbounds}
Under Assumptions~\ref{as:exchangeable0}~and~\ref{as:noties}, the following holds, for any values of $k,\ell,m\geq 1$ and any parameter $P\in \mathcal{P}$: 
\begin{equation}\label{equboundsFDR}
m_0 \lfloor \alpha (\ell+1)/m\rfloor/(\ell +1) \leq \FDR(P,\mbox{AdaDetect}_\alpha)\leq \alpha m_0/m.
\end{equation}
In particular, $\FDR(P,\mbox{AdaDetect}_\alpha)= \alpha \pi_0$ when $\alpha (\ell+1)/m$ is an integer.
\end{corollary}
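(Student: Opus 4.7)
The plan is to plug the exact FDR identity from Theorem~\ref{thm:FDRBONuS} into the elementary double inequality $\lfloor x\rfloor k \leq \lfloor xk\rfloor \leq xk$, which holds for every real $x>0$ and every positive integer $k$. Specializing to $x = \alpha(\ell+1)/m$ and $k = K_i$ (where $K_i$ takes values in $\{1,\dots,m\}$), this will deliver both bounds pathwise, so that integrating and summing over $i\in\cH_0$ produces the stated inequalities. Since Theorem~\ref{thm:FDRBONuS} applies under Assumptions~\ref{as:exchangeable0} and \ref{as:noties} with $R_\alpha = \mbox{AdaDetect}_\alpha$, no additional hypothesis needs to be introduced.

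For the upper bound, I would use $\lfloor \alpha(\ell+1)K_i/m\rfloor \leq \alpha(\ell+1)K_i/m$ inside the expectation in \eqref{equnew}. After cancelling the common factor $(\ell+1)K_i$ in the numerator and denominator, every summand is almost surely at most $\alpha/m$, and summing over $\cH_0$ yields $\FDR \leq \alpha m_0/m$.

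For the lower bound, I would instead apply $\lfloor \alpha(\ell+1)K_i/m\rfloor \geq K_i\lfloor \alpha(\ell+1)/m\rfloor$. Dividing by $(\ell+1)K_i$, the random factor $K_i$ cancels again and each summand is pointwise at least the deterministic constant $\lfloor \alpha(\ell+1)/m\rfloor/(\ell+1)$; taking expectations and summing over $i\in\cH_0$ yields $\FDR \geq m_0\lfloor \alpha(\ell+1)/m\rfloor/(\ell+1)$. Finally, when $\alpha(\ell+1)/m$ is itself a nonnegative integer, the quantity $\alpha(\ell+1)K_i/m$ is an integer almost surely, so the two auxiliary floor inequalities become equalities and the bounds coincide, giving $\FDR = \alpha m_0/m = \alpha\pi_0$.

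There is essentially no conceptual obstacle — the corollary is a mechanical consequence of Theorem~\ref{thm:FDRBONuS}. The only point that deserves a moment of care is that the floor inequalities are applied pathwise in $K_i$ prior to taking the expectation, which is legitimate because $K_i$ is a positive integer almost surely, and that the cancellation of $K_i$ between numerator and denominator removes the dependence on the (unknown) distribution of $K_i$ in the resulting bounds.
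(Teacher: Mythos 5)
Your proof is correct and follows essentially the same route as the paper: both start from the exact FDR identity in Theorem~\ref{thm:FDRBONuS} and apply the elementary double inequality $\lfloor x\rfloor k \leq \lfloor xk\rfloor \leq xk$ pathwise in $K_i$ before taking expectations. The only difference is that you spell out the cancellations and the integer-case equality, which the paper leaves implicit.
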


Corollary~\ref{FDRbounds} recovers Theorem~3.1 in \cite{mary2021semisupervised} which imposes a slightly more restrictive condition than Assumption~\ref{as:newexch}. Their proofs are based on martingale techniques and the proof for the lower bound is particularly involved. Here, we rely instead on the exact expression \eqref{equnew}, which is arguably simpler and more comprehensible. 

\subsection{New FDR bounds for $\pi_0$-adaptive procedures}\label{sec:adaptFDR}

For each $i\in \cH_0$, let $\mathcal{D}_{i}$ be the distribution of $(p'_j,1\leq j\leq m)$, where
\rev{\begin{equation}\label{def:Di}
\left\{\mbox{\begin{tabular}{l}
$p'_j=0$, $j\in \cH_1$, $p'_i=1/(\ell +1)$;\\
 $p'_j$, $j\in \cH_0\backslash\{i\}$ are i.i.d. conditionally on $U$ with a common c.d.f. $F^U$;\\
 $U=(U_1,\dots,U_{\ell +1})$ has i.i.d. $U(0,1)$ components,
% $p'_j$, $j\in \cH_0\backslash\{i\} \:|\: \{U_{(1)},\dots,U_{(\ell+1)}\}$ are i.i.d. with a common c.d.f. $F^U$,\\
%$F(x)=(1-U_{(\lfloor x(\ell+1)\rfloor+1)} )\ind{1/(\ell+1)\leq x<1 }+\ind{x\geq 1}$, $x\in \R$;\\
% $U_{(1)}>\dots> U_{(\ell+1)}$ denote the order statistics of $U_1,\dots,U_{\ell +1}$ i.i.d. $U(0,1)$.
\end{tabular}}\right.
\end{equation}
}
where $F^U$ denotes the discrete c.d.f. $F^U(x)=(1-U_{(\lfloor x(\ell+1)\rfloor+1)} )\ind{1/(\ell+1)\leq x<1 }+\ind{x\geq 1}$, $x\in \R$, and $U_{(1)}>\dots> U_{(\ell+1)}$ denote the order statistics of the vector $U$.
Note that the distribution $\mathcal{D}_{i}$ only depends on $i$, $m$, $\ell$ and $\cH_0$. The following general result holds.

\begin{theorem}\label{thm:AdaptBONuS}
In the setting of Theorem~\ref{thm:FDRBONuS}, denote $p=(p_i,1\leq i\leq m)$ the family of empirical $p$-values defined in \eqref{emppvalues} and consider any function $G : [0,1]^m \rightarrow (0,\infty)$ that is coordinate-wise nondecreasing. Then the procedure, denoted by $R_{\alpha m/G(p)}$, combining the BH algorithm  at level $\alpha m/G(p)$ with these empirical $p$-values is such that, for any parameter $P\in \mathcal{P}$,
\begin{equation}\label{equboundsadaptive}
\FDR(P,R_{\alpha m/G(p) }) \leq  \alpha \sum_{i\in \cH_0}  \E_{p'\sim \mathcal{D}_i}\left(\frac{1}{G(p')} \right),
\end{equation}
where $\mathcal{D}_{i}$ is defined by \eqref{def:Di}. In particular, this FDR expression holds for $R_{\alpha m/G(p) }=\mbox{AdaDetect}_{{\alpha m/G(p) }}$ 
under Assumptions~\ref{as:exchangeable0}~and~\ref{as:noties}.
\end{theorem}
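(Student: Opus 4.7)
The strategy is to extend the leave-one-out argument underlying Theorem~\ref{thm:FDRBONuS} to the adaptive BH procedure applied at the data-dependent level $\alpha m/G(p)$. Starting from the self-consistency identity
\[
\FDR(P, R_{\alpha m/G(p)}) = \sum_{i\in\cH_0}\E\left[\frac{\ind{p_i\leq \alpha\hat{k}/G(p)}}{\hat{k}\vee 1}\right],\qquad \hat{k}=|R_{\alpha m/G(p)}|,
\]
I fix $i\in\cH_0$ and replicate the leave-one-out decomposition used for Theorem~\ref{thm:FDRBONuS}. Concretely, I condition on the unordered set $V_i=\{S_{k+1},\dots,S_n,S_{n+i}\}$ together with the other test scores $(S_{n+j})_{j\neq i}$; by Lemma~\ref{lem:ariane}, the rank of $S_{n+i}$ in $V_i$ is uniform on $\{1,\dots,\ell+1\}$ under this conditioning.

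Next, I introduce the ``virtual'' $p$-value vector $p^{(i)}$ obtained by forcing $S_{n+i}$ to equal the largest element of $V_i$ (so that $p^{(i)}_i=1/(\ell+1)$), recomputing the remaining null $p$-values with the corresponding calibration scores $V_i\setminus\{\max V_i\}$, and keeping the novelty $p$-values unchanged. Since $G$ is coordinate-wise nondecreasing, decreasing $p_i$ can only weakly decrease $G$ and enlarge the BH rejection set; writing $\tilde{K}_i$ for the size of the BH rejection set obtained from $p^{(i)}$ at level $\alpha m/G(p^{(i)})$, the same conditional computation as in Theorem~\ref{thm:FDRBONuS}, now performed with the effective level $\alpha m/G(p^{(i)})$ in place of $\alpha$, yields
\[
\E\left[\frac{\ind{p_i\leq \alpha\hat{k}/G(p)}}{\hat{k}\vee 1}\right] = \E\left[\frac{\lfloor \alpha(\ell+1)\tilde{K}_i/G(p^{(i)})\rfloor}{(\ell+1)\tilde{K}_i}\right].
\]
Bounding $\lfloor x\rfloor\leq x$ gives the per-$i$ estimate $\alpha\,\E[1/G(p^{(i)})]$. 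A further use of the monotonicity of $G$---replacing the novelty components of $p^{(i)}$ by $0$, which only decreases $G$---upgrades this to $\alpha\,\E_{p'\sim\mathcal{D}_i}[1/G(p')]$, where the identification of the resulting conditional distribution with $\mathcal{D}_i$ rests on a de~Finetti / probability-integral-transform argument: conditionally on the exchangeable sigma-algebra, the $\ell+1$ scores $(S_{k+1},\dots,S_n,S_{n+i})$ can be represented as i.i.d.\ Uniforms $U_1,\dots,U_{\ell+1}$; conditioning further on $S_{n+i}=\max V_i$ turns the calibration scores into the order statistics $U_{(2)}>\dots>U_{(\ell+1)}$, so that the rank of an independent $U_{n+j}$ for $j\in\cH_0\setminus\{i\}$ among these reproduces exactly the discrete c.d.f.\ $F^U$ from \eqref{def:Di}. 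Summing over $i\in\cH_0$ gives the claimed bound, and the specialization to AdaDetect under Assumptions~\ref{as:exchangeable0}--\ref{as:noties} follows from Lemma~\ref{lem:ariane}.

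The principal obstacle is rigorously extending the exact FDR identity of Theorem~\ref{thm:FDRBONuS} from the fixed level $\alpha$ to the data-dependent level $\alpha m/G(p^{(i)})$: one must carefully verify that the virtual substitution $p_i\to 1/(\ell+1)$ ``freezes'' the self-consistency in such a way that the floor identity persists despite the simultaneous perturbations of $p_i$, of $\hat{k}$ (through the BH iteration), and of $G$ (through its coordinate-wise monotonicity). The second delicate ingredient is the distributional identification with $\mathcal{D}_i$, combining the de~Finetti/PIT argument with the combinatorics of spacings of order statistics of $\ell+1$ i.i.d.\ Uniforms.
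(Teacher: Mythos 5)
Your overall skeleton (leave-one-out over $i\in\cH_0$, condition on $W_i$, pass to a ``virtual'' configuration with $p_i=1/(\ell+1)$, shrink the novelty coordinates to $0$, and identify the resulting law with $\mathcal{D}_i$ via Theorem~\ref{th-key2}) matches the paper's proof. But the pivotal step is wrong: the claimed \emph{exact} identity
\[
\E\left[\frac{\ind{p_i\leq \alpha\hat{k}/G(p)}}{\hat{k}\vee 1}\right] = \E\left[\frac{\lfloor \alpha(\ell+1)\tilde{K}_i/G(p^{(i)})\rfloor}{(\ell+1)\tilde{K}_i}\right]
\]
does not hold. In Theorem~\ref{thm:FDRBONuS}, the analogous identity rests on Lemma~\ref{BHsmallerp}, which says that at a \emph{fixed} level $\alpha$, decreasing the rejected $p$-values leaves the number of BH rejections unchanged on the rejection event. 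That lemma does not transfer to the data-dependent level $\alpha m/G(p)$: lowering $p_i$ (and the other null $p$-values along with it, via Theorem~\ref{th-key}(ii)) decreases $G(p)$ and hence raises the level, so $\hat{k}$ can strictly increase and need not equal $\tilde{K}_i=\Psi_2(W_i,1/(\ell+1))$ on the rejection event. A one-null example already falsifies the equality: take $m=1$, $\cH_0=\{1\}$, $\ell=10$, $\alpha=0.2$, $G(p)=1+p_1$. The actual left-hand side equals $\P\bigl(p_1(1+p_1)\leq 0.2\bigr)=1/11$, whereas the right-hand side equals $\lfloor 0.2\cdot 11\cdot\tfrac{11}{12}\rfloor/11 = 2/11$. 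You flag this as ``the principal obstacle,'' but it is not an obstacle to be surmounted --- the exact identity simply does not extend, and one should not try to prove it.

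The paper's argument avoids this trap entirely. After writing $(p_j,j\neq i)=\Psi_1(W_i,p_i)$, $\hat{k}\vee 1=\Psi_2(W_i,p_i)$, and $1/G(p)=\Psi_3(W_i,p_i)$ with $\Psi_2,\Psi_3$ nonincreasing in $p_i$ (Theorem~\ref{th-key}(ii)), it observes that monotonicity makes the set $\mathcal{N}(W_i)$ of rejectable $p_i$-values a lower set, so the rejection event is exactly $\{p_i\leq c^*(W_i)\}$ with $c^*(W_i)=\max\mathcal{N}(W_i)$. The proof then proceeds through a \emph{chain of inequalities}: on that event $\hat{k}\vee 1\geq\Psi_2(W_i,c^*(W_i))$; the conditional probability is exactly $c^*(W_i)$ by uniformity; $c^*(W_i)\leq\alpha\Psi_2(W_i,c^*(W_i))\Psi_3(W_i,c^*(W_i))$ by definition of $\mathcal{N}(W_i)$; and $\Psi_3(W_i,c^*(W_i))\leq\Psi_3(W_i,1/(\ell+1))$ by monotonicity. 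No step asserts an equality involving $\hat{k}$ and $\tilde{K}_i$. To repair your proof you should replace the claimed exact floor identity with this $c^*(W_i)$-based inequality chain; the rest of your outline (the floor/$\lfloor x\rfloor\leq x$ bound, setting novelty coordinates to $0$ using coordinate-wise monotonicity of $G$, and the de~Finetti/PIT identification with $\mathcal{D}_i$ via Theorem~\ref{th-key2}) is sound. A minor further inaccuracy: forcing $S_{n+i}=\max V_i$ changes the calibration set and therefore \emph{does} change the novelty $p$-values, so they are not ``kept unchanged''; this slip is harmless only because you later replace them by $0$ anyway.
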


Theorem~\ref{thm:AdaptBONuS} is proved in Section~\ref{proof:thm:AdaptBONuS}. 
\rev{In a nutshell, the distribution $\mathcal{D}_{i}$ is a least favorable distribution for the FDR of the adaptive BH procedure applied to empirical $p$-values defined in \eqref{emppvalues}. 
It can be seen as an adaptation of the classical leave-one-out technique for adaptive BH procedures; see \cite{BKY2006} and Theorem~11 of \cite{BR2009}.}
%it Explain better the construction of these mirror $p$-values: It is a kind of worst case distribution for the FDR of the adaptive procedure, close to leave one out distribution of BH, but adapted to the case of empirical $p$values ?

This result generalizes Theorem~6 of \cite{bates2021testing} which only works for the Storey-BH procedure. Our proof technique is fundamentally different and works for a broad class of estimators of $\pi_0$. 
Applying Theorem~\ref{thm:AdaptBONuS} to the estimators defined in \eqref{estiSto} and \eqref{estiQuant}, we obtain the following result.

\begin{corollary}\label{StoreyBH_multidim}
Under Assumptions~\ref{as:exchangeable0}~and~\ref{as:noties}, the following holds: 
\begin{itemize}
\item  Storey-AdaDetect controls the FDR at level $\alpha$ for any $\lambda= K/(\ell+1)$ and $K\in \{2,\dots,\ell\}$.
\item Quantile-AdaDetect controls the FDR at level $\alpha$ for any $k_0\in \{1,\dots,m\}$. 
\end{itemize}
\end{corollary}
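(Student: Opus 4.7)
The strategy is to derive both items from the general FDR bound \eqref{equboundsadaptive} of Theorem~\ref{thm:AdaptBONuS}, applied to $G(p)=m\wh{\pi}^{Storey}_0(p)$ and $G(p)=m\wh{\pi}^{Quant}_0(p)$, respectively. Coordinate-wise monotonicity of $G$ is immediate in both cases: for Storey it follows from the monotonicity of each $\ind{p_j\geq\lambda}$, and for Quantile from the monotonicity of the order statistic $p_{(k_0)}$ in each coordinate. Once monotonicity is verified, the task reduces to establishing $\sum_{i\in\cH_0}\E_{p'\sim\mathcal{D}_i}[1/G(p')]\leq 1$ in each case, which then yields the desired FDR control at level $\alpha$.

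For Storey-AdaDetect, I would exploit the degenerate structure of $\mathcal{D}_i$: since $p'_j=0<\lambda$ for $j\in\cH_1$ and, by the restriction $K\geq 2$, $p'_i=1/(\ell+1)<\lambda=K/(\ell+1)$, only the $m_0-1$ remaining null coordinates contribute to $\sum_j\ind{p'_j\geq\lambda}$. Conditionally on $U$, these are i.i.d.\ Bernoulli with success probability $U_{(K)}$, read off from the jump of $F^U$ at the grid point $K/(\ell+1)$. Applying the binomial identity $\E[(1+\mathrm{Bin}(n,q))^{-1}]=(1-(1-q)^{n+1})/((n+1)q)\leq 1/((n+1)q)$ and summing over $i\in\cH_0$ collapses the bound to $(1-\lambda)\E[1/U_{(K)}]$. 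Recognising $U_{(K)}$ as the $(\ell+2-K)$-th smallest of $\ell+1$ uniforms, i.e.\ $\mathrm{Beta}(\ell+2-K,K)$, one obtains $\E[1/U_{(K)}]=(\ell+1)/(\ell+1-K)$, and substituting $\lambda=K/(\ell+1)$ makes the product $(1-\lambda)\E[1/U_{(K)}]$ equal to $1$, concluding the first item.

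For Quantile-AdaDetect, symmetry of $\mathcal{D}_i$ in $i\in\cH_0$ reduces the sum to $(m_0/(m-k_0+1))\E[1-p'_{(k_0)}]$, so the goal becomes to establish the lower bound $\E[p'_{(k_0)}]\geq(k_0-m_1-1)/m_0$. A case split on $k_0$ relative to $m_1+1$ is natural: for $k_0\leq m_1+1$ the right-hand side is non-positive and the bound is automatic; for $k_0\geq m_1+2$, since the $m_1$ alternative coordinates are zeros and $p'_i=1/(\ell+1)$ is the smallest possible non-zero value, $p'_{(k_0)}$ stochastically dominates the $(k_0-m_1-1)$-th smallest of the $m_0-1$ null coordinates drawn from $F^U$, whose expectation can be controlled via the partition structure of $F^U$ induced by the order statistics of $U$. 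The main delicacy throughout is the alignment between the discrete grid $\{s/(\ell+1)\}$ on which the empirical $p$-values live and the thresholds embedded in the two estimators: the restriction $K\geq 2$ for Storey is precisely what guarantees that $p'_i$ drops out of the count and yields the clean Beta distribution above, whereas in the Quantile case $p'_i$ sits exactly at the lowest grid point and can be tied with other null coordinates, requiring a more careful order-statistic analysis.
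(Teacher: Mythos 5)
Your reduction to the bound \eqref{equboundsadaptive} with $G=m\,\wh{\pi}_0$ and the coordinate-wise monotonicity check are exactly the paper's route. The Storey half is complete and essentially identical to the paper's proof: $K\geq 2$ forces $p'_i=1/(\ell+1)<\lambda$ so only the $m_0-1$ remaining null coordinates contribute to the tally, the conditional count is $\mathcal{B}(m_0-1,U_{(K)})$, the bound $\E[(1+\mathcal{B}(n,q))^{-1}]\leq 1/((n+1)q)$ (Lemma~1 of \citealp{BKY2006}) applies, $U_{(K)}\sim\beta(\ell+2-K,K)$ gives $\E[1/U_{(K)}]=(\ell+1)/(\ell+1-K)$, and the factor $1-\lambda$ cancels this to yield exactly $1/m_0$ per null index.

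The Quantile half, however, is a plan that stops precisely where the work begins. Your reduction to $\E[p'_{(k_0)}]\geq(k_0-m_1-1)/m_0$, the case split at $k_0\leq m_1+1$, and the identification of $p'_{(k_0)}$ with the $j_0$-th smallest null coordinate for $j_0=k_0-m_1-1\geq 1$ are all right; note the identification is an exact equality (not merely stochastic dominance) because $p'_i=1/(\ell+1)$ is always weakly the smallest non-zero coordinate, so the tie concern you raise does not in fact cause trouble. But the residual inequality $\E\big[p'_{(j_0:\cH_0\setminus\{i\})}\big]\geq j_0/m_0$ is the substance of the whole item, and ``controlled via the partition structure of $F^U$ $\ldots$ requiring a more careful order-statistic analysis'' is not an argument. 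The paper establishes it by conditioning on $\{U_{(1)},\dots,U_{(\ell+1)}\}$, writing $\E[p'_{(j_0)}\mid U]$ as a survival integral $\int_0^\infty \P(\mathcal{B}(m_0-1,F(x))<j_0)\,dx$ whose piecewise-constant integrand collapses to a finite sum over grid bins $b=1,\dots,\ell+1$, evaluating $\E[(1-U_{(b)})^k U_{(b)}^{m_0-1-k}]$ from the $\beta(b,\ell+2-b)$ density, and then recognising the resulting double sum as a Vandermonde convolution (Lemma~\ref{lemvander}) equal to exactly $1/m_0$. That chain of computations constitutes the bulk of the proof of the second item and is entirely absent from your sketch, so the Quantile claim is not actually proved.
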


The proof of Corollary~\ref{StoreyBH_multidim} is presented in Section~\ref{proof:StoreyBH_multidim}. It bounds the RHS of \eqref{equboundsadaptive} via combinatoric arguments. The result for Quantile-AdaDetect is novel. The result for Storey-AdaDetect was proved in \cite{yang2021bonus} for BONuS, with a different proof technique, in the special case where the scores are independent. Hence, we extend it to the exchangeable case. % that satisfies Assumption \ref{as:newexch}. 

\rev{To illustrate the robustness of $\pi_0$-adaptive AdaDetect under dependence, consider Example~\ref{ex:equi} with common alternative means $\mu_i\equiv \mu \in \R^d$ and a fixed score function $S_i=\mu^T Z_i$, $1\leq i\leq n+m$}. One alternative approach to Storey-AdaDetect is to apply the Storey-BH procedure on the marginal $p$-values $p_i=\bar \Phi( S_{n+i}/\|\mu\|)$, $1\leq i\leq m$. Interestingly, Figure~\ref{fig:robustdep} shows that the Storey-BH procedure inflates the FDR substantially in the presence of high correlation while Storey-AdaDetect \rev{with $k = 0$} controls the FDR for any correlation $\rho$ (as implied by Corollary~\ref{StoreyBH_multidim}). Hence, while Storey-AdaDetect is only based on an NTS without the knowledge of the true null distribution, it is more robust to dependence than Storey-BH that requires more information. Furthermore, Storey-AdaDetect is more powerful than Storey-BH because the effect of the common variable $\xi$ is cancelled out in the calculation of empirical $p$-values.

\begin{figure}[t!]
\centering
\begin{minipage}{0.845\linewidth}
\includegraphics[width=\linewidth]{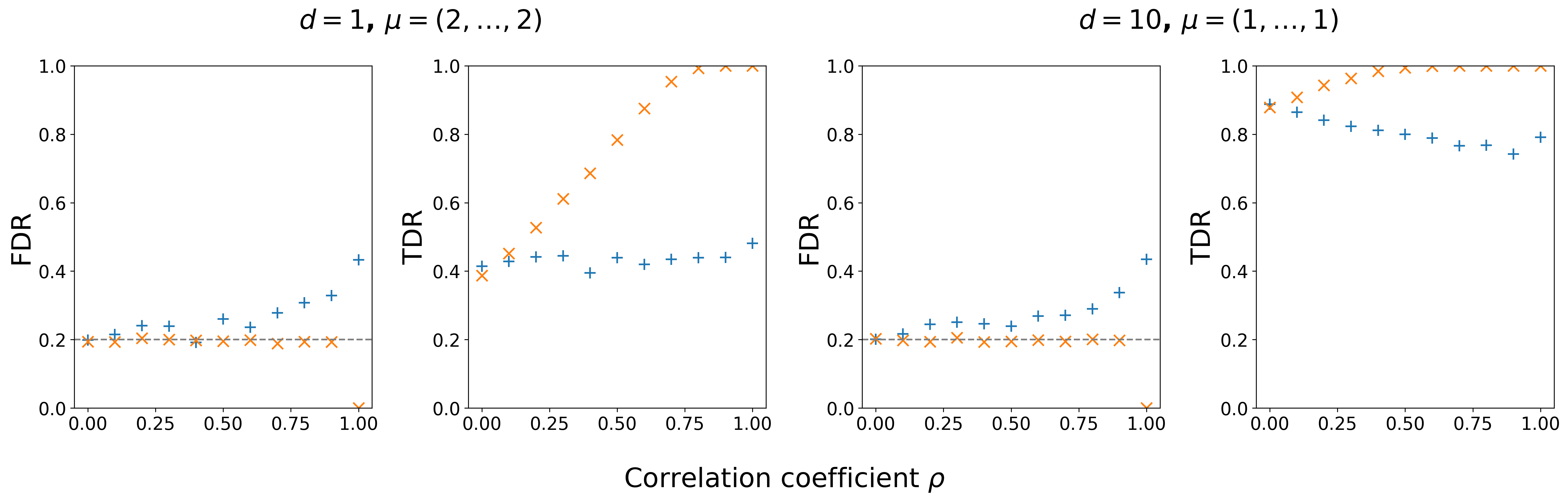}
\end{minipage}%
\begin{minipage}[c]{0.15\linewidth}
\includegraphics[width=\linewidth]{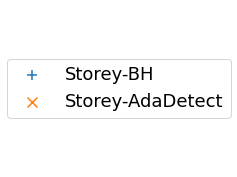}
\end{minipage}
\caption{FDR and TDR for Storey-BH and Storey-AdaDetect (both with oracle test statistics/scores) in Example~\ref{ex:equi} with varying correlation $\rho\in [0,1]$. The dimension $d = 1$ in the two left panels and $d = 10$ in the two right panels. In all settings, $m=100$, $n=\ell=1000$, $\alpha=0.2$, $\pi_0=0.9$, and $\lambda=500/1001$.} 
\label{fig:robustdep}
\end{figure}

\begin{remark}\label{rem:lasso}
Assumptions~\ref{as:newexch}~and~\ref{as:noties} hold true in other contexts. For example,  this is the case for LASSO-based scores in the Gaussian linear model where the design matrix has i.i.d. entries with a known distribution \citep{weinstein2017power}. Hence, the FDR bounds we developed also hold in those cases.
\end{remark}

\section{Constructing score functions}\label{sec:calibration}

While any score function satisfying \eqref{constrainedg} can be used in AdaDetect, we discuss principles and various techniques to construct score functions that yield high power. Section~\ref{sec:mixtureass} introduces the assumptions and notation. In Section~\ref{sec:settingpower} we show that the optimal score function is given by any monotone function of the ratio between the average density of novelties and the average density of all points. We proceed by discussing two methods to approach the optimal score based on direct density estimation (Section~\ref{sec:densityestim}) and classification (Section~\ref{sec:PU}). The latter is more scalable and flexible in the sense that it is able to wrap around any probabilistic classification algorithms. In Section~\ref{sec:cross}, we discuss a cross-validation approach for hyper-parameter tuning and model selection without compromising the finite-sample FDR control.

\subsection{Assumptions and notation}\label{sec:mixtureass}

In this section, we make the following two assumptions:
\begin{assumption}\label{as:indep}
 $Y_1,\dots,Y_{n},X_1,\dots,X_m$ are mutually independent.
% $(Y_1,\dots,Y_{n},X_i,i\in \cH_0)$ are i.i.d. $\sim P_0$ and independent of $X_i,i\in \cH_1$, themself being mutually  independent.
 \end{assumption}
 Given the setting of Section~\ref{sec:setting}, we thus have under Assumption~\ref{as:indep} that $(Y_1,\dots,Y_{n},X_i,i\in \cH_0)$ are i.i.d. $\sim P_0$ and independent of $(X_i,i\in \cH_1)$ which are  mutually  independent.

\begin{assumption}\label{equ-marg}
For each $i\in \{0\}\cup \cH_1$, $P_i$ has a positive density $f_i$ w.r.t. a measure $\nu$. 
\end{assumption}
Let 
\begin{align}
f&=\pi_0 f_0+ \pi_1 \bar f_1,\label{equ:f}\\
 \bar f_1&= m_1^{-1}\sum_{i\in \cH_1} f_i.\label{equ:f1bar}
\end{align}
Under Assumptions~\ref{as:indep} and~\ref{equ-marg}, $f_0$ is the average density of $(Z_1,\dots,Z_k)$, $\bar{f}_1$ is the average alternative density, $f$ is the average density of the test sample $(X_1,\dots,X_m)$. Similarly the average density of $(Z_{k+1},\dots,Z_{n+m})$ is $f_\gamma$ where
%\begin{align}
%\gamma&=\frac{m_1}{\ell+m}\label{equ:gamma};\\
%f_\gamma&=(1-\gamma) f_0 + \gamma \bar{f}_1 =\frac{\ell}{\ell+m} f_0+ \frac{m}{\ell+m} f,\label{equfgamma}
%\end{align}
\begin{align}
\gamma=\frac{m_1}{\ell+m};\:\:\:\:\:\:
f_\gamma=(1-\gamma) f_0 + \gamma \bar{f}_1 =\frac{\ell}{\ell+m} f_0+ \frac{m}{\ell+m} f.\label{equfgamma}
\end{align}
Compared to $f$, the mixture $f_\gamma$ is contaminated by more nulls, that is,  $\pi_0\leq 1-\gamma = \frac{\ell+m_0}{\ell +m}$. Lastly, we define the density ratio
\begin{equation}\label{equLR}
\lrt(x)= \frac{\pi_1 \bar f_1(x)}{ f(x)}, \quad x\in \mathcal{Z}.
\end{equation}
Note that $\lrt(x)\in (0,1)$ for $\nu$-almost every $x\in \mathcal{Z}$ by Assumption~\ref{equ-marg}.

  \subsection{Optimal score function}\label{sec:settingpower}

% \et{
% In a classical multiple testing setting with a fixed number of hypotheses and  known null and alternative distributions, the optimal score functions and FDR-controlling procedures have been studied by \cite{weinstein2021permutation, rosset2022optimal} and turn out to be quite complicated. To derive the optimal score function for AdaDetect, we focus on thresholding type procedures with marginal FDR (mFDR) control, which is close to FDR when the number of rejections tends to infinity under standard asymptotic regimes. Recall that the mFDR is defined as the FDR \eqref{equFDRFDP} except that the expectation of the ratio is replaced by the ratio of the expectations. The following result is an adaptation of \cite{CSWW2019} to the present two-sample (non-mixture) context. 
% }
\rev{Recall that AdaDetect is equivalent to applying the counting knockoff on the scores which relies on an estimator $\widehat{\mathrm{FDP}}$ (Section \ref{sec:adaptiveteststat}). For each given $t$, when $\ell$ and $m$ is large, 
$\widehat{\mathrm{FDP}}(t)\approx m \:\P_{S_i\sim P_0}(S_i \ge t)/\E[|R(t)|] \approx \E[|R(t)\cap \{k+1, \ldots, n\}|]/\E[|R(t)|],$
where $R(t)$ is set of rejections at threshold $t$. The RHS is called the marginal FDR (mFDR), an error metric that is close to FDR when $|R(t)|$ is large and often used for asymptotic power analysis of FDR-controlling procedures \citep[e.g.][]{SC2007, lei2018adapt}. The following theorem derives the optimal score function among all procedures that reject hypotheses with $S_i$ above some thresholds subject to mFDR control (see \citealp{weinstein2021permutation, rosset2022optimal} for results for FDR instead of mFDR). }
%\et{Reformulate disclaimer: the next result is not solving the question but only identify a kind of optimal score function. If we really want to do the optimization, solutions exist see ... but much more delicate than BH.}
%While general solutions have an involved form, BH with optimal score is 'good enough'?
%While no general solution exists yet to address this problem to our knowledge, a hint is provided with the following result, which is an adaptation of \cite{CSWW2019} to the present two-sample (non-mixture) context. 
\begin{theorem}\label{th:SCextended}
Assume Assumptions~\ref{as:indep} and~\ref{equ-marg} hold. The likelihood ratio function $r(\cdot)$ defined in \eqref{equLR} is an optimal score function in the sense that the rejection set $R=\{i \in \{1,\dots,m\}\::\: \lrt(X_i)\geq c(\alpha)\}$, where $c(\alpha)\in (0,1)$ is chosen such that $\mbox{mFDR}(R)=\alpha$ (assuming it exists), has a higher TPR than any rejection set $R'=\{i \in \{1,\dots,m\}\::\: \lrt'(X_i)\geq c'\}$ where $c'\in \R$ and $r':\mathcal{Z}\mapsto \R$ is measurable with mFDR at most $\alpha$.
\end{theorem}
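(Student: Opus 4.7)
The plan is to cast the problem as a Neyman--Pearson optimal test against the constraint imposed by the \mFDR\ bound, with one extra inequality that tracks how tight that bound is. First I rewrite the two criteria in integral form. Under Assumptions~\ref{as:indep} and~\ref{equ-marg}, any thresholding rule $R' = \{i : \lrt'(X_i) \geq c'\}$ is determined by the rejection region $A' = \{x\in \mathcal{Z} : \lrt'(x)\geq c'\}$, and a direct computation gives
\begin{equation*}
\TDR(R') = \int_{A'} \bar f_1\, d\nu, \qquad \mFDR(R') = \frac{\pi_0 \int_{A'} f_0\, d\nu}{\pi_0 \int_{A'} f_0\, d\nu + \pi_1 \int_{A'} \bar f_1\, d\nu},
\end{equation*}
so that the constraint $\mFDR(R')\leq \alpha$ rearranges to the linear inequality
\begin{equation}\label{equ-plan-constraint}
(1-\alpha)\,\pi_0 \int_{A'} f_0\, d\nu \;\leq\; \alpha\, \pi_1 \int_{A'} \bar f_1\, d\nu.
\end{equation}

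Next I identify the oracle region $A^* = \{x \in \mathcal{Z} : \lrt(x) \geq c(\alpha)\}$. Since $\lrt(x) = \pi_1 \bar f_1(x)/f(x)$, the inequality $\lrt(x)\geq c(\alpha)$ is equivalent to $\bar f_1(x)/f_0(x)\geq \tau$ with $\tau = c(\alpha)\pi_0/((1-c(\alpha))\pi_1)$, so $A^*$ is a likelihood-ratio region in the classical Neyman--Pearson sense, and any strictly monotone transformation of $\lrt$ yields the same region. A key quantitative identity comes from the pointwise relations $\pi_0 f_0 = (1-\lrt) f$ and $\pi_1 \bar f_1 = \lrt\, f$, which turn $\mFDR(A^*)=\alpha$ into $\mathbb{E}_f[\lrt \mid X \in A^*] = 1-\alpha$. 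Combined with $\lrt \geq c(\alpha)$ on $A^*$, this forces $c(\alpha)\leq 1-\alpha$, and consequently
\begin{equation*}
\tau \cdot \frac{\alpha\,\pi_1}{(1-\alpha)\,\pi_0} \;=\; \frac{c(\alpha)\,\alpha}{(1-c(\alpha))(1-\alpha)} \;\leq\; 1.
\end{equation*}

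The rest is a two-inequality chain. Let $A'$ satisfy~\eqref{equ-plan-constraint} and set $T := \int_{A^*}\bar f_1\, d\nu - \int_{A'}\bar f_1\, d\nu$ and $D := \int_{A^*} f_0\, d\nu - \int_{A'} f_0\, d\nu$. Splitting the integrals over $A^*\setminus A'$ and $A'\setminus A^*$ and using $\bar f_1\geq \tau f_0$ on $A^*$ and $\bar f_1\leq \tau f_0$ on $\comp{A^*}$ delivers the pointwise estimate $T\geq \tau D$. Subtracting~\eqref{equ-plan-constraint} from the equality $\mFDR(A^*)=\alpha$ gives $D\geq \frac{\alpha\pi_1}{(1-\alpha)\pi_0}\, T$. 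Chaining the two bounds yields $\bigl(1-\tau\cdot\tfrac{\alpha\pi_1}{(1-\alpha)\pi_0}\bigr)\, T \geq 0$, and since the coefficient is nonnegative by the previous paragraph, $T\geq 0$, i.e., $\TDR(A^*)\geq \TDR(A')$, which is the claim.

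The hard part is precisely the identification $c(\alpha)\leq 1-\alpha$ via the conditional expectation identity: without it, the chained inequality would collapse in the wrong direction and the Neyman--Pearson pointwise argument alone would not rule out suboptimal thresholds that trade a little extra Type I error for disproportionate power. A minor loose end concerns sets $\{x : \lrt(x) = c(\alpha)\}$ of positive measure, where one would ordinarily randomize on the boundary, but the theorem's hypothesis that $c(\alpha)$ exists with $\mFDR(A^*) = \alpha$ conveniently bypasses this.
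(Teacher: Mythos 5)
Your argument is a valid Neyman--Pearson--style proof organized differently from the paper's. You work directly with the ratio $\bar f_1/f_0$ and chain two linear inequalities: the pointwise likelihood-ratio bound $T\ge \tau D$ (with your $T = \int_{A^*}\bar f_1 - \int_{A'}\bar f_1$ and $D = \int_{A^*}f_0 - \int_{A'}f_0$), and the bound $D\ge \frac{\alpha\pi_1}{(1-\alpha)\pi_0}T$ obtained by subtracting the two mFDR constraints. The paper instead works with the local-FDR statistic $T(x)=\pi_0 f_0(x)/f(x)=1-r(x)$: it collapses the two mFDR constraints into the single integral inequality $\int(\ind{T\le t(\alpha)}-\ind{T'\le t'})(T-\alpha)f\,d\nu\ge 0$ and then applies the monotone reparametrization $T\le t(\alpha)\Leftrightarrow\frac{T-\alpha}{1-T}\le\frac{t(\alpha)-\alpha}{1-t(\alpha)}$ to convert $(T-\alpha)f$ into a positive multiple of $(1-T)f=\pi_1\bar f_1$ with the correct sign. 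Both routes hinge on the same observation $c(\alpha)\le 1-\alpha$ (equivalently $t(\alpha)\ge\alpha$), which you derive via $\E_f[r\mid A^*]=1-\alpha$ and the paper derives by noting the integrand in its first identity is nonpositive where the indicator is active.

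There is, however, a genuine gap in your last step. You derive $(1-\kappa)\,T\ge 0$ with $\kappa:=\tau\cdot\frac{\alpha\pi_1}{(1-\alpha)\pi_0}=\frac{c(\alpha)\alpha}{(1-c(\alpha))(1-\alpha)}$, establish $\kappa\le 1$, and then conclude $T\ge 0$ ``since the coefficient is nonnegative.'' That inference is only correct when $\kappa<1$ strictly. When $\kappa=1$, i.e.\ $c(\alpha)=1-\alpha$ (equivalently $t(\alpha)=\alpha$), the chained inequality reduces to $0\cdot T\ge 0$ and tells you nothing about the sign of $T$. This is exactly the case that the paper carves out and treats separately: when $t(\alpha)=\alpha$ one can argue from $\int_{\{T\le\alpha\}}(T-\alpha)f\,d\nu=0$ that $T\ge\alpha$ $\nu$-a.e., which forces any $A'$ with $\mFDR(A')\le\alpha$ to lie (up to a $\nu$-null set) inside $\{T=\alpha\}=A^*$, so $\TDR(A')\le\TDR(A^*)$ holds anyway. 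Your closing ``minor loose end'' paragraph flags a boundary-randomization concern that the hypothesis $\mFDR(A^*)=\alpha$ does indeed bypass; it is the degenerate coefficient $\kappa=1$ that you still need to close.
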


The proof can be found in Section~\ref{sec:th:SCextended}. 
Theorem~\ref{th:SCextended} suggests the following oracle procedure.

\begin{definition}\label{def:oraclebonus}
The oracle AdaDetect procedure, denoted by AdaDetect$^*$, is defined as the AdaDetect procedure with the score function $\lrt(\cdot)$ defined in \eqref{equLR}.
\end{definition}

Since AdaDetect is invariant under any strictly monotone transformation of the score function (see Remark~\ref{rem:invariance}), AdaDetect$^*$ can be realized as any AdaDetect procedure with a score function of the form
\begin{equation}\label{equoraclescore}
g^*= \Psi \circ \lrt, \mbox{ for some increasing continuous $\Psi:(0,1)\to {\R}$},
\end{equation}
where $\Psi$ could depend on unknown parameters. This is a crucial property of AdaDetect that enables flexible classification methods to construct score functions without concerning about the composition of nulls and novelties that may change the oracle score $\lrt$.

Since $\lrt$ (or $g^*$) is unknown, the oracle procedure AdaDetect$^*$ is not directly accessible in practice. Our goal is to learn a $g^{*}$ in the form of \eqref{scorefunction} that satisfies the constraint \eqref{constrainedg}.

\subsection{Density estimation}\label{sec:densityestim}

A first example of score function is built from density estimation. From \eqref{equ:f} and \eqref{equfgamma}, the following score
\begin{equation}\label{equ:gstardensity}
\rev{g^*(x)=f_\gamma(x)/f_0(x) = 1-\gamma/\pi_1+ (\pi_0\gamma/\pi_1) (1-r(x))^{-1}}
\end{equation}
 is indeed of the form \eqref{equoraclescore}.  
A straightforward approach is to directly estimate the densities as follows.
\begin{itemize}
\item Estimate $f_0$ by a density estimator $\wh{f}_0$ based on the sample $(Z_1,\dots,Z_k)$
\item Estimate $f_\gamma$ by a density estimator $\wh{f}_\gamma$ based on the mixed sample $(Z_{k+1},\dots,Z_{n+m})$ via a mixture estimation approach.
\item Estimate $g^*(x)$ by $\wh{g}(x)=\wh{f}_\gamma(x)/\wh{f}_0(x)$ assuming that $\wh{f}_0(Z_i)>0$.
\end{itemize}
Above, the density estimators can be either parametric or non-parametric. Both versions will be considered in the sequel (see Section~\ref{sec:densityestim} and the numerical experiments in Section~\ref{sec:numexp}).
Note that \cite{yang2021bonus} applies this approach when $f_0$ is known.

\subsection{PU classification}\label{sec:PU}

While density estimation is straightforward, it is not scalable when the dimension $d$ is large; see the numerical experiments in Section~\ref{sec:numexp} for an illustration. In this section, we consider a different strategy that estimate density ratios through probabilistic classification \citep[e.g.][]{friedman2003multivariate, sugiyama2012density, lei2021distribution, wang2022approximate}.

Define $(Z_1,\dots,Z_k)$ as the ``positive sample" and the sample $(Z_{k+1},\dots,Z_{n+m})$ as the ``unlabeled sample",  and let $(A_1,\dots,A_k)=(-1,\dots,-1)$ and $(A_{k+1},\dots,A_{n+m})=(1,\dots,1)$ the corresponding labels.   
In this context, the classification task is typically referred to as the PU (positive unlabeled) classification, which is an active research area; see \cite{du2014analysis,calvo2007learning,Guo_2020_CVPR,ivanov2020dedpul} among others and \cite{bekker2020learning} for a recent review. 
Here, we are considering a slightly different setting where the unlabeled samples are independent but not identically distributed.
%A standard approach 
 %is to apply a classification procedure that estimates the probability of being in the positive sample for each %instance. 
 
 Usually, the classifier is learned by empirical risk minimization (ERM) where the objective function is in the form of
$
% J_\lambda(g)&= \E \wh{J}_\lambda(g) =k \E_{Z\sim f_0} \ell(-1,g(Z) ) + \lambda (\ell+m) \E_{Z\sim f_\gamma} \ell(1,g(Z) ) ;\label{equJlambda2}\\
\wh{J}_\lambda(g) = \sum_{i=1}^{n+m} \lambda_{A_i}\ell(A_i,g(Z_i) ) = \sum_{i=1}^{k} \ell(-1,g(Z_i) ) + \lambda\sum_{i=k+1}^{n+m} \ell(1,g(Z_i) ),
$
where $\ell:\{-1,+1\}\times \R\to \R_+$ is a loss function and $\lambda_{a}=\lambda\ind{a\geq 0} + \ind{a\leq 0}$ with $\lambda> 0$ measuring the relative cost misclassifying a positive sample to misclassifying an unlabeled sample. Here, $g$ is a function that belongs to $\mathcal{G}$, a class of measurable functions from $\mathcal{Z}$ to $\R$ and the classifier corresponds to the sign of $g$. Typical choices of the loss function include the hinge loss $\ell(a,u)=0.5(1-au)_+$ and the cross entropy loss $\ell(a,u)=-\log(1-u)\ind{a=-1} -\log(u)\ind{a=+1}$. The population objective function is given by
\begin{align}
J_\lambda(g)&= \E \wh{J}_\lambda(g) =k \E_{Z\sim f_0} \ell(-1,g(Z) ) + \lambda (\ell+m) \E_{Z\sim f_\gamma} \ell(1,g(Z) ) \label{equJlambda2},
\end{align}
where $f_\gamma$ is defined in \eqref{equfgamma}. The following result shows that the minimizer of \eqref{equJlambda2} over all measurable functions yields an optimal score in the form of \eqref{equoraclescore} when the loss function $\ell$ is appropriately chosen.
%Observe that
%\begin{align}
%J_\lambda(g)&=k \E_{Z\sim f_0} \ell(-1,g(Z) ) + \lambda (\ell+m) \E_{Z\sim f_\gamma} \ell(1,g(Z) ) ;\label{equJlambda2}\\
%\wh{J}_\lambda(g) & = \sum_{i=1}^{k} \ell(-1,g(Z_i) ) + \lambda\sum_{i=k+1}^{n+m} \ell(1,g(Z_i) ),
%\end{align}
%where  $f_\gamma$ is defined by \eqref{equfgamma}.

\begin{lemma}\label{lem:PUclassiforacle}
Let $g^\sharp$ denote the minimizer of \eqref{equJlambda2} over all measurable functions. 
\begin{itemize}
% \item[(i)] When $\ell(\cdot,\cdot)$ is the $0-1$ loss,
%   $$g^\sharp(x)=\frac{\lambda (\ell+m)}{k} \frac{f_\gamma(x)}{f_0(x)}-1 = \lambda\frac{ m}{k} r(x) +  \lambda \frac{\ell +m_0}{\ell +m}-1.$$
\item[(i)] When $\ell(\cdot,\cdot)$ is the hinge loss, assuming that the set $\{x\in \mathcal{Z}\::\: f_\gamma(x)=cf_0(x)\}$ is of $\nu$-measure zero for any $c > 0$, where $\nu$ is defined in Assumption \ref{equ-marg}, 
  $$g^\sharp(x)=\mathrm{sign}\left(\frac{\lambda (\ell+m)}{k} \frac{f_\gamma(x)}{f_0(x)}-1\right) = \mathrm{sign}\left(\frac{\lambda\ell}{k} + \frac{\lambda m_0}{k}(1 - r(x))^{-1} - 1\right),$$
  and the minimum is unique $\nu$-almost everywhere.
\item[(ii)] When $\ell(\cdot,\cdot)$ is the cross entropy,
  $$g^\sharp(x)=\frac{\lambda (\ell +m) f_\gamma(x)}{\lambda (\ell +m) f_\gamma(x)+kf_0(x)} = \left(1+\left\{\frac{\lambda\ell}{k} + \frac{\lambda m_0}{k}(1 - r(x))^{-1}\right\}^{-1}\right)^{-1},$$
  and the minimum is unique $\nu$-almost everywhere.
\end{itemize}
\end{lemma}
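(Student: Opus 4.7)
\textbf{Proof plan for Lemma~\ref{lem:PUclassiforacle}.} The plan is to recast the population objective $J_\lambda(g)$ as an integral over $\mathcal{Z}$ and then minimize pointwise. Writing out \eqref{equJlambda2} in integral form,
\[
J_\lambda(g)=\int_{\mathcal{Z}}\Bigl[k f_0(x)\,\ell(-1,g(x))+\lambda(\ell+m)f_\gamma(x)\,\ell(1,g(x))\Bigr]d\nu(x),
\]
so a minimizer over measurable functions is obtained by choosing, for $\nu$-a.e.\ $x$, a value $g^\sharp(x)$ that minimizes the real-valued function
\[
h_x(u):=kf_0(x)\,\ell(-1,u)+\lambda(\ell+m)f_\gamma(x)\,\ell(1,u),\qquad u\in\mathbb{R}.
\]
Since both loss components are simple one-dimensional functions, the minimizer is available in closed form, and the measurability of $g^\sharp(\cdot)$ follows directly from the measurability of $f_0,f_\gamma$.

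For part (i), I would substitute the hinge loss to get $h_x(u)=\tfrac12 kf_0(x)(1+u)_+ + \tfrac12\lambda(\ell+m)f_\gamma(x)(1-u)_+$. This is piecewise linear in $u$: nondecreasing on $[1,\infty)$, nonincreasing on $(-\infty,-1]$, and affine on $[-1,1]$ with slope $\tfrac12[kf_0(x)-\lambda(\ell+m)f_\gamma(x)]$. Thus the minimum over $\mathbb{R}$ is attained at $u=+1$ when $\lambda(\ell+m)f_\gamma(x)>kf_0(x)$, at $u=-1$ when the reverse strict inequality holds, and on the entire interval $[-1,1]$ in the equality case. Under the assumption that $\{x:\lambda(\ell+m)f_\gamma(x)=kf_0(x)\}$ has $\nu$-measure zero (which is precisely the hypothesis, stated with $c=k/(\lambda(\ell+m))$), the minimizer is $\nu$-a.e.\ unique and equals $\mathrm{sign}\bigl(\lambda(\ell+m)f_\gamma(x)/(kf_0(x))-1\bigr)$.

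For part (ii), I would substitute the cross-entropy loss to obtain, for $u\in(0,1)$,
\[
h_x(u)=-kf_0(x)\log(1-u)-\lambda(\ell+m)f_\gamma(x)\log(u).
\]
This is strictly convex, coercive on $(0,1)$, so the first-order condition $h_x'(u)=kf_0(x)/(1-u)-\lambda(\ell+m)f_\gamma(x)/u=0$ has a unique solution
\[
g^\sharp(x)=\frac{\lambda(\ell+m)f_\gamma(x)}{kf_0(x)+\lambda(\ell+m)f_\gamma(x)},
\]
and strict convexity gives $\nu$-a.e.\ uniqueness.

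The last step is to translate both closed-form expressions into the statement's form involving $r(x)$. Using \eqref{equLR} I have $\pi_0 f_0(x)=(1-r(x))f(x)$, hence $f(x)/f_0(x)=\pi_0/(1-r(x))$. Then from \eqref{equfgamma}, $f_\gamma(x)/f_0(x)=\ell/(\ell+m)+ (m/(\ell+m))\cdot f(x)/f_0(x)$, which gives
\[
\frac{\lambda(\ell+m)f_\gamma(x)}{kf_0(x)}=\frac{\lambda\ell}{k}+\frac{\lambda m\pi_0}{k(1-r(x))}=\frac{\lambda\ell}{k}+\frac{\lambda m_0}{k}(1-r(x))^{-1},
\]
since $m\pi_0=m_0$. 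Substituting this into the two expressions derived above yields exactly the formulas claimed in (i) and (ii). No step looks technically hard: the only subtlety is ruling out non-uniqueness for the hinge loss on the equality set, which is handled precisely by the measure-zero assumption in (i); for cross-entropy this is automatic from strict convexity.
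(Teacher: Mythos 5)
Your proof is correct and follows essentially the same argument as the paper's: both reduce the population objective to a pointwise minimization of the integrand, analyze the piecewise-linear form for the hinge loss with uniqueness from the $\nu$-measure-zero assumption, and use strict convexity with the first-order condition for the cross-entropy; the algebraic translation to the $r(x)$ form is also the same. The only differences are cosmetic (the paper factors out $f_0\,d\nu$ and works with $2J_\lambda$; you keep both densities explicit).
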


The proof is presented in Section~\ref{proofPUclassiforacle}. 
Clearly, $g^\sharp$ is an optimal score function in the form of \eqref{equoraclescore} with the cross-entropy loss but not so with the hinge loss because the sign function is not strictly monotone. For cross-entropy loss, when $\lambda = 1$,
\begin{equation}\label{equ:gstarproba}
g^\sharp(x)=\frac{ \frac{\ell +m}{n+m} f_\gamma(x)}{\frac{\ell +m}{n+m} f_\gamma(x)+\frac{k}{n+m} f_0(x)},
\end{equation}
which can be roughly interpreted as the posterior probability to be in class $1$. 
% In particular, for cross-entropy loss, $g^\sharp$ can be interpreted as the ``posterior probability'' to be in the positive class when the prior probability is $\lambda / (\lambda + 1)$.
% However, in case (ii), the only solution $g^\sharp$ is not of the desired form \eqref{equoraclescore} because the sign function is not strictly monotone. 
% As a result,  \eqref{equ_heuristic}  is unlikely to hold for the hinge loss. 

In practice, it is computationally infeasible and statistically inefficient to optimize over all measurable functions. Instead, we often choose a function class $\mathcal{G}$ and estimate the score function by
\begin{align}
% g^\sharp_{\mathcal{G}} &\in \argmin \{ {J}_\lambda(g), g\in \mathcal{G}\},\label{equgPUth}\\
\wh{g} &\in \argmin_{g\in \mathcal{G}}\wh{J}_\lambda(g).\label{equgPU}
\end{align}
By construction, $\wh{J}_\lambda(g)$ is invariant to permutations of $(Z_{k+1},\dots,Z_{n+m})$, $\wh{g}$ always satisfies the condition \eqref{constrainedg}. When $\mathcal{G}$ has low complexity, we should expect $\wh{g}\approx g^\sharp_{\mathcal{G}}$ where
\begin{align}
g^\sharp_{\mathcal{G}} &\in \argmin_{g\in \mathcal{G}} J_\lambda(g). \label{equgPUth}
\end{align}
On the other hand, when $\mathcal{G}$ is sufficiently rich, we can expect $g^\sharp_{\mathcal{G}}\approx g^\sharp$. In summary, when the function class $\mathcal{G}$ and the loss function $\ell(\cdot, \cdot)$ are chosen appropriately, $\hat{g}\approx g^\sharp_{\mathcal{G}}\approx g^\sharp$, which is an optimal score function.

We illustrate the roles of function classes and loss functions in a simple setting where the positive class and the unlabeled class are generated from two gaussian distributions with dimension $1$ or $2$. The results are presented in Figure~\ref{fig:illustrationERM}, with each row corresponding to a data-generating process. For all settings, the first panel displays the null and alternative distributions and the second panel displays the distributions of the positive and unlabeled classes. In all settings, we plot $\hat{g}$ and $g^\sharp$ for hinged loss (SVM) and cross-entropy losses with two function classes, an inaccurate one (Logistic Regression) and an accurate one (Neural Networks). % In the one-dimensional settings, we also plot $g^\sharp$ for $0-1$ loss, though $\hat{g}$ is excluded because of the computational infeasibility. 
In the two-dimensional settings, we display the functions by contour plots. {For instance, for the cross entropy loss, we can observe the NN function class outperforms the logistic function class for approximating $g^\sharp$.}

% For the $0-1$ loss, only $g^*=g^\sharp$ is displayed, because the ERM $\wh{g}$ cannot be computed. ``SVM'' reads for a support vector machine ERM $\wh{g}$, that uses the hinge loss and is such that $g^\sharp=\mbox{sign}(g^*)$ so is a binary score function. We clearly see that $\wh{g}$ approaches $g^\sharp$ very well but is not monotone in $g^*$. %\et{, hence SVM is potentially slightly less accurate than other classification approach}. %\ar{\sout{Hence SVM leads to a poor score function.  }}
% ``Logistic regression'' stands for a classical ERM with cross entropy loss over the class of linear functions, which yields an inaccurate approximation. {While this might not be an issue in the one-sided case because the ranking induced by $\wh{g}$ seems still satisfactory, it is clearly not suitable in the two-sided case. This comes from the fact that the class of linear functions is not rich enough to cover that situation}. ``Neural Net'' stands for the ERM with cross entropy loss over a class of neural networks. 
% The approximation suggests in \eqref{equ_heuristic} seems reasonable in the considered cases. 

In conclusion, both the loss function $\ell(\cdot, \cdot)$ and the function class $\mathcal{G}$ are pivotal. Among the two loss functions we discuss, the cross entropy loss with a sufficiently rich function class (e.g., fully-connected neural networks) is particularly suitable for AdaDetect in that it is computationally feasible and approximately optimal. In contrast to the classification literature, hinged loss is undesirable for our purpose since the estimator does not converge to an optimal score. % The $0-1$ loss does yield an optimal score in the limit but poses a challenge for computation. Nevertheless, it is easier to analyze in theory and we will focus on it in Section~\ref{sec:NPPUclassif}.

\begin{figure}
	\hspace{-7mm}
	\includegraphics[width=10cm]{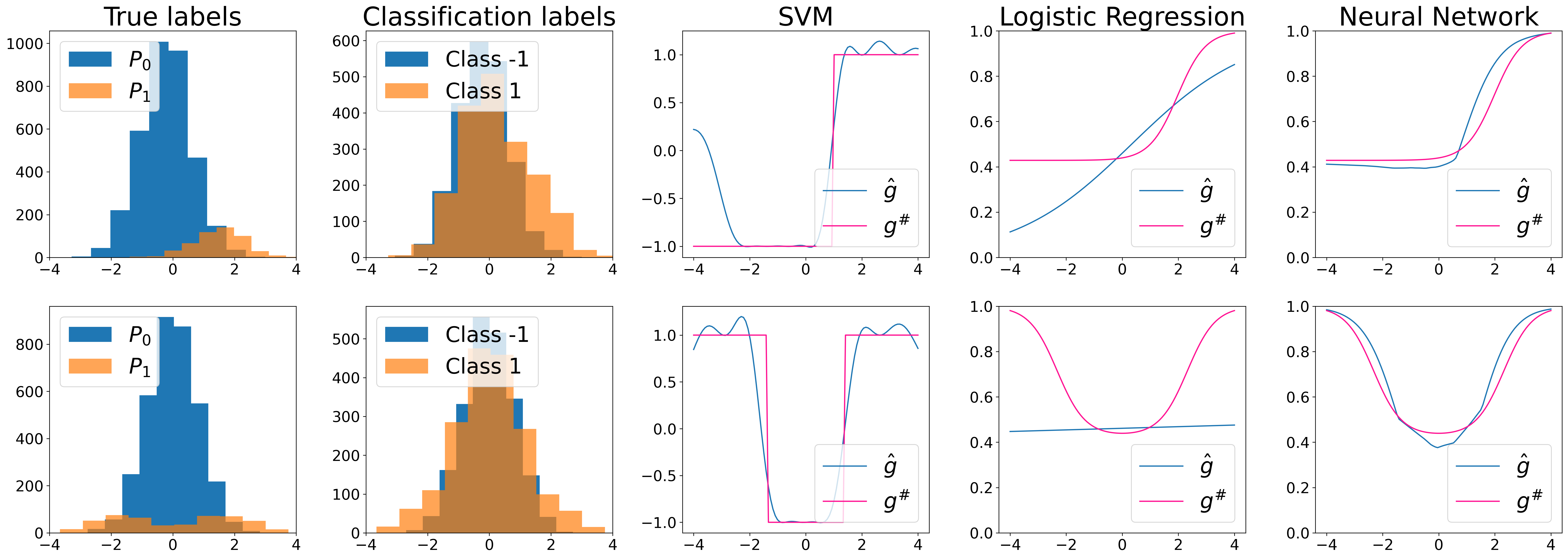}\\
	\includegraphics[width=10cm]{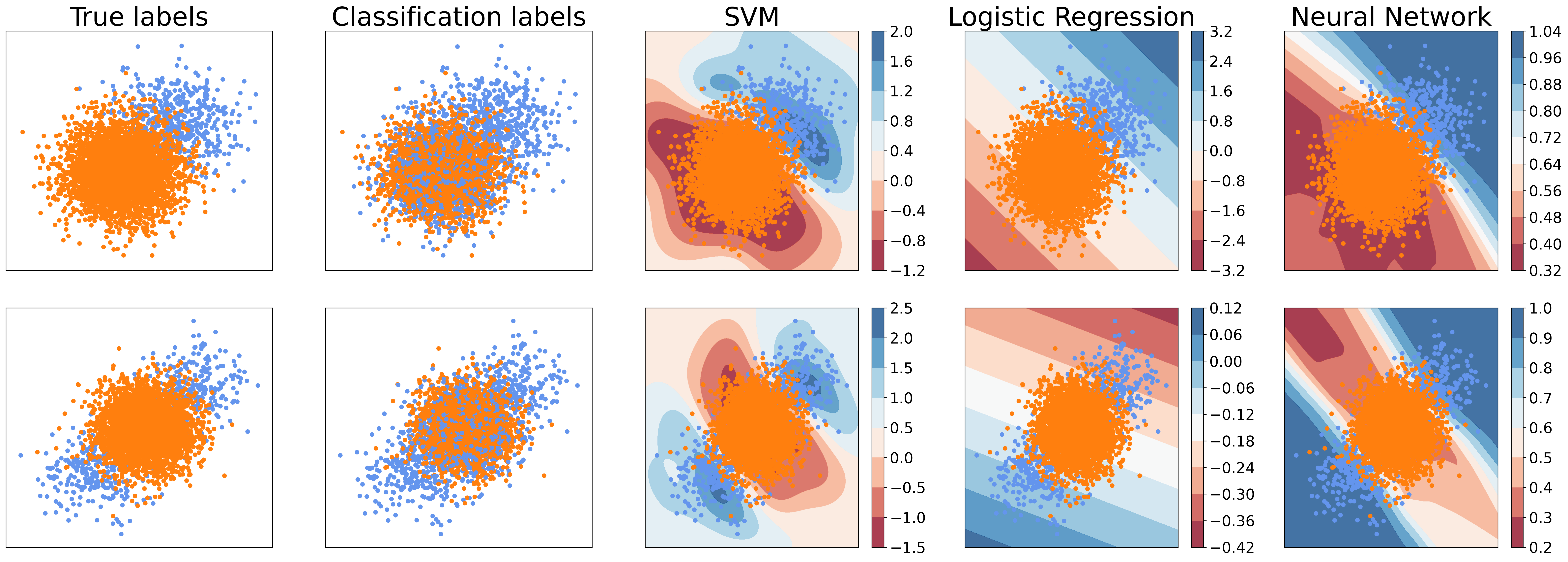}
	\caption{Plot of $g^*$ and $g^\sharp$ in different settings (rows) with different loss functions $\ell(\cdot,\cdot)$ and function classes $\mathcal{G}$ (with default parameters in \texttt{scikit-learn}). In all settings, $m = 1000$, $m_0 = 500$, $m_1 = 500$, $n = 3000$, and $k=2000$. The top two rows correspond to $d = 1$ and the bottom two rows correspond to $d = 2$. In all cases, $P_0 = \mathcal{N}(0, I_d)$. For the first and third rows, $P_1 = \mathcal{N}((2,\dots,2), I_d)$ (one-sided alternatives); for the second and fourth rows, $P_1 = 0.5\mathcal{N}((2,\dots,2), I_d) + 0.5\mathcal{N}((-2,\dots,-2), I_d)$ (two-sided alternatives).
	\label{fig:illustrationERM}}   
\end{figure}

\subsection{AdaDetect with cross-validation}\label{sec:cross}

In previous sections, we focus on a single score function. Nevertheless, most density estimation and classification algorithms involve hyperparameters that require data-driven tuning to maximize the power. Examples include the bandwidth for kernel density estimation, the maximum depth for random forests, the width and number of hidden layers for neural networks, and the numerical algorithm to optimize the loss.

Formally, we assume the researcher has a class of candidate score functions $\{{g}_\upsilon,\upsilon \in \mathcal{U}\}$ indexed by the hyper-parameter $\upsilon$. The goal is to choose $\hat{\upsilon}$ based on data and use $g_{\hat{\upsilon}}$ as the score function without breaking the FDR guarantee. By Theorem \ref{thm:FDRBONuS}, the FDR is controlled so long as $g_{\hat{\upsilon}}$ satisfies the condition \eqref{constrainedg}. Motivated by the ``double BONuS'' procedure proposed in \cite{yang2021bonus}, we propose the following version of AdaDetect with cross-validation, which we abbreviate as the AdaDetect cv procedure. 

\begin{enumerate}
\item Split $(Y_1, \ldots, Y_k)$ further into two parts $(Y_{1},\dots,Y_{s})$ and $(Y_{{s+1}},\dots,Y_{k})$ for some $s < k$.
\item Generate a class of score functions $g_{\upsilon}$ that satisfy a stronger condition than \eqref{constrainedg}:
\begin{equation*}
g_{\upsilon}(z,(z_{1},\dots,z_{s}), (z_{\pi(s+1)},\dots,z_{\pi(n +m)}))=g(z,(z_{1},\dots,z_{s}), (z_{s+1},\dots,z_{n +m})).
\end{equation*}
\item For each $g_{\upsilon}$, apply AdaDetect with $(Y_{k+1},\dots,Y_n, X_1,\dots,X_m)$ being the test sample and $(Y_1,\dots,Y_k)=(Y_1,\dots,Y_s;Y_{s+1},\dots,Y_k)$ being the NTS. Denote by $r_{\upsilon}$ the number of rejections.
\item Choose $\hat{\upsilon}\in \argmax_{\upsilon \in \mathcal{U}} r_{\upsilon}$
\item Apply AdaDetect with score function $g_{\hat{\upsilon}}$ to the original problem (with $(X_1,\dots,X_m)$ being the test sample and $(Y_1, \ldots, Y_n)=(Y_1, \ldots, Y_k;Y_{k+1},\ldots, Y_n)$ being the NTS).
\end{enumerate}

The pipeline to compute $g_{\hat{\upsilon}}$ is illustrated in Figure \ref{figDoubleBONuS}. By definition, each $g_\upsilon$ is invariant to permutation of $(Y_{s+1}, \ldots, Y_n, X_1, \ldots, X_m)$ and hence invariant to permutation of the mixed sample $(Y_{k+1}, \ldots, Y_n, X_1, \ldots, X_m)$. Thus, $r_{\upsilon}$ is also invariant to $(Y_{k+1}, \ldots, Y_n, X_1, \ldots, X_m)$, implying that $\hat{\upsilon}$ is so as well. As a result, $g_{\hat{\upsilon}}$ satisfies the condition \eqref{constrainedg}. Therefore, the results in Section~\ref{sec:control} all carry over to the AdaDetect cv procedure. 

In principle, we can use any other objective function that is invariant to the mixed sample than the number of rejections $r_{\upsilon}$. Nonetheless, $r_{\upsilon}$ tends to be a good proxy for the number of rejections in the last step and hence a better objective to optimize than the indirect ones like classification accuracy.

\begin{remark}\label{rem:samplesizecv}
When fitting the hyper-parameter $\upsilon$, the sample sizes $s,k-s,\ell+m,\ell$ do not maintain the same proportions as the original sizes $k,\ell,m$. 
Our recommendation, following the guidelines in Remark~\ref{rem:samplesize}, is to choose $s$ such that $k-s$ is of the same order as $\ell+m$ and $s$ is of the same order as $m$ (e.g., $\ell=m$, $s=3m$, $k=4m$). 
\end{remark}  

\begin{remark}
  \rev{The cross-validation can rule out overfitted models that performs well in training data but does poorly out of sample. By including nonsophisticated baseline models that likely generalize, the power of AdaDetect becomes less sensitive to overfitting of other complicated models or other failure modes that we have yet discovered. For example, the researcher can always add a non-adaptive score that cannot incur overfitting and might be underpowered. 
    % Indeed, overfitting would lead to lower or even vanishing power when turning to test data, so that such model would not be retained by AdaDetect cv. 
  }
\end{remark} 

\begin{figure}[t!]
\begin{center}
    \includegraphics[scale=0.22]{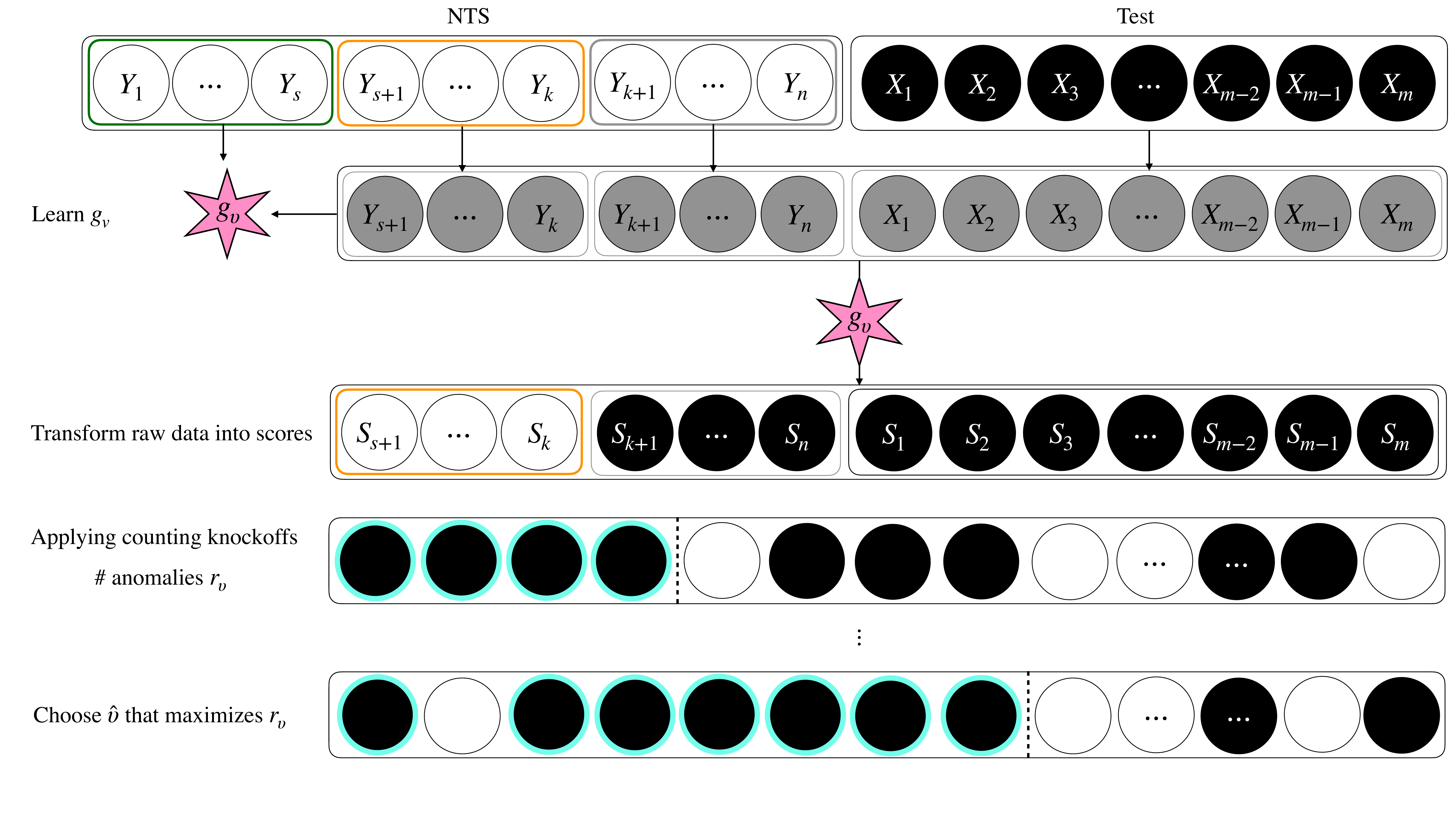}
\end{center}
\caption{The pipeline to compute score function $g_{\hat{\upsilon}}$ for AdaDetect cv. Same pictural conventions as in Figure~\ref{figBONuS}.} 
\label{figDoubleBONuS}
\end{figure}

\section{Power results}\label{sec:power}

In this section, we analyze the power of AdaDetect with appropriately chosen score functions. Throughout this section we assume that the measurements take values in $\mathcal{Z}=\R^d$. We start in Section~\ref{sec:NPPUclassif} with a specific score function given by a constrained empirical risk minimizer (ERM) with the $0$-$1$ loss and show it is as powerful as the classification approach based on the optimal score functions defined in \eqref{equoraclescore} when the function class is sufficiently flexible and up to asymptotically vanishing remainder terms. In Section~\ref{sec:genresultpower}, we turn to a general estimated score function that is close to an oracle (deterministic) score function on all measurements in the mixed sample. When the latter is sufficiently smooth, we show that AdaDetect with the estimated score function is as efficient as AdaDetect with the oracle score function, up to explicit remainder terms that are asymptotically vanishing. 

\subsection{A constrained ERM score function}\label{sec:NPPUclassif}

% As shown in Lemma \ref{lem:PUclassiforacle} (i), the ERM of the population $0-1$ loss yields an optimal score function. 
For the convenience of theoretical analysis, we study a constrained empirical risk minimizer (ERM) score function with $0$-$1$ loss motivated by the Neyman-Pearson (NP) formulation of classification problems given in   \cite{BLS2010}; see also \cite{cannon2002learning} and \cite{scott2005neyman}. Define
\begin{align}
\hat{R}_0(g)&=   k^{-1} \sum_{i=1}^{k} \ind{g(Z_i)\geq 0}, \:\:\:\:R_0(g)=\E \hat{R}_0(g) =\P_{Z\sim f_0} (g(Z)\geq 0)\label{NPexactRgamma},\\ 
\hat{R}_\gamma(g)& =(m+\ell)^{-1} \sum_{i=k+1}^{n+m} \ind{g(Z_i)<  0}, \:\:\:\:R_\gamma(g)=\E \hat{R}_\gamma(g) = (1-\gamma) (1-R_0(g))+ \gamma R_1(g),\nonumber\\
R_1(g)& =\P_{Z\sim \bar{f}_1} (g(Z)< 0),\nonumber
\end{align} 
where $\gamma$, $f_0$, and $\bar{f}_1$ are defined in \eqref{equfgamma}.
We consider a function class $\cG$ with a finite Vapnik-Chervonenkis (VC) dimension $V(\cG)$ (\citealp{vapnik1998statistical}) and the following constrained ERM score function 
\begin{align}
  \hat{g}&\in \argmin_{g \in \cG}\left\{\hat{R}_\gamma(g)\::\: \hat{R}_0(g)\leq \beta + \epsilon_0\right\},\label{defghat}
\end{align}
for some $\epsilon_0 > 0$, as well as its population version
\begin{align}
g^\sharp_\cG&\in \argmin_{g \in \cG}\left\{ R_\gamma(g)\::\:  R_0(g)\leq \beta \right\}.\label{equNP}
\end{align}

\begin{theorem}\label{powerhighdim}
Consider the setting of Theorem \ref{th:SCextended}. Assume $\alpha, \beta\in (0,1)$, $k,m_1\geq 1$, and $g^\sharp_\cG$, defined in \eqref{equNP}, satisfies $R_0(g^\sharp_\cG) = \beta$. Fix any $\delta\in (0,1/2)$. % and $\epsilon_i = \sqrt{\frac{\log (1/\delta)}{i}} + \sqrt{\frac{V(\cG) \log i}{i}}$ for all $i\geq 2$.
Then there exist constants $C,C'>0$ that only depend on $\delta$ such that, if 
\begin{equation}\label{equDelta}
\epsilon_0 = C\sqrt{\frac{V(\cG) + \log(1/\delta)}{k}}, \quad \Delta = C' \gamma^{-1} \sqrt{\frac{V(\cG) + \log(1/\delta)}{k\wedge \ell}},
\end{equation}
where $\gamma$ is defined in \eqref{equfgamma},  the following results hold.
\begin{itemize}
\item[(i)]
With probability at least $1-\delta$, 
$
{R}_0(\hat{g})  \leq \beta + \Delta
$
and
$
{R}_1(\hat{g}) \leq {R}_1(g^\sharp_\cG) + \Delta.
$
\item[(ii)] Let $M = \lceil (1-{R}_1(g^\sharp_\cG)-\Delta) m_1\rceil$. Assume that
 \begin{equation}\label{equlconstraint}
\rev{1-{R}_1(g^\sharp_\cG)  \geq  (1+\alpha^{-1})  \Delta, \quad \ell\geq \frac{2 m}{\alpha M}, \quad \beta \le \frac{0.4\alpha M}{m}. }
\end{equation}
Then, with probability at least $1-\delta$,
\begin{align}
\mbox{AdaDetect}_\alpha &\supset \{i\in \{1,\dots,m\}\::\: \hat{g}(X_i)\geq 0\}\label{rejectsalot},\\
|\mbox{AdaDetect}_\alpha\cap \cH_1|/m_1&\geq 1-{R}_1(g^\sharp_\cG)-\Delta,\label{highpower}
\end{align}
where $\mbox{AdaDetect}_\alpha$ denotes the rejection set of AdaDetect with score function $\hat{g}$. 
\end{itemize}
\end{theorem}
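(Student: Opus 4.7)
The argument naturally splits along the two parts of the statement, and the backbone throughout is empirical-process concentration over $\cG$ via its finite VC dimension $V(\cG)$.

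For part (i), I would invoke standard VC inequalities (Vapnik--Chervonenkis, or Talagrand-type bounds applied to the $\{0,1\}$-valued class $\{\mathds{1}\{g\ge 0\}:g\in\cG\}$) to obtain, with probability at least $1-\delta$, the uniform deviations $\sup_{g\in\cG}|\hat R_0(g)-R_0(g)|\lesssim\sqrt{(V(\cG)+\log(1/\delta))/k}$ and $\sup_{g\in\cG}|\hat R_\gamma(g)-R_\gamma(g)|\lesssim\sqrt{(V(\cG)+\log(1/\delta))/(\ell+m)}$, where the constants only depend on $\delta$. The first bound, combined with the empirical constraint $\hat R_0(\hat g)\le\beta+\epsilon_0$, immediately yields $R_0(\hat g)\le\beta+\Delta$ after an appropriate choice of the constants in $\epsilon_0$ and $\Delta$. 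For $R_1(\hat g)$, observe that the same event guarantees $\hat R_0(g^\sharp_\cG)\le\beta+\epsilon_0$, so $g^\sharp_\cG$ is feasible for the empirical problem and therefore $\hat R_\gamma(\hat g)\le\hat R_\gamma(g^\sharp_\cG)$; applying the second uniform bound twice gives $R_\gamma(\hat g)-R_\gamma(g^\sharp_\cG)\lesssim\sqrt{(V(\cG)+\log(1/\delta))/(\ell+m)}$. Using the identity $R_\gamma(g)=(1-\gamma)(1-R_0(g))+\gamma R_1(g)$ from \eqref{NPexactRgamma} and rearranging,
\[
\gamma\bigl(R_1(\hat g)-R_1(g^\sharp_\cG)\bigr) \le 2\,\text{concentration} + (1-\gamma)\bigl(R_0(\hat g)-\beta\bigr),
\]
and dividing by $\gamma$ produces the $\gamma^{-1}$ scaling in $\Delta$, with the $k\wedge\ell$ in the denominator arising from merging the two concentration rates.

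For part (ii), the plan is to show that $\widehat{\mathrm{FDP}}(0)\le\alpha$ with high probability; by the counting-knockoff reformulation recalled in Section~\ref{sec:adaptiveteststat}, this forces the AdaDetect threshold $\hat t$ to lie at or below $0$, which in turn gives the inclusion \eqref{rejectsalot}. Writing
$N_1=\sum_{i=k+1}^n\mathds{1}\{\hat g(Z_i)\ge 0\}$ and $N_2=\sum_{i=n+1}^{n+m}\mathds{1}\{\hat g(Z_i)\ge 0\}$,
one has $\widehat{\mathrm{FDP}}(0)=m(1+N_1)/((\ell+1)N_2)$. The upper bound on $N_1$ comes from a further uniform VC bound on the i.i.d.\ sample $(Y_{k+1},\dots,Y_n)$, giving $N_1/\ell\le R_0(\hat g)+O(\sqrt{(V(\cG)+\log(1/\delta))/\ell})\le \beta+\Delta$ by part (i) up to terms already absorbed in $\Delta$. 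The lower bound on $N_2$ comes from restricting to $\cH_1$: a uniform VC bound for the independent (but non-identically distributed) variables $(X_i)_{i\in\cH_1}$, obtained by standard symmetrization, yields
\[
|\{i\in\cH_1:\hat g(X_i)\ge 0\}|\ge m_1\bigl(1-R_1(\hat g)\bigr)-O\!\bigl(\sqrt{m_1(V(\cG)+\log(1/\delta))}\bigr),
\]
and combining with part (i) makes the right-hand side at least $M=\lceil(1-R_1(g^\sharp_\cG)-\Delta)m_1\rceil$.

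It then remains to verify that
\[
\widehat{\mathrm{FDP}}(0) \;\le\; \frac{m\bigl(1+\ell(\beta+\Delta)\bigr)}{(\ell+1)\,M} \;\le\; \alpha,
\]
which is where the three conditions in \eqref{equlconstraint} come in: $\ell\ge 2m/(\alpha M)$ controls the $1/(\ell+1)$ term, $\beta\le 0.4\alpha M/m$ controls the $\ell\beta$ contribution, and $1-R_1(g^\sharp_\cG)\ge(1+\alpha^{-1})\Delta$ is exactly what guarantees $m_1\Delta\le\alpha M$, which bounds the $\ell\Delta$ contribution. Finally, \eqref{highpower} is an immediate consequence of \eqref{rejectsalot} together with the lower bound on $|\{i\in\cH_1:\hat g(X_i)\ge 0\}|$ already established. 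The main technical obstacle I anticipate is the uniform concentration for the non-i.i.d.\ sum over $\cH_1$ (needed only for non-asymptotic power, not for part (i)) and, more prosaically, the careful bookkeeping to confirm that the three clean conditions of \eqref{equlconstraint} exactly cover the three terms in the decomposition of $\widehat{\mathrm{FDP}}(0)$, so that everything closes with explicit numerical constants rather than vague $O(\cdot)$ terms.
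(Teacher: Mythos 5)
Your plan for part (i) mirrors the paper's proof closely: VC concentration on $\hat R_0$ over the first null sample, feasibility of $g^\sharp_\cG$ for the empirical constraint, concentration on $\hat R_\gamma$, and rearrangement through the identity $R_\gamma(g)=(1-\gamma)(1-R_0(g))+\gamma R_1(g)$ so that the $\gamma^{-1}$ factor appears in $\Delta$ exactly where it must (namely for $R_1$). The paper organizes this via four events ($\Omega_0,\tilde\Omega_0,\Omega_{\gamma,0},\Omega_{\gamma,1}$) rather than two, but that is a cosmetic difference. Part (ii) is also structurally the same: you reduce to showing $\wh{\FDP}(0)\leq\alpha$, which the paper packages as Lemma~\ref{positivescoreBH}, and then control $N_1=\sum_{i=k+1}^n\ind{\hat g(Z_i)\geq 0}$ from above and $|\{i\in\cH_1:\hat g(X_i)\geq 0\}|$ from below; you correctly identify that the latter needs a non-i.i.d.\ VC bound, which symmetrization provides.

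However, there is a genuine gap in your bookkeeping in part (ii). You bound $N_1/\ell\leq\beta+\Delta$, invoking part (i), and you want the resulting term $m\ell\Delta/((\ell+1)M)\approx m\Delta/M$ to be controlled by the first condition in \eqref{equlconstraint}. That condition gives only $m_1\Delta\leq\alpha M$, hence $m\Delta/M\leq\alpha\,m/m_1$. This is not $\lesssim\alpha$ unless $m_1$ is comparable to $m$, so your argument does not close in the sparse regime $m_1\ll m$. The fix is what the paper does: $R_0(\hat g)$ and the empirical fraction $N_1/\ell$ enjoy the sharper deviation bound of order $\sqrt{(V(\cG)+\log(1/\delta))/(k\wedge\ell)}$ \emph{without} the $\gamma^{-1}$ inflation (the $\gamma^{-1}$ is needed only for $R_1$, through the division by $\gamma$ in the rearrangement). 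Concretely, the paper shows $N_1/\ell\leq\beta+0.1\gamma\Delta$, not $\beta+\Delta$, and then $\gamma\Delta\leq\tfrac{m_1}{m+\ell}\cdot\tfrac{\alpha M}{m_1}\leq\tfrac{\alpha M}{m}$, so the three pieces $1/\ell$, $\beta$, and $0.1\gamma\Delta$ sum cleanly to at most $\tfrac{\alpha M}{m}$ (with budget split $0.5+0.4+0.1$). Your plan needs to keep these two deviation rates separate---the $\gamma^{-1}$-inflated rate appearing in the statement of part (i) is strictly too weak for the $N_1$ control in part (ii).
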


The proof of Theorem~\ref{powerhighdim} is presented in Section~\ref{sec:proofpowerhighdim}. The idea is to show that there are many alternatives with a nonnegative score, while there are only a few true nulls with nonnegative scores. This yields small empirical $p$-values for hypotheses with nonnegative scores, which implies that the procedure $\mbox{AdaDetect}_\alpha$ detects these nonnegative scores, see Lemma~\ref{positivescoreBH}. 

Theorem \ref{powerhighdim} (i) shows that $\hat{g}$ has a similar classification accuracy to $g^\sharp_\cG$ on both the NTS and mixed sample. It is analogous to Theorem~2 in \cite{BLS2010}, though \cite{BLS2010} considers a different setting where the proportion of nulls is random. Theorem \ref{powerhighdim} (ii) entails that, with high probability, all hypotheses with nonnegative scores will be rejected and the power of AdaDetect with $\hat{g}$ is nearly as large as the power of the classification procedure given by $g_\cG^\sharp$. 

Note that the Lagrangian form of the above problem is in the form of the weighted loss defined in \eqref{equJlambda2}. Thus, by Lemma~\ref{lem:PUclassiforacle} (i), there exists $\lambda_\beta > 0$ such that 
$g_{\cG}^\sharp(x)=g^{*}(x) = \frac{\lambda_\beta (\ell+m)}{k} \frac{f_\gamma(x)}{f_0(x)}-1,$
if the constraint is feasible and $\cG$ is sufficiently rich to include the above function. Above, $g^{*}(x)$ satisfies \eqref{equoraclescore} and hence yields the optimal power. If we define $b=R_1(g^\sharp_\cG)-{R}_1(g^*)$ as the bias due to the constraint, Theorem \ref{powerhighdim} (ii) implies that, with probability $1- \delta$, the power of AdaDetect with $\hat{g}$ is at most $\Delta + b$ below the optimal power. Thus, the function class $\cG$ incurs a tradeoff that a richer class yields a smaller $b$ but a larger $\Delta$ and vice versa.

Here, we aim at making $\cG$ as flexible as possible while ensuring $\Delta = o(1)$. When $k, \ell$, and $m$ are of the same order and $\delta$ is a constant,
$
\Delta \asymp  \frac{m}{m_1} \sqrt{\frac{V(\cG) }{m}}.
$
Hence, $\Delta=o(1)$ if 
\begin{equation}\label{condition-vanishing}
\frac{m_1}{m} \gg  \sqrt{\frac{V(\cG) }{m}}.
\end{equation}
For illustration, consider the class $\cG_{N,L,s}$ of ReLU feed forward neural networks with fixed topology, maximum width $N \asymp m^c$ ($c\in (0,1)$), depth $L\asymp \log m$ and sparsity $s\asymp N \log m$. \cite{bartlett2019nearly} show that
$
V(\cG_{N,L,s})\leq 2sL\log(4eN) \lesssim m^c (\log m)^3.
$
Hence, condition \eqref{condition-vanishing} reads in this case
$
m_1/m \gg  m^{\frac{c-1}{2}} (\log m)^{3/2}.
$
This implies that $\Delta = o(1)$ unless the novelties are too sparse. 
On the other hand, given the approximation ability of class of neural networks, we should expect $1-{R}_1(g^\sharp_{\cG_{N,L,s}})\approx 1-{R}_1(g^*)$. Thus, Theorem~\ref{powerhighdim} (ii) implies the resulting score function is nearly optimal.

\begin{remark}[Choice of $\beta$]\label{rem:beta}
Since $\mbox{AdaDetect}_\alpha$ controls FDR at level $\alpha$, it is necessary to impose an upper bound on $\beta$ in Theorem~\ref{powerhighdim} (ii). Roughly speaking, our condition on $\beta$ guarantees that the classifier $g^\sharp_\cG$ controls the FDR at level $\alpha$, up to remainder terms. %(since $\FDR(g^\sharp_\cG) \approx R_0(g^\sharp_\cG)/ (R_0(g^\sharp_\cG)+(1-{R}_1(g^\sharp_\cG))m_1/m)$) 
% up to constants and the remainder term $\Delta$. %Hence, the statement in Theorem~\ref{powerhighdim} ensures that AdaDetect has a behavior close to the LRT with controlled FDR. 
\end{remark}

\begin{remark}
The condition on $\ell$ in \eqref{equlconstraint} is needed to ensure that the minimum value $1/(1+\ell)$ that $p$-values can take is sufficiently small so that the BH procedure can reject. A similar condition was introduced in \cite{mary2021semisupervised}, see also Remark~\ref{rem:samplesize}. 
\end{remark}

\subsection{General score functions}\label{sec:genresultpower}

Now we move to general score functions. Let $g^*$ be any measurable function $\R^d\to \R$ in the form of \eqref{equoraclescore} and 
\begin{align}
\ol{G}_0(s) &= \P_{X\sim P_0}(g^*(X) \geq s),\quad s\in \R\label{equGbar};\\
\zeta_{r}(\eta)&=\max_{u\in [\alpha (r\vee 1)/m,\alpha]} \left\{\frac{\ol{G}_0(\ol{G}_0^{-1}(u)-2\eta) - u}{u }\right\},\quad \eta>0, r\in \{0,\dots,m\}.\label{funcdelta}
\end{align} 
Here, $\zeta_r(\cdot)$ measures the local fluctuation of $\ol{G}_0$. We suppress the dependence on $\alpha$ and $m$ to ease notation. Furthermore, consider any data-driven score function $\hat{g}$ satisfying the condition \eqref{constrainedg} and let
\begin{align}
\hat{\eta}&=\max_{k+1\leq i\leq n+m} |\hat{g}(Z_i; (Z_1,\dots,Z_k) , (Z_{k+1},\dots,Z_{n+m}))- g^*(Z_i) |,\label{diffscore}
\end{align}
which measures the maximal discrepancy of scores in the mixed sample. In the following, $\mbox{AdaDetect}_{\alpha}$ denotes the procedure with score function $\hat{g}$ and $\mbox{AdaDetect}_{\alpha}^{*}$ denotes the procedure with score function $g^{*}$. 

\begin{theorem}\label{th:BHestimated}
Fix any $r\in \{0,\dots,m\}$ and let $\mathcal{R}=\{|\mbox{AdaDetect}^*_\alpha|\geq r\}$. Assume $m\geq 1$, $\l, k\geq 0$, $n=k+\l\geq 1$, and $\ol{G}_0$ \eqref{equGbar} is continuous and strictly decreasing. Under Assumptions~\ref{as:indep} and~\ref{equ-marg}, for any $\delta,\eta\in (0,1)$ such that $ (\ell+1)\delta\alpha(r\vee 1)/m\geq 2$,
\begin{align}\label{powerBONuSstar}
\P\left(\mathcal{R} \cap\{\mbox{AdaDetect}^*_\alpha\subset \mbox{AdaDetect}_{\alpha'}\}^c\right)\leq \P\left(\hat{\eta}> \eta \right) + 2 m e^{-(3/28)  (\ell+1)\delta^2\alpha(r\vee 1)/m},
\end{align}
where $\alpha'=\alpha(1+3\delta)(1+\zeta_r(\eta))$ and $\zeta_r(\cdot)$ and $\hat{\eta}$ are defined in \eqref{funcdelta} and \eqref{diffscore}, respectively.  Furthermore, \eqref{powerBONuSstar} is also true with $\mbox{AdaDetect}^*_\alpha$ replaced by $\BH^*_\alpha$, the BH algorithm applied to the oracle $p$-values $p^*_i=\ol{G}_0(g^*(X_i))$, $1\leq i\leq m$.
\end{theorem}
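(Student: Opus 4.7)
The plan is to exhibit a high-probability event on which every empirical $p$-value $p_j$ produced by AdaDetect is pointwise dominated by a constant multiple of the corresponding $p$-value of the target procedure (either $\mathrm{AdaDetect}^*_\alpha$ or $\BH^*_\alpha$), and then to invoke the standard monotonicity property of the BH algorithm to convert this pointwise comparison into the claimed inclusion of rejection sets. Introduce the empirical survival function built from the second half of the NTS, $\wh{F}_\ell(t) := (\ell+1)^{-1}\bigl(1 + \sum_{i=k+1}^{n} \ind{g^*(Y_i) > t}\bigr)$, together with the events $\mathcal{E}_1 = \{\wh{\eta}\le \eta\}$ and $\mathcal{E}_2 = \{(1-\delta)\ol{G}_0(t)\le \wh{F}_\ell(t)\le (1+\delta)\ol{G}_0(t)\text{ for every relevant threshold }t\}$. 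The complement of the desired inclusion event is contained in $\mathcal{E}_1^c\cup\mathcal{E}_2^c$, which accounts for the two terms on the right-hand side of \eqref{powerBONuSstar}.

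The comparison between empirical $p$-values proceeds in three steps. First, on $\mathcal{E}_1$, the inequality $\wh{g}(Z_i)>\wh{g}(X_j)$ for $i\in\{k+1,\dots,n\}$ forces $g^*(Z_i)>g^*(X_j)-2\eta$ by the triangle inequality and the definition of $\wh{\eta}$, so summing the indicators gives $p_j\le \wh{F}_\ell(g^*(X_j)-2\eta)$. Second, Assumption~\ref{as:indep} together with the continuity of $\ol{G}_0$ implies that $g^*(Y_{k+1}),\dots,g^*(Y_n)$ are i.i.d.\ and independent of $(X_i)_{i=1}^m$ with survival function $\ol{G}_0$; conditional on $X$, a multiplicative Chernoff bound for $\mathrm{Bin}(\ell,\ol{G}_0(t))$ combined with a union bound over the at most $m$ data-dependent thresholds appearing in the argument yields $\P(\mathcal{E}_2^c)\le 2m\,e^{-(3/28)(\ell+1)\delta^2\alpha(r\vee 1)/m}$, the condition $(\ell+1)\delta\alpha(r\vee 1)/m\ge 2$ being what is needed to absorb the $+1/(\ell+1)$ discretization of $\wh{F}_\ell$ into the multiplicative form. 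Third, the definition of $\zeta_r(\eta)$ gives $\ol{G}_0(g^*(X_j)-2\eta)\le (1+\zeta_r(\eta))\,u$ for $u=\ol{G}_0(g^*(X_j))\in[\alpha(r\vee 1)/m,\alpha]$, and monotonicity of $\ol{G}_0$ extends this to $\ol{G}_0(g^*(X_j)-2\eta)\le (1+\zeta_r(\eta))\alpha(r\vee 1)/m$ when $u<\alpha(r\vee 1)/m$. Chaining these bounds produces the key pointwise inequality $p_j\le (1+\delta)(1+\zeta_r(\eta))\max(p^*_j,\alpha(r\vee 1)/m)$; when the target is $\mathrm{AdaDetect}^*_\alpha$, an additional $(1-\delta)^{-1}$ factor coming from the lower bound in $\mathcal{E}_2$ is absorbed via $(1+\delta)/(1-\delta)\le 1+3\delta$ (valid for $\delta\le 1/2$), producing the common factor $(1+3\delta)(1+\zeta_r(\eta))$. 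Denoting by $S$ the rejection set of the target procedure with $k^*=|S|\ge r$ (which holds on $\mathcal{R}$ directly for $\mathrm{AdaDetect}^*_\alpha$, and follows from $\mathcal{R}\cap\mathcal{E}_2$ for $\BH^*_\alpha$), every $j\in S$ has target $p$-value $\le \alpha k^*/m$, hence $\max(p^*_j,\alpha(r\vee 1)/m)\le \alpha k^*/m$; the chain above therefore gives $p_j\le \alpha' k^*/m$ for every $j\in S$, and the BH monotonicity lemma (if $k^*$ $p$-values lie below $c\,k^*/m$ then BH at level $c$ rejects at least $k^*$ hypotheses, including these indices) forces $S\subset \mathrm{AdaDetect}_{\alpha'}$.

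The main obstacle is the concentration step: one has to produce a uniform two-sided $(1\pm\delta)$ multiplicative bound for the discrete empirical survival function $\wh{F}_\ell$ at up to $m$ data-dependent thresholds via a Bernstein-type argument calibrated to yield the precise constant $3/28$ in the exponent, while correctly handling the $+1/(\ell+1)$ correction that distinguishes $\wh{F}_\ell$ from a pure empirical average. A secondary subtlety is that $\zeta_r$ is only defined on $[\alpha(r\vee 1)/m,\alpha]$, so one must patch the argument for indices with very small $p^*_j$ by invoking the endpoint $\alpha(r\vee 1)/m$, and one must verify that for the $\BH^*_\alpha$ target the cardinality lower bound $|\BH^*_\alpha|\ge r$ does hold on $\mathcal{R}\cap\mathcal{E}_2$ even though $\mathcal{R}$ is phrased in terms of $\mathrm{AdaDetect}^*_\alpha$.
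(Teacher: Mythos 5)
Your proposal follows essentially the same route as the paper's own proof: decompose the bad event into a bad-approximation event $\{\hat{\eta}>\eta\}$ plus a concentration failure of an empirical tail function at the $m$ data-dependent thresholds, chain $p_j\le \wh{F}_\ell(g^*(X_j)-2\eta)$ (on the approximation event) $\le (1+\delta)\ol{G}_0(g^*(X_j)-2\eta)$ (concentration) $\le (1+\delta)(1+\zeta_r(\eta))(\ol{G}_0(g^*(X_j))\vee\tilde{\alpha})$ (definition of $\zeta_r$), then convert from the population oracle $p$-value to the empirical oracle $p$-value and close with the BH monotonicity/domination lemma. The Bernstein bound with the $\ge 2$ slack condition, the union over $m$ thresholds, and the capping at $\tilde\alpha=\alpha(r\vee 1)/m$ are all handled the same way. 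So the approach is correct.

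Two details are slightly off. First, to absorb the lower-side concentration factor you invoke $(1+\delta)/(1-\delta)\le 1+3\delta$, which in fact only holds for $\delta\le 1/3$, not $\delta\le 1/2$ as you claim (at $\delta=1/2$ the left side is $3$ and the right side is $2.5$). Since the theorem allows any $\delta\in(0,1)$, your version of $\mathcal{E}_2$ does not quite reach the stated constant. The paper avoids this by stating the lower-side event with the empirical quantity $\wh{G}^*_0(g^*(X_i))\vee\tilde\alpha$ in the denominator rather than the population $\ol{G}_0$, which yields directly $\ol{G}_0(g^*(X_i))\vee\tilde\alpha\le(1+\delta)(\wh{G}^*_0(g^*(X_i))\vee\tilde\alpha)$; chaining then produces $(1+\delta)^2\le 1+3\delta$, valid for all $\delta\le 1$. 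You should replace your two-sided $(1\pm\delta)$-around-$\ol{G}_0$ event with two one-sided events normalized to make this absorption clean. Second, the ``Furthermore'' clause of the theorem replaces $\mathrm{AdaDetect}^*_\alpha$ by $\BH^*_\alpha$ everywhere in \eqref{powerBONuSstar}, including inside the definition of $\mathcal{R}$; so for that variant $\mathcal{R}=\{|\BH^*_\alpha|\ge r\}$ directly, and there is nothing to verify about going from $\mathrm{AdaDetect}^*_\alpha$ to $\BH^*_\alpha$ on $\mathcal{R}\cap\mathcal{E}_2$ --- the subtlety you flag at the end does not arise.
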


The proof is presented in Section~\ref{proof:BHestimated}. The condition $ (\ell+1)\delta\alpha(r\vee 1)/m\geq 2$ is analogous to the one studied \citep{mary2021semisupervised} for fixed score functions. 
When we choose $r=0$, we have $\P(\mathcal{R})=1$ and thus \eqref{powerBONuSstar} implies
$$\P\left(\mbox{AdaDetect}^*_\alpha\subset \mbox{AdaDetect}_{\alpha'}\right)\geq 1 - \P\left(\hat{\eta}> \eta \right) - 2 m e^{-(3/28)  (\ell+1)\delta^2\alpha /m},$$
for any $\delta$ with $(\ell + 1)\delta \alpha / m\ge 2$. If $\ell / m >\!\!> \log m$, choosing $\delta = o(1)$ such that $(\ell+1)\delta^2\alpha /m >\!\!>\log m$ and $\eta$ such that $\P(\hat{\eta}\ge \eta) = o(1)$, we have
$\P\left(\mbox{AdaDetect}^*_\alpha\subset \mbox{AdaDetect}_{\alpha'}\right) = 1 - o(1),$
where $\alpha' = \alpha(1 + \zeta_0(\eta))(1 + o(1)).$ Thus, when $\zeta_0(\eta)$ is small, we show that AdaDetect with the estimated score function and slight inflation of the target level is strictly more powerful than its oracle version. 

In general, when $|\mbox{AdaDetect}_{\alpha}^{*}|$ is larger with high probability, we can choose a larger $r$ to relax the condition on $\delta$, reduces $\zeta_{r}(\eta)$ (and hence $\alpha'$), and improve the RHS of \eqref{powerBONuSstar}. In particular, we can set $r$ appropriately to obtain the following result on the asymptotic TDR. 

\begin{corollary}\label{cor:BHestimated}
Consider the setting of Theorem~\ref{th:BHestimated}. Fix any $\epsilon > 0$. Assume $m_1 \ge 1$ and $ (\ell+1)\delta\alpha\lceil m_1\epsilon\rceil/m\geq 2$. Then
\begin{align}\label{powerBONuSstarTDR}
 \TDR(\mbox{AdaDetect}_{\alpha'}) \geq \TDR(\mbox{AdaDetect}^*_\alpha)  -\P\left(\hat{\eta}> \eta \right)-2 m e^{-(3/28)  (\ell+1)\delta^2\alpha\lceil m_1\epsilon\rceil/m} - \epsilon,
\end{align}
 where $\alpha'=\alpha(1+3\delta)(1+\zeta_{\lceil m_1\epsilon\rceil }(\eta))$.
 In particular, if there exist sequences $\delta=\delta(k,\l,m,m_1)$, $\epsilon=\epsilon(k,\l,m,m_1)$, and $\eta=\eta(k,\l,m,m_1)$ such that, as $\l,m,m_1$ tend to infinity, 
 \begin{equation}\label{equ:condparaconsist}
\mbox{ $\delta,\epsilon\to 0$, $\ell\delta^2 \epsilon m_1/m\to \infty$, $ \P\left(\hat{\eta}> \eta \right)\to 0$ and $\zeta_{\lceil m_1\epsilon\rceil }(\eta)\to 0$, }
\end{equation}
then
\begin{equation}\label{equconsistency}
\liminf_{\l,m,m_1} \left\{\TDR(\mbox{AdaDetect}_{\tilde{\alpha}}) - \TDR(\mbox{AdaDetect}^*_\alpha) \right\}\geq 0,\:\:\: \mbox{ for any fixed }\tilde{\alpha} > \alpha.
\end{equation}
Furthermore, these results hold with $\mbox{AdaDetect}^*_\alpha$ replaced by $\BH^*_\alpha$ defined in Theorem \ref{th:BHestimated}.
\end{corollary}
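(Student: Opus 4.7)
The plan is to apply Theorem~\ref{th:BHestimated} with $r = \lceil m_1 \epsilon\rceil$. Since $m_1\geq 1$, the prerequisite $(\ell+1)\delta\alpha(r\vee 1)/m\geq 2$ of Theorem~\ref{th:BHestimated} matches precisely the standing assumption of the corollary. Writing $A = \mathcal{R}\cap \{\mbox{AdaDetect}^*_\alpha \subset \mbox{AdaDetect}_{\alpha'}\}$ and $D = \TDP(P,\mbox{AdaDetect}_{\alpha'}) - \TDP(P,\mbox{AdaDetect}^*_\alpha)$, I would split
\begin{equation*}
\TDR(\mbox{AdaDetect}_{\alpha'}) - \TDR(\mbox{AdaDetect}^*_\alpha) = \E[D\ind{A}] + \E[D\ind{A^c\cap \mathcal{R}}] + \E[D\ind{\mathcal{R}^c}].
\end{equation*}
The first term is nonnegative because of the inclusion $\mbox{AdaDetect}^*_\alpha \subset \mbox{AdaDetect}_{\alpha'}$. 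The second term is bounded below by $-\P(A^c\cap \mathcal{R})$, which Theorem~\ref{th:BHestimated} controls by $-\P(\hat\eta>\eta) - 2m e^{-(3/28)(\ell+1)\delta^2\alpha\lceil m_1\epsilon\rceil/m}$. For the last term, on $\mathcal{R}^c$ one has $|\mbox{AdaDetect}^*_\alpha\cap \cH_1|\leq |\mbox{AdaDetect}^*_\alpha|\leq \lceil m_1\epsilon\rceil - 1$, hence $\TDP(P,\mbox{AdaDetect}^*_\alpha)\leq (\lceil m_1\epsilon\rceil-1)/m_1 \leq \epsilon$, so $D\geq -\epsilon$ there, giving $\E[D\ind{\mathcal{R}^c}]\geq -\epsilon$. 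Summing yields exactly \eqref{powerBONuSstarTDR}.

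For the asymptotic claim, I would substitute into \eqref{powerBONuSstarTDR} sequences $(\delta, \epsilon, \eta)$ satisfying \eqref{equ:condparaconsist}. These conditions ensure that $\P(\hat\eta>\eta)\to 0$, $\epsilon\to 0$, and that the exponent $(\ell+1)\delta^2\alpha\lceil m_1\epsilon\rceil/m$ eventually dominates $\log(2m)$, so all three error terms on the right vanish. Moreover, $\alpha' = \alpha(1+3\delta)(1+\zeta_{\lceil m_1\epsilon\rceil}(\eta))\to \alpha$, so for any fixed $\tilde\alpha>\alpha$ and all sufficiently large $\ell,m,m_1$ we have $\alpha'\leq \tilde\alpha$. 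By monotonicity of the BH procedure in its target level, $\mbox{AdaDetect}_{\alpha'}\subset \mbox{AdaDetect}_{\tilde\alpha}$, whence $\TDR(\mbox{AdaDetect}_{\tilde\alpha})\geq \TDR(\mbox{AdaDetect}_{\alpha'})$. Taking the $\liminf$ in \eqref{powerBONuSstarTDR} then yields \eqref{equconsistency}. The extension to $\BH^*_\alpha$ is immediate, since Theorem~\ref{th:BHestimated} applies to it verbatim.

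The main obstacle is the clean handling of $\mathcal{R}^c$: Theorem~\ref{th:BHestimated} only bounds the event $\mbox{AdaDetect}^*_\alpha\not\subset \mbox{AdaDetect}_{\alpha'}$ on $\mathcal{R}$, and without additional care an arbitrarily small oracle rejection set could push $D$ down to $-1$ on $\mathcal{R}^c$. The key idea is that choosing $r=\lceil m_1\epsilon\rceil$ realizes the right trade-off: it forces $\TDP(P,\mbox{AdaDetect}^*_\alpha)\leq \epsilon$ on $\mathcal{R}^c$, at the price of shrinking the exponent in the concentration term by a factor of order $m_1\epsilon/m$, which is precisely what the scaling $\ell\delta^2\epsilon m_1/m\to\infty$ in \eqref{equ:condparaconsist} is designed to absorb.
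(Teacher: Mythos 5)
Your proof is correct, and it takes a genuinely different route from the paper's. The paper writes $\TDR(\mbox{AdaDetect}^*_\alpha)=\int_0^1 \P(\TDP(\mbox{AdaDetect}^*_\alpha)\geq u)\,du$, truncates the integral at $u=\epsilon$ (which produces the $\epsilon$ loss), notes that for $u\geq\epsilon$ the event $\{\TDP(\mbox{AdaDetect}^*_\alpha)\geq u\}$ is contained in $\mathcal{R}$, and then applies Theorem~\ref{th:BHestimated} inside the integral to replace the oracle tail probability by that of $\mbox{AdaDetect}_{\alpha'}$. You instead decompose the expectation of the difference $D=\TDP(\mbox{AdaDetect}_{\alpha'})-\TDP(\mbox{AdaDetect}^*_\alpha)$ directly over the three events $A$, $A^c\cap\mathcal{R}$, and $\mathcal{R}^c$, and bound each piece. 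Both arguments hinge on exactly the same two ideas — the choice $r=\lceil m_1\epsilon\rceil$ in Theorem~\ref{th:BHestimated}, and the observation that on $\mathcal{R}^c$ the oracle's TDP is at most $\epsilon$ — so the content is the same, but the bookkeeping differs: the paper's integral formulation absorbs the $(1-\epsilon)$ factor in front of the error terms for free, whereas your event decomposition uses the cruder $D\geq -1$ on $A^c\cap\mathcal{R}$; the resulting bound is identical. Your treatment of the asymptotic claim \eqref{equconsistency} (using that $\alpha'\to\alpha$ so $\alpha'\leq\tilde\alpha$ eventually, plus monotonicity of BH in its level) is correct and is actually more explicit than the paper, which gives no argument for that implication. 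One small caveat (shared with the paper, not a defect of your proof relative to theirs): the condition $\ell\delta^2\epsilon m_1/m\to\infty$ alone does not force $m e^{-c(\ell+1)\delta^2\alpha\lceil m_1\epsilon\rceil/m}\to 0$ unless the quantity also dominates $\log m$; in the paper's downstream application this holds with polynomial slack, so this is an implicit strengthening rather than an error, but worth being aware of.
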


The proof is presented in Section~\ref{proof:cor:BHestimated}. Corollary \ref{cor:BHestimated} shows that AdaDetect is nearly as powerful as the oracle version, as well as the BH procedure with the optimal score. % (that maximizes the asymptotic mFDR). 

Now we discuss the choice of $\eta$. Note that $\eta$ is a parameter that only shows up in the bound but not in the algorithm. It incurs a tradeoff that a larger $\eta$ would improve the tail bound by decreasing $\P\left(\hat{\eta}> \eta \right)$ but inflate $\alpha'$ through increasing $\zeta_r(\eta)$. Ideally, we would want $\eta$ so that $\P\left(\hat{\eta}> \eta \right)$ and $\zeta_r(\eta)$ are both negligible. For illustration, assume 
\begin{equation}\label{condetachap}
\P\left(\hat{\eta}> (n+m)^{-\kappa} \right)=o(1),
\end{equation}
for some $\kappa \in (0, 1/2)$ and $\zeta_r(\eta)\lesssim \eta / \gamma$, where $\gamma$ is defined in \eqref{equfgamma}. In this case, $\P\left(\hat{\eta}> \eta \right)$ and $\zeta_r(\eta)$ are both $o(1)$ if
$(n+m)^{-\kappa} = o(\gamma) = o(m_1/(m+\ell)).$ 
Again, this would hold unless the novelties are too sparse.

We show in Lemma~\ref{lem:etachapdensityestimation} that the score function given by density estimation satisfies \eqref{condetachap} under regularity conditions. Another example is given by Theorem~3.2 in \cite{audibert2007fast} in the case where $g^*$ is the posterior probability under a different setting; see Section~\ref{sec:mr}). For $\zeta_r(\eta)$, we provide bounds in Section~\ref{sec:boundzeta} for two examples. In the Gaussian example, we show that $\zeta_r(\eta)\lesssim \eta / \gamma$, where $\gamma$ is defined in \eqref{equfgamma}. 
% \begin{remark}\label{rem:ERMgenbound}
%  In the case where $g=\wh{g}$ is an ERM \eqref{equgPU}, $\hat{\eta}$ is expected to be small in probability. Indeed, applying standard VC theory ensures that $\wh{g}$ should be close to $g^*$, see \eqref{equ_heuristic}, up to variance and bias remainder terms. Although we do not quantify this  rigorously in the present work since it would require an entire new study, we anticipate that $\hat{\eta}$ can be proven to be small for the ERM in several particular settings. In each of these situations, Corollary~\ref{cor:BHestimated} yields that AdaDetect, used with an ERM score, has a power mimicking the oracle. In this view, Corollary~\ref{cor:BHestimated} strengthen the conclusion of Theorem~\ref{powerhighdim}, provided that ML theory can guarantee a small  $\hat{\eta}$. 
% \end{remark}

\begin{remark}\label{rem:improvpower}
Theorem~3 in \cite{yang2021bonus} provides another asymptotic power analysis  
 showing that the symmetric difference between the rejection set for the data-driven score function and its oracle version has a size $o_P(m)$. Unlike Theorem \ref{th:BHestimated} and  Corollary \ref{cor:BHestimated}, it does not have implications when the oracle procedure can only reject $o_P(m)$ hypotheses, as in the case where $m_1/m = o(1)$.
\end{remark}

\section{Experiments}\label{sec:numexp}\label{sec:realdata}

%\et{I have moved simu in Section~\ref{sec:simdata}}

%\subsection{Setting}

In this section, we examine the performance of AdaDetect on real data (experiments on simulated data are postponed to Section~\ref{sec:simdata}). 
Each dataset contains measurements that are labeled as either typical or novelty. We summarize the datasets in Table~\ref{tab:datasets}  and provide further details in Section~\ref{sec:descriptiondataset}. %The first four datasets are also used in Section 5.3 of \cite{bates2021testing}; see the descriptions and references therein. The Musk data contains a set of molecules that are identified as either musk (nulls) or non-musk (novelties). The MNIST data \citep{MNIST} contains a set of labeled images of size $28\times28$ of handwritten digits from `$0$' to `$0$'. We restrict the analysis to `$4$' (nulls) and `$9$' (novelties).

We apply AdaDetect with various score functions, including the density estimation-based score (\texttt{AdaDetect parametric} and \texttt{AdaDetect KDE}), the PU classification-based score (\texttt{AdaDetect SVM}, \texttt{AdaDetect RF}, \texttt{AdaDetect NN}, and \texttt{AdaDetect NN cv}). We also include the conformal novelty detection procedures proposed by \cite{bates2021testing} (\texttt{CAD SVM} and \texttt{CAD IForest}). Note that both \texttt{CAD SVM} and \texttt{CAD IForest} are instances of AdaDetect with one-class classification-based scores. For all our experiments, we use the Python package \texttt{scikit-learn} for Expectation Maximization (EM) algorithm, kernel density estimation, random forests, and neural networks, with the default hyper-parameters from the packages unless otherwise specified (a full description of these methods is provided in Section~\ref{sec:simumethod}).
%Note that these procedures all provably control the FDR under Assumption~\ref{as:indep} (see Corollary~\ref{FDRbounds}). 
For the MNIST dataset, we consider two more methods based on a convolutional neural network (CNN), with two convolution layers and one fully connected layer. The first method \texttt{CAD SVDD CNN} is the conformal novelty detection procedure of  \cite{bates2021testing} with a special one-class classifier, given by the Support Vector Data Description (SVDD) method introduced in \cite{ruff18a} used with a family of functions given by the CNN. The second method is AdaDetect with the two-class classifier based on the CNN, denoted by \texttt{AdaDetect CNN}.

We construct test samples and null training samples by subsampling the dataset with $n=5000$, $m = 1000$, and a fixed null proportion $\pi_0=m_0/m = 0.9$. For AdaDetect, we choose $k=4m$, $\ell=m$, $s=3m$ for cross-validation, and the target level $\alpha=0.1$.
The FDR and TDR for the methods are evaluated by using $100$ runs and the results are reported in Table \ref{tab:datasetsperf}.  As expected, all methods control the FDR. Compared to \cite{bates2021testing}, AdaDetect with classification-based scores substantially boosts the power because it incorporates the novelties in learning the score function. Overall, the best performing method is \texttt{AdaDetect RF}, with \texttt{AdaDetect NN} (possibly cross-validated) coming in second. \texttt{AdaDetect CNN} is particularly efficient on the classical MNIST dataset, which is unsurprising because CNN-type classifiers are appropriate for such an image dataset \citep{goodfellow2016}. \rev{We however note that the one-class classifier based upon CNN behaves poorly, which shows that two-class classification is the key for the power boost instead of the better representation given by CNN}.
\rev{In addition, further comparisons are made in Section~\ref{sec:addnumexp} for other values of $n,m,m_1$ in more challenging regimes and the conclusions are qualitatively similar. }

%\ar{Finally, in order to challenge the method, we also ran the same comparison for other values of $n,m,m_1$ [smaller $m$, the regime $n<m$, and smaller $m_1/m$.], and the results are qualitatively the same, see Appendix \ref{sec:addnumexp}. }
To conclude, % More generally, 
if a classification method is expected to distinguish between typical and anomalous measurements, combining it with AdaDetect is expected to achieve high power without threatening FDR control.

\begin{table}
\caption{Summary of datasets. \label{tab:datasets}}
 \begin{tabular}{lcccccc}
& Shuttle & Credit card & KDDCup99 & Mammography & Musk & MNIST  \\
\hline
Dimension $d$ & 9 & 30 & 40 & 6 & 166 & $28\times28$  \\
Feature type & Real & Real & \makecell{Real, \\ categorical} & Real & Real & Real \\
Inliers & 45586 & 284315 & 47913 & 10923 & 5581 & 5842 \\
Novelties  & 3511 & 492 & 200 & 260 & 1017 & 5949 \\
\hline
\end{tabular}
\end{table}

\begin{table}
\caption{FDR (top) and TDR (bottom) of AdaDetect with different score functions on real datasets. The target FDR level is $\alpha=0.1$. We report the mean value and the standard deviation (in brackets) over 100 runs. The two best-performing methods are highlighted in bold. \label{tab:datasetsperf}}
{\tiny\begin{tabular}{lcccccc}
& Shuttle & Credit card & KDDCup99 &  Mammography & Musk & MNIST \\
\hline
\hline
& \multicolumn{6}{c}{FDR} \\
\hline
\hline
CAD SVM & 0.04 (0.08) & 0.00 (0.00) & 0.00 (0.000) & 0.05(0.10) & 0.00 (0.00) &  0.00 (0.00)\\
CAD IForest & 0.10 (0.07) & 0.09 (0.06) & 0.08 (0.07) & 0.05 (0.09) & 0.00 (0.00) & 0.00 (0.00) \\
AdaDetect parametric & 0.01 (0.05) & 0.00 (0.00) & 0.00 (0.00) & 0.00 (0.00) & 0.00 (0.00) &  0.00 (0.00)\\
AdaDetect KDE & 0.07 (0.07) & 0.05 (0.08) & 0.02 (0.06) & 0.08 (0.07) & 0.02 (0.08) &  0.00 (0.00)\\
AdaDetect SVM &  0.08 (0.04) & 0.07 (0.05) &  0.07 (0.05) & 0.07 (0.06) & 0.08 (0.06) & 0.02 (0.03)   \\
AdaDetect RF & 0.08 (0.04) & 0.09 (0.04) & 0.08 (0.04) & 0.04 (0.10) & 0.03 (0.06) & 0.03 (0.07) \\
AdaDetect NN & 0.07 (0.05) & 0.09 (0.04) & 0.06 (0.07) & 0.09 (0.06) & 0.06 (0.09) & 0.06 (0.08) \\
AdaDetect cv NN & 0.08 (0.04) & 0.09 (0.05)& 0.08 (0.11) & 0.08 (0.05) & 0.06 (0.08) & 0.01 (0.03) \\
\hline
CAD SVDD + CNN &  - & - & - &  - & - & 0.03 (0.14) \\
AdaDetect CNN &  - & - & - &  - & - & 0.09 (0.05) \\
\hline
\hline
& \multicolumn{6}{c}{TDR} \\
\hline
\hline
CAD SVM & 0.10 (0.18) & 0.00 (0.00) & 0.00 (0.00) & 0.03 (0.06) & 0.00 (0.00) & 0.00 (0.00) \\
CAD IForest & 0.45	 (0.09) & 0.39 (0.22) & 0.56 (0.35) & 0.05 (0.09) & 0.00 (0.00) & 0.00 (0.00)\\
AdaDetect parametric & 0.02 (0.07) & 0.00 (0.00) & 0.00 (0.00) & 0.07 (0.09) & 0.00 (0.00)& 0.00 (0.00)\\
AdaDetect KDE & 0.44 (0.33) & 0.12 (0.20) & 0.11 (0.24) & 0.22 (0.17) & 0.02 (0.06) & 0.00 (0.00)\\
AdaDetect SVM &  \textbf{0.85 (0.17)} & 0.68 (0.28) &  0.66 (0.32) & 0.43 (0.13) &  \textbf{0.40 (0.17)}  & \textbf{0.52 (0.21)} \\
AdaDetect RF & \textbf{0.99 (0.01)} & \textbf{0.85 (0.03)} & \textbf{0.99 (0.01)} & \textbf{0.48 (0.10)} & 0.04 (0.09) & 0.03 (0.08) \\
AdaDetect NN & 0.76 (0.15) & \textbf{0.80 (0.07)} & 0.52 (0.41) & \textbf{0.47 (0.14)} &  0.11 (0.13) & 0.01 (0.03) \\
AdaDetect cv NN & 0.84 (0.12) & 0.76 (0.13) & \textbf{0.74 (0.41)} & 0.42 (0.16) & \textbf{0.13 (0.12)} & 0.01 (0.03)  \\
\hline
CAD SVDD + CNN &  - & - & - &  - & - & 0.03 (0.15) \\
AdaDetect CNN & - & - & - & - & - & \textbf{0.93 (0.06)} \\
\hline
\end{tabular}
}
\end{table}

\section{An astronomy application}
\label{sec:appli}

In this section, we apply AdaDetect to detect variable stars using the Sloan Digital Sky Survey \citep{ivezic2005selection}, a large labeled dataset with $92,658$ nonvariable (null) and $483$ variable (novelties) stars. Each star is encoded as a $4$-dimensional vector containing the star's flux in specific bands (colors) of the visible light. This dataset is particularly appealing for demonstrating our method. First, the two classes occupy similar regions in the considered color space, with slight overlap leading to complex decision boundaries. Second, the large number of nonvariable stars allows us to vary the size of the NTS in a large range in the Monte-Carlo simulations. Third, this dataset has been extensively studied by astronomers and has become a standard for benchmarking classification methods (see Chapter~9 of  \citealp{ivezic2019statistics}). Lastly, we can compute the achieved FDR and TDR for any novelty detection method based on the labeled data.

For each experiment, we sample $n$ nonvariable stars as the NTS along with $m_1$ variable stars and $m_0 = m - m_1$ additional nonvariable stars as the test sample. We set $m = 100$ and vary $n$ and $m_1$ across experiments. We apply AdaDetect with Kernel Density Estimation (KDE), Random Forest (RF), and Neural Networks (NN). For comparison, we also include two Empirical BH procedures \citep{mary2021semisupervised}, which are special cases of AdaDetect with non-adaptive scores as the squared $\ell_2$ norm of the demeaned vectors, where the mean is calculated on all nulls outside of the NTS (``Emp BH full'') and on the NTS (``Emp BH current''), respectively. The ``Emp BH current'' method is closer to the current practice, though it is not granted to control the FDR since the score function does not satisfy \eqref{constrainedg}. In addition, we apply the Empirical BH procedure without demeaning the data as well as the SC procedure with estimated local FDR. Neither detects any novelties so we will not report them. 

Figure \ref{app1} presents the results for $m_1 = 50$ and varying $n$ with target FDR level $\alpha = 0.05$. The FDR and TDR are calculated based on $100$ Monte-Carlo simulations. To aid visualization, we represents the uncertainty by a shaded area whose width is equal to the standard error of estimated FDR/TDR divided by $10$. This can be viewed as an approximation of the standard error with $10,000$ Monte-Carlo simulations. In this setting, $\pi_0 = 0.5$ and thus all methods provably control the FDR at level $\pi_0\alpha = 0.025$ (except ``Emp BH current''). This is confirmed in the left panel of Figure \ref{app1}. From the right panel, we observe that AdaDetect with RF achieves the highest power, substantially improving upon AdaDetect with non-adaptive scores (Emp BH). This demonstrates the advantages of utilizing classification-based score functions. In Section \ref{sec:appliplus}, we present results in additional experimental settings that exhibit qualitative similarities to Figure~\ref{app1}.

\begin{figure}[h!]
\centerline{
\includegraphics[scale=0.5]{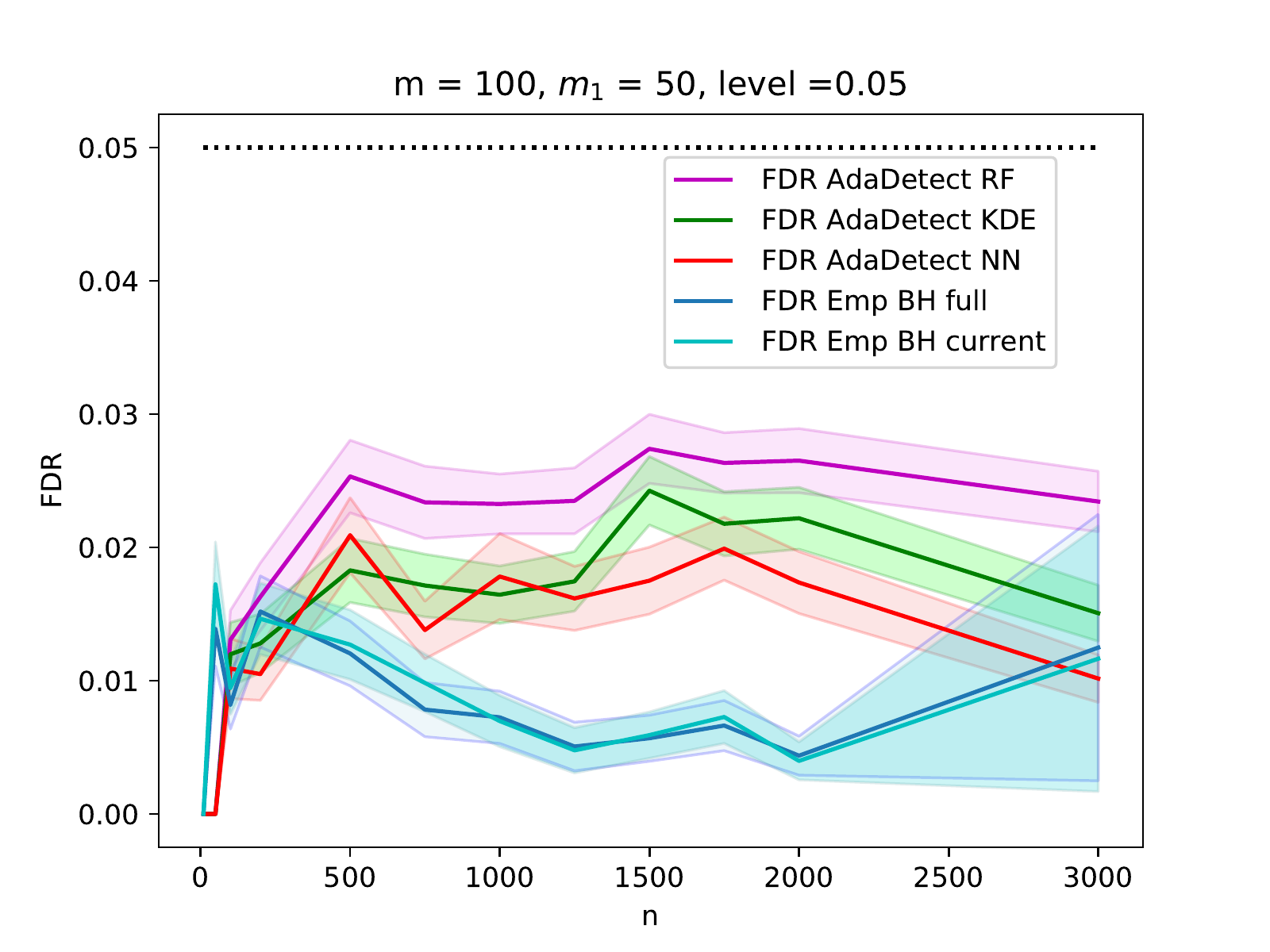}
\includegraphics[scale=0.5]{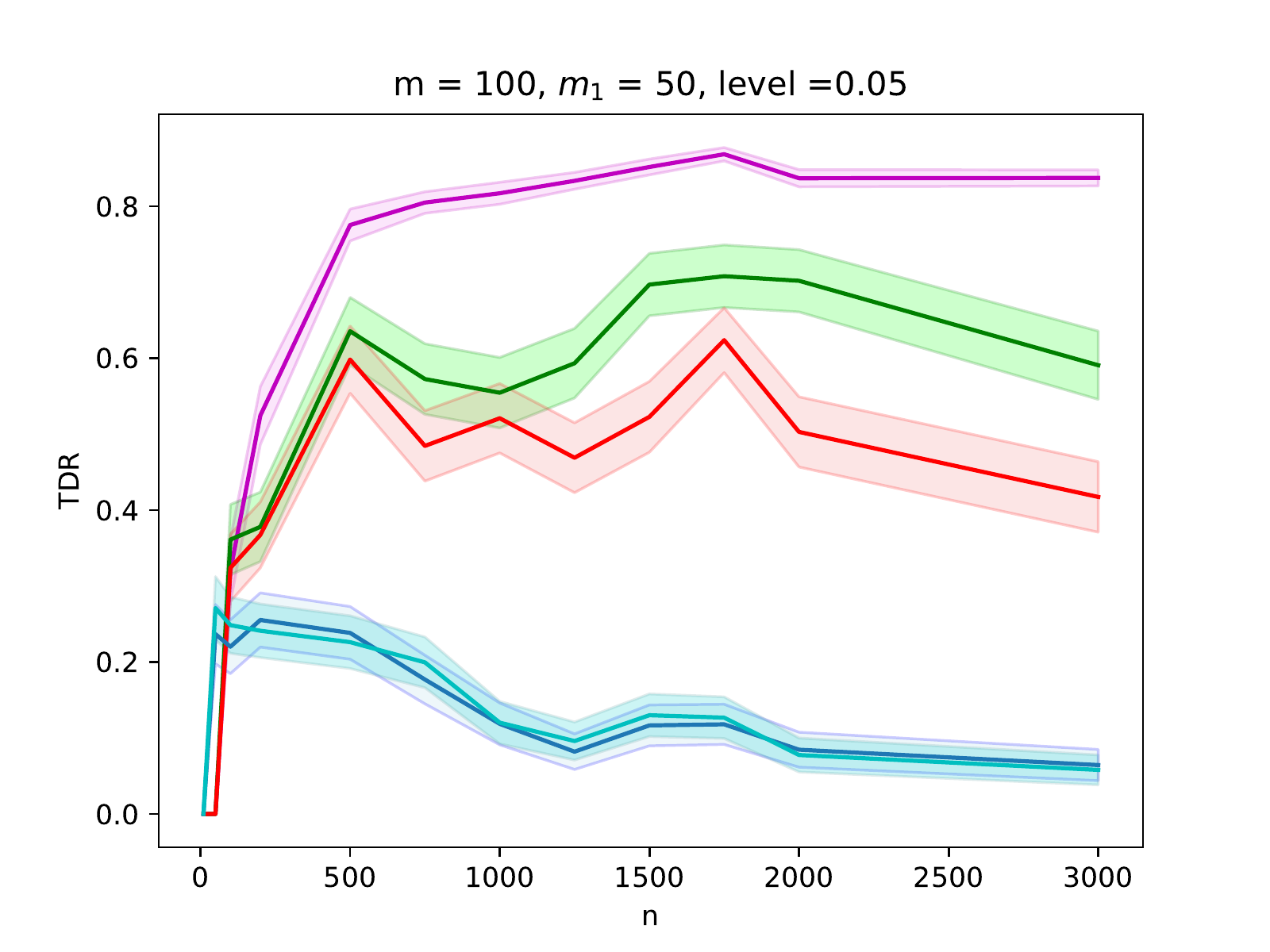}
}
\caption{Estimated FDR (left) and TDR (right) as a function of $n$, the size of the NTS, with $m = 100, m_1 = 50$ and $\alpha = 0.05$. All methods shown in the plot provably control the FDR at level $\pi_0\alpha = 0.025$ (except ``Emp BH current'').}
\label{app1}
\end{figure}

% We show here that AdaDetect is a versatile tool for detecting variable stars in an astronomical dataset. 
% While the full description of the application is postponed to Section~\ref{sec:appliplus} for space reasons, we  summarize here briefly the main findings. The Sloan Digital Sky Survey \citep{ivezic2005selection} is a supervised dataset that has been extensively studied by astronomers and has become a standard for benchmarking classification methods. Hence, AdaDetect is useful in such context and directly entails various novelty detection methods, that can be easily implemented and compared to more `baseline'  methods, not based on classification methods. It is shown in Section~\ref{sec:appliplus} that AdaDetect procedures compare favorably to these baselines, with a correct FDR control and an improved TDR, especially for \texttt{AdaDetect RF}.
%For space reasons, the full application is postponed to Section~\ref{sec:appliplus} and we only report here some of the findings. The results of this section show that the considered procedures indeed control the FDR for this astronomical detection problem. Moreover,
 %the AdaDetect procedures  may compare favorably to a `baseline' FDR-controlling procedure represented by Empirical BH. 
 
%  Owing to the ubiquity of two class classification problems, this application is one among possibly many examples for which the present study  provides new operational
% methods  along with theoretical guarantees.

\section{Conclusion and discussion}\label{sec:discussion}

% \subsection{Summary}

\rev{In this work, we propose AdaDetect as a generic framework that can wrap around any classification methods and provably control the FDR in finite samples when the null measurements are exchangeable conditional on the novelties. It generalizes and often substantially outperforms previous methods that only work with one-class classification methods which are not adaptive to the novelty distribution. We also develop the $\pi_0$-adaptive AdaDetect that further improves the power in the presence of many novelties as well as the cross-validated AdaDetect that allows model selection. The theoretical analysis is based on a novel FDR expression that unifies and generalizes the existing results. In addition, we provide power analysis showing that (1) the optimal score function is given by any monotonic transformation of the ratio between the average density of novelties and the null density and (2) the estimated score function can be asymptotically optimal in terms of power. We demonstrate the versatility of AdaDetect on a variety of tasks.}

\subsection{\rev{Limitations of AdaDetect}}

\rev{Here we discuss several limitations of our method and potential solutions.
\begin{itemize}
\item Heterogeneous null distributions. A key assumption for the FDR control is that the null distribution $P_0$ is the same across the NTS and the test sample. This excludes the case where the null can be generated from a bag %$\{P_{0,k},1\leq k\leq K\}$
  of distributions $\{P_{0,k},1\leq k\leq K\}$. Under heterogeneity, the empirical $p$-values can be invalid even marginally since the nulls are no longer exchangeable. One possible way to reconcile this issue is to assume the nulls are generated from a mixture distribution $\sum_{k=1}^{K}\pi_k P_{0,k}$, thereby retaining the exchangeability. We leave this for future research.
\item Directional null hypotheses. Throughout the paper we focus on testing whether a new observation has the same distribution as the typical measurements. In some applications, it may be more appropriate to test directional nulls, which are often characterized by the sign of a parameter for parametric models. However, it is unclear how this can be done in nonparametric cases. One possibility is to consider the nulls $P_i\preceq P_0$ where $\preceq$ denotes the stochastic dominance, meaning that there exists a random vector $(A, B)$ such that $A\sim P_0, B\sim P_i$ and $A\ge B$ in an entrywise fashion. 
By restricting the score function to be entrywise increasing, we may still apply AdaDetect and retain the FDR control. %(in this case, the null distributions of the  $P_0$ of the NTS has to be the least favorable null distributions). % \et{We will loose the FDR control because the scores of the NTS can be much smaller than the scores of the test sample that are under the null!}
  % The FDR control of AdaDetect is then not guaranteed: for instance, when some of the null distributions of the test sample are not represented in the NTS, these will obviously be   (wrongly) declared as novelties. Conversely, if the NTS contain many types of null distributions but only some of them appear in the test sample, the null empirical $p$-values will be biased, and the key super-uniformity property may be violated. Solving this issue is an interesting question for future investigations. 
%As an easy counter-example, if $K=2$, $P_{0,1}=\mathcal{N}(0,1)$, $P_{0,2}=\mathcal{N}(1,1)$, and the null training sample contains only (or mostly) $P_{0,1}$ and the test sample has for alternative distribution $\mathcal{N}(2,1)$, then  the learned score function $\hat{g}(x)$ should be increasing in $x$ and 
% For instance, consider the  MNIST data example in Figure~\ref{fig:intro}, in the case where the NTS contains examples of digits '$1$', '$2$', '$3$', '$4$' and not only of the digit '$4$', while the test sample is unchanged (only nulls with digit '$4$'). The FDR control of AdaDetect is then not guaranteed and will even probably fail if the proportion of '$4$' in the NTS is weak. Solving this issue is an interesting question for future investigations. 
%This can be modeled either by a mixed null distribution $P_0=\sum_{k=1}^K \pi_k P_{0,k}$ --- but the latter will be different from the null of the test sample --- or by a family of possible null distributions $\{P_{0,k},1\leq k\leq K\}$ --- and the NTS become
\item Randomness of data splitting. AdaDetect is intrinsically randomized due to the data splitting step. Without carefully documenting random seeds, the researcher can ``hack'' the results by reporting the best results across different splits. A subsequent work by \cite{bashari2023derandomized} proposes an elegant solution to derandomize AdaDetect by treating the test statistics as $e$-values and aggregating over all data splits. They show that the $e$-AdaDetect successfully stabilizes the output of AdaDetect.
\item Semi-supervised data. In some applications, labeled novelties are available in the training sample. For example, the researcher may have historical data on fraud transactions recorded in the system and can train a two-class classifier to distinguish between the nulls and labeled novelties. When future novelties are similar to labeled novelties, it should yield an efficient score function. This has been studied by \cite{liang2022integrative}. Combining their approach with ours in a nonstationary setting where future novelties behave differently from the past ones is a promising avenue for future research. 
\item Sparse novelties: when the novelties are too sparse, two-class classifiers may not be the best at discriminating between nominals and novelties and can be out-performed by simpler one-class classifiers; see \cite{liang2022integrative}. One possible solution is to apply AdaDetect cv by including both one-class and two-class classification methods and let data decide which score function is more efficient. We leave the full examination of this approach for future research. % The additional numerical experiments of Appendix \ref{sec:addnumexp} show that AdaDetect used with two-class classifiers are always most powerful in the considered settings, which consolidates the main take-home message of our work.   
%\ar{Binary classifiers versus one-class classifiers: while in this work we advocated for the use of binary classifiers to learn the test statistic [presenting this use as one of the key advantages of our method], \cite{liang2022integrative} shows that in the setting where a labeled sample of novelties is available, binary classifiers may not be the best at discriminating between nominals and novelties. In particular, they can be out-performed by one-class classifiers if the sample sizes are too small and/or the balance between nominals and novelties too severe. This casts doubt over whether binary classification can indeed be seen as a generally good approach in our setting as we put forward here. Therefore, we ran extensive numerical experiments to challenge the method, see Appendix \ref{sec:addnumexp}. We indeed found that our method had better results in the datasets we considered, justifying the approach. However we do not wish to claim that binary classifiers will always be better in every case.  }
%\item $n>m$ and very sparse:
\end{itemize}
}

%\et{acknowledge now some limitations of the proposed method}

\subsection{Other future works}

First, we could provide a more detailed power analysis by quantifying the bias term  ${R}_1(g^\sharp_{\cG})- {R}_1(g^\sharp)$ for a broader class of algorithms. For example, we can consider $\cG=\cG_{N,L,s}$, the set of realizations of NN with width $N$, depth $L$ and sparsity $s$ \citep{bos2021convergence}. Such a quantitative analysis could provide guidelines for choosing hyper-parameters or at least a default range in the cross-validated AdaDetect procedure. 

Next, a core assumption of the FDR controlling theory is exchangeability of the null scores conditional on the novelties. This can be satisfied beyond our setting, e.g., the knockoff setting discussed in Remark~\ref{rem:lasso}. This suggests a possible path to further improve the knockoffs method.

Lastly, the BONuS algorithm in \cite{yang2021bonus} can iteratively remove null observations and update the score function correspondingly using a masking technique introduced by \cite{lei2018adapt}. While this increases the computation cost, it gradually reduces the attenuation caused by the null sample in the mixed sample and hence improves the accuracy of the estimated score function. It would be interesting to apply their idea in AdaDetect. 

% Finally, we believe that this work, by improving the preceding seminal works, paves the way for a strong interaction between machine learning and multiple testing, tending towards an appropriate trade-off between control and efficiency.

\section*{Acknowledgements}

We would like to thank Gilles Blanchard, Will Fithian, Aaditya Ramdas, Fanny Villers  and  Asaf Weinstein for constructive discussions and feedbacks. 
The authors acknowledge the grants ANR-16-CE40-0019 (project SansSouci), ANR-17-CE40-0001 (BASICS)  and ANR-21-CE23-0035 (ASCAI) of the French National Research Agency ANR and the GDR ISIS through the ``projets exploratoires" program (project TASTY). 

\bibliographystyle{apalike}
\bibliography{biblio}

\newpage
\appendix

\setcounter{page}{1}

\title{Supplementary material \\
Adaptive novelty detection with false discovery rate guarantee}
\begin{center}
by A. Marandon, L. Lei, D. Mary and E. Roquain
\end{center}

\section{Proofs and results for Section~3}

\subsection{Proof of Lemma~\ref{lem:ariane}}\label{sec:prooflem:ariane}

Let 
\begin{align*}
U=(U_1,\dots,U_{n+m_0})&=(Y_{1},\dots,Y_{n},X_i,i\in \cH_0);\\
V=(V_1,\dots,V_{m_1})&=(X_i,i\in \cH_1);\\
W=h(U,V)&=((Z_1,\dots,Z_k) , \{Z_{k+1},\dots,Z_{n+m}\});\\
S_i &= g(U_i,W), \:\: i\in \{1,\dots,n+m_0\},
\end{align*}
for given measurable function $g$ that satisfies the condition \eqref{constrainedg}. Then, for any permutation $\pi$ of $\{1,\dots,n+m_0\}$ that do not permute $\{1,\dots,k\}$, 
Assumption~\ref{as:exchangeable0} implies that $U\:|\: V\sim U^\pi\:|\: V$ and thus $(U,V)\sim (U^\pi,V)$. This entails 
 $(U,V,W)\sim (U^\pi,V,h(U^\pi,V))= (U^\pi,V,h(U,V)) =  (U^\pi,V,W)$. Hence,
\begin{align*}
(g(U_1,W),\dots,g(U_{n+m_0},W))\:|\: V,W\sim (g(U_{\pi(1)},W),\dots,g(U_{\pi(n+m_0)},W))\:|\: V,W
\end{align*}
Since $\pi(i)=i$ for all $i\in \{1,\dots, k\}$, we obtain that
\begin{align*}
&(g(U_{k+1},W),\dots,g(U_{n+m_0},W))\:|\: (g(V_1,W),\dots,g(V_{m_1},W))\\
& \sim (g(U_{\pi(k+1)},W),\dots,g(U_{\pi(n+m_0)},W))\:|\: (g(V_1,W),\dots,g(V_{m_1},W)),
\end{align*}
which completes the proof.

\subsection{Key properties}\label{sec:emp}

In this section, we present key properties of empirical $p$-values derived from exchangeable scores.
The first result provides a representation that characterizes the dependence structure of the empirical $p$-values that is a key step for the proof of FDR control. It generalizes the representation by \cite{bates2021testing} for independent scores. The proof is deferred to Section \ref{proof:th-key}.

\begin{theorem}\label{th-key}
Consider any family of scores $(S_{k+1},\dots,S_{n+m})$ that satisfy Assumptions~\ref{as:newexch}~and~\ref{as:noties} and the corresponding family of empirical $p$-values \eqref{emppvalues}.
For any $i\in \cH_0$, define
  \begin{equation}\label{equWi}
  W_i = \big( \{S_{k+1}, \ldots, S_n, S_{n+i}\}, (S_{n+j}, j \in \cH_0, j \neq i), (S_{n+j}, j \in \cH_1) \big)
  \end{equation}
  and, for any $j\in \{1,\dots,m\}\backslash\{i\}$,
  \begin{equation}\label{equCij}
  C_{i,j}=(\ell+1)^{-1}\sum_{s\in \{S_{k+1}, \ldots, S_{n}, S_{n+i}\}}\ind{s>S_{n+j}}.
  \end{equation}
Further, let $S_{(1)}> \dots > S_{(\ell+1)}$ be the order statistics of $\{S_{k+1}, \ldots, S_{n}, S_{n+i}\}$. Then the following holds:
\begin{itemize}
\item[(i)] % the set $\{S_{k+1}, \ldots, S_{n}, S_{n+i}\}=\{S_{(1)},\dots\, S_{(\ell+1)}\}$ (with $S_{(1)}> \dots > S_{(\ell+1)}$) and 
  The sequences $(S_{n+j})_{j\in \{1,\dots,m\}\backslash\{i\}}$ and  $(C_{i,j})_{j\in \{1,\dots,m\}\backslash\{i\}}$ are both measurable with respect to $W_i$.
\item[(ii)] For all $j\in \{1,\dots,m\}\backslash\{i\}$, 
\begin{equation}\label{equpjfunctionpi}
p_j=C_{i,j} + \ind{S_{n+i}\leq S_{n+j}}/(\ell +1)=C_{i,j} + \ind{S_{((\l+1)p_i )}\leq S_{n+j}}/(\ell +1),
\end{equation}
which is a nondecreasing function of $p_i$ for any given $W_i$.
\item[(iii)] $p_i$ is independent of $W_i$.
\item[(iv)] $(\l+1)p_i$ is uniform distributed on $\{1,\dots,\l+1\}$.
\end{itemize}
\end{theorem}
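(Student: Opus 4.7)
The plan is to handle the four claims in order, with (i) and (ii) being essentially algebraic/combinatorial, and (iii) and (iv) relying on Assumption~\ref{as:newexch}.

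For (i), the unordered set $\{S_{k+1},\dots,S_n,S_{n+i}\}$ is by definition one of the three components of $W_i$, and $(S_{n+j})_{j\neq i}$ is the concatenation of the other two. Hence both the tuple $(S_{n+j})_{j\in\{1,\dots,m\}\setminus\{i\}}$ and, for each such $j$, the quantity $C_{i,j}$ (which depends only on this set and on $S_{n+j}$) are measurable with respect to $W_i$. For (ii), I would begin from the definition
\[
(\ell+1)p_j = 1+\sum_{s=k+1}^{n}\ind{S_s>S_{n+j}},
\]
and add and subtract $\ind{S_{n+i}>S_{n+j}}$ to recognize the full sum over $\{S_{k+1},\dots,S_n,S_{n+i}\}$. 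This gives $(\ell+1)p_j=(\ell+1)C_{i,j}+\ind{S_{n+i}\le S_{n+j}}$, which is the first equality of \eqref{equpjfunctionpi}. For the second equality, I would invoke Assumption~\ref{as:noties}: the rank of $S_{n+i}$ in $\{S_{k+1},\dots,S_n,S_{n+i}\}$ ordered decreasingly equals $(\ell+1)p_i$, so $S_{n+i}=S_{((\ell+1)p_i)}$ almost surely. Monotonicity in $p_i$, for any fixed $W_i$, then follows because $r\mapsto S_{(r)}$ is nonincreasing, hence $\ind{S_{((\ell+1)p_i)}\le S_{n+j}}$ is nondecreasing in $p_i$.

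For (iii) and (iv), the key idea is that Assumption~\ref{as:newexch} makes $(S_{k+1},\dots,S_n,S_{n+j})_{j\in\cH_0}$ exchangeable conditionally on $(S_{n+j})_{j\in\cH_1}$. Restricting to permutations that act only on the $\ell+1$ coordinates corresponding to $(S_{k+1},\dots,S_n,S_{n+i})$ and fix the other coordinates in $\cH_0$, this yields that the subvector $(S_{k+1},\dots,S_n,S_{n+i})$ is exchangeable conditionally on $\bigl((S_{n+j})_{j\in\cH_0\setminus\{i\}},(S_{n+j})_{j\in\cH_1}\bigr)$. Conditioning further on the unordered set $\{S_{k+1},\dots,S_n,S_{n+i}\}$, which together with the two tuples above gives exactly $W_i$, exchangeability and the absence of ties imply that the assignment of values to slots is uniform over all $(\ell+1)!$ permutations. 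Consequently, the decreasing rank of $S_{n+i}$ in that set, which equals $(\ell+1)p_i$, is uniformly distributed on $\{1,\dots,\ell+1\}$ conditionally on $W_i$. Since this conditional law does not depend on $W_i$, both (iii) and (iv) are established simultaneously.

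The main obstacle is the measure-theoretic bookkeeping in the last step: conditioning on the unordered set while preserving the exchangeability of the ordered tuple, and correctly invoking Assumption~\ref{as:noties} to pass from a uniform distribution over slot-assignments to a uniform distribution over ranks. Once the right conditioning $\sigma$-field is identified, everything else is direct.
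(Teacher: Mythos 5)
Your proposal is correct and follows essentially the same route as the paper's proof: the same add-and-subtract algebra for~(ii), the identification $S_{n+i}=S_{((\ell+1)p_i)}$ via the rank interpretation, and the same exchangeability-plus-no-ties argument to get a uniform conditional law for the rank of $S_{n+i}$ and hence~(iii) and~(iv). The paper is slightly more direct on the conditioning step, observing that $W_i$ is invariant under any permutation $\sigma$ of $\{k+1,\dots,n,n+i\}$ so that $(S_{k+1},\dots,S_n,S_{n+i})$ is exchangeable conditionally on $W_i$ in one stroke, whereas you first condition on the remaining score tuples and then further on the unordered set; the two bookkeeping choices are equivalent.
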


The second result characterizes the distribution of other null $p$-values conditional on a given null $p$-value as well as the ordered scores. We defer the proof to Section \ref{proof:th-key2}.

\begin{theorem}\label{th-key2}
In the setting of Theorem~\ref{th-key}, the distribution of the family $(p_j, \:j\in \cH_0)$ satisfies 
\begin{align*}
&(p_j, \:j\in \cH_0\backslash\{i\}) \:|\:  p_i=1/(\ell +1), \{S_{(1)},\dots\, S_{(\ell+1)}\} \\
&\sim( C_{i,j} + \ind{S_{(1)}\leq S_{n+j}}/(\ell +1),\:j\in \cH_0\backslash\{i\}) \:|\:  \{S_{(1)},\dots\, S_{(\ell+1)}\}\\
&\sim (p'_j, \:j\in \cH_0\backslash\{i\}) \:|\:  \{U_{(1)},\dots\, U_{(\ell+1)}\} 
\end{align*}
for which $U_{(1)}>\dots> U_{(\ell+1)}$ denote the order statistics of $\ell +1$ i.i.d. $U(0,1)$ random variables $U_1,\dots,U_{\ell +1}$, and where $p'_j$, $j\in \cH_0\backslash\{i\}$, are conditionally on $\{U_{(1)},\dots\, U_{(\ell+1)}\}$, i.i.d. with common distribution $F(x)=(1-U_{(\lfloor x(\ell+1)\rfloor+1)} )\ind{1/(\ell+1)\leq x<1 }+\ind{x\geq 1}$, $x\in \R$.
\end{theorem}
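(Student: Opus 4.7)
The plan is to establish the two distributional identities in the statement separately. The first identity removes the extra conditioning on $\{p_i=1/(\ell+1)\}$ by substituting $S_{n+i}=S_{(1)}$, while the second re-expresses the resulting distribution in the uniform order-statistic form that defines $\mathcal{D}_i$ in \eqref{def:Di}.

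For the first identity, I would invoke Theorem~\ref{th-key}(ii), which gives $p_j=C_{i,j}+\ind{S_{n+i}\le S_{n+j}}/(\ell+1)$ for each $j\in\cH_0\setminus\{i\}$, and observe that on the event $\{p_i=1/(\ell+1)\}$, Assumption~\ref{as:noties} together with Theorem~\ref{th-key}(iv) forces $S_{n+i}=S_{(1)}$, so that $\ind{S_{n+i}\le S_{n+j}}$ may be replaced by $\ind{S_{(1)}\le S_{n+j}}$. It remains to show that conditioning further on $\{p_i=1/(\ell+1)\}$ does not alter the conditional law of $(S_{n+j})_{j\in\cH_0\setminus\{i\}}$ given $\{S_{(1)},\dots,S_{(\ell+1)}\}$. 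I would derive this from Assumption~\ref{as:newexch}: conditionally on the multiset $M$ of null scores, their assignment to positions is a uniformly random permutation, so the assignment of values within $A$ to the positions $\{k+1,\dots,n,n+i\}$ (which determines the rank of $S_{n+i}$ within $A$) is independent, given $M$, of the assignment of $M\setminus A$ to the positions $(n+j)_{j\in\cH_0\setminus\{i\}}$; hence imposing the former leaves the conditional law of the latter unchanged given $\{S_{(r)}\}_{r=1}^{\ell+1}$.

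For the second identity, given $\{S_{(r)}\}_{r=1}^{\ell+1}$, a direct case analysis using Assumption~\ref{as:noties} shows that $C_{i,j}+\ind{S_{(1)}\le S_{n+j}}/(\ell+1)$ equals $1/(\ell+1)$ iff $S_{n+j}>S_{(2)}$ (pooling the cases $S_{n+j}>S_{(1)}$ and $S_{(2)}\le S_{n+j}<S_{(1)}$), equals $r/(\ell+1)$ iff $S_{(r+1)}\le S_{n+j}<S_{(r)}$ for $r\in\{2,\dots,\ell\}$, and equals $1$ iff $S_{n+j}<S_{(\ell+1)}$. Applying the probability integral transform with the common marginal c.d.f.\ $G$ of the null scores and setting $U_r=G(S_{(r)})$, the tuple $(U_1,\dots,U_{\ell+1})$ has the distribution of the decreasing order statistics of $\ell+1$ i.i.d.\ $U(0,1)$ variables, and the variables $G(S_{n+j})$ for $j\in\cH_0\setminus\{i\}$ are i.i.d.\ $U(0,1)$ conditionally on $(U_r)$. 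The three events above then have conditional probabilities $1-U_2$, $U_r-U_{r+1}$, and $U_{\ell+1}$, which are exactly the jumps of $F^U$ at the atoms $1/(\ell+1)$, $r/(\ell+1)$ ($r=2,\dots,\ell$) and $1$, matching the law of the $p'_j$'s in \eqref{def:Di} conditionally on $\{U_{(r)}\}$.

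The main obstacle is the probability integral transform step in the second paragraph, which implicitly relies on $(S_{n+j})_{j\in\cH_0\setminus\{i\}}$ being i.i.d.\ with marginal $G$ and independent of the order statistics of $A$ — a property that is immediate when the null scores are i.i.d., but that must be justified under the weaker Assumption~\ref{as:newexch}. The plan to bypass this difficulty is to work conditionally on the full null multiset $M$, under which exchangeability turns the positional assignment into a uniformly random permutation, so that all statements about $S_{n+j}$'s versus $\{S_{(r)}\}$ become combinatorial rank identities; one identifies the $U_r$'s through a randomized probability integral transform relative to the empirical c.d.f.\ of $M$, and integrates over $M$ to recover the stated distributional equality.
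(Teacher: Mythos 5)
Your plan is essentially the same as the paper's and the first paragraph is fine: the claim that conditioning on $\{p_i = 1/(\ell+1)\}$ does not alter the conditional law of $(S_{n+j})_{j\in\cH_0\setminus\{i\}}$ given $\{S_{(1)},\dots,S_{(\ell+1)}\}$ is exactly Theorem~\ref{th-key}(iii) (independence of $p_i$ and $W_i$), so you could simply have cited it rather than re-deriving it from the positional argument. The case analysis and the bookkeeping of the jumps of $F^U$ against the probabilities $1-U_{(2)}$, $U_{(r)}-U_{(r+1)}$, $U_{(\ell+1)}$ in the second paragraph also check out.

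The one genuine gap is the one you flag yourself. Under Assumption~\ref{as:newexch} the null scores are only exchangeable, so $(S_{n+j})_{j\in\cH_0\setminus\{i\}}$ need not be i.i.d.\ nor independent of the order statistics $\{S_{(1)},\dots,S_{(\ell+1)}\}$, and the marginal c.d.f.\ $G$ need not be continuous even when Assumption~\ref{as:noties} holds (e.g.\ a random permutation of a fixed multiset of distinct values is exchangeable with no ties but has a purely atomic marginal). So the literal probability-integral-transform step does not go through. Your proposed fix — condition on the full null multiset $M$, observe that exchangeability makes the positional assignment a uniform random permutation, and then argue at the level of ranks — is correct and is precisely the content of the paper's Lemma~\ref{lem:emppvaluesfunctrank} and Lemma~\ref{lem:reduction}: the null $p$-values are measurable functions of the rank vector of $(S_{k+1},\dots,S_n,S_{n+j},j\in\cH_0)$, and under exchangeability and no ties that rank vector is uniformly distributed over permutations regardless of the underlying joint law. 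Hence the joint distribution of $(p_j)_{j\in\cH_0}$ is distribution-free and can be computed as if the scores were i.i.d.\ $U(0,1)$. What the paper buys by packaging this into Lemma~\ref{lem:reduction} is that it avoids any reference to the marginal c.d.f.\ altogether; your phrase ``randomized probability integral transform relative to the empirical c.d.f.\ of $M$'' is heading in the right direction but is vaguer than needed — once you are working with the uniform random permutation given $M$, it is cleaner to observe directly that the rank vector has the same (uniform) law as in the i.i.d.\ $U(0,1)$ case, rather than trying to construct an explicit coupling of the order statistics. With that substitution your outline becomes a complete proof that matches the paper's.
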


Noting that $U_{(b)}$ follows a beta distribution, Theorem~\ref{th-key2}
can be used to compute the distribution of $(p_j, \:j\in \cH_0\backslash\{i\})$ conditionally on  $p_i=1/(\ell +1)$ by a simple integration. This is used in the proof of FDR control for $\pi_0$-adaptive AdaDetect (Theorem~\ref{thm:AdaptBONuS} and Corollary~\ref{StoreyBH_multidim}).

\subsection{Proof of Theorem~\ref{thm:PRDS_multidim}}\label{sec:proofthm:PRDS_multidim}

By Theorem~\ref{th-key} (iii), $p_i$ is independent of $W_i$. 
Hence, by integration, it is sufficient to establish that for any nondecreasing measurable set $D\subset [0,1]^{m-1}$, the function
\begin{equation}\label{equ:toprove}
r\in \{1,\dots,\ell+1\}\mapsto  \P((p_j, j\neq i)\in D\:|\: p_i=r/(\ell+1),W_i)
\end{equation}
is nondecreasing. 
By Theorem~\ref{th-key} (ii), we have that $(p_j, j\neq i)$ is a deterministic function of $p_i$ and $W_i$, which is nondecreasing in $p_i$. This gives \eqref{equ:toprove} and proves Theorem~\ref{thm:PRDS_multidim}.

\subsection{Proof of Theorem~\ref{thm:FDRBONuS}}\label{proof:thm:FDRBONuS}

This proof combines Theorem~\ref{th-key} with Lemma~\ref{BHsmallerp}, a property of the BH algorithm that slightly extends the classical result.

Recall that the AdaDetect procedure is the BH algorithm applied to the empirical $p$-values given by \eqref{emppvalues}. We apply Lemma~\ref{BHsmallerp} with the empirical $p$-values $(p_j,1\leq j\leq m)$ being the empirical $p$-values, any $i\in \cH_0$, $p'_i=1/(\ell +1)$, and $p'_j=C_{i,j}$ defined in \eqref{equCij} for $j\neq i$. By definition, the condition \eqref{propppprime} holds. Moreover, if $p_j > p_i$, we have $S_{n+i}>S_{n+j}$ in which case $\ind{S_{n+i}\leq S_{n+j}}=0$, implying that $p'_j=p_j$. Letting  $\wh{k}=\wh{k}(p_i,1\leq i\leq m)$ and $\wh{k}'_i=1 \vee \wh{k}(p'_i,1\leq i\leq m)$, Lemma~\ref{BHsmallerp} entails that
$$
\{p_i\leq \alpha \wh{k}/m \}=\{ p_i\leq \alpha \wh{k}'_i/m \}\subset \{ \wh{k}=\wh{k}'_i \}.
$$
Let $W_i$ be defined in \eqref{equWi}. Then
\begin{align*}
\FDR(\mbox{AdaDetect}_{\alpha}) &= \sum_{i\in \cH_0}  \E\left[\frac{\ind{p_i\leq \alpha \hat{k}/m}}{\hat{k}\vee 1}\right]\\
&= \sum_{i\in \cH_0}  \E\left[\frac{\ind{p_i\leq \alpha \hat{k}'_i/m}}{\hat{k}'_i}\right]\\
&=  \sum_{i\in \cH_0}  \E\left[\E\left[\frac{\ind{p_i\leq \alpha \hat{k}'_i/m}}{\hat{k}'_i}\:\big|\: W_i\right]\right]\\
&=  \sum_{i\in \cH_0}  \E\left[\frac{\P\left( (\l+1)p_i\leq \alpha (\l+1)\hat{k}'_i/m \:\big|\: W_i\right)}{\hat{k}'_i}\right],\end{align*}
where the last line is due to that $\hat{k}'_i$ is measurable with respect to $W_i$, which is implied by Theorem~\ref{th-key} (i). Then Theorem~\ref{th-key} (iii) and (iv) implies
\begin{align*}
\FDR(\mbox{AdaDetect}_{\alpha})=  \sum_{i\in \cH_0}  \E\left(\frac{ \lfloor \alpha (\l+1)\hat{k}'_i/m\rfloor }{(\ell +1) \hat{k}'_i}\right).
\end{align*}
The result is then proved by letting $K_i=\hat{k}'_i$.

\subsection{Proof of Theorem~\ref{thm:AdaptBONuS}}\label{proof:thm:AdaptBONuS}

Letting $\wh{k}=\wh{k}(p_i,1\leq i\leq m)$ the number of rejections of $\mbox{AdaDetect}_{\alpha m/G(p) }$, we have
\begin{align*}
  \FDR(\mbox{AdaDetect}_{\alpha m/G(p)}) &= \sum_{i\in \cH_0}  \E\left[\frac{\ind{p_i\leq \alpha \hat{k}(p)/G(p)}}{\hat{k}(p)\vee 1}\right]\\
  & = \sum_{i\in \cH_0}  \E\left[\frac{\ind{p_i\leq \alpha (\hat{k}(p)\vee 1)/G(p)}}{\hat{k}(p)\vee 1}\right]
\end{align*}
 By Theorem~\ref{th-key} (ii), we can write $(p_j,j\neq i)$ as $\Psi_1(W_i,p_i)$ for some deterministic function $\Psi_1$ that is nondecreasing in $p_i$ where $W_i$ is defined in \eqref{equWi}. As a result, we can write 
$\hat{k}(p)\vee 1$ (resp. $1/G(p)$) as $\Psi_2(W_i,p_i)$ (resp. $\Psi_3(W_i,p_i)$) for some deterministic functions $\Psi_2$, $\Psi_3$ that are nonincreasing in $p_i$ since $\hat{k}$ and $1/G$ are both coordinate-wise nonincreasing.
Let
\begin{align*}
    c^*(W_i)&=\max \mathcal{N}(W_i)\\
    \mathcal{N}(W_i)&=\{j/(\ell +1)\::\: 1\leq j\leq \ell +1, j/(\ell +1)\leq \alpha \Psi_2(W_i,j/(\ell +1))\Psi_3(W_i,j/(\ell +1))\}.
\end{align*}
Above, we define $c^*(W_i)=1/(\ell+1)$ if $\mathcal{N}(W_i) = \emptyset$.
\rev{By definition, $\mathcal{N}(W_i)$ is thus the set of all values that the empirical $p$-value $p_i$ can take if it is rejected and  $c^*(W_i)$ is the largest possible value.
%$c^*(W_i)$ is thus the largest value that the empirical $p$-value $p_i$ can take under the rejection constraint.
}
Thus, $\{p_i\leq \alpha \hat{k}(p)/G(p)\}= \{p_i\in\mathcal{N}(W_i)\}= \{p_i\leq c^*(W_i), \mathcal{N}(W_i)\neq \emptyset\}$. This entails
\begin{align*}
\FDR(\mbox{AdaDetect}_{\alpha m/G(p)}) &\leq \sum_{i\in \cH_0}  \E\left[\E\left[\frac{\ind{p_i\leq c^*(W_i)}}{\hat{k}(p)\vee 1} \ind{\mathcal{N}(W_i)\neq \emptyset}\:|\: W_i\right]\right]\\
&\leq \sum_{i\in \cH_0}  \E\left[\E\left[\frac{\ind{p_i\leq c^*(W_i)}}{\Psi_2(W_i, c^*(W_i))}\ind{\mathcal{N}(W_i)\neq \emptyset}\:|\: W_i\right]\right]\\
& = \sum_{i\in \cH_0}  \E\left[\frac{\P\left[p_i\leq c^*(W_i)\:|\: W_i\right] }{\Psi_2(W_i, c^*(W_i))}\ind{\mathcal{N}(W_i)\neq \emptyset}\right]\\
&\leq \sum_{i\in \cH_0}  \E\left[\frac{ c^*(W_i)}{\Psi_2(W_i, c^*(W_i))}\ind{\mathcal{N}(W_i)\neq \emptyset}\right],
\end{align*}
where the last two lines use Theorem~\ref{th-key} (iii) and (iv), respectively.
By definition of $c^*(W_i)$, we obtain
\begin{align*}
\FDR(\mbox{AdaDetect}_{\alpha m/G(p)}) &\leq \alpha \sum_{i\in \cH_0}  \E\left[\Psi_3(W_i, c^*(W_i))\right]\\
                                       &\leq \alpha \sum_{i\in \cH_0}  \E\left[\Psi_3(W_i, 1/(\ell+1))\right] \\
  & \leq  \alpha \sum_{i\in \cH_0}  \E\left(\frac{1}{G(q'_j,1\leq j\leq m)} \right),
\end{align*}
where $q'_i=1/(\ell +1)$, $q'_j=0$ for $j\in \cH_1$ and $q'_j=C_{i,j} + \ind{S_{(1)}\leq S_{n+j}}/(\ell +1)$ for $j\in \cH_0\backslash\{i\}$. The proof is completed by applying Theorem~\ref{th-key2}.

\subsection{Proof of Corollary~\ref{StoreyBH_multidim}}\label{proof:StoreyBH_multidim}

By using \eqref{equboundsadaptive} in Theorem~\ref{thm:AdaptBONuS}, the result is established if we prove in each case
\begin{equation}\label{toproveadaptive}
\sum_{i\in \cH_0}  \E\left(\frac{1}{G(p'_j,1\leq j\leq m)} \right)\leq 1,
\end{equation}
with $p'=(p'_j,1\leq j\leq m)\sim \mathcal{D}_i$ defined in \eqref{def:Di}.

\paragraph{Proof for the Storey estimator}
In this case, 
$$G(p')=\frac{1+\sum_{j=1}^m \ind{p'_j\geq \lambda}}{1-\lambda} = \frac{1+\sum_{j\in \cH_0\backslash\{i\}} \ind{p'_j\geq \lambda}}{1-\lambda},$$
 with $\lambda= K/(\ell+1)$ for $K\in \{2,\dots,\ell\}$. 
 Recall that, conditionally on $\{U_{(1)},\dots,U_{(\ell+1)}\}$,  $(p'_j: j\in \cH_0\backslash\{i\})$ are i.i.d. with the c.d.f. $F(x)=(1-U_{(\lfloor x(\ell+1)\rfloor+1)} )\ind{1/(\ell+1)\leq x<1 }+\ind{x\geq 1}$, $x\in \R$. Therefore, we have for $j\in \cH_0\backslash\{i\}$,
 \begin{align*}
    \P(p'_j\geq \lambda \:|\: \{U_{(1)},\dots,U_{(\ell+1)}\}) &=\P(p'_j\geq K/(\ell +1) \:|\: \{U_{(1)},\dots,U_{(\ell+1)}\}) \\
    &= \P(p'_j>(K-1)/(\ell +1) \:|\:  \{U_{(1)},\dots,U_{(\ell+1)}\}) \\
    &= 1-(1- U_{(K)})=U_{(K)}.
\end{align*}
 Thus, it is enough to prove
\begin{equation}\label{relStorey}
 \E\left(\frac{1}{1+\mathcal{B}(m_0-1, U_{(K)})}   \right)\leq \frac{1}{m_0 (1-\lambda)},
\end{equation}
where $\mathcal{B}(m_0-1, U_{(K)})$ denotes a Binomial random variable with parameters $m_0-1$ and $U_{(K)}$.
By Lemma~1 in \cite{BKY2006}, 
\begin{align*}
\E\left(\frac{1}{1+\mathcal{B}(m_0-1,U_{(K)})} \:|\:  \{U_{(1)},\dots,U_{(\ell+1)}\} \right)&\leq \frac{1}{m_0 U_{(K)}}.
\end{align*}
Hence, the LHS of \eqref{relStorey} is bounded by $\E\left( \frac{1}{m_0 
U_{(K)}}\right).$ 
It is well-known that $U_{(K)}\sim \beta(\ell+2-K,K)$  (note that $U_{(1)},\dots,U_{(\ell+1)}$ is a decreasing sequence) and the expectation of the inverse of a Beta random variable with scale parameters $a$ and $b$ is $(a+b-1)/(a-1)$. Hence, 
\begin{align*}
\E\left( \frac{1}{m_0 
U_{(K)}}\right)= \frac{\ell +1}{m_0 (\ell+1 -K)} = \frac{1}{m_0 (1-\lambda)}. 
\end{align*}

\paragraph{Proof for the Quantile estimator}

In that case, $$G(p')=\frac{m-k_0+1 }{1-p'_{(k_0)}},$$
where $p'_{(k_0)}$ denotes the $k_0$-smallest element of $(p'_j,1\leq j\leq m)$.
 If 
$k_0\leq m-m_0 +1$, we have $G(p') \geq m_0$ and  \eqref{toproveadaptive} holds.
Thus, we assume $k_0\geq m-m_0 +2$, in which case $k_0+m_0-m-1\geq 1$. Let $j_0=k_0+m_0-m-1 \in [1,m_0]$. For the rest of the proof, we will fix $i$ and write $\E[\cdot]$ for $\E_{p'\sim \mathcal{D}_i}[\cdot]$ in short. Then
$$
\E\left(\frac{1}{G(p'_j,1\leq j\leq m)} \right) = \E\left(\frac{1-p'_{(j_0:\cH_0\backslash\{i\})}}{m-k_0+1} \right),
$$
 where $p'_{(j_0:\cH_0\backslash\{i\})}$ denotes the $j_0$-smallest element of $(p'_j,j\in \cH_0\backslash\{i\})$. To prove \eqref{toproveadaptive}, it is left to prove
\begin{equation}\label{toprove}
 \E\left(1-p'_{(j_0:\cH_0\backslash\{i\})} \right)\leq \frac{m-k_0+1}{m_0}\Longleftrightarrow \E(p'_{(j_0:\cH_0\backslash\{i\})}) \geq j_0/m_0.
\end{equation}
By definition of $\mathcal{D}_i$,
\begin{align*}
\E(p'_{(j_0:\cH_0\backslash\{i\})} \:|\:  \{U_{(1)}, \dots, U_{(\ell+1)}\}) &= \int_0^{\infty} \P(p'_{(j_0:\cH_0\backslash\{i\})}> x\:|\: \{U_{(1)}, \dots, U_{(\ell+1)}\}) dx\\
&= \int_0^{\infty} \P\left(\sum_{j\in \cH_0\backslash\{i\}} \ind{p'_{j}\leq x} < j_0\:|\:  \{U_{(1)}, \dots, U_{(\ell+1)}\}\right) dx\\
&= \int_0^{\infty} \P\left(\mathcal{B}(m_0-1, F(x))  < j_0\right) dx,
\end{align*}
where $\mathcal{B}(m_0-1,F(x))$ denotes a Binomial random variable with parameters $m_0-1$ and $F(x)$, where $F(x)=(1-U_{(\lfloor x(\ell+1)\rfloor+1)} )\ind{1/(\ell+1)\leq x<1 }+\ind{x\geq 1}$, $x\in \R$. Hence, the last display is equal to
\begin{align*}
& (\ell+1)^{-1}+\sum_{b=2}^{\ell +1} \int_{0}^{1} \ind{\lfloor x(\ell+1)\rfloor+1 = b} \P\left(\mathcal{B}(m_0-1, 1-U_{(b)})  < j_0\right) dx\\
&=(\ell+1)^{-1}+(\ell+1)^{-1} \sum_{b=2}^{\ell +1}  \P\left(\mathcal{B}(m_0-1, 1-U_{(b)})  < j_0\right) \\
&\geq (\ell+1)^{-1} \sum_{b=1}^{\ell +1}  \P\left(\mathcal{B}(m_0-1, 1-U_{(b)})  < j_0\right).
\end{align*}
Hence,
\begin{align*}
\E(p'_{(j_0:\cH_0\backslash\{i\})}) &\ge \sum_{k=0}^{j_0-1} {m_0-1 \choose  k} (\ell+1)^{-1} \sum_{b=1}^{\ell+1} \E [(1-U_{(b)})^k U_{(b)} ^{m_0-1-k} ] .
\end{align*}
Since $U_{(b)}\sim \beta(\ell+2-b,b)$  (recall that $U_{(1)},\dots,U_{(\ell+1)}$ being a decreasing sequence), we have $1-U_{(b)}\sim \beta(b,\ell+2-b)$. Hence,
\begin{align*}
\E [(1-U_{(b)})^k U_{(b)}^{m_0-1-k}] &=\frac{(\ell+1)!}{(b-1)! (\ell+1-b)!}\int_\R x^k (1-x)^{m_0-1-k} x^{b-1} (1-x)^{\ell+2-b-1}dx\\
&=\frac{(\ell+1)!}{(b-1)! (\ell+1-b)!}\frac{(k+b-1)!(m_0+\ell-(k+b))!}{(m_0+\ell)!},
\end{align*}
where the last line uses the fact that the integrand is proportional to the density of $\beta(k+b,m_0-k+\ell+1-b)$. As a result, we get
\begin{align*}
& {m_0-1 \choose  k} (\ell+1)^{-1}\sum_{b=1}^{\ell+1}\E [(1-U_{(b)})^k U_{(b)}^{m_0-1-k}] \\
&=m_0^{-1}{m_0+\ell \choose m_0}^{-1}  \:\:\sum_{b=1}^{\ell+1} {k +b-1\choose k}{m_0-k-1+\ell -b+1 \choose m_0-k-1} \\
&=m_0^{-1}{m_0+\ell \choose m_0}^{-1}  \:\:\sum_{b'=0}^{\ell} {k +\ell-b'\choose k}{m_0-k-1+b' \choose m_0-k-1} .
\end{align*}
By applying Lemma~\ref{lemvander} with $j=m_0-k-1$, $u=k$, $v=\ell$ (hence $j+u=m_0-1$ and $j+u+v=\ell+m_0-1$), the RHS is equal to $1/m_0$. This proves \eqref{toprove} and hence the theorem..

\subsection{Proof of Theorem~\ref{th-key}}\label{proof:th-key}

By Assumption \ref{as:newnoties}, we can assume $(S_{k+1}, \ldots, S_{n+m})$ has no ties throughout the proof. 
The result (i) is obvious.   For (ii), note that $(\ell+1)p_i=1+\sum_{s\in \{S_{k+1}, \ldots, S_{n}\}}\ind{s>S_{n+i}} $ is the rank of $S_{n+i}$ within the set $\{S_{k+1}, \ldots, S_n,S_{n+i}\}$. Hence, $S_{((\ell+1)p_i)}=S_{n+i}$ and, for all $j\in \{1,\dots,m\}\backslash\{i\}$, 
   \begin{align*}
p_j&=(\ell+1)^{-1}\left(1+\sum_{\substack{s\in \{S_{k+1}, \ldots, S_{n},S_{n+i}\}\\ s\neq S_{((\ell+1)p_i)}}}\ind{s>S_{n+j}}\right)\\
&=(\ell+1)^{-1} \left(\sum_{s\in \{S_{k+1}, \ldots,S_{n},S_{n+i}\}}\ind{s>S_{n+j}} +\ind{S_{((\ell+1)p_i)}\leq S_{n+j}} \right).
\end{align*}
  This proves (ii). By Assumption~\ref{as:newexch}, for any permutation $\sigma$ of $\{k+1,\dots,n, n+i\}$, we have 
$$
((S_{k+1},\dots,S_{n},S_{n+i}), W_i)\sim ((S_{\sigma(k+1)},\dots,S_{\sigma(n)},S_{\sigma(n+i)}), W_i^\sigma) =((S_{\sigma(k+1)},\dots,S_{\sigma(n)},S_{\sigma(n+i)}), W_i),  
$$
where 
$$
W_i^\sigma = \big( \{S_{\sigma(k+1)}, \ldots, S_{\sigma(n)}, S_{\sigma(n+i)}\}, (S_{n+j}, j \in \cH_0, j \neq i), (S_{n+j}, j \in \cH_1) \big)
   = W_i.
$$
This implies that $(S_{k+1}, \ldots, S_{n}, S_{n+i})$ is exchangeable conditionally on $W_i$.
Now let $R_{1}, \ldots, R_{\ell+1}$ be the ranks of $(S_{k+1}, \ldots, S_{n}, S_{n+i})$ within the same set. Then $S_{j+k} = S_{(R_{j})}$ ($j = 1,\ldots, \ell$) and $S_{n+i}=S_{(R_{\ell+1})}$, where $S_{(1)} >S_{(2)} > \cdots > S_{(\ell+1)}$ are the order statistics.
Since $(S_{k+1}, \ldots, S_{n}, S_{n+i})$ are exchangeable conditionally on $W_i$ and almost surely mutually distinct, we have that
  \[(R_{1}, \ldots, R_{\ell+1})\indep W_i, \quad \text{and }\quad (R_{1}, \ldots, R_{\ell+1})\sim \text{Unif}(\mathfrak{S}(\{1, \ldots, n-k+1\})),\]
  where $\mathfrak{S}(\{1, \ldots, \ell+1\})$ denotes the set of permutations of $\{1, \ldots, \ell+1\}$.
 The results (iii) and (iv) then follow from the fact that $p_i = R_{\ell+1} / (\ell+1)$.
 
 \subsection{Proof of Theorem~\ref{th-key2}}\label{proof:th-key2}

By \eqref{equpjfunctionpi}, $p_j$ is a function of $p_i$ and $W_i$ for all $j\in \cH_0\backslash\{i\}$. Replacing $p_i$ by $1/(\ell +1)$ in that expression, we get
\begin{align*}
p_j&=(\ell+1)^{-1}\sum_{s\in \{S_{k+1}, \ldots, S_{n}, S_{n+i}\}}\ind{s>S_{n+j}} + \ind{S_{(1)}\leq S_{n+j}}/(\ell +1)\\
&=(\ell+1)^{-1}\sum_{q=1}^{\l+1}\ind{S_{(q)}>S_{n+j}} + \ind{S_{(1)}\leq S_{n+j}}/(\ell +1)\\
&=(\ell+1)^{-1}\left(1+\sum_{q=2}^{\l+1}\ind{S_{(q)}>S_{n+j}} \right).
\end{align*}
By Lemma~\ref{lem:reduction}, we have
\begin{align*}
&(p_j,\: j\in \cH_0\backslash\{i\}) \:|\:  p_i=1/(\ell +1), \{S_{(1)},\dots\, S_{(\ell+1)}\}\\
&\sim (p'_j,\: j\in \cH_0\backslash\{i\}) \:|\:  p'_i=1/(\ell +1), \{U_{(1)},\dots\, U_{(\ell+1)}\}\\
&\sim (p'_j,\: j\in \cH_0\backslash\{i\}) \:|\:   \{U_{(1)},\dots\, U_{(\ell+1)}\}
\end{align*}
where $p'_j=\frac{1+\sum_{q=2}^{\ell+1} \ind{U_{(q)}> V_j}}{\ell+1}$,  $U_1,\dots,U_{\ell +1},V_j (j\in \cH_0)$ are i.i.d. from $U(0,1)$, and $U_{(1)}>\dots> U_{(\ell+1)}$ denote the order statistics of $U_1,\dots,U_{\ell +1}$. As a result, conditional on $ \{U_{(1)},\dots\, U_{(\ell+1)}\}$, $(p'_j,\: j\in \cH_0\backslash\{i\})$ are i.i.d. with a c.d.f.
 \begin{align*}
 F(x)&=\P(p'_j\leq x\:|\:  U_{(1)}, \dots, U_{(\ell+1)} )\\
 &=\P\left(\sum_{q=2}^{\ell+1} \ind{U_{(q)}>V_j}\leq \lfloor x(\ell+1)\rfloor-1\:|\:  U_{(1)}, \dots, U_{(\ell+1)} \right)\\
 &=\P( U_{(\lfloor x(\ell+1)\rfloor+1)}\leq V_j)=1-U_{(\lfloor x(\ell+1)\rfloor+1)}.
 \end{align*}
because $\sum_{q=2}^{\ell+1} \ind{U_{(q)}>v}\geq \lfloor x(\ell+1)\rfloor$ is equivalent to $U_{(\lfloor x(\ell+1)\rfloor+1)}>v$. This completes the proof.

\section{Proofs for Section~4}%\ref{sec:calibration}

\subsection{Proof of Theorem~\ref{th:SCextended}}\label{sec:th:SCextended}

Let $T(x)=1-\lrt(x)=\pi_0 f_0(x)/f(x)$ and $t(\alpha)=1-c(\alpha)\in (0,1)$. Then $R=\{i \::\: T(X_i)\leq t(\alpha)\}$. Consider any procedure $R'=\{i \::\: T'(X_i)\leq t'\}$ with $\mbox{mFDR}(R')\leq \alpha$.
Since $\mbox{mFDR}(R)=\alpha$, we have both
\begin{align*}
0&=\int \ind{T(x)\leq t(\alpha)} (T(x)-\alpha) f(x) d\nu(x)\\
0&\geq \int \ind{T'(x)\leq t'} (T(x)-\alpha) f(x) d\nu(x).
\end{align*}
The first equality implies $t(\alpha)\geq \alpha$. If $t(\alpha) = \alpha$, then $T(X) = \alpha$ almost surely under $f$. This implies that all hypotheses are rejected with probability $1$ and hence $R$ is never less powerful than $R'$. 

Assume $t(\alpha) > \alpha$. Then the two equalities imply
 \begin{align}\label{equintermopt}
\int (\ind{T(x)\leq t(\alpha)}- \ind{T'(x)\leq t'}) (T(x)-\alpha) f(x) d\nu(x) \geq  0.
\end{align}
Since $T(x)\leq t(\alpha)$ is equivalent to $\frac{T(x)-\alpha}{1-T(x)}\leq \frac{t(\alpha)-\alpha}{1-t(\alpha)}$ (even when $T(x)=1$), we obtain
 \begin{align*}
\frac{t(\alpha)-\alpha}{1-t(\alpha)}\int (\ind{T(x)\leq t(\alpha)}- \ind{T'(x)\leq t'}) (1-T(x)) f(x) d\nu(x) \geq  0.
 \end{align*}
 Since $t(\alpha) > \alpha$,
 \begin{align}\label{equintermopt2}
\int (\ind{T(x)\leq t(\alpha)}- \ind{T'(x)\leq t'})  \bar{f}_1(x) d\nu(x) \geq  0.
\end{align}

\subsection{Proof of Lemma~\ref{lem:PUclassiforacle}}\label{proofPUclassiforacle}

% For case (i), \eqref{equJlambda2} can be written as
% \begin{align*}
% J_\lambda(g)&= \int \Big\{k\ind{g(x)\geq 0} + \lambda (\ell+m) \ind{g(x)< 0}  f_\gamma(x)/f_0(x) \Big\} f_0(x)d\nu(x).
% \end{align*}
% For any $x$, the integrand is minimized if
% \[I(g(x) \ge 0) = I(k \le \lambda(\ell + m)f_\gamma(x) / f_0(x)).\]
% In particular, $g^\sharp$ defined in (i) satisfies the above condition. 
% % and we can conclude since for all $u,v\in \R$, the map $a\in \{0,1\}\mapsto u a + v (1-a)$ is minimum in $a=\ind{v\geq u}=\ind{v/u-1\geq 0}$.

For case (i), \eqref{equJlambda2} can be expressed as
\begin{align*}
2J_\lambda(g)&= \int \Big\{k (1+g(x))_+ +  \lambda (\ell+m) (1-g(x))_+  f_\gamma(x)/f_0(x) \Big\} f_0(x)d\nu(x).
\end{align*}
For any $u, v > 0$ and $a \in \R$,
$$
u (1+a)_+ +v (1-a)_+ = v (1-a) \ind{a< -1} + ( u+ v + a (u-v))\ind{-1\leq a\leq 1} + u (1+a) \ind{a> 1}.
$$
As a function of $a$, it is continuous and piecewise linear with two turning points $(-1,2v)$ and $(1,2u)$. When $u\neq v$, the unique minimum is attained at $a=\mbox{sign}(v/u-1)$. The proof is then completed by setting $u=k$ and $v=\lambda (\ell+m)  f_\gamma(x)/f_0(x)$ and the assumption that $\P(f_{\gamma}(X)/f_0(X) = k/\lambda(\ell+m)) = 0$.
 
 % Now take any minimum $g$ and show that $g=g^\sharp$ $\nu$-almost everywhere. 
 % Note that $J_\lambda(g)\leq J_\lambda(g^\sharp)$ and for $\nu$-almost $x$, we have $k (1+g(x))_+ +  \lambda (\ell+m) (1-g(x))_+  f_\gamma(x)/f_0(x)\geq k (1+g^\sharp(x))_+ +  \lambda (\ell+m) (1-g^\sharp(x))_+  f_\gamma(x)/f_0(x)$. Hence, we have for $\nu$-almost every $x$, 
 % $$
 % u (1+g(x))_+ + v (1-g(x))_+   = 2 \min(u,v),
 % $$
 % with the above values of $u,v$.
 % Hence, if $g(x)\leq -1$, $v (1-g(x)) = 2\min(u,v)$, that is, $g(x) =1- 2\min(u/v,1) =\max(1-2u/v,-1)$. This implies $g(x)=-1$ and $u\geq v$ in that case. If $g(x)\in (-1,1)$, $g(x) (u-v)=  \min(2u,2v) -u -v = \min(u-v,v-u)$, hence $g(x) =\min(1,-1)=-1$. So this case is excluded. If $g(x)\geq 1$, $u(1+g(x))=2 \min(u,v)$, hence $g(x)=2 \min(1,v/u)-1 = \min(1, 2v/u-1)$. In that case, this implies $g(x)=1$ and $v\geq u$.  Finally, we have $g(x)=\mbox{sign}(v/u-1)=g^\sharp(x)$ $\nu$-almost everywhere and maybe except in the case $u=v$, that is, $k=\lambda (\ell+m)  f_\gamma(x)/f_0(x)$. But the set of $x$ satisfying the latter is of $\nu$-measure zero by assumption. This establishes the last statement of case (ii).

For case (ii), \eqref{equJlambda2} is given by
\begin{align*}
J_\lambda(g)&= \int \Big\{-k\log(1-g(x)) - \lambda (\ell+m) \log(g(x))  f_\gamma(x)/f_0(x) \Big\} f_0(x)d\nu(x).
\end{align*}
For any $u>0$, $v\geq 0$, the map $a\in [0,1]\mapsto u \log(1-a) + v \log(a)$ has a unique maximizer at $a=v/(u+v)$.

\section{Proofs for Section~5}

\subsection{Proof of Theorem~\ref{powerhighdim}}\label{sec:proofpowerhighdim}

For all $g\in \cG$, let 
\begin{align}
\tilde{R}_0(g)&= \ell^{-1}  \sum_{i=k+1}^{n}  \ind{g(Z_i)\geq 0} ;\\
\hat{R}_{\gamma,0}(g)&= (m_0+\ell)^{-1} \left( \sum_{i=k+1}^{n}  \ind{g(Z_i)< 0} + \sum_{i\in \cH_0}  \ind{g(Z_{n+i})< 0}\right) ;\\
\hat{R}_{\gamma,1}(g)&= m_1^{-1}  \sum_{i\in \cH_1}  \ind{g(Z_{n+i})< 0},
\end{align} 
so that $\hat{R}_{\gamma}(g)=(m+\ell)^{-1} \sum_{i=k+1}^{n+m} \ind{g(Z_i)<  0}=(1-\gamma)\hat{R}_{\gamma,0}(g)+\gamma \hat{R}_{\gamma,1}(g)$. For notational convenience, for any $i\ge 1$, let
\[e_{i} = \sqrt{\frac{V(\cG) + \log(1/\delta)}{i}}.\]
Define the following events:
\begin{align*}
\Omega_0 &= \left\{ \sup_{g\in \cG}|\hat{R}_0(g)-R_0(g)|\leq b e_{k}\right\};\\
\tilde{\Omega}_0 &= \left\{ \sup_{g\in \cG}|\tilde{R}_0(g)-R_0(g)|\leq b e_{\ell}\right\};\\
\Omega_{\gamma,0} &= \left\{ \sup_{g\in \cG}|\hat{R}_{\gamma,0}(g)-(1-R_0(g))|\leq b e_{m_0+\ell}\right\};\\
\Omega_{\gamma,1} &= \left\{ \sup_{g\in \cG}|\hat{R}_{\gamma,1}(g)-R_1(g)|\leq b e_{m_1}\right\}.
\end{align*}
We choose $b$ such that
\begin{equation}
  \label{eq:Pevent}
  \P(\Omega_0\cap \tilde{\Omega}_0\cap \Omega_{\gamma,0}\cap  \Omega_{\gamma,1})\geq 1-\delta.
\end{equation}
The well-known result for empirical processes on finite VC classes (e.g., Example 7.10 of \cite{sen2018gentle}) implies that $b$ only depends on $\delta$. Throughout the rest of the proof, we choose
\begin{equation}
  \label{eq:CC'}
  C = b, \quad C' = 30b,
\end{equation}
where $C$ and $C'$ are the constants in expressions of $\epsilon_0$ and $\Delta$, respectively. %for $\epsilon_0$ and $\Delta$, respectively.

Note that on $\Omega_{\gamma,0}\cap \Omega_{\gamma,1}$, we have
\begin{align}
  &\sup_{g\in \cG}|\hat{R}_\gamma(g)-R_\gamma(g)|\leq  b (1-\gamma) e_{m_0+\ell} + b \gamma  e_{m_1}\nonumber\\
  & = b \left(\sqrt{\frac{m_0+\ell}{m+\ell}} + \sqrt{\frac{m_1}{m+\ell}}\right) e_{m+\ell} \le 2be_{m+\ell}.\label{equintremRgamma}
\end{align}
On $\Omega_0$, \eqref{defghat} implies that
$${R}_0(\hat{g}) =  \hat{R}_0(\hat{g}) + {R}_0(\hat{g}) -\hat{R}_0(\hat{g}) \leq \beta + \epsilon_0 + b e_{k} = \beta + 2be_{k}.$$
Clearly, $2be_{k} \le 30\gamma^{-1}b e_{k}\le \Delta$. This proves the first claim of (i). Moreover, on $\Omega_0\cap \tilde{\Omega}_0$, we have
$$
\tilde{R}_0(\hat{g}) = {R}_0(\hat{g}) + \tilde{R}_0(\hat{g})-{R}_0(\hat{g}) \leq \beta  + 3b (e_{k} \vee e_{\ell}) \le \beta + 0.1 \gamma\Delta.
$$
Equivalently,
\begin{equation}\label{equintermR0}
 \sum_{i=k+1}^{n}  \ind{\hat{g}(Z_i)\geq 0} \leq \ell (\beta  + 0.1\gamma\Delta).
\end{equation}
By the assumption that ${R}_0(g^\sharp_\cG) = \beta$, on the event $\Omega_0$, $\hat{R}_0(g^\sharp_\cG)\leq {R}_0(g^\sharp_\cG)+b e_{k} = \beta + \epsilon_0$. By definition \eqref{defghat} of $\hat{g}$, we have\begin{equation}\label{majkey}
\hat{R}_\gamma(\hat{g})\leq \hat{R}_\gamma(g^\sharp_\cG).
\end{equation}
By \eqref{equintremRgamma} and \eqref{majkey}, on $\Omega_0\cap \Omega_{\gamma,0}\cap \Omega_{\gamma,1}$, 
$${R}_\gamma(\hat{g}) = \hat{R}_\gamma(\hat{g})  + {R}_\gamma(\hat{g}) -\hat{R}_\gamma(\hat{g}) \leq \hat{R}_\gamma(g^\sharp_\cG) + 2b e_{m+\ell}\leq {R}_\gamma(g^\sharp_\cG) + 4b e_{m+\ell}.$$
Together with \eqref{NPexactRgamma}, this implies
\[(1 - \gamma)(1 - R_0(\hat{g})) + \gamma R_1(\hat{g}) \le (1 - \gamma)(1 - R_0(g^\sharp_\cG)) + \gamma R_1(g^\sharp_\cG) + 4b e_{m+\ell}\]
and thus \[ R_{1}(\hat{g})\le R_1(g^\sharp_\cG) + \gamma^{-1}(R_0(\hat{g}) - R_0(g^\sharp_\cG) +4b e_{m+\ell}).\]
By definition, $R_0(\hat{g})\le \beta + \epsilon_0 = {R}_0(g^\sharp_\cG) + \epsilon_0 = {R}_0(g^\sharp_\cG) + be_{k}$. Thus,
\begin{align*}
  {R}_1(\hat{g}) \leq {R}_1(g^\sharp_\cG) + \gamma^{-1}b [4 e_{m+\ell} + e_{k}]\leq {R}_1(g^\sharp_\cG) + 5\gamma^{-1}b e_{k}.
\end{align*}
Clearly, $5\gamma^{-1}be_{k} \le 30\gamma^{-1}b e_{k}\le \Delta$. This proves the second claim of (i). Then, on the event $\Omega_{\gamma, 1}$, we have
\begin{align*}
\hat{R}_{\gamma,1}(\hat{g}) &\leq R_{1}(\hat{g}) + be_{m_1} \le {R}_1(g^\sharp_\cG) + be_{m_1} + 5\gamma^{-1}b e_{k}.
\end{align*}
Since $e_{m_1} = \gamma^{-1/2}e_{m+\ell} \le \gamma^{-1}e_{m+\ell}\le \gamma^{-1}e_{k}$, we have
\[\hat{R}_{\gamma,1}(\hat{g}) \leq {R}_1(g^\sharp_\cG) + 6\gamma^{-1} be_{k}.\]
Since $C' \ge 6b$,
$$\Delta=C' \gamma^{-1} (e_{k}\vee e_{\ell}) \ge 6\gamma^{-1}b e_{k}.$$
Thus, on $\Omega_0\cap \Omega_{\gamma,0}\cap \Omega_{\gamma,1}$,
\[\frac{1}{m_1}\sum_{i\in \cH_1}  \ind{\hat{g}(Z_{n+i}) < 0} = \hat{R}_{\gamma,1}(\hat{g}) \leq {R}_1(g^\sharp_\cG) + \Delta.\]
Equivalently, recalling that $X_i = Z_{n+i}$,
\begin{equation}\label{equintermR1}
  \sum_{i\in \cH_1}  \ind{\hat{g}(X_{i})\geq  0}\ge M.
\end{equation}
% On the other hand, on the event $\tilde{\Omega}_{0}$,
% \begin{align}
%   \[1 + \sum_{i=k+1}^{n}\ind{\hat{g}(Z_{i}) \ge 0} = 1 + \ell \tilde{R}_0(\hat{g})\le 1 + \ell R_0(\hat{g})\]
    %   \end{align}

Let $\eta = 1/\ell + \beta +0.1\gamma\Delta$. Then \eqref{equintermR0} implies that
\[ 1 + \sum_{i=k+1}^{n}  \ind{\hat{g}(Z_i)\geq 0} \leq \eta\ell.\]
To validate the conditions in Lemma \ref{positivescoreBH}, we only need to show $\eta \le \alpha M / m$. Since $1 - R_{1}(g^\sharp_\cG)\ge (1 + \alpha^{-1})\Delta$, we have
\[\Delta \le \alpha(1 - R_{1}(g^\sharp_\cG) - \Delta) \le \frac{\alpha M}{m_1}\le \frac{\alpha M}{\gamma m}.\]
By the assumptions that $\ell \ge 2m/(\alpha M)$ and $\beta \le 0.4\alpha M / m$, we have
\[\eta \le \frac{\alpha M}{2m} + \frac{0.4\alpha M}{m} + \frac{0.1\alpha M}{m} = \frac{\alpha M}{m}.\]
By Lemma~\ref{positivescoreBH}, on the event  $\Omega_0\cap \tilde{\Omega}_0\cap \Omega_{\gamma,0}\cap  \Omega_{\gamma,1}$,
\[\mbox{AdaDetect}_{\alpha}\supset \{i \in \{1, \ldots, m\}: \hat{g}(X_i)\ge 0\}.\]
By \eqref{equintermR1},
\[|\mbox{AdaDetect}_{\alpha}\cap \cH_1| / m_1 \ge M / m_1 \ge 1-{R}_1(g^\sharp_\cG)-\Delta.\]
By \eqref{eq:Pevent}, this occurs with probability at least $1 - \delta$.

\subsection{Proof of Theorem~\ref{th:BHestimated} }\label{proof:BHestimated}

We first prove the result for 
%we will make use of another oracle procedure, which is the BH algorithm applied to the oracle $p$-values $p^*_i=\ol{G}_0(g^*(X_i))$, $1\leq i\leq m$, for $\ol{G}_0$ defined in \eqref{equGbar}. This oracle procedure is different of the oracle AdaDetect$^*$ procedure and will be denoted by
 $\BH^*_\alpha$, i.e.,
\begin{align}\label{powerBHstar}
\P\left(\mathcal{R}' \cap \{\BH^*_\alpha\subset \mbox{AdaDetect}_{\alpha(1+\delta)(1+\zeta_r(\eta))}\}^c\right)\leq \P\left(\hat{\eta}> \eta \right)-m e^{-(3/28)  (\ell+1)\delta^2\alpha(r\vee 1)/m} ,
\end{align}
where $\mathcal{R}'=\{|\BH^*_\alpha|\geq r\}$.
First we note that, while $\hat{g}(Y_i)$ are dependent through the score function $\hat{g}(\cdot)$, the $g^*(Y_i)$ are i.i.d., allowing us to apply concentration inequalities.
For $s\in \R$, define
\begin{align*}
\wh{G}_0(s) &= (\ell+1)^{-1}\left(1+\sum_{i=k+1}^n \ind{\hat{g}(Y_i)\geq s}\right);\\
\wh{G}^*_0(s) &= (\ell+1)^{-1}\left(1+\sum_{i=k+1}^n \ind{g^*(Y_i)\geq s}\right).
\end{align*}
For notational convenience, let $\tilde{\alpha} = \alpha(r\vee 1) / m$. Consider in addition the following events:
\begin{align*}
\Omega_1 &=\left\{\max_{k+1\leq i\leq n+m} |\hat{g}(Z_i)- g^*(Z_i) |\leq \eta \right\};\\
\Omega_2 &= \left\{ \sup_{1\leq i\leq m}\left(\frac{\wh{G}^*_0(g^*(X_i)-2\eta)-\ol{G}_0(g^*(X_i)-2\eta)\vee \tilde{\alpha}}{\ol{G}_0(g^*(X_i)-2\eta)\vee \tilde{\alpha}}\right)\leq \delta \right\}.
\end{align*}
On $\Omega_1\cap \Omega_2$, we have for all $i\in \{1,\dots,m\}$, 
\begin{align}
&\wh{G}_0(\hat{g}(X_i)) \leq \wh{G}^*_0(\hat{g}(X_i)-\eta)
\leq \wh{G}^*_0(g^*(X_i)-2\eta)
\leq (\ol{G}_0(g^*(X_i)-2\eta)\vee \tilde{\alpha} )(1+\delta)\nonumber\\
&= (\ol{G}_0(g^*(X_i))\vee \tilde{\alpha} )\left(1 + \frac{\ol{G}_0(g^*(X_i)-2\eta)\vee \tilde{\alpha} - \ol{G}_0(g^*(X_i))\vee \tilde{\alpha}}{\ol{G}_0(g^*(X_i))\vee \tilde{\alpha}}\right)(1 +\delta) .\nonumber
\end{align}
Let $u = \ol{G}_0(g^*(X_i))\vee \tilde{\alpha}$. Since $\ol{G}_0$ is nonincreasing, $u = \ol{G}_0(g^*(X_i)\wedge \ol{G}_0^{-1}(\tilde{\alpha}))$. Thus, 
\begin{align*}
  & \frac{\ol{G}_0(g^*(X_i)-2\eta)\vee \tilde{\alpha} - \ol{G}_0(g^*(X_i))\vee \tilde{\alpha}}{\ol{G}_0(g^*(X_i))\vee \tilde{\alpha}}\\
  & =\frac{\ol{G}_0((g^*(X_i)-2\eta)\wedge \ol{G}_0^{-1}(\tilde{\alpha})) - u}{u}\\
  & \le \frac{\ol{G}_0(\ol{G}_0^{-1}(u)-2\eta) - u}{u}.
\end{align*}
Since $u\ge \tilde{\alpha}$, the LHS is bounded by $\zeta_r(\eta)$. As a result, on $\Omega_1\cap \Omega_2$,
\begin{equation}\label{equintermG0bar}
 \wh{G}_0(\hat{g}(X_i))\le (\ol{G}_0(g^*(X_i))\vee \tilde{\alpha} )(1 + \zeta_r(\eta))(1 + \delta).
\end{equation}
Then, for all $t\in \{\alpha k/m,r\vee 1\leq k\leq m\}$,
\begin{align*}
\ind{\ol{G}_0(g^*(X_i)) \leq t}=\ind{\ol{G}_0(g^*(X_i))\vee \tilde{\alpha} \leq t}
&\leq 
\ind{
\wh{G}_0(\hat{g}(X_i))\leq t (1+\delta)(1+\zeta_r(\eta))}.
\end{align*}
By applying Lemma~\ref{BHdomination} with $p_i=\ol{G}_0(g^*(X_i))$, $p_i'= \wh{G}_0(\hat{g}(X_i))$, $\beta=\alpha$, and
$\beta'=\alpha(1+\zeta_r(\eta))(1+\delta)$, we obtain that $\BH^*_\alpha\subset \mbox{AdaDetect}_{\alpha(1+\delta)(1+\zeta_r(\eta))}$ on $\Omega_1\cap \Omega_2 \cap \mathcal{R}'$. We are left to show that
\[\P(\Omega_2^c)\le m\exp(-(3/28)  (\ell+1)\delta^2\tilde{\alpha}).\]
Since $X_i$'s and $Y_i$'s are independent, the union bound implies
\begin{align*}
\P(\Omega_2^c )&\leq  \P\left( \sup_{1\leq i\leq m}\left(\frac{\wh{G}^*_0(g^*(X_i)-2\eta)-\ol{G}_0(g^*(X_i)-2\eta)\vee \tilde{\alpha}}{\ol{G}_0(g^*(X_i)-2\eta)\vee \tilde{\alpha}}\right)\geq \delta \right)\\
               &\leq \sum_{i=1}^m \P\left(\frac{\wh{G}^*_0(g^*(X_i)-2\eta)-\ol{G}_0(g^*(X_i)-2\eta)\vee \tilde{\alpha}}{\ol{G}_0(g^*(X_i)-2\eta)\vee \tilde{\alpha}}\geq \delta \right)\\
               &\leq \sum_{i=1}^m \P\left(\frac{\wh{G}^*_0(g^*(X_i)-2\eta)-\ol{G}_0((g^*(X_i)-2\eta)\wedge \ol{G}_0^{-1}(\tilde{\alpha}))}{\ol{G}_0((g^*(X_i)-2\eta)\wedge \ol{G}_0^{-1}(\tilde{\alpha}))}\geq \delta \right)\\
               &\leq \sum_{i=1}^m \P\left(\frac{\wh{G}^*_0((g^*(X_i)-2\eta)\wedge \ol{G}_0^{-1}(\tilde{\alpha}))-\ol{G}_0((g^*(X_i)-2\eta)\wedge \ol{G}_0^{-1}(\tilde{\alpha}))}{\ol{G}_0((g^*(X_i)-2\eta)\wedge \ol{G}_0^{-1}(\tilde{\alpha}))}\geq \delta \right)\\
&\leq m \sup_{s\leq \ol{G}_0^{-1}(\tilde{\alpha})} \P\left(\frac{\wh{G}^*_0(s)-\ol{G}_0(s)}{\ol{G}_0(s)}\geq \delta \right).
\end{align*}
For all $s\leq \ol{G}_0^{-1}\tilde{\alpha}$, 
\begin{align*}
 &\P\left(\wh{G}^*_0(s)-\ol{G}_0(s)\geq \delta \ol{G}_0(s)\right)\\
 \leq\:&\P\left( \sum_{i=k+1}^n (\ind{g^*(Y_i)\geq s}-\ol{G}_0(s))\geq (\ell+1)\delta \ol{G}_0(s)-1\right)\\
 \leq\: &\P\left( \sum_{i=k+1}^n (\ind{g^*(Y_i)\geq s}-\ol{G}_0(s))\geq 0.5 (\ell+1)\delta \ol{G}_0(s)\right),
\end{align*}
where the last line uses the fact that $(\ell+1)\delta \ol{G}_0(s)\ge  (\ell+1)\delta\tilde{\alpha}\geq 2$. 
Let $A=0.5 (\ell+1)\delta \ol{G}_0(s)$. Since the $g^*(Y_i)$'s are independent, By Bernstein's inequality, 
\begin{align}
\P(\Omega_2^c )&\leq
                 m \sup_{s\leq \ol{G}_0^{-1}(\tilde{\alpha})} \exp\left(-\frac{A^2}{2(\ell+1)\ol{G}_0(s)+2A/3}\right)\nonumber\\
               & = m \sup_{s\leq \ol{G}_0^{-1}(\tilde{\alpha})} \exp\left(-0.5 \frac{A^2}{4A/\delta+2A/3}\right)\nonumber\\
               & \le m \sup_{s\leq \ol{G}_0^{-1}(\tilde{\alpha})} \exp\left(-0.5 \frac{A^2}{4A/\delta+2A/3\delta}\right)\nonumber\\
  & = m \sup_{s\leq \ol{G}_0^{-1}(\tilde{\alpha})} \exp\left(-\frac{3(\ell + 1)\delta^2 \ol{G}_0(s)}{28}\right)\nonumber\\
 & \leq m\exp(-(3/28)  (\ell+1)\delta^2\tilde{\alpha}).\label{applibernsteinomega2}
 \end{align}
The proof of \eqref{powerBHstar} is then completed.

Next, we prove \eqref{powerBONuSstar}, i.e., the result for $\mbox{AdaDetect}_{\alpha}$. Recall that $\mathcal{R}= \{|\mbox{AdaDetect}^*_\alpha|\geq r\}$. Similar to $\Omega_2$, we define
\begin{align*}
  \Omega_3 &= \left\{ \sup_{1\leq i\leq m}\left(\frac{\ol{G}_0(g^*(X_i))\vee \tilde{\alpha} -\wh{G}^*_0(g^*(X_i))\vee \tilde{\alpha}}{\wh{G}^*_0(g^*(X_i))\vee \tilde{\alpha}}\right)\leq \delta \right\}\\
           &= \left\{ \sup_{1\leq i\leq m}\left(\frac{\ol{G}_0(g^*(X_i)\wedge \ol{G}_0^{-1}(\tilde{\alpha})) -\wh{G}^*_0(g^*(X_i)\wedge \ol{G}_0^{-1}(\tilde{\alpha}))}{\wh{G}^*_0(g^*(X_i)\wedge \ol{G}_0^{-1}(\tilde{\alpha}))}\right)\leq \delta \right\}.
\end{align*}
By \eqref{equintermG0bar}, on $\Omega_1\cap \Omega_2\cap \Omega_3$, for all $i\in \{1,\dots,m\}$ and $t\in \{\alpha k/m,r\vee 1\leq k\leq m\}$,
\begin{align*}
  \wh{G}_0(\hat{g}(X_i))&\le (\ol{G}_0(g^*(X_i))\vee \tilde{\alpha} )(1 + \zeta_r(\eta))(1 + \delta)\\
                        & \le (\wh{G}^*_0(g^*(X_i))\vee \tilde{\alpha})(1 + \zeta_r(\eta))(1 + \delta)^2\\
  & \le (\wh{G}^*_0(g^*(X_i))\vee \tilde{\alpha})(1 + \zeta_r(\eta))(1 + 3\delta).
\end{align*}
Thus, on $\Omega_1\cap \Omega_2\cap \Omega_3$,
\begin{align*}
\ind{\wh{G}^*_0(g^*(X_i))\vee \tilde{\alpha} \leq t}
&\leq 
\ind{
\wh{G}_0({g}(X_i))\leq t (1+3\delta)(1+\zeta_r(\eta))}.
\end{align*}
Applying Lemma~\ref{BHdomination} with $p_i=\wh{G}^*_0(g^*(X_i))$, $p_i'= \wh{G}_0({g}(Z_i))$, $\beta=\alpha$, and $\beta'=\alpha(1+\zeta_r(\eta))(1+3\delta)$), we obtain that $\mbox{AdaDetect}^*_\alpha\subset \mbox{AdaDetect}_{\alpha(1+3\delta)(1+\zeta_r(\eta))}$ on $\Omega_1\cap \Omega_2\cap \Omega_3 \cap \mathcal{R}$. It remains to prove that
\[\P(\Omega_3^c)\le m\exp(-(3/28)  (\ell+1)\delta^2\tilde{\alpha}).\]
Similar to $\P(\Omega_2^c)$, we have
\begin{align*}
\P(\Omega_3^c )&\leq  \P\left( \sup_{1\leq i\leq m}\left(\frac{\ol{G}_0(g^*(X_i)\wedge \ol{G}_0^{-1}(\tilde{\alpha})) -\wh{G}^*_0(g^*(X_i)\wedge \ol{G}_0^{-1}(\tilde{\alpha}))}{\wh{G}^*_0(g^*(X_i)\wedge \ol{G}_0^{-1}(\tilde{\alpha}))}\right)\geq \delta \right)\\
&\leq m \sup_{s\leq \ol{G}_0^{-1}(\tilde{\alpha})} \P\left(\frac{\ol{G}_0(s)-\wh{G}^*_0(s)}{\wh{G}^*_0(s)}\geq \delta \right)\\
&\leq m \sup_{s\leq \ol{G}_0^{-1}(\tilde{\alpha})} \P\left(\wh{G}^*_0(s) - \ol{G}_0(s) \leq -0.5 \delta  \ol{G}_0(s)\right),
\end{align*}
where the last line uses the fact that $(1+\delta)^{-1}\leq 1-0.5 \delta$ for any $\delta \in [0, 1]$. Now, for 
for all $s\leq \ol{G}_0^{-1}(\tilde{\alpha})$, 
\begin{align*}
&\P\left(\wh{G}^*_0(s) - \ol{G}_0(s) \leq -0.5 \delta  \ol{G}_0(s)\right)\\
 =\:&\P\left( \sum_{i=k+1}^n (\ind{g^*(Y_i)\geq s}-\ol{G}_0(s))\leq -0.5(\ell+1)\delta  \ol{G}_0(s)-1+\ol{G}_0(s) \right)\\
 \leq\: &\P\left( \sum_{i=k+1}^n (\ind{g^*(Y_i)\geq s}-\ol{G}_0(s))\leq -0.5(\ell+1) \delta  \ol{G}_0(s)\right).
\end{align*}
Applying the Bernstein inequality as in \eqref{applibernsteinomega2}, we conclude that $\P(\Omega_3^c )\leq m\exp(-(3/28)  (\ell+1)\delta^2\tilde{\alpha})$.

\begin{remark}
The inequality \eqref{powerBHstar} generalizes the power oracle inequality of \cite{mary2021semisupervised} for a fixed score function and $r=0$.
\end{remark}

\subsection{Proof of Corollary~\ref{cor:BHestimated}}\label{proof:cor:BHestimated}

Letting $\mathcal{R}=\{|\mbox{AdaDetect}^*_\alpha|\geq m_1\epsilon\}$, we have 
\begin{align*}
\TDR(\mbox{AdaDetect}^*_\alpha) &\leq \E[\TDP(\mbox{AdaDetect}^*_\alpha)]\\
 &\leq \epsilon + \int_\epsilon^1 \P(\TDP(\mbox{AdaDetect}^*_\alpha)\geq u)du\\
 &= \epsilon + \int_\epsilon^1 \P(\mathcal{R}\cap\{\TDP(\mbox{AdaDetect}^*_\alpha)\geq u\})du
\end{align*}
Now applying Theorem~\ref{th:BHestimated} with $r=\lceil m_1\epsilon\rceil$, we obtain  
\begin{align*}
  \TDR(\mbox{AdaDetect}^*_\alpha) &\leq  \epsilon + \int_\epsilon^1 \P(\TDP(\mbox{AdaDetect}_{\alpha'}) \geq u)du  +\P\left(\hat{\eta}> \eta \right) + 2 m e^{-(3/28)  (\ell+1)\delta^2\alpha\lceil m_1\epsilon\rceil/m}\\
  & \le   \TDR(\mbox{AdaDetect}_{\alpha'}) + \epsilon  +\P\left(\hat{\eta}> \eta \right) + 2 m e^{-(3/28)  (\ell+1)\delta^2\alpha\lceil m_1\epsilon\rceil/m}
\end{align*}
which gives \eqref{powerBONuSstarTDR}.

\section{Useful lemmas}

\begin{lemma}\label{BHdomination}
Let $(p_i,1\leq i\leq m)$ and $(p'_i,1\leq i\leq m)$ be two sets of $p$-values and $\beta,\beta'\in (0,1)$,  $r\in \{0,\dots,m\}$.  Assume that for all $t\in \{\beta k/m,r\vee 1\leq k\leq m\}$, 
\begin{equation}\label{equlemBHdomination}
\ind{p_i\leq t}\leq  \ind{p'_i\leq t \beta'/\beta}.
\end{equation}
Then, on the event where the BH algorithm applied to $p$-values $(p_i,1\leq i\leq m)$ at level $\beta$ has at least $r$ rejections, all these rejections would be rejected by the BH algorithm applied to $p$-values $(p'_i,1\leq i\leq m)$ at level $\beta'$.
\end{lemma}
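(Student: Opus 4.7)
The plan is to reduce the lemma to a monotonicity argument for the BH threshold. Let $\hat{k}=\max\{k\in\{0,\ldots,m\}:\sum_{i=1}^{m}\ind{p_{i}\le \beta k/m}\ge k\}$ and $\hat{k}'=\max\{k\in\{0,\ldots,m\}:\sum_{i=1}^{m}\ind{p'_{i}\le \beta' k/m}\ge k\}$ denote the BH thresholds for the two families at levels $\beta$ and $\beta'$ respectively, so that the rejection sets are $\{i:p_{i}\le \beta\hat{k}/m\}$ and $\{i:p'_{i}\le \beta'\hat{k}'/m\}$. The event of interest is $\{\hat{k}\ge r\}$, and I want to show that on this event, $\{i:p_{i}\le \beta\hat{k}/m\}\subset\{i:p'_{i}\le \beta'\hat{k}'/m\}$.

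First I would dispose of the trivial case $\hat{k}=0$, where the LHS rejection set is empty and there is nothing to prove (this can only happen if $r=0$, in which case the claim is vacuous). Otherwise $\hat{k}\ge r\vee 1$, so $t=\beta\hat{k}/m$ lies in the set $\{\beta k/m : r\vee 1\le k\le m\}$, and hypothesis \eqref{equlemBHdomination} applies at this value of $t$. The key consequence is that for every $i$,
\begin{equation*}
\ind{p_{i}\le \beta\hat{k}/m}\le \ind{p'_{i}\le \beta'\hat{k}/m},
\end{equation*}
so summing over $i$ yields $\sum_{i=1}^{m}\ind{p'_{i}\le \beta'\hat{k}/m}\ge \sum_{i=1}^{m}\ind{p_{i}\le \beta\hat{k}/m}\ge \hat{k}$, where the last inequality uses the defining property of $\hat{k}$.

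This says that $\hat{k}$ is an admissible value in the set defining $\hat{k}'$, so by maximality $\hat{k}'\ge \hat{k}$, and therefore $\beta'\hat{k}'/m\ge \beta'\hat{k}/m$. Combining with the pointwise inequality above, any $i$ with $p_{i}\le \beta\hat{k}/m$ satisfies $p'_{i}\le \beta'\hat{k}/m\le \beta'\hat{k}'/m$ and is hence rejected by BH applied to $(p'_{i})$ at level $\beta'$, which is the desired inclusion.

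There is no real obstacle here; the only subtlety is making sure that $t=\beta\hat{k}/m$ is covered by the hypothesis \eqref{equlemBHdomination}, which is why the set of allowed $t$ is indexed by $k\ge r\vee 1$ and why the statement is restricted to the event where BH on $(p_{i})$ produces at least $r$ rejections.
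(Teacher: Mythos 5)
Your proof is correct and follows essentially the same route as the paper: invoke the hypothesis at $t=\beta\hat k/m$ to get the pointwise indicator inequality, deduce $\hat k'\ge\hat k$ from the maximality characterization of the BH threshold, and then conclude the inclusion. No substantive differences.
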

\begin{proof}
Recall that BH algorithm applied to $p$-values $(p_i,1\leq i\leq m)$ at level $\beta$ rejects the $i$-th hypothesis iff $p_i\leq \beta \hat{k}/m$ where
$$
\hat{k} = \max\left\{k\in \{0,\dots,m\}\::\: \sum_{i=1}^m \ind{p_i\leq \beta k/m}\geq k\right\}.
$$
Similarly, the BH algorithm at level $\beta'$ rejects $i$-th hypothesis iff $p'_i\leq \beta \hat{k}'/m$ where
$$
\hat{k}' = \max\left\{k\in \{0,\dots,m\}\::\: \sum_{i=1}^m \ind{p'_i\leq \beta k/m}\geq k\right\}.
$$
When $\hat{k} = 0$, the conclusion is trivial. Now assume $\hat{k}\ge r\vee 1$. Setting $t=\beta \hat{k}/m$ in \eqref{equlemBHdomination}, we obtain that
$$\ind{p_i\leq \beta \hat{k}/m}\leq  \ind{p'_i\leq \beta' \hat{k}/m}, \:\:\:1\leq i\leq m.$$
This implies $\hat{k}'\geq \hat{k}\geq r\vee 1$. Hence, for all $i$,
$$
\ind{p_i\leq \beta \hat{k}/m}\leq \ind{p_i'\leq \beta' \hat{k}/m} \leq \ind{p_i'\leq \beta' \hat{k}'/m}.
$$
\end{proof}

\begin{lemma}[Vandermonde's equality]\label{lemvander}
For all $0\leq j \leq k\leq n$, we have
$$
{n+1 \choose k+1} = \sum_{m=j}^{n-k+j} {m\choose j} {n-m\choose k-j}.
$$
Equivalently, for all $j,u,v\geq 0$
$$
{j+u+v+1 \choose j+u+1} = \sum_{b=0}^{v} {j+b\choose j} {u+v-b\choose u}.
$$
\end{lemma}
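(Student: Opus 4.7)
The plan is to prove the first form of the identity by a standard double-counting argument, and then to deduce the equivalent second form by a simple change of variables.

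I would start by checking that the two displays are indeed equivalent. Substituting $k = j+u$ and $n = j+u+v$ into the first form (which is consistent with $0\le j\le k\le n$ under $u,v\ge 0$), the left-hand side becomes ${j+u+v+1 \choose j+u+1}$. For the right-hand side, setting $m = j+b$ reindexes the sum so that $b$ runs from $0$ to $n-k = v$, and a direct computation yields ${m \choose j} = {j+b \choose j}$ and ${n-m \choose k-j} = {u+v-b \choose u}$. This matches the second display exactly, so it suffices to prove the first form.

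To prove the first form, I would count the number of $(k+1)$-element subsets of $\{1,2,\ldots,n+1\}$ in two ways. Directly, this number equals ${n+1 \choose k+1}$. Alternatively, I would classify each $(k+1)$-subset by the value $m+1$ of its $(j+1)$-th smallest element. Conditional on this element being $m+1$, the subset is uniquely determined by choosing the $j$ smaller elements from $\{1,\ldots,m\}$ (in ${m \choose j}$ ways) together with the remaining $k-j$ elements from $\{m+2,\ldots,n+1\}$ (in ${n-m \choose k-j}$ ways). Feasibility forces $m \ge j$ and $n+1-(m+1) \ge k-j$, i.e., $j \le m \le n-k+j$. Summing the product ${m \choose j}{n-m \choose k-j}$ over this range gives the announced identity.

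No obstacle is anticipated: this is a classical identity (often called the Vandermonde-Chu or committee identity), and the bijective argument above is routine. Alternative proofs via generating functions (extracting the coefficient of $x^{k-j}$ from $(1-x)^{-(j+1)}(1-x)^{-(k-j+1)} = (1-x)^{-(k+2)}$) or by induction on $n$ are equally available but less transparent.
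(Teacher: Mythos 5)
Your proof is correct and complete. Note that the paper itself does not supply a proof of this lemma at all---it is simply stated as a classical identity (the Vandermonde/Chu--Vandermonde, or ``hockey stick'' type, equality) and invoked directly in the proof of Corollary~\ref{StoreyBH_multidim}. Your double-counting argument (classifying $(k+1)$-subsets of $\{1,\dots,n+1\}$ by the value of their $(j+1)$-th smallest element) is a standard and rigorous way to establish it, and your change of variables $k=j+u$, $n=j+u+v$, $m=j+b$ correctly verifies the equivalence of the two displayed forms. There is nothing to reconcile with the paper's approach since the paper offers none.
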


\begin{lemma}\label{lem:emppvaluesfunctrank}
Consider any set of scores $(S_{k+1},\dots,S_{n+m})$ and the corresponding empirical $p$-values $(p_{1}, \ldots, p_{m})$ defined in \eqref{emppvalues}.
Then on the event where the scores $(S_j,j\in \{k+1,\dots,n\}\cup (n+ \cH_0))$ are mutually distinct, the null $p$-values $(p_j,j\in \cH_0)$ are mesurable with respect to the ranks of $(S_j,j\in \{k+1,\dots,n\}\cup (n+\cH_0))$.
\end{lemma}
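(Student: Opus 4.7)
The plan is to argue directly from the definition of the empirical $p$-values. For any $j\in\cH_0$,
\[
p_j = \frac{1}{\ell+1}\Bigl(1+\sum_{i=k+1}^{n}\ind{S_i>S_{n+j}}\Bigr),
\]
so $p_j$ is determined solely by the binary comparisons $\ind{S_i>S_{n+j}}$ for $i\in\{k+1,\dots,n\}$. Both indices $i$ and $n+j$ lie in the index set $\mathcal{I}:=\{k+1,\dots,n\}\cup(n+\cH_0)$, hence each such comparison involves only scores in the family $(S_t,t\in\mathcal{I})$.

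The next step is to observe that, on the event that all scores in $(S_t,t\in\mathcal{I})$ are mutually distinct, each indicator $\ind{S_i>S_{n+j}}$ is a measurable function of the rank vector of $(S_t,t\in\mathcal{I})$. Indeed, if $\rho:\mathcal{I}\to\{1,\dots,|\mathcal{I}|\}$ denotes the rank of $S_t$ (say, from smallest to largest) within that family, then on the no-ties event
\[
\ind{S_i>S_{n+j}}=\ind{\rho(i)>\rho(n+j)}.
\]
Substituting this into the display for $p_j$ shows that $p_j$ is a deterministic function of $\rho$ alone, which is the claimed measurability.

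I expect no real obstacle: the only subtlety is to make sure the comparisons involved in the null $p$-values never reach outside $\mathcal{I}$ (which is immediate because the sum in the definition runs over $i\in\{k+1,\dots,n\}\subset\mathcal{I}$ and $n+j\in n+\cH_0\subset\mathcal{I}$), and that ties are excluded so that strict inequalities are equivalent to rank comparisons. The statement does not concern the alternative $p$-values $(p_j,j\in\cH_1)$, for which comparisons with scores $S_{n+j}$, $j\in\cH_1$ would be needed and are not covered by ranks within $\mathcal{I}$; this is why the lemma is restricted to $j\in\cH_0$.
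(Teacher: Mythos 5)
Your proof is correct and follows essentially the same argument as the paper's: both rewrite each indicator $\ind{S_i>S_{n+j}}$ as a rank comparison within the index set $\{k+1,\dots,n\}\cup(n+\cH_0)$, which is valid precisely because there are no ties among those scores. The only difference is a harmless convention (you rank from smallest to largest, the paper from largest to smallest).
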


\begin{proof}
Let $R_1,\dots,R_{n-k+m_0}$ and $S_{(1)},\dots,S_{(n-k+m_0)}$ be the ranks and the order statistics of  $(S_j,j\in \{k+1,\dots,n\}\cup (n+\cH_0))$, respectively. 
Since the score $(S_j,j\in \{k+1,\dots,n\}\cup (n+\cH_0))$ are mutually distinct, we have that $S_i>S_j$ iff $R_{i}< R_{j}$. Hence, by \eqref{emppvalues}, we obtain for all $j\in \cH_0$,
\begin{align*}
p_j &= \frac{1}{n-k+1}\left(1+\sum_{i=k+1}^{n}\ind{S_i>S_{n+j}}\right)= \frac{1}{n-k+1}\left(1+\sum_{i=k+1}^{n}\ind{R_i<R_{n+j}}\right).
\end{align*}
\end{proof}

Lemma~\ref{lem:emppvaluesfunctrank} implies the following equivalent representation of the distribution of $p$-values.

\begin{lemma}\label{lem:reduction}
  For any set of scores $(S_{k+1},\dots,S_{n+m})$ satisfying Assumptions~\ref{as:newexch}~and~\ref{as:noties}, the joint distribution of the null empirical $p$-values $(p_j,j\in \cH_0)$ does not depend on the joint distribution of scores. In particular, it can be characterized by generating $(S_j,j\in \{k+1,\dots,n\}\cup (n+ \cH_0))$ i.i.d. from $U(0,1)$.
\end{lemma}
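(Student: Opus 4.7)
The strategy is to reduce the joint distribution of $(p_j, j \in \cH_0)$ to the distribution of a certain vector of ranks, and then observe that this rank vector has a distribution that is determined entirely by the exchangeability assumption, independently of the underlying score distribution.

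First, I would invoke Lemma~\ref{lem:emppvaluesfunctrank}: on the almost sure event where the scores $(S_j, j \in \{k+1,\dots,n\} \cup (n+\cH_0))$ have no ties (guaranteed by Assumption~\ref{as:noties}), each null $p$-value $p_j$ for $j \in \cH_0$ can be written as a fixed deterministic function of the rank vector $R = (R_{k+1}, \ldots, R_n, R_{n+i}, i \in \cH_0)$ of these scores. Hence it suffices to show that the joint distribution of $R$ does not depend on the underlying joint distribution of the scores.

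Next, I would use Assumption~\ref{as:newexch}, which states that $(S_{k+1}, \ldots, S_n, S_{n+i}, i \in \cH_0)$ is exchangeable conditionally on $(S_{n+i}, i \in \cH_1)$. Since, almost surely, these scores take distinct values, the induced rank vector $R$ (of size $n-k+m_0$) is supported on the set of permutations of $\{1, \ldots, n-k+m_0\}$. Exchangeability then forces $R$ to be uniformly distributed over this set, conditionally on $(S_{n+i}, i \in \cH_1)$ (and hence unconditionally, since $R$ does not depend on $(S_{n+i}, i \in \cH_1)$). This uniform distribution does not depend on the underlying joint law of the scores.

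Finally, to justify the second part of the statement, I would observe that the particular choice of generating $(S_j, j \in \{k+1,\dots,n\} \cup (n+\cH_0))$ i.i.d.\ from $U(0,1)$ (setting the alternative scores arbitrarily) trivially satisfies both Assumption~\ref{as:newexch} and Assumption~\ref{as:noties}, and therefore produces the same uniform distribution on $R$, and hence the same joint distribution for $(p_j, j \in \cH_0)$. The main (mild) obstacle is simply keeping track of the role of conditioning on alternative scores and confirming that they do not interfere with the rank-based representation; this is handled by noting that the null $p$-values, through Lemma~\ref{lem:emppvaluesfunctrank}, depend on the alternative scores $(S_{n+i}, i \in \cH_1)$ only through the relative ranks with respect to the null scores, which are again governed by exchangeability.
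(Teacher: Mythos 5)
Your proof is correct and follows essentially the same route as the paper: reduce the null empirical $p$-values to a function of the rank vector of $(S_j, j \in \{k+1,\dots,n\}\cup(n+\cH_0))$ via Lemma~\ref{lem:emppvaluesfunctrank}, then use Assumption~\ref{as:newexch} to conclude that this rank vector is uniformly distributed over permutations, so its law (and hence that of the null $p$-values) is independent of the underlying score distribution. One small clarification: the final worry about the alternative scores ``interfering'' is unnecessary, since Lemma~\ref{lem:emppvaluesfunctrank} already shows the null $p$-values are measurable with respect to the ranks of $(S_j, j\in\{k+1,\dots,n\}\cup(n+\cH_0))$ alone, with no dependence on $(S_{n+i}, i\in\cH_1)$ at all.
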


\begin{proof}
By Lemma~\ref{lem:emppvaluesfunctrank}, almost surely, $(p_j,j\in \cH_0)$ is a function of the ranks of $(S_j,j\in \{k+1,\dots,n\}\cup (n+\cH_0))$, which is uniformly distributed on the permutations of $\{1,\dots,n-k+m_0\}$ according to Assumption~\ref{as:newexch}. The result follows.
\end{proof}

\begin{lemma}\label{positivescoreBH} 
Fix $M\in\{1,\dots,m\}$ and $\eta\in (0,1)$. Assume that the scores $S_{k+1},\dots,S_n$ in the NTS satisfy
$
1+\sum_{i=k+1}^n \ind{S_i\geq 0}\leq \eta \ell
$
and the scores $S_{n+1},\dots,S_{n+m}$ in the test sample satisfy
$
\sum_{j=1}^m \ind{S_{n+j}\geq 0}\geq M.
$
Then, if $\eta\leq \alpha M/m$, $\mbox{AdaDetect}_\alpha$ would reject all hypotheses with nonnegative scores, i.e., 
$$
\{j\in \{1,\dots,m\} \::\: S_{n+j}\geq 0\}\subset \mbox{AdaDetect}_\alpha.
$$
\end{lemma}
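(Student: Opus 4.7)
The plan is to establish a simple chain of inequalities that first bounds the empirical $p$-values of the high-scoring test points, and then shows that the BH threshold is large enough to capture them all.

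First, I would exploit the monotonicity of the counting operation in the empirical $p$-value formula \eqref{emppvalues}. For any test index $j$ with $S_{n+j}\geq 0$ and any null index $i\in\{k+1,\dots,n\}$, the inequality $S_i>S_{n+j}$ forces $S_i>0$, hence $\ind{S_i>S_{n+j}}\leq \ind{S_i\geq 0}$. Summing over $i$ and applying the first hypothesis,
\[
p_j \;=\; \frac{1}{\ell+1}\Bigl(1+\sum_{i=k+1}^{n}\ind{S_i>S_{n+j}}\Bigr)
\;\leq\; \frac{1}{\ell+1}\Bigl(1+\sum_{i=k+1}^{n}\ind{S_i\geq 0}\Bigr)
\;\leq\; \frac{\eta\ell}{\ell+1} \;\leq\; \eta.
\]
So every test index with a nonnegative score has $p$-value at most $\eta$.

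Next, I would use the second hypothesis together with $\eta\leq \alpha M/m$ to lower bound the BH quantity $\hat{k}$ defined in \eqref{equkchapeau}. Since at least $M$ of the $p_j$'s are at most $\eta\leq \alpha M/m$,
\[
\sum_{j=1}^m \ind{p_j\leq \alpha M/m}\;\geq\; M,
\]
which directly yields $\hat{k}\geq M$ by definition. Consequently the BH rejection threshold satisfies $\alpha \hat{k}/m\geq \alpha M/m\geq \eta$.

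Combining the two steps, every test index $j$ with $S_{n+j}\geq 0$ satisfies $p_j\leq \eta\leq \alpha\hat{k}/m$ and is therefore rejected by $\mbox{AdaDetect}_\alpha$. There is no real obstacle here; the only point requiring a bit of care is the strict-versus-nonstrict inequality when passing from $\ind{S_i>S_{n+j}}$ to $\ind{S_i\geq 0}$, which works because $S_{n+j}\geq 0$ together with $S_i>S_{n+j}$ forces $S_i>0$ and hence $S_i\geq 0$.
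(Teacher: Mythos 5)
Your proof is correct and follows essentially the same strategy as the paper's: bound the $p$-values of test points with nonnegative scores by $\eta$, use the count $M$ to lower-bound the BH threshold index $\hat{k}$, and conclude. The only cosmetic difference is that the paper bounds $p_j\le\eta+\ind{S_{n+j}<0}$ uniformly over $j$ and works with $M'=\sum_j\ind{S_{n+j}\ge 0}$, whereas you restrict attention to indices with $S_{n+j}\ge 0$ from the start, which is a minor simplification of the same argument.
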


\begin{proof}
By definition, for each $j\in \{1,\dots,m\}$, the empirical $p$-value $p_j$ \eqref{emppvalues} satisfies
\begin{align*}
p_j&= \frac{1}{\ell+1}\left(1+\sum_{i=k+1}^{n}\ind{S_i>S_{n+j}}\right)\\
&\leq \frac{1}{\ell+1}\left(1+\sum_{i=k+1}^{n}(\ind{S_i\geq 0 } +\ind{ S_{n+j}<0})\right) \\
&\leq \eta + \ind{ S_{n+j}<0},
\end{align*}
by the assumptions. Hence, letting $M'=\sum_{j=1}^m \ind{S_{n+j}\geq 0}\geq M$, we have
$$
\sum_{j=1}^m \ind{p_j\leq \alpha M'/m}\geq \sum_{j=1}^m \ind{p_j\leq \alpha M/m}\geq \sum_{j=1}^m \ind{p_j\leq \eta}\geq   \sum_{j=1}^m \ind{S_{n+j}\geq 0} = M'
$$
Since $\mbox{AdaDetect}_\alpha$ is the BH algorithm applied to the empirical $p$-values, the result follows.
\end{proof}

\begin{lemma}\label{BHsmallerp} 
Write the number of rejections $\wh{k}=\wh{k}(p_i,1\leq i\leq m)$ given by \eqref{equkchapeau} a function of $p$-values. Fix any $i\in \{1,\dots,m\}$ and consider two sets of $p$-values $(p_j,1\leq j\leq m)$ and $(p'_j,1\leq j\leq m)$ which satisfy almost surely that
\begin{align}\label{propppprime}
\forall j\in \{1,\dots,m\}, \left\{\begin{array}{cc} p'_j\leq p_j& \mbox{ if } p_j\leq p_i\\p'_j = p_j& \mbox{ if } p_j> p_i\end{array}\right.  
\end{align}
Let $\wh{k}=\wh{k}(p_i,1\leq i\leq m)$ and $\wh{k}'=1\vee \wh{k}(p'_i,1\leq i\leq m)$. Then
$$
\{p_i\leq \alpha \wh{k}/m \}=\{ p_i\leq \alpha \wh{k}'/m \}\subset \{ \wh{k}=\wh{k}'\}.
$$
\end{lemma}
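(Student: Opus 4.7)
The first observation is that condition \eqref{propppprime} implies $p'_j \leq p_j$ for every $j$ (taking $j = i$ gives $p'_i \leq p_i$, and the two cases cover everything else). Consequently, if we set $N_k(q) = \sum_{j=1}^m \ind{q_j \leq \alpha k / m}$ for any vector $q$, then $N_k(p') \geq N_k(p)$ for every $k$, so $\widehat{k}(p') \geq \widehat{k}(p) = \widehat{k}$. Combined with $\widehat{k}' = 1 \vee \widehat{k}(p') \geq \widehat{k}$, this already gives the easy inclusion $\{p_i \leq \alpha \widehat{k}/m\} \subset \{p_i \leq \alpha \widehat{k}'/m\}$.

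For the reverse inclusion, I would work on $\{p_i \leq \alpha \widehat{k}'/m\}$ and argue that $\widehat{k} \geq \widehat{k}'$ there. Let $k^\star = \widehat{k}'$, and first handle the generic case $\widehat{k}(p') \geq 1$, i.e.\ $k^\star = \widehat{k}(p')$, where by definition $N_{k^\star}(p') \geq k^\star$. The key step is to show $N_{k^\star}(p) \geq N_{k^\star}(p')$: take any $j$ with $p'_j \leq \alpha k^\star/m$ and split on whether $p_j > p_i$ or $p_j \leq p_i$. In the first case $p_j = p'_j \leq \alpha k^\star/m$ by \eqref{propppprime}; in the second case $p_j \leq p_i \leq \alpha k^\star/m$ by the hypothesis defining our event. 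Either way $p_j \leq \alpha k^\star/m$, so $N_{k^\star}(p) \geq N_{k^\star}(p') \geq k^\star$, forcing $\widehat{k} \geq k^\star = \widehat{k}'$, hence $\widehat{k} = \widehat{k}'$ and $p_i \leq \alpha \widehat{k}/m$.

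The one place where care is needed is the degenerate case $\widehat{k}(p') = 0$, in which $\widehat{k}' = 1$ by construction while $k^\star$ does not coincide with $\widehat{k}(p')$. I would dispose of it directly: $\widehat{k}(p') = 0$ means no $p'_j$ lies below $\alpha/m$; but $p_i \leq \alpha \widehat{k}'/m = \alpha/m$ together with $p'_i \leq p_i$ would produce one such $p'_j$, a contradiction. So on this event the hypothesis $p_i \leq \alpha \widehat{k}'/m$ cannot hold, and the inclusion is vacuous.

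Finally, to obtain $\{p_i \leq \alpha \widehat{k}/m\} \subset \{\widehat{k} = \widehat{k}'\}$, I simply note that on $\{p_i \leq \alpha \widehat{k}/m\}$ one has $p_i \leq \alpha \widehat{k}'/m$ (by the easy direction), and then the argument of the previous paragraph gives $\widehat{k} = \widehat{k}'$. I expect the only real subtlety to be the $\widehat{k}(p') = 0$ edge case described above; the rest is a straightforward monotone comparison of BH cutoffs.
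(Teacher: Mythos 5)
Your proof is correct and follows essentially the same route as the paper's: establish $\hat{k}' \geq \hat{k}$ from the pointwise ordering, then show the reverse on $\{p_i \leq \alpha\hat{k}'/m\}$ by observing that on this event the perturbation from $p_j$ to $p'_j$ only affects $p$-values already below the threshold $\alpha\hat{k}'/m$, so the count of $p$-values under that cutoff is unchanged. The paper carries this out via a single chain of rewritten sums (inserting and removing the indicator $\ind{p'_j\leq\alpha\hat{k}'/m}$ term by term), while you argue index-by-index that $p'_j \leq \alpha\hat{k}'/m$ forces $p_j\leq\alpha\hat{k}'/m$; the two are equivalent presentations of the same observation. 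Your explicit handling of the $\hat{k}(p')=0$ edge case (noting the event is then vacuous) is a useful clarification of a step the paper leaves implicit in its appeal to ``the definition of $\hat{k}(p')$ and that $p_i\leq\alpha\hat{k}'/m$.''
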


This lemma is closely related to many previous results on the structure of the BH algorithm; see, e.g., \cite{FZ2006,RV2011,ramdas2019unified}. It states that the rejected $p$-values can be made arbitrarily smaller without changing the number of rejections.

\begin{proof}
First, since $p'_j\leq p_j$ for all $j\in \{1,\dots,m\}$, we clearly have $\wh{k}\leq \wh{k}'$.  Now we prove the  equivalence
\begin{equation}\label{equivtoprove}
p_i\leq \alpha \wh{k}/m \Longleftrightarrow p_i\leq \alpha \wh{k}'/m .
\end{equation}
Clearly, $p_i\leq \alpha \wh{k}/m$ implies $p_i\leq \alpha \wh{k}'/m$.
Now we prove the other direction that $p_i\leq \alpha \wh{k}'/m$ implies $\wh{k}\ge \wh{k}'$. 
Note that
\begin{align*}
\sum_{j=1}^m \ind{p_j\leq \alpha \wh{k}'/m} &= \sum_{j=1}^m \ind{p_j\leq p_i} \ind{p_j\leq \alpha \wh{k}'/m}
+ \sum_{j=1}^m \ind{p_j> p_i} \ind{p_j\leq \alpha \wh{k}'/m}\\
&= \sum_{j=1}^m \ind{p_j\leq p_i} 
+ \sum_{j=1}^m \ind{p_j> p_i} \ind{p'_j\leq \alpha \wh{k}'/m}\\
&= \sum_{j=1}^m \ind{p_j\leq p_i} \ind{p'_j\leq \alpha \wh{k}'/m}
+ \sum_{j=1}^m \ind{p_j> p_i} \ind{p'_j\leq \alpha \wh{k}'/m}\\
&=\sum_{j=1}^m \ind{p'_j\leq \alpha \wh{k}'/m} \geq 1\vee \wh{k}(p'_i,1\leq i\leq m)  = \wh{k}',
\end{align*}
where the second and third lines is due to \eqref{propppprime} and that $p_i\leq \alpha \wh{k}'/m$. The fourth line uses the definition of $\wh{k}(p'_i,1\leq i\leq m)$ and that $p_i\leq \alpha \wh{k}'/m$). By definition of $\wh{k}$, the above inequality implies $\wh{k}\ge \wh{k}'$. Thus, we must have $\wh{k} = \wh{k}'$ and the result follows.
\end{proof}

\section{Auxiliary results for Section~5}
%\ref{sec:genresultpower}

\subsection{Bounding $\zeta_r(\cdot)$}\label{sec:boundzeta}

In this section, we provide explicit bounds on $\zeta_r(\cdot)$ for any given $r\in \{0,\dots,m\}$ and sample size $m$ in the two following cases.

\begin{lemma}\label{lem:compuzetaunif}
Assume $P_0 = {U}([0,1]^d)$  and $P_i={U}([0,1]^{d-1})\otimes Q$ for any $i\in \cH_1$ where $Q$ is a distribution supported on $[0,1]$ with a strictly decreasing and differentiable density function $h(x)$ on $(0,1)$. Assume $h(0)>0$ and $c=\inf_{x\in (0,1)} |h'(x)|>0$. For any $\alpha\in (0,1)$,  $m\geq 1$, and $r\in\{0,\dots,m\}$, the following results hold.
\begin{itemize}
\item[(i)] If $g^*(x)=A+B f_1(x)/f_0(x)$ for some $A\in \R$ and $B>0$, then, for any $\eta<(B/2)\cdot (h(\alpha)-h(1))$,
\begin{align}\label{equzetafirstunif}
\zeta_r(\eta)\leq \frac{m}{\alpha (r\vee 1)}\frac{2}{B c} \eta.
\end{align}
\item[(ii)] If $g^*(x)=\Psi(f_1(x) / f_0(x))$ where $\Psi(y) = 1/(1+(A+B y)^{-1})$ for some $A, B>0$, then, for any $\eta<(1/2)(\Psi\circ h(\alpha)-\Psi \circ h(1))$, 
\begin{align}\label{equzetasecondunif}
\zeta_r(\eta)\leq  \frac{m}{\alpha (r\vee 1)}\frac{2(A+Bh(0)+1)^2}{B c} \eta.
\end{align}
\end{itemize}
\end{lemma}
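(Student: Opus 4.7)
\smallskip

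\noindent\textbf{Proof plan.} The strategy is to obtain an explicit expression for $\ol{G}_0$ (and hence $\ol{G}_0^{-1}$) in each case, and then control $\ol{G}_0(\ol{G}_0^{-1}(u)-2\eta)-u$ by a mean-value-theorem argument on the relevant inverse functions, exploiting the lower bound $c$ on $|h'|$.

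Since $P_0 = U([0,1]^d)$ and $f_i/f_0$ depends only on the last coordinate through $h$, in case (i) the score $g^{*}(X)$ equals $A + B h(X_d)$ with $X_d \sim U(0,1)$. Because $h$ is strictly decreasing, I would first derive
$$\ol{G}_0(s) = h^{-1}\!\bigl((s-A)/B\bigr),\qquad \ol{G}_0^{-1}(u) = A + B h(u),$$
valid for $(s-A)/B \in [h(1),h(0)]$, respectively $u\in[0,1]$. Substituting this in the definition of $\zeta_r(\eta)$ gives
$$\ol{G}_0(\ol{G}_0^{-1}(u)-2\eta) - u \;=\; h^{-1}\!\bigl(h(u)-2\eta/B\bigr) - h^{-1}(h(u)).$$
The hypothesis $\eta < (B/2)(h(\alpha)-h(1))$ ensures, for all $u\le \alpha$, that $h(u)-2\eta/B \in [h(1),h(0)]$, so $h^{-1}$ is well defined there. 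Since $|(h^{-1})'(v)| = 1/|h'(h^{-1}(v))| \le 1/c$, the mean value theorem yields the difference is at most $2\eta/(Bc)$, which after dividing by $u \ge \alpha(r\vee 1)/m$ gives \eqref{equzetafirstunif}.

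Case (ii) proceeds by the same template with one extra composition. Here $g^{*}(X) = \Psi(h(X_d))$, and because $\Psi$ is strictly increasing one obtains
$$\ol{G}_0(s) = h^{-1}\!\bigl(\Psi^{-1}(s)\bigr),\qquad \ol{G}_0^{-1}(u) = \Psi(h(u)).$$
The difference to control becomes
$$h^{-1}\!\bigl(\Psi^{-1}(\Psi(h(u))-2\eta)\bigr) - h^{-1}(h(u)),$$
and the assumption $\eta < (1/2)(\Psi\circ h(\alpha) - \Psi\circ h(1))$ guarantees that the argument of $\Psi^{-1}$ lies in $[\Psi(h(1)),\Psi(h(0))]$, hence $\Psi^{-1}$ maps it into $[h(1),h(0)]\subset[0,h(0)]$. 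A direct computation of $\Psi'(y) = B/(A+By+1)^2$ together with the monotonicity of $\Psi'$ on $[0,h(0)]$ gives $|(\Psi^{-1})'| \le (A+Bh(0)+1)^2/B$ on the relevant range. Chaining this Lipschitz bound with the $1/c$ bound on $(h^{-1})'$ and dividing by $u \ge \alpha(r\vee 1)/m$ yields \eqref{equzetasecondunif}.

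The only real subtlety is keeping track of the admissible domains when composing the inverses: this is precisely what the two smallness conditions on $\eta$ buy us. Once those are verified, everything reduces to a one-variable Lipschitz estimate.
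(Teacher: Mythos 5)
Your proposal is correct and matches the paper's proof in all essentials: both derive $\ol{G}_0(s)=h^{-1}((s-A)/B)$ (resp.\ $(\Psi\circ h)^{-1}(s)$) and $\ol{G}_0^{-1}(u)=A+Bh(u)$ (resp.\ $\Psi\circ h(u)$), reduce $\zeta_r(\eta)$ to a difference of the relevant inverse at two nearby points, verify via the smallness hypothesis on $\eta$ that both points stay in the admissible domain, and apply the bound $|(h^{-1})'|\le 1/c$ (with, in case (ii), the extra factor $\sup|(\Psi^{-1})'|=(A+Bh(0)+1)^2/B$). The only cosmetic difference is that you split the chain-rule derivative $((\Psi\circ h)^{-1})'=(h^{-1})'\cdot(\Psi^{-1})'$ into two Lipschitz bounds while the paper bounds $|((\Psi\circ h)^{-1})'|$ via $\inf|h'|\cdot\inf|\Psi'|$ directly, which amounts to the same computation.
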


\begin{proof}[Proof of Lemma \ref{lem:compuzetaunif}]
  Under the assumptions, $f_1(x) / f_0(x) = h(x_d)$ where $x_d$ is the $d$-th coordinate of $x$. For case (i), $g^*(x)=A+B f_1(x)/f_0(x)=A+B h(x_d)$. Since $h$ is strictly decreasing, for any $s\in [A+Bh(1),A+Bh(0)]$, 
\begin{align*}
\ol{G}_0(s)=\P_{X_d\sim {U}([0, 1])}(A+B h(X_d)\geq s)= \P(X_d\leq h^{-1}((s-A)/B))=h^{-1}((s-A)/B).
\end{align*}
Thus, $\ol{G}_0^{-1}(u)=A+Bh(u)$ for all $u\in [0,1]$. Fix any $\eta<(B/2)(h(\alpha)-h(1))$. Then for any $u\in [0, \alpha]$, 
\[A + Bh(0) \ge \ol{G}_0^{-1}(u)\ge \ol{G}_0^{-1}(u) - 2\eta\ge \ol{G}^{-1}(\alpha) - 2\eta  > A + Bh(\alpha) - B(h(\alpha) - h(1)) = A + Bh(1).\]
Note that both $\ol{G}_0$ and $\ol{G}_0^{-1}$ are decreasing,
\begin{align*}
  \zeta_r(\eta)
  &= \max_{u\in [\alpha (r\vee 1)/m,\alpha]} \left\{\frac{\ol{G}_0(\ol{G}_0^{-1}(u)-2\eta) - u}{u }\right\}\\
  % &= \max_{u\in [\alpha (r\vee 1)/m,\alpha]} \left\{\frac{\ol{G}_0(A + Bh(u) - 2\eta) - u}{u }\right\}\\
  % &= \max_{u\in [\alpha (r\vee 1)/m,\alpha]} \left\{\frac{\ol{G}_0(A + Bh(u) - 2\eta) - u}{u }\right\}\\
&\leq \frac{m}{\alpha (r\vee 1)} \max_{u\in [\alpha (r\vee 1)/m,\alpha]} (h^{-1}(h(u)-2\eta/B) - h^{-1}(h(u)))\\
&\leq \frac{m}{\alpha (r\vee 1)} \frac{2\eta}{B} \max_{u\in [\alpha (r\vee 1)/m,\alpha]} |(h^{-1})'(u)|.
\end{align*}
The result is then proved by noting that $(h^{-1})'(u) = 1 / h'(h^{-1}(u))$ and the assumption that $|h'(h^{-1}(u))| > c$. 

For case (ii), $g^*(x)= 1/(1+(A+B h(x_d))^{-1}) = \Psi\circ h(x_1)$, where $\Psi(u) = (A+Bu)/(A+Bu+1)$ for any $u\in [h(1), h(0)]$. Note that $\Psi'(u)=B/(A+Bu+1)^2$. Then for any $s$ in $[1/(1+(A+B h(1))^{-1}),1/(1+(A+B h(0))^{-1})]$, the image of $\Psi$,
\begin{align*}
\ol{G}_0(s)&=\P_{X_d\sim {U}([0, 1])}(\Psi(h(X_d))\geq s)= (\Psi\circ h)^{-1}(s).
\end{align*}
Hence, for all $u\in [0,1]$, 
$
\ol{G}_0^{-1}(u)= \Psi\circ h (u).
$
Fix any
$\eta< (1/2)(\Psi\circ h(\alpha) - \Psi\circ h(1))$. Then for any $u\in [\alpha (r\vee 1)/m,\alpha]$, 
\[\Psi \circ h(0) \ge \ol{G}_0^{-1}(u)\ge \ol{G}_0^{-1}(u)-2\eta \ge \ol{G}_0^{-1}(\alpha)-2\eta > \Psi\circ h(1).\]
Note that both $\ol{G}_0$ and $\ol{G}_0^{-1}$ are decreasing,
\begin{align*}
\zeta_r(\eta) &= \max_{u\in [\alpha (r\vee 1)/m,\alpha]} \left\{\frac{\ol{G}_0(\ol{G}_0^{-1}(u)-2\eta) - u}{u }\right\}\\
              &\leq \frac{m}{\alpha (r\vee 1)} \max_{u\in [\alpha (r\vee 1)/m,\alpha]} \left((\Psi\circ h)^{-1}(\Psi\circ h(u)-2\eta) - u\right)\\
  &\leq \frac{m}{\alpha (r\vee 1)} \max_{u\in [\alpha (r\vee 1)/m,\alpha]} \left((\Psi\circ h)^{-1}(\Psi\circ h(u)-2\eta) - (\Psi\circ h)^{-1}(\Psi\circ h(u))\right)\\
&\leq \frac{m}{\alpha (r\vee 1)}  (2\eta) \max_{u\in [\alpha (r\vee 1)/m,\alpha]} |((\Psi\circ h)^{-1})'(u)|\\
&\leq \frac{m}{\alpha (r\vee 1)} (2\eta) \frac{1}{\inf_{x\in(0,1)}|h'(x)|\times \inf_{y\in [h(1),h(0)]}|\Psi'(y)|}.
\end{align*}
The result is proved by noting that $\inf_{y\in [h(1),h(0)]}|\Psi'(y)|= B/(A+Bh(0)+1)^2$.
\end{proof}

\begin{lemma}\label{lem:compuzeta}
Assume $P_0=\mathcal{N}(\mu_0,I_d)$ and $P_i=\mathcal{N}(\mu,I_d)$ for any $i\in \cH_1$, where $\mu_0\neq \mu_1$ are the null and alternative mean vectors, respectively. For any $\alpha\in (0,1)$, $m\geq 1$, and $r\in \{0,\dots,m\}$, the following results hold.
\begin{itemize}
\item[(i)] If $g^*(x)=A+B f_1(x)/f_0(x)$ for some $A\in \R$ and $B>0$, then for any $\alpha \in (0, \ol{\Phi}(1))$ and $\eta\in [0, 1]$ with $4\eta\leq (eb^2B')\wedge B'$,
\begin{align}\label{equzetafirst}
\zeta_r(\eta)\leq C \eta,
\end{align}
where $C=C(B,\mu,\mu_0)= \frac{8}{b^2B'}$, $B'= B e^{- b^2/2}$ and $b=\|\mu-\mu_0\|$.
\item[(ii)] If $g^*(x)=1/(A+B f_0(x)/f_1(x))$, $A>0$, $B>0$, and $\alpha \in (0, \ol{\Phi}(1))$, $\eta\in [0, 1]$ with $4\eta(A+B'' e^{ -b \ol{\Phi}^{-1}(\alpha)})\leq 1$ and $ \eta C e^{ (b+1) \sqrt{2\log (m/(\alpha(r\vee 1)))}}/(2e)\leq 1$, 
\begin{align}\label{equzetasecond}
\zeta_r(\eta)\leq  C \eta e^{ (b+1) \sqrt{2\log (m/(\alpha(r\vee 1)))} },
\end{align}
for $C=C(A,B,\mu,\mu_0)=8e  (A+B'')^2/(bB'')$ with $B''= B e^{ b^2/2}$ and $b=\|\mu-\mu_0\|$. 
 \item[(iii)] For $g^*(x)=1/(1+(A+B f_1(x)/f_0(x))^{-1})$, $A>0$, $B>0$, and $\alpha \in (0, \ol{\Phi}(1))$, $\eta\in [0, 1]$ with 
 $16\eta\left((A+B'e^{ b \sqrt{2\log (m/(\alpha(r\vee 1)))} })^2\vee 1\right)/B'\leq 1\wedge  b$, then
\begin{align}\label{equzetathird}
\zeta_r(\eta)\leq  C \eta \left((A+B'e^{ b \sqrt{2\log (m/(\alpha(r\vee 1)))} })^2\vee 1\right),
\end{align}
for $C=C(B,\mu,\mu_0)=64 e /(bB')$, $B'= B e^{- b^2/2}$ and $b=\|\mu-\mu_0\|$.
\end{itemize}
\end{lemma}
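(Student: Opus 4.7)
The strategy is uniform across all three cases: reduce the $d$-dimensional computation to a one-dimensional Gaussian problem using a sufficient statistic for the Gaussian likelihood ratio, then bound the resulting relative increment via the mean value theorem combined with Mill's ratio.

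For the reduction, under $X\sim P_0$ the log-likelihood ratio $(\mu-\mu_0)^\top X - (\|\mu\|^2 - \|\mu_0\|^2)/2$ is distributed as $bW - b^2/2$ with $W\sim\mathcal{N}(0,1)$ and $b = \|\mu-\mu_0\|$. Hence $f_1(X)/f_0(X)$ and $f_0(X)/f_1(X)$ are distributed as $e^{bW - b^2/2}$ and $e^{-bW + b^2/2}$ respectively, so in each of the three cases $g^*(X)$ is an increasing map $h$ of $W$: explicitly $h(w) = A + B'e^{bw}$ in (i), $h(w) = 1/(A + B''e^{-bw})$ in (ii), and $h(w) = (A + B'e^{bw})/(1 + A + B'e^{bw})$ in (iii). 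This yields $\ol{G}_0(s) = \ol{\Phi}(h^{-1}(s))$, and setting $w = \ol{\Phi}^{-1}(u)$ and $w' = h^{-1}(h(w) - 2\eta) < w$, the target quantity becomes $(\ol{\Phi}(w') - \ol{\Phi}(w))/\ol{\Phi}(w)$.

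The hypothesis $\alpha < \ol{\Phi}(1)$ guarantees $w \geq 1$, so Mill's ratio gives $\phi(w)/\ol{\Phi}(w) \leq (1 + w^2)/w \leq 2w$. Combining the mean value theorem with the monotonicity of $\phi$ on $[0,\infty)$ and the identity $\phi(w')/\phi(w) = e^{(w^2 - w'^2)/2} \leq e^{(w-w')w}$, one obtains
\[\frac{\ol{\Phi}(w') - \ol{\Phi}(w)}{\ol{\Phi}(w)} \leq 2w(w-w')\,e^{(w-w')w}.\]
It then suffices to bound $w - w'$ in terms of $\eta$ and $w$ in each case, and to verify that the stated smallness conditions on $\eta$ force $(w-w')w \leq 1$, so that $e^{(w-w')w} \leq e$. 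In case (i), inverting the exponential yields $w - w' = b^{-1}\log(1/(1 - 2\eta/(B'e^{bw})))$; the first smallness condition keeps the logarithm's argument below $2$, so $w - w' \leq 4\eta/(bB'e^{bw})$, and since $\max_{w\geq 0} we^{-bw} = 1/(eb)$, one gets $w(w-w') \leq 4\eta/(eb^2 B')$. The second smallness condition $4\eta \leq eb^2B'$ then closes the estimate and delivers $C = 8/(b^2B')$. In cases (ii) and (iii), direct differentiation gives $1/h'(w) = (A + B''e^{-bw})^2/(bB''e^{-bw})$ and $(1 + A + B'e^{bw})^2/(bB'e^{bw})$ respectively, and the stated smallness hypotheses ensure $w - w' \leq 4\eta/h'(w)$. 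The new feature is that $1/h'(w)$ now grows with $w$, so the maximum over $u \in [\alpha(r\vee 1)/m, \alpha]$ is attained at the small endpoint, i.e., $w = \ol{\Phi}^{-1}(\alpha(r\vee 1)/m)$. The standard Gaussian tail bound $\ol{\Phi}^{-1}(v) \leq \sqrt{2\log(1/v)}$ then gives $w \leq \sqrt{2\log(m/(\alpha(r\vee 1)))}$, which tames the $w$-dependent exponential factor in $1/h'(w)$ and, after routine manipulations (bounding $we^{bw}$ by $e^{(b+1)w}$ and $(1 + A + B'e^{bw})^2$ by $4((A+B'e^{bw})^2 \vee 1)$), produces the stated expressions.

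The main obstacle is the tight bookkeeping of the smallness conditions on $\eta$: they must simultaneously keep the $1 - 2\eta/(\text{denominator})$-type quantities bounded away from zero so that the logarithmic bound on $w-w'$ applies, ensure $(w-w')w \leq 1$ so that $e^{(w-w')w}$ is absorbed into a universal constant, and in cases (ii) and (iii) remain compatible with the $w$-dependent exponential growth after the Gaussian tail bound is invoked. Verifying these for each explicit form of $h$ and chasing constants through Mill's ratio is the bulk of the elementary work.
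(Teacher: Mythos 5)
Your proposal follows essentially the same approach as the paper: reduce to the one-dimensional sufficient statistic $(X-\mu_0)^\top(\mu-\mu_0)/\|\mu-\mu_0\|\sim\mathcal{N}(0,1)$, write $g^*$ as a strictly increasing map $h$ of that statistic so that $\ol{G}_0=\ol{\Phi}\circ h^{-1}$, control the relative Gaussian tail increment via a Mill's-ratio-type bound (your MVT derivation reproduces the paper's Lemma~\ref{lem:gaussian}, $\ol{\Phi}(\ol{\Phi}^{-1}(u)-y)-u\le 2uy\ol{\Phi}^{-1}(u)e^{y\ol{\Phi}^{-1}(u)}$), bound $w-w'$ by the explicit inverse of $h$ (or, equivalently, by $\approx\eta/h'(w)$, which is what the paper's algebraic inversions yield after the $\log(1\pm x)$ and $1/(1-x)$ linearizations), invoke $\ol{\Phi}^{-1}(v)\le\sqrt{2\log(1/v)}$, and use the smallness conditions on $\eta$ to keep $w(w-w')\le 1$ so the exponential is absorbed into a factor of $e$. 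The constants and the structure of the final bounds match the paper's; as you note yourself, the remaining work is constant-chasing and verifying that the stated hypotheses are strong enough at each step, which the paper carries out by explicit inversion of $\Psi$ in all three cases rather than by your mean-value-theorem phrasing in (ii)-(iii), but these are two phrasings of the same computation.
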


\begin{proof}
Let us first consider the case of $g^*(x)=A+B f_1(x)/f_0(x)$. We thus have
\begin{align*}
g^*(x)&=A+ B \exp\left\{(x-\mu_0)^T(\mu-\mu_0) -(1/2) \|\mu-\mu_0\|^2  \right\}\\
&= \Psi((x-\mu_0)^T(\mu-\mu_0)/\|\mu-\mu_0\|) 
\end{align*}
where we let $\Psi(t)=A+B' \exp( b t ),$ $t\in \R$, for $B'= B e^{-(1/2) \|\mu-\mu_0\|^2}$ and $b=\|\mu-\mu_0\|$. Hence $\Psi^{-1}(v)= b^{-1}\log\left((v- A)/ B'\right)$ for $v>A$.
In that case, we have 
\begin{align*}
\Psi^{-1}(\Psi(t)-2\eta)&=\Psi^{-1}(A+B' e^{b t} -2\eta)=b^{-1}\log\left(e^{ b t }-2\eta/ B' \right) \\
&= t  + b^{-1}\log\left(1-2\eta e^{ -b t }/B' \right) 
\end{align*}
 Since $\log(1-x)\geq -2x$ for all $x\in [0,1/2]$, we have $\log\left(1-2\eta e^{ -b t }/B' \right)\geq -4\eta e^{ -b t }/B'$ because $4\eta e^{ -b t }/B'\leq 1$.
This entails that for $4\eta e^{ -b \ol{\Phi}^{-1}(\alpha) }/B'\leq 1$, for $u\in [\alpha (r\vee 1)/m,\alpha]$, (by taking $t=\ol{\Phi}^{-1}(u)$ in the  above relations)
\begin{align*}
\ol{\Phi}\circ\Psi^{-1}(\Psi\circ\ol{\Phi}^{-1}(u)-2\eta) - u &\leq \ol{\Phi}\left(\ol{\Phi}^{-1}(u)    -4  \eta e^{ -b \ol{\Phi}^{-1}(u) }/(bB')\right) - u.
\end{align*}
Now, we can use \eqref{eq:gaussian} in Lemma~\ref{lem:gaussian} with $y=4  \eta e^{ -b \ol{\Phi}^{-1}(u) }/(bB')$ (checking that $u \leq \alpha \leq \ol{\Phi}(1)$) to obtain that
\begin{align*}
\ol{\Phi}\circ\Psi^{-1}(\Psi\circ\ol{\Phi}^{-1}(u)-2\eta)  &\leq
8  u  \eta e^{ -b \ol{\Phi}^{-1}(u) } (bB')^{-1} \ol{\Phi}^{-1}(u)  \exp(4  \eta e^{ -b \ol{\Phi}^{-1}(u) } (bB')^{-1} \ol{\Phi}^{-1}(u) )
 \\
 &\leq \frac{8u}{eb^2B'} \eta \exp(   4 \eta/(eb^2B' ) ),
\end{align*}
because $\forall x\geq 1$, we have $x  e^{-xb} \leq 1/(eb)$. This gives \eqref{equzetafirst}.

Let us now turn to prove \eqref{equzetasecond} by considering $g^*(x)=1/(A+B f_0(x)/f_1(x))$, $A>0$, $B>0$. Similarly to above, we have 
\begin{align*}
g^*(x)&= \Psi((x-\mu_0)^T(\mu-\mu_0)/\|\mu-\mu_0\|) 
\end{align*}
where we let $\Psi(t)=1/(A+B'' \exp( -b t )),$ $t\in \R$, for $B''= B e^{(1/2) \|\mu-\mu_0\|^2}$ and $b=\|\mu-\mu_0\|$. Hence $\Psi^{-1}(v)= - b^{-1}\log\left((1/v- A)/ B''\right)$ for $v<1/A$.
In that case, we have 
\begin{align*}
\Psi^{-1}(\Psi(t)-2\eta)&=\Psi^{-1}(1/(A+B'' e^{ -b t} ))-2\eta)\\
&= - b^{-1}\log\left(\left(\frac{ A+B'' e^{ -b t}}{1-2\eta(A+B'' e^{ -b t})} -A \right)/B''\right). 
\end{align*}
Now using that $1/(1-x)\leq 1+2x$ for all $x\in [0,1/2]$, we have that for $4\eta(A+B'' e^{ -b t})\leq 1$,
\begin{align*}
\left(\frac{ A+B'' e^{ -b t}}{1-2\eta(A+B'' e^{ -b t})} -A \right)/B''
&\leq  e^{ -b t} + 4\eta (A+B'' e^{ -b t})^2/B''
\end{align*}
This entails
\begin{align*}
\Psi^{-1}(\Psi(t)-2\eta)
&\geq  - b^{-1}\log\left(e^{ -b t} + 4\eta (A+B'' e^{ -b t})^2/B''\right)\\
&= t  - b^{-1}\log\left(1+ 4\eta (A+B'' e^{ -b t})^2e^{ b t}/B''\right) \\
&\geq  t  -  4\eta (A+B'' e^{ -b t})^2e^{ b t}/(bB'')\\
&\geq  t  -  4\eta (A+B'')^2 e^{ b t}/(bB''),
\end{align*}
because $\log(1+x)\leq x$ for all $x\geq 0$. 
Hence for $4\eta(A+B'' e^{ -b \ol{\Phi}^{-1}(\alpha)})\leq 1$, for $u\in [\alpha(r\vee 1)/m,\alpha]$, 
\begin{align*}
\ol{\Phi}\circ\Psi^{-1}(\Psi\circ\ol{\Phi}^{-1}(u)-2\eta) - u &\leq \ol{\Phi}\left(\ol{\Phi}^{-1}(u)   -  4\eta (A+B'')^2e^{ b \ol{\Phi}^{-1}(u)}/(bB'')\right) - u.
\end{align*}
Now, we can use \eqref{eq:gaussian} in Lemma~\ref{lem:gaussian} with $y=4\eta (A+B'')^2e^{ b \ol{\Phi}^{-1}(u)}/(bB'')$ to obtain
\begin{align*}
\ol{\Phi}\circ\Psi^{-1}(\Psi\circ\ol{\Phi}^{-1}(u)-2\eta)  
 &\leq 2 u h\left( 4  (A+B'')^2   \eta e^{ (b+1) \ol{\Phi}^{-1}(u) } (bB'')^{-1} \right)\\
 &\leq 2 u h\left( 4  (A+B'')^2   \eta e^{ (b+1)  \sqrt{2\log (m/(\alpha(r\vee 1)))}} (bB'')^{-1} \right),
\end{align*}
for $h(x)=xe^x$ and because $\ol{\Phi}^{-1}(u)\leq \ol{\Phi}^{-1}(\alpha(r\vee 1)/m)\leq  \sqrt{2\log (m/(\alpha(r\vee 1)))}$ (and using that $\forall x\geq 0$, we have $x  \leq e^x$). This gives \eqref{equzetasecond} because $x e^x \leq e x$ when $x\leq 1$. 

Let us now turn to prove \eqref{equzetathird} by considering $g^*(x)=1/(1+(A+B f_1(x)/f_0(x))^{-1})$, $A>0$, $B>0$. Similarly to above, we have 
\begin{align*}
g^*(x)&= \Psi((x-\mu_0)^T(\mu-\mu_0)/\|\mu-\mu_0\|) 
\end{align*}
where $\Psi(t)=1/(1+(A+B' e^{bt})^{-1}),$ $t\in \R$, for $B'= B e^{-(1/2) \|\mu-\mu_0\|^2}$ and $b=\|\mu-\mu_0\|$. Hence $\Psi^{-1}(v)= b^{-1}\log\left(((1/v- 1)^{-1}-A)/ B'\right)$ for $v\in (0,1)$.
In that case, we have for $ 4 \eta \Psi(t)^{-1} \leq 1$, 
\begin{align*}
\frac{1}{\Psi(t)-2\eta}- 1&=\Psi(t)^{-1}\frac{1}{1-2\eta \Psi(t)^{-1}}- 1\\
&\leq \Psi(t)^{-1}(1+4\eta \Psi(t)^{-1})- 1\\
&=(1+(A+B' e^{bt})^{-1})(1+4\eta \Psi(t)^{-1})- 1\\
&=4\eta \Psi(t)^{-1} +(A+B' e^{bt})^{-1}(1+4\eta \Psi(t)^{-1}),
\end{align*}
by using $1/(1-x)\leq 1+2x$, $x\in [0, 1/2 ]$.
Hence, 
\begin{align*}
\left(\frac{1}{\Psi(t)-2\eta}- 1\right)^{-1}
&\geq \frac{A+B' e^{bt}}{1+4\eta \Psi(t)^{-1}} \frac{1}{1+ 4\eta \Psi(t)^{-1}\frac{A+B' e^{bt}}{1+4\eta \Psi(t)^{-1}}}\\
&\geq \frac{A+B' e^{bt}}{1+4\eta \Psi(t)^{-1}} \frac{1}{1+ 4\eta \Psi(t)^{-1}(A+B' e^{bt})}\\
&\geq (A+B' e^{bt}) (1- 4 \eta \Psi(t)^{-1}) \left(1- 4 \eta \Psi(t)^{-1}(A+B' e^{bt})\right)\\
&\geq (A+B' e^{bt}) \left(1- 8\eta \Psi(t)^{-1}\left((A+B' e^{bt})\vee 1\right)\right)\\
&\geq (A+B' e^{bt}) \left(1- 16 \eta \frac{(A+B'e^{ b t })^2\vee 1}{A+B'e^{bt}}\right),
\end{align*}
by using $1/(1+x)\geq 1-x$ and $(1-x)^2\geq 1-2x$, $x\in [0,1]$,  $\Psi(t)^{-1}\left((A+B' e^{bt})\vee 1\right)\leq 2 \frac{(A+B'e^{ b t })^2\vee 1}{A+B'e^{bt}}$, and provided that $16 \eta \frac{(A+B'e^{ b t })^2\vee 1}{A+B'e^{bt}}\leq 1$. This entails
\begin{align*}
\left(\left(\frac{1}{\Psi(t)-2\eta}- 1\right)^{-1}-A\right)/B'
&\geq e^{bt} - 16\eta\left((A+B'e^{ b t })^2\vee 1\right)/B'.
\end{align*}
Thus, we have
\begin{align*}
\Psi^{-1}(\Psi(t)-2\eta)
&\geq  b^{-1}\log\left(e^{bt} - 16\eta\left((A+B'e^{ b t })^2\vee 1\right)/B'\right)\\
&= t  + b^{-1}\log\left(1 - 16\eta\left((A+B'e^{ b t })^2\vee 1\right)e^{-bt}/B' \right)\\
&\geq t  + b^{-1}\log\left(1 - 16\eta\left((A+B'e^{ b t })^2\vee 1\right)/B' \right)\\
&\geq  t -32\eta\left((A+B'e^{ b t })^2\vee 1\right)/(bB'),
\end{align*}
because $\log(1-x)\geq -2x$ for all $x\in [0,1/2]$ and provided that $16\eta\left((A+B'e^{ b t })^2\vee 1\right)/B'\leq 1/2$. 
(Also note that the latter condition implies both previous conditions 
$4\eta \Psi(t)^{-1}=4\eta (1+(A+B' e^{bt})^{-1})\leq 1$ and $16 \eta \frac{(A+B'e^{ b t })^2\vee 1}{A+B'e^{bt}}\leq 1$).

Hence for $16\eta\left((A+B'e^{ b \ol{\Phi}^{-1}(\alpha/m) })^2\vee 1\right)/B'\leq 1/2$ and $u\in [\alpha(r\vee 1)/m,\alpha]$, 
\begin{align*}
\ol{\Phi}\circ\Psi^{-1}(\Psi\circ\ol{\Phi}^{-1}(u)-2\eta) - u &\leq \ol{\Phi}\left(\ol{\Phi}^{-1}(u)  -32\eta\left((A+B'e^{ b \ol{\Phi}^{-1}(u) })^2\vee 1\right)/(bB')\right) - u.
\end{align*}
Now, we can use \eqref{eq:gaussian} in Lemma~\ref{lem:gaussian} with $y=32\eta\left((A+B'e^{ b \ol{\Phi}^{-1}(u) })^2\vee 1\right)/(bB')$ to obtain 
\begin{align*}
\ol{\Phi}\circ\Psi^{-1}(\Psi\circ\ol{\Phi}^{-1}(u)-2\eta)  
 &\leq 2 u h\left( 32\eta\left((A+B'e^{ b \ol{\Phi}^{-1}(u) })^2\vee 1\right)/(bB')\right)\\
 &\leq 2 u h\left( 32\eta\left((A+B'e^{ b \sqrt{2\log (m/(\alpha(r\vee 1)))} })^2\vee 1\right)/(bB') \right),
\end{align*}
for $h(x)=xe^x$ and because $\ol{\Phi}^{-1}(u)\leq \ol{\Phi}^{-1}(\alpha(r\vee 1)/m)\leq  \sqrt{2\log (m/(\alpha(r\vee 1)))}$ (and using that $\forall x\geq 0$, we have $x  \leq e^x$). This gives \eqref{equzetathird}.
\end{proof}

\begin{lemma}\label{lem:gaussian}
For all $y\geq 0$ and $u\in (0,\ol{\Phi}(1)]$, we have
\begin{align}
\ol{\Phi}(\ol{\Phi}^{-1}(u)-y) - u \leq 2  u  y \ol{\Phi}^{-1}(u)  \exp(y \ol{\Phi}^{-1}(u) ).\label{eq:gaussian}
\end{align}
\end{lemma}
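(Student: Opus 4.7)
Setting $t = \ol{\Phi}^{-1}(u)$, the assumption $u \in (0,\ol{\Phi}(1)]$ and monotonicity of $\ol{\Phi}$ give $t \geq 1$. The plan is to write the left-hand side as an integral of the standard normal density $\varphi$ and control it by two ingredients: (i) a pointwise bound on $\varphi(s)$ for $s \in [t-y, t]$ in terms of $\varphi(t)$, and (ii) a Mills-ratio bound relating $\varphi(t)$ to $\ol{\Phi}(t) = u$.

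For (i), I would parametrize $s = t - r$ with $r \in [0, y]$ and note the identity
\begin{equation*}
\frac{t^2 - s^2}{2} \;=\; tr - \frac{r^2}{2} \;\leq\; tr \;\leq\; ty,
\end{equation*}
which yields the uniform bound $\varphi(s) \leq \varphi(t)\, e^{ty}$ on the whole interval $[t-y, t]$. This bound is valid even when $y > 2t$ (so that $t - y$ is negative), because in that range one has $tr - r^2/2 < 0 \leq ty$, so the inequality is never tight but always holds.

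For (ii), I would invoke the classical Mills-ratio inequality $\ol{\Phi}(t) \geq \frac{t}{t^2+1}\varphi(t)$, which rearranges to $\varphi(t) \leq (t + 1/t)\,\ol{\Phi}(t)$. Since $t \geq 1$ implies $t + 1/t \leq 2t$, we obtain $\varphi(t) \leq 2t\,\ol{\Phi}(t) = 2tu$.

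Combining the two ingredients:
\begin{equation*}
\ol{\Phi}(t - y) - \ol{\Phi}(t) \;=\; \int_{t-y}^{t} \varphi(s)\,ds \;\leq\; y\, \varphi(t)\, e^{ty} \;\leq\; 2uyt\,e^{yt},
\end{equation*}
which is exactly \eqref{eq:gaussian} after substituting $t = \ol{\Phi}^{-1}(u)$. There is no real obstacle here; the only point that requires a line of verification is that the Gaussian density bound from step (i) extends to $s < 0$ (the $-r^2/2$ term only helps), and that the hypothesis $u \leq \ol{\Phi}(1)$ is used precisely to guarantee $t \geq 1$ so that the Mills-ratio inequality yields the clean constant $2t$.
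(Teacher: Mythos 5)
Your proof is correct and takes essentially the same route as the paper: write the left side as $\int_{t-y}^{t}\varphi(s)\,ds$ with $t=\ol{\Phi}^{-1}(u)$, bound the integrand, and finish with the Mills-ratio inequality $\varphi(t)\le 2t\,\ol{\Phi}(t)$ for $t\ge 1$. One small point in your favor: the paper bounds the integrand by $\varphi(t-y)\vee\varphi(t)$, the larger of the two endpoint values, which is the maximum of $\varphi$ on $[t-y,t]$ only when $t-y\ge 0$; your uniform bound $\varphi(s)\le\varphi(t)e^{ty}$ for all $s\in[t-y,t]$ is established directly via $tr-r^2/2\le ty$ and remains valid when $t-y<0$, so your argument covers all $y\ge 0$ without the implicit case restriction.
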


\begin{proof}
By using the classical relations on upper-tail distribution of standard Gaussian, we have 
\begin{align*}
\ol{\Phi}(\ol{\Phi}^{-1}(u)-y) - u &\leq y\left( \phi(\ol{\Phi}^{-1}(u)-y ) \vee \phi(\ol{\Phi}^{-1}(u) ) \right)\\
& = y \phi(\ol{\Phi}^{-1}(u)) \left(1\vee \exp(y \ol{\Phi}^{-1}(u) - y^2/2)\right)\\
&\leq  2  u  y \ol{\Phi}^{-1}(u) \exp(y \ol{\Phi}^{-1}(u) ),
\end{align*}
since $\ol{\Phi}^{-1}(u)\geq 1$, because $\phi(x)\leq 2x \ol{\Phi}(x)$ for all $x\geq 1$. 
\end{proof}

\subsection{Case of density estimation}\label{app:densityestim}

\subsubsection{Consistency}\label{sec:consistency}

Consider $g^*$ given by \eqref{equ:gstardensity}, that is $g^*(x)=f_\gamma(x)/f_0(x)$. Assuming for simplicity that $f_0$ is known (hence $k=0$ and $n=\ell$ here), we propose the estimator $g(x)=\hat{f}_\gamma(x)/f_0(x)$, where $\hat{f}_\gamma$ is the  histogram estimator of $f_\gamma$ given by
\begin{equation}\label{equ:histogram}
\hat{f}_\gamma(x)= M^d\sum_{j=1}^{M^d} (n+m)^{-1} \sum_{i=1}^{n+m} \ind{Z_i\in \mathcal{D}_j}  \ind{x\in \mathcal{D}_j},\:\: x\in [0,1]^d,
\end{equation}
where $\{\mathcal{D}_1,\dots,\mathcal{D}_{M^d}\}$ is a regular partition of $[0,1]^d$ formed by $M^d$ $d$-dimensional cubes of side size $1/M$ and Lebesgue measure $|\mathcal{D}_1|=1/M^d$, with $M=\lceil (n+m)^{1/(2+d)}\rceil$.

Remember that the corresponding AdaDetect procedure controls the FDR even if the estimation quality of $\hat{f}_\gamma$ is poor. In addition, we show in this section that, in a suitable case where the estimation quality is good enough, the power of AdaDetect consistently converges to that of the oracle, that is, \eqref{equconsistency} holds. We use for this Corollary~\ref{cor:BHestimated}.

Assume $m\asymp \l = n \gg m/m_1$, let Assumptions~\ref{as:indep} and~\ref{equ-marg}  be true and consider the uniformly bounded case described in Section~\ref{sec:boundzeta}. 
Then we have both \eqref{condetachap} with $\kappa\in (0,1/(2+d))$ (Lemma~\ref{lem:etachapdensityestimation}, see next section) and $\zeta_{\lceil m_1\epsilon\rceil }(\eta)\lesssim \epsilon^{-1} \frac{m}{\alpha m_1} \eta / \gamma \asymp  \epsilon^{-1}\eta / (m_1/m)^2$ for $\eta$ small enough (observe that $\gamma\sim m_1/m$). Hence, in the not too sparse scenario 
$$
m_1/m \gg m^{-\kappa/2}
$$
(including the dense case $m_1\asymp m$), 
choosing $\eta\asymp m^{-\kappa}$, $\delta=\epsilon = 1/\log( (m_1/m)^2/ m^{-\kappa})$, 
we have 
$\ell\delta^2 \epsilon m_1/m = m^{1-\kappa/2}  \frac{(m_1/m)/ m^{-\kappa/2}}{\log^3( (m_1/m)^2/ m^{-\kappa})} \gg 1$, which 
ensures \eqref{equ:condparaconsist} and thus proves the consistency \eqref{equconsistency}.

\subsubsection{Bounding $\hat{\eta}$ in case of density estimation}\label{sec:boundetachap}

\begin{lemma}
\label{lem:etachapdensityestimation}
Let Assumptions~\ref{as:indep} and~\ref{equ-marg} be true and consider the case where $P_0 = {U}([0,1]^d)$  and $P_i$, $i\in \cH_1$ have a common distribution, supported on $[0,1]^d$, with a density $f_1$ which is $L$-Lipschitz and uniformly upper-bounded by $C^d$. Then the estimator $g(x)=\hat{f}_\gamma(x)/f_0(x)$ of $g^*$ given above satisfies that for $n+m\geq N(C,d)$, 
\begin{equation}\label{equetachapdensesti}
\P\left(\hat{\eta}\geq c_0   (n+m)^{-1/(2+d)} \sqrt{\log (n+m)}\right)\leq 2/(n+m),
\end{equation}
where $\hat{\eta}$ is given by \eqref{diffscore} and  $c_0(d,C,L)=2L+8(2C)^{d/2}$.
\end{lemma}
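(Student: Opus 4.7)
The plan is to exploit the fact that $P_0$ is the uniform distribution on $[0,1]^d$, so $f_0\equiv 1$, which reduces the problem to a uniform-over-cells bound on a standard histogram estimator. Concretely, $g^*(x)=f_\gamma(x)/f_0(x)=f_\gamma(x)$ and $g(x)=\hat{f}_\gamma(x)$, so
\[
\hat{\eta}=\max_{1\le i\le n+m}|\hat{f}_\gamma(Z_i)-f_\gamma(Z_i)|\le \max_{1\le j\le M^d}\sup_{x\in\mathcal{D}_j}|\hat{f}_\gamma(x)-f_\gamma(x)|,
\]
since $\hat{f}_\gamma$ is piecewise constant on the regular grid $\{\mathcal{D}_j\}$. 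This reduces the problem to a deterministic supremum over $M^d$ cells.

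Next, I would perform a bias–variance decomposition on each cell. Writing $N_j=\sum_{i=1}^{n+m}\ind{Z_i\in\mathcal{D}_j}$, we have $\hat{f}_\gamma(x)=M^d N_j/(n+m)$ for $x\in\mathcal{D}_j$. Using Assumption~\ref{as:indep} together with the identity $(n+m)^{-1}\sum_i\P(Z_i\in\mathcal{D}_j)=\int_{\mathcal{D}_j}f_\gamma$, the mean is $\E[\hat{f}_\gamma(x)]=M^d\int_{\mathcal{D}_j}f_\gamma$. Because $f_0\equiv 1$ is $0$-Lipschitz and $\bar{f}_1$ is $L$-Lipschitz, $f_\gamma$ is $L$-Lipschitz, yielding the bias bound
\[
|\E[\hat{f}_\gamma(x)]-f_\gamma(x)|\le L\sqrt{d}/M\le L\sqrt{d}\,(n+m)^{-1/(2+d)}.
\]

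For the stochastic part I would apply Bernstein's inequality to $N_j$, which is a sum of independent Bernoullis with variance bounded by $(n+m)\int_{\mathcal{D}_j}f_\gamma\le (n+m)C^d/M^d$. Setting the deviation threshold $t_j$ so that the tail probability is at most $2/(M^d(n+m))$, a union bound over the $M^d$ cells gives a joint failure probability at most $2/(n+m)$. After simplification, the dominant term is
\[
\frac{M^d t_j}{n+m}\lesssim \sqrt{\frac{M^d C^d\log(M^d(n+m))}{n+m}}=O\bigl((2C)^{d/2}(n+m)^{-1/(2+d)}\sqrt{\log(n+m)}\bigr),
\]
using $M^d\le 2^d(n+m)^{d/(2+d)}$ and $\log(M^d(n+m))\le 3\log(n+m)$ for $n+m$ past a threshold $N(C,d)$; the subdominant $O((2/3)t_j/(n+m))$ Bernstein correction is absorbed in the same regime.

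Combining bias and deviation on the high-probability event, and noting $\sqrt{d}\le\sqrt{\log(n+m)}$ whenever $n+m\ge e^d$, we obtain
\[
\sup_{x\in[0,1]^d}|\hat{f}_\gamma(x)-f_\gamma(x)|\le \bigl(2L+8(2C)^{d/2}\bigr)(n+m)^{-1/(2+d)}\sqrt{\log(n+m)}
\]
with probability at least $1-2/(n+m)$, which gives \eqref{equetachapdensesti}. The main obstacle is bookkeeping the dimension-dependent constants cleanly—in particular tracking the $2^d$ factor from $M^d$, the $C^d$ factor from the uniform density upper bound, and the logarithmic factor $\log(M^d(n+m))$—so that they consolidate into the simple bound $8(2C)^{d/2}$; this is precisely what forces the threshold $N(C,d)$ under which the subdominant Bernstein term and the bias factor $\sqrt{d}$ are swept into $\sqrt{\log(n+m)}$.
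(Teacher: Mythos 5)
Your proof is correct, but it takes a genuinely different (and arguably cleaner) route than the paper's. The paper proves a \emph{pointwise} tail bound for a fixed $x$ (its Lemma~\ref{lem:kerneldensity}: $\P(|\hat f_n(x)-f(x)|\ge \delta)\le 2/n^2$), then bounds $\P(\hat\eta\ge\eta)$ by union-bounding over the $n+m$ random evaluation points $Z_i$. This forces the paper to deal with a subtlety you never encounter: $Z_i$ enters $\hat f_\gamma$, so the pointwise lemma does not apply directly at $x=Z_i$; the paper handles this with a leave-one-out swap, replacing $Z_i$ by an independent copy $Z_i'$ and absorbing the perturbation $M^d/(n+m)\le 2^d(n+m)^{-2/(2+d)}$ into the constant, which is precisely why the final constant is the doubled $c_0 = 2L + 8(2C)^{d/2}$ rather than the $L+4(2C)^{d/2}$ of Lemma~\ref{lem:kerneldensity}. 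You instead pass to the deterministic quantity $\sup_{x\in[0,1]^d}|\hat f_\gamma(x)-f_\gamma(x)|$ and union-bound over the $M^d$ cells. This sidesteps the dependence issue entirely, since you are no longer evaluating at a data-dependent point, and in fact it is slightly cheaper: $M^d\asymp(n+m)^{d/(2+d)}<n+m$, so you union over fewer events, and the resulting $\log(M^d(n+m))$ factor is smaller than the paper's $\log((n+m)^2)$. The only cost is the extra $\log(M^d)$ bookkeeping and the bias factor $\sqrt d$ (your Euclidean Lipschitz convention vs.\ the paper's $\sup$-norm convention), both of which you correctly sweep into the $\sqrt{\log(n+m)}$ and the threshold $N(C,d)$. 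Both approaches land on constants with comfortable slack, so the discrepancy in Lipschitz convention is immaterial here.
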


\begin{proof}
First note that when $f_1$ is $L$-Lipschitz and upper bounded by $C^d$, then $f_\gamma$ is also $L$-Lipschitz and upper bounded by $ C^d$ (because $\gamma\leq 1$).
Hence, by Lemma~\ref{lem:kerneldensity} we have for some constant $c_0=c(d,C,L)>0$ and $N=N(C)$, for all $m\geq N$, for all $x\in [0,1]^d$,
\begin{equation}\label{equ:histogramboundapplied}
\P\left( |\hat{f}_\gamma(x)- f_\gamma(x)|\geq  c_0 (n+m)^{-1/(2+d)} \sqrt{\log (n+m)} \right)\leq 2/(n+m)^2.
\end{equation}
Now, we have for all $\eta>0$,
\begin{align*}
\P(\hat{\eta}\geq \eta) =\P\left( \max_{1\leq i\leq n+m}|g^*(Z_i)-g(Z_i)| \geq \eta\right)
\leq \sum_{i=1}^{n+m} \P\left( |\hat{f}_\gamma(Z_i)  -f_\gamma(Z_i) | \geq \eta\right),
\end{align*}
because $f_0$ is the density of the uniform distribution on $[0,1]^d$. 
A technical point here is that $Z_i$ also appears in $\hat{f}_\gamma$ (at one place), nevertheless, denoting $\hat{f}'_\gamma$ the quantities \eqref{equ:histogram} for which $Z_i$ has been changed to an independent copy $Z_i'$ (at that place), we have for all $x\in [0,1]^d$,
$$
|\hat{f}_\gamma(x) - \hat{f}'_\gamma(x)|\leq  M^d  (n+m)^{-1} \leq 2^d (n+m)^{-2/(2+d)}.
$$
Combined with \eqref{equ:histogramboundapplied}, this gives 
\begin{align*}
\P(\hat{\eta}\geq \eta) &\leq \sum_{i=1}^{n+m} \P\left( |\hat{f}_\gamma'(Z_i)  -f_\gamma(Z_i) | \geq  \eta - 2^d (n+m)^{-2/(2+d)}\right)\leq 2 /(n+m),
\end{align*}
by choosing $\eta$ such that
$
 \eta \geq 2^d (n+m)^{-2/(2+d)} + c_0   (n+m)^{-1/(2+d)} \sqrt{\log (n+m)} .
$
We have established Lemma~\ref{lem:etachapdensityestimation}. 

\end{proof}

\begin{lemma}\label{lem:kerneldensity}[Histogram density estimator, non i.i.d. version]
Consider $Z_1,\dots,Z_{n}$ independent random variables take values in $[0,1]^d$ where $Z_i$ has for density $f_i$, $1\leq i\leq n$. We assume that all the $f_i$'s are $L$-Lipschitz and pointwise upper bounded by $C^d$ (for some constant value $C\in (0,1)$), where $L,C$ does not depend on $i$. Let $f(x)=n^{-1}\sum_{i=1}^n f_i(x)$ for $x\in [0,1]^d$.
We consider the histogram estimator of $f$ given by  \begin{equation}\label{equ:histogram}
\hat{f}_n(x)= M^d\sum_{j=1}^{M^d} n^{-1} \sum_{i=1}^n \ind{Z_i\in \mathcal{D}_j}  \ind{x\in \mathcal{D}_j},\:\: x\in [0,1]^d,
\end{equation}
where $\{\mathcal{D}_1,\dots,\mathcal{D}_{M^d}\}$ is a regular partition of $[0,1]^d$ formed by $M^d$ $d$-dimensional cubes of side size $1/M$ and Lebesgue measure $|\mathcal{D}_1|=1/M^d$.
Then, choosing $M=\lceil n^{1/(2+d)}\rceil$, for $n\geq N(C)$, we have for all $x\in [0,1]^d$,
\begin{equation}\label{equ:histogrambound}
\P\left( |\hat{f}_n(x)-f(x)|\geq  c_0(d,C,L)n^{-1/(2+d)} \sqrt{\log n} \right)\leq 2/n^2,
\end{equation}
where  $c_0(d,C,L)=L+4(2C)^{d/2}$.
\end{lemma}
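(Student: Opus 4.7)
Fix $x\in [0,1]^d$ and let $\mathcal{D}_{j(x)}$ denote the unique cube of the partition containing $x$. Writing $W_i = \ind{Z_i\in \mathcal{D}_{j(x)}}$, the estimator reduces to $\hat{f}_n(x) = \frac{M^d}{n}\sum_{i=1}^n W_i$, and the strategy is the classical bias-variance decomposition
\[
\hat{f}_n(x) - f(x) = \bigl(\hat{f}_n(x) - \E[\hat{f}_n(x)]\bigr) + \bigl(\E[\hat{f}_n(x)] - f(x)\bigr).
\]

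The bias term is deterministic and purely analytic. Since $f = n^{-1}\sum_i f_i$ is an average of $L$-Lipschitz functions, it is itself $L$-Lipschitz, so
\[
\bigl|\E[\hat{f}_n(x)] - f(x)\bigr| = \Bigl|M^d\!\!\int_{\mathcal{D}_{j(x)}}\!(f(y)-f(x))\,dy\Bigr| \le L\,\mathrm{diam}(\mathcal{D}_{j(x)}) \le L\sqrt{d}/M \lesssim L\, n^{-1/(2+d)},
\]
using the choice $M = \lceil n^{1/(2+d)} \rceil$.

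The stochastic term is the heart of the argument, and here the key observation is that \emph{independence of the $Z_i$'s} (not identical distribution) is what is actually needed for a concentration inequality. The $W_i$'s are independent Bernoulli random variables with $\E W_i = \int_{\mathcal{D}_{j(x)}} f_i \le C^d/M^d$ by the uniform upper bound, whence $\var(W_i) \le C^d/M^d$ and $V := \sum_{i=1}^n \var(W_i)\le nC^d/M^d$. Bernstein's inequality applied to $\sum_i(W_i - \E W_i)$ with deviation $t = n\eta_{\mathrm{stoch}}/M^d$ then gives, in the regime $n\eta_{\mathrm{stoch}}/M^d \le 3V$ (which holds once $n\ge N(C)$ for the target $\eta_{\mathrm{stoch}}$),
\[
\P\bigl(|\hat{f}_n(x) - \E[\hat{f}_n(x)]|\ge \eta_{\mathrm{stoch}}\bigr) \le 2\exp\!\Bigl(-\tfrac{n\eta_{\mathrm{stoch}}^2 M^d}{4 C^d M^{2d}/1}\Bigr) = 2\exp\!\Bigl(-\tfrac{n\eta_{\mathrm{stoch}}^2}{4 C^d}\cdot\tfrac{1}{M^d/n \cdot \text{const}}\Bigr).
\]
Choosing $\eta_{\mathrm{stoch}} = 2\sqrt{2}(2C)^{d/2} n^{-1/(2+d)}\sqrt{\log n}$ (using $M^d \le 2^d n^{d/(2+d)}$) makes this probability at most $2/n^2$.

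Summing the bias and stochastic bounds yields the claimed rate with a constant of order $L\sqrt{d} + 4(2C)^{d/2}$, which can be absorbed into $c_0(d,C,L)$ as stated. The main (and only) subtle point is to verify that the non-iid nature of the sample does not break Bernstein: only independence and uniform control of $\E W_i$ by $C^d/M^d$ are used, both of which hold here. Everything else is routine Lipschitz control of the bias and an optimal balancing $M\asymp n^{1/(2+d)}$.
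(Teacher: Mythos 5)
Your proof is correct and follows essentially the same route as the paper: the bias--variance decomposition, Lipschitz control of the bias giving $O(L/M)$, and Bernstein's inequality for the independent-but-not-iid Bernoulli indicators with variance bounded by $nC^d/M^d$, exactly as in the paper. The only cosmetic difference is that you bound the bias via the Euclidean diameter ($L\sqrt{d}/M$) while the paper uses the $\ell_\infty$ diameter ($L/M$), which accounts for the $\sqrt{d}$ discrepancy in your constant versus the stated $c_0(d,C,L)=L+4(2C)^{d/2}$.
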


\begin{proof}
We have for all $x\in [0,1]^d$,
$$
|\hat{f}_n(x)-f(x)|\leq |\hat{f}_n(x)-\E \hat{f}_n(x)| + |\E \hat{f}_n(x)-f(x)|,
$$
which contains a bias and a variance term. For the bias, we have
\begin{align*}
|\E \hat{f}_n(x)-f(x)| &\leq  \sum_{j=1}^{M^d} \ind{x\in \mathcal{D}_j} |\mathcal{D}_j|^{-1} \int_{\mathcal{D}_j} n^{-1} \sum_{i=1}^n |f_i(y)-f_i(x)| dy\leq L /M,
\end{align*}
because $\sup_{1\leq k\leq d}|x_k-y_k|\leq 1/M$ when $x,y$ belongs to the same $\mathcal{D}_j$.

For the variance term, by denoting $j_x$ the only $j$ such that $x\in \mathcal{D}_j$ and $p_{i,x}=\P(Z_i\in \mathcal{D}_{j_x})\in[ 0,(C/M)^d] $, we have
\begin{align*}
\P( |\hat{f}_n(x)-\E \hat{f}_n(x)|\geq \delta)&\leq \P\left(  \left|\sum_{i=1}^n (\ind{Z_i\in \mathcal{D}_{j_x}} - p_{i,x})\right| \geq \delta n M^{-d}\right)\\
&\leq 2\exp\left(-\frac{1}{2} \frac{A^2}{ \sum_{i=1}^n p_{i,x} + A/3 }\right)\leq 2\exp\left(-\frac{1}{2} \frac{A^2}{ n (C/M)^d + A/3 }\right),
\end{align*}
by letting $A=\delta n M^{-d}$ and applying Bernstein's inequality. Choosing $\delta$ such that $A\leq n (C/M)^d$, that is, $\delta\leq C^d$, we obtain
\begin{align*}
\P( |\hat{f}_n(x)-\E \hat{f}_n(x)|\geq \delta)&\leq2\exp\left(-(3/8) \frac{A^2}{ n (C/M)^d }\right) = 2\exp\left(-(3/8) n \delta^2 M^{-d} C^{-d}\right),
\end{align*}
by choosing $\delta$ such that $(3/8) n \delta^2 M^{-d} C^{-d}=2 \log n$, that is, $\delta= (4/\sqrt{3}) (MC)^{d/2 } \sqrt{(\log n)/n}$  gives 
\begin{align*}
\P( |\hat{f}_n(x)-\E \hat{f}_n(x)|\geq \delta)&\leq2/n^2,
\end{align*}
provided that $\delta\leq C^d$. Now, we choose $M=\lceil n^{1/(2+d)}\rceil$, so that for $n\geq N(C)$, $\delta= 4 (2C)^{d/2} n^{-1/(2+d)} \sqrt{\log n}$ is a valid choice. This gives the bound \eqref{equ:histogrambound}. 
\end{proof}

\subsection{From the two-sample setting to classical two-group setting}\label{sec:mr}

Existing results of the form \eqref{condetachap} typically assume that the observations are i.i.d. draws from a two-group mixture model under which the class labels are random. By contrast, we focus on a two-sample setting with fixed labels in which case the observations are {\it non-identically distributed} because the assumptions are weaker than the two-group setting. Nevertheless, we can easily adapt our theory in the two-group setting, in which case a plethora of existing results can be applied to understand the scale of $\zeta_r(\eta)$ and $\hat{\eta}$. The two-group setting can be formalized by the following assumptions.

% To handle this case, two solutions can be followed. The first one is to extend the i.i.d. result to the non i.i.d. case, which is generally  technical but possible (as an example, this is done in Lemma~\ref{lem:etachapdensityestimation} for an histogram estimator). The second solution, explored here, is to add a random effect layer to our model, for which the $Z_i$'s becomes i.i.d. according to a mixture, each part being obtained as combinations of successes in Bernoulli random trials. This means that $k,\l,m,m_1$ becomes random (whereas $n+m$ is still deterministic) and that the probabilities or expectations in \eqref{powerBONuSstar} and \eqref{powerBONuSstarTDR} are taken with this additional random layer. 

% %According to Remark~\ref{rem:randeffect}, it is sometimes useful to assume on the top of our model an additional random effect layer that makes the $Z_i$'s i.i.d. according to a mixture, each part being obtained as combinations of successes in Bernoulli random trials.
 % Formally, we define the following random mixture assumption.

\begin{assumption}\label{as:rm}
The sample $(Z_1,\dots,Z_{n+m})$ is obtained in the following way:
\begin{itemize}
\item $(Z_1,\dots,Z_k)=(W_i,1\leq i\leq n+m\::\:  A_i=0,B_i=0,C_i=1)$;
\item $(Z_{k+1},\dots,Z_{n})=(W_i,1\leq i\leq n+m\::\:  A_i=0,B_i=0,C_i=0)$;
\item $(X_i,i\in \cH_0)=(W_i,1\leq i\leq n+m\::\:  A_i=0,B_i=1,C_i=0)$;
 \item $(X_i,i\in \cH_1)=(W_i,1\leq i\leq n+m\::\:  A_i=1,B_i=1,C_i=0)$,
\end{itemize}
where $(A_i,B_i,C_i,W_i)$, $1\leq i\leq n+m$, are i.i.d. with $A_i\sim\mathcal{B}(\pi_1)$ (indicator of being a novelty), $B_i\sim\mathcal{B}(\pi_B)$ (indicator of being in the test sample),  $C_i\sim\mathcal{B}(\pi_C)$ (indicator of being in the first NTS), for some proportions $\pi_A,\pi_B,\pi_C\in (0,1)$ and 
$W_i\:|\: A_i \sim f_0$ if  $A_i=0$ and $W_i\:|\: A_i=1 \sim f_1$, where $f_0$ is the density of $P_0$ and $f_1$ is the common density of all $P_i$, $i\in \cH_1$.
\end{assumption}

Under Assumption~\ref{as:rm}, the sample size $n+m$ (number of trials) is fixed while the sample sizes $k$, $\l$, $m$ and $m_1$ are random. Also, we easily see that, conditionally on $A,B,C$, the sample $(Z_1,\dots,Z_{n+m})$ satisfies Assumptions~\ref{as:indep} and~\ref{equ-marg} with $f_i=f_1$ for $i\in \cH_1$.

As a result, under Assumption~\ref{as:rm}, and letting $L_i=1-(1-A_i)(1-B_i)C_i$ we have $(Z_1,\dots,Z_k)=(W_i, 1\leq i\leq n+m\::\:  L_i=0)$, and $(Z_{k+1},\dots,Z_{n+m})=(W_i, 1\leq i\leq n+m\::\:  L_i=1)$ with $(W_i,L_i)_{1\leq i\leq n+m}$ i.i.d., $L_i\sim \mathcal{B}(1-(1-\pi_A)(1-\pi_B)\pi_C)$ 
and $W_i\:|\: L_i=0 \sim f_0$ and $W_i\:|\: L_i=1 \sim (1-\pi) f_0 + \pi f_1$, for some $\pi\in(0,1)$.
Also, the knowledge of the samples $(Z_1,\dots,Z_k)$ and $(Z_{k+1},\dots,Z_{n+m})$ is equivalent to the knowledge of  $(W_i,L_i)_{1\leq i\leq n+m}$.
This means that, under Assumption~\ref{as:rm}, the score function \eqref{scorefunction} is based on $(W_i,L_i)_{1\leq i\leq n+m}$, which is a standard classification setting where the covariates and labels are jointly i.i.d..

\section{Material for experiments of  Section~6}

 \subsection{Details of datasets in Section \ref{sec:realdata}}\label{sec:descriptiondataset}

 \begin{itemize}
 \item Shuttle: consists of radiator data onboard space shuttles. There are 7 classes in total. Instances from class 1 are considered nominal, {while instances 
 those from classes 2, 3, 4, 5, 6, 7 are considered as novelties. }
 \item Credit Card: contains transactions made by credit cards over the course of two days, some being frauds.
 \item KDDCup99: describes a set of network connections that includes a variety of simulated intrusions. 
 \item Mammography: consists of features extracted from mammograms, some having microcalcifications.
 \item Musk: describes a set of molecules that are identified as either musk or non-musk. Musk are considered nominals, non-musk are novelties.
 \item MNIST: contains a set of labeled images of size $28\times28$ of handwritten digits `0' to `9'. We restrict the analysis to `4' and `9', considering that `4' are nulls, `9' are alternatives. 
 \end{itemize}

\subsection{Methods}\label{sec:simumethod}

We describe here the score functions used in each version of AdaDetect mentioned  in Section \ref{sec:realdata}. We also include  the oracle score defined in \eqref{equoraclescore} and the SC procedures proposed by \cite{SC2007}.

%We apply AdaDetect with various score functions, including the oracle score defined in \eqref{equoraclescore} (\texttt{AdaDetect oracle}), the density estimation-based score (\texttt{AdaDetect parametric} and \texttt{AdaDetect KDE}), the PU classification-based score (\texttt{AdaDetect SVM}, \texttt{AdaDetect RF}, \texttt{AdaDetect NN}, and \texttt{AdaDetect NN cv}). We also include the SC procedures proposed by \cite{SC2007} (\texttt{SC parametric} and \texttt{SC KDE}) and the conformal novelty detection procedures proposed by \cite{bates2021testing} (\texttt{CAD SVM} and \texttt{CAD IForest}). Note that both \texttt{CAD SVM} and \texttt{CAD IForest} are instances of AdaDetect with one-class classification-based scores. See Section~\ref{sec:simumethod} for a full description of these methods. For all our experiments, we use the Python package \texttt{scikit-learn} for Expectation Maximization (EM) algorithm, kernel density estimation, random forests, and neural networks, with the default hyper-parameters from the packages unless otherwise specified.

\begin{itemize}
	\item \texttt{AdaDetect oracle}: the oracle score function $\lrt$ defined in \eqref{equLR}.
	\item \texttt{AdaDetect parametric} and \texttt{AdaDetect KDE}: ${g}(x)=\wh{f}_\gamma(x)/\wh{f}_0(x)$, where $\wh{f}_\gamma$ is a density estimator of $f_\gamma$ \eqref{equfgamma}  computed on mixed sample $(Z_{k+1},\dots,Z_{n+m})$ and $\wh{f}_0$ is a density estimator of $f_0$ based on $(Z_1,\dots,Z_k)$. For \texttt{AdaDetect parametric}, $\hat f_0$ is estimated by the Ledoit-Wolf method and $\hat f_\gamma$ is estimated by a two-component mixture of Gaussians via an expectation-maximization (EM) algorithm with 100 random restarts. For \texttt{AdaDetect KDE}, they are given by non-parametric Gaussian kernel density estimators (KDE). 
	\item \texttt{AdaDetect SVM}: $\wh{g}$ is obtained by minimizing the empirical risk \eqref{equgPU} with the hinge loss, $\lambda=1$ \rev{ and a suitable regularization with the cost parameter $C$ set to $1$; see \cite{hastie2009elements}.}
	\item \texttt{AdaDetect RF}: $\wh{g}$ is obtained by random forest with the maximum depth $10$; 
	\item  \texttt{AdaDetect NN}: $\wh{g}$ is obtained by minimizing the cross entropy loss \eqref{equgPU} with $\lambda=1$ and the NN function class with $1$ hidden layer, $100$ neurons, and the ReLU activation function. 
	 \item  \texttt{AdaDetect NN cv}: AdaDetect NN with the number of hidden layers and the number of neurons per layer chosen by the cross-validation procedure described in Section~\ref{sec:cross}. 
\end{itemize}
They are compared to several existing methods:
\begin{itemize}
\item \texttt{SC parametric} and \texttt{SC KDE}: the procedure of \cite{SC2007} based on local FDR estimates $\ell_i= \hat \pi_0 \hat{f_0}(X_i) / \hat{f}(X_i)$, where the densities are estimated by the same methods for \texttt{AdaDetect parametric} and \texttt{SC KDE}, respectively, except that $\hat{f_0}$ is based on the whole NTS $Y=(Y_1,\dots,Y_n)$ and $\hat{f}$ is only based on the test sample $X=(X_1,\dots,X_m)$. We set $\hat \pi_0=1$ for a fair comparison with the non-adaptive versions of AdaDetect;
\item \texttt{CAD SVM} and \texttt{CAD IForest}: the conformal
  anomaly detection procedure proposed by \cite{bates2021testing} based on one-class SVM and Isolation Forest, respectively. 
\end{itemize}

Note that these procedures all provably control the FDR under Assumption~\ref{as:indep} (see Corollary~\ref{FDRbounds}) except for \texttt{SC parametric} and \texttt{SC KDE}, which only control the FDR asymptotically with a consistent estimator of the density ratio \citep{SC2007}.

\subsection{Additional experiments with varying $n$, $m$, and $m_1$}\label{sec:addnumexp}

\rev{In this section, we report results for additional experiments in more challenging settings for AdaDetect. 
\begin{itemize}
\item Small sample sizes: $m=200$, with $k=4m$ and $\ell=m$ as before. The results are reported in Table~\ref{tab:perfsmallsamplesize}.
\item Small null sample sizes with $n < m$: $n=m/2$, with $k=m/4$ and $\ell=m/4$. The results are reported in Table~\ref{tab:perfsmallnullsamplesize}. 
\item Sparse novelties: we set $m_1/m=2\%$, with $m=1000$, $k=4m$ and $\ell=m$ as before. The results are reported in Table~\ref{tab:perfhighsparsity}.
\end{itemize}
As in Section \ref{sec:realdata}, we set the target FDR level $\alpha = 0.1$ and report the mean value and the standard deviation (in brackets) over 100 runs. We only highlight in bold the best-performing method if its FDR is above $\alpha$.
}

\begin{table}
\caption{Same as Table~\ref{tab:datasetsperf} for $m=200$ ($k=4m$, $\ell=m$). 
\label{tab:perfsmallsamplesize}}
{\tiny\begin{tabular}{lcccccc}
& Shuttle & Credit card & KDDCup99 &  Mammography & Musk & MNIST \\
\hline
\hline
& \multicolumn{6}{c}{FDR} \\
\hline
\hline
CAD SVM &  0.04 (0.09) &  0.00 (0.00) & 0.00 (0.00) &  0.02 (0.09) & 0.00 (0.00) & 0.00 (0.00)\\
CAD IForest &  0.07 (0.10) & 0.05 (0.08) & 0.07 (0.11) &  0.01 (0.06) & 0.00 (0.00)  & 0.00 (0.00) \\
AdaDetect parametric & 0.03 (0.09) & 0.00 (0.00) & 0.00 (0.00) &  0.01 (0.08) & 0.00 (0.00)  & 0.00 (0.00) \\
AdaDetect KDE &  0.05 (0.10) & 0.01 (0.05) & 0.00 (0.00) &  0.02 (0.06) & 0.00 (0.00)  & 0.00 (0.00)\\
AdaDetect SVM &  0.04 (0.10) & 0.02 (0.07) & 0.01 (0.04) &  0.01 (0.06) & 0.00 (0.00)  & 0.01 (0.03) \\
AdaDetect RF & 0.07 (0.09)  & 0.07 (0.09) & 0.08 (0.09) &  0.05 (0.11) & 0.00 (0.00)  & 0.02 (0.12) \\
AdaDetect NN &  0.05 (0.09) & 0.06 (0.09) & 0.06 (0.14) &  0.03 (0.09) & 0.01 (0.10)  & 0.02 (0.10) \\
AdaDetect cv NN &  0.05 (0.08) & 0.06 (0.08) & 0.05 (0.11) &  0.04 (0.10) & 0.01 (0.10)  & 0.00 (0.00)\\
\hline
CAD SVDD + CNN &  - & - & - &  - & - & 0.01 (0.07)  \\
AdaDetect CNN &  - & - & - &  - & -  & 0.01 (0.10) \\
\hline
\hline
& \multicolumn{6}{c}{TDR} \\
\hline
\hline
CAD SVM &  0.12 (0.22) & 0.00 (0.00) & 0.00 (0.00) & 0.03 (0.10) & 0.00 (0.00)  & 0.00 (0.00)\\
CAD IForest &  0.28 (0.28) & 0.25 (0.32) & 0.42 (0.47) &  0.02 (0.11) & 0.00 (0.00)  & 0.00 (0.00)\\
AdaDetect parametric &  0.13 (0.25) & 0.00 (0.00) & 0.00 (0.00) &  0.01 (0.07) & 0.00 (0.00)  &0.00 (0.00)\\
AdaDetect KDE &  0.22 (0.33) & 0.01 (0.07) & 0.00 (0.00) & 0.06 (0.n17) & 0.00 (0.00)  &0.00 (0.00)\\
AdaDetect SVM &  0.16 (0.26) & 0.03 (0.11) & 0.06 (0.21) &  0.01 (0.05) & 0.00 (0.00)  & 0.02 (0.11) \\
AdaDetect RF &  \textbf{0.57 (0.30)} & \textbf{0.71 (0.21)} & \textbf{0.97 (0.04)} &  \textbf{0.11 (0.21)}  & 0.00 (0.00)  & 0.00 (0.01)\\
AdaDetect NN &  0.31 (0.35) & 0.46 (0.34) & 0.28 (0.40) &  0.07 (0.19) & 0.00 (0.00)  & 0.01(0.05) \\
AdaDetect cv NN &  0.28 (0.36) & 0.47 (0.33) & 0.31 (0.39) &  0.10 (0.22) & 0.00 (0.00)  &0.00 (0.00)\\
\hline
CAD SVDD + CNN &  - & - & - &  - & - & 0.00 (0.02)  \\
AdaDetect CNN &  - & - & - &  - & -  & 0.02 (0.10) \\
\hline
\end{tabular}
}
\end{table} 

\begin{table}
\caption{Same as Table~\ref{tab:datasetsperf} for $n=m/2$ ($k=m/4$, $\ell=m/4$). 
\label{tab:perfsmallnullsamplesize}}  
{\tiny\begin{tabular}{lcccccc}
& Shuttle & Credit card & KDDCup99 &  Mammography & Musk & MNIST \\
\hline
\hline
& \multicolumn{6}{c}{FDR} \\
\hline
\hline
CAD SVM & 0.03 (0.06) & 0.00 (0.00) &  0.00 (0.00) &  0.02 (0.08) & 0.00 (0.00) &0.00 (0.00) \\
CAD IForest & 0.05 (0.07) & 0.04 (0.07) & 0.04	(0.07) &  0.01 (0.06) & 0.00 (0.00) & 0.00 (0.00) \\
AdaDetect parametric &  0.03 (0.05) & 0.00 (0.00) & 0.00 (0.00)& 0.23 (0.39) & 0.00 (0.00) & 0.00 (0.00)\\
AdaDetect KDE &  0.05 (0.08) & 0.01 (0.07) &  0.00 (0.00) &  0.04 (0.09) & 0.00 (0.00) & 0.00 (0.00)\\
AdaDetect SVM &  0.06 (0.07) & 0.03 (0.06)& 0.00 (0.04) & 0.00 (0.04) & 0.00 (0.00) & 0.00 (0.00)\\
AdaDetect RF &  0.07 (0.06) & 0.07	(0.06) & 0.08 (0.06) &  0.05 (0.09) & 0.00 (0.00) &  0.00 (0.00)\\
AdaDetect NN & 0.05 (0.06) & 0.06 (0.07) & 0.01 (0.05) &  0.05 (0.08) & 0.01 (0.10) &  0.00 (0.00)\\
AdaDetect cv NN & 0.05 (0.07) & 0.04 (0.08) & 0.02 (0.06) & 0.04 (0.08) & 0.02 (0.14)& 0.00 (0.00)\\
\hline
CAD SVDD + CNN &  - & - & - &  - & - &  0.03 (0.03)  \\
AdaDetect CNN &  - & - & - &  - & - & 0.03 (0.09)  \\
\hline
\hline
& \multicolumn{6}{c}{TDR} \\
\hline
\hline
CAD SVM &  0.07 (0.16) &  0.00 (0.00) &  0.00 (0.00) & 0.02 (0.08) & 0.00 (0.00) &0.00 (0.00) \\
CAD IForest &  0.22 (0.24) & 0.20 (0.29) & 0.34 (0.46) &  0.01 (0.07) & 0.00 (0.00) &  0.00 (0.00)\\
AdaDetect parametric & 0.12 (0.23) & 0.00 (0.00) &  0.00 (0.00) &  0.05 (0.09) & 0.00 (0.00) & 0.00 (0.00)\\
AdaDetect KDE & 0.19 (0.30) & 0.01	 (0.06) &  0.00 (0.00) &  0.07 (0.16) & 0.00 (0.00) & 0.00 (0.00)\\
AdaDetect SVM &  0.47 (0.29) & 0.20 (0.32) & 0.00 (0.02) &  0.00	(0.03) & 0.00 (0.00)  & 0.00 (0.00)\\
AdaDetect RF &  \textbf{0.66 (0.20)} & \textbf{0.72 (0.16)} & \textbf{0.98 (0.02)} &  0.11 (0.18) & 0.00 (0.00) & 0.00 (0.00) \\
AdaDetect NN &  0.30 (0.36) & 0.39	(0.35) & 0.07 (0.24) &  \textbf{0.16 (0.22)} & 0.00 (0.00) & 0.00 (0.00) \\
AdaDetect cv NN &  0.30 (0.36) & 0.21 (0.31) & 0.09	 (0.27)&  0.13 (0.21) & 0.00 (0.00)  & 0.00 (0.00)\\
\hline
CAD SVDD + CNN &  - & - & - &  - & - & 0.03 (0.03)  \\
AdaDetect CNN &  - & - & - &  - & -  & 0.06 (0.17) \\
\hline
\end{tabular}
}
\end{table}

\begin{table}
\caption{Same as Table~\ref{tab:datasetsperf} for $m_1/m=2\%$ ($k=4m$, $\ell=m$).} % Novelty detection performance on different datasets: FDR (top) and TDR (bottom). The nominal level is $\alpha=0.1$. We report the mean value and the standard deviation (in brackets) over 100 runs. The best and second best performances for each dataset is outlined in bold text.
\label{tab:perfhighsparsity}
{\tiny\begin{tabular}{lccccccc}
& Shuttle & Credit card & KDDCup99 &  Mammography & Musk & MNIST  \\
\hline
\hline
& \multicolumn{6}{c}{FDR} \\
\hline
\hline
CAD SVM & 0.00 (0.00) & 0.00 (0.00) &  0.00 (0.00) &  0.00 (0.00) & 0.00 (0.00) &  0.00 (0.00)\\
CAD IForest & 0.05 (0.11) & 0.03 (0.11) & 0.03	(0.08) &  0.00 (0.00) & 0.00 (0.00) & 0.00 (0.00)  \\
AdaDetect parametric &  0.00 (0.00) & 0.00 (0.00) & 0.00 (0.00)& 0.00 (0.00) & 0.00 (0.00) &0.00 (0.00) \\
AdaDetect KDE &  0.00 (0.00) & 0.00 (0.00) &  0.00 (0.00) &  0.00 (0.00) & 0.00 (0.00) & 0.00 (0.00) \\
AdaDetect SVM &  0.03 (0.09) & 0.03 (0.08)& 0.00 (0.00) & 0.00 (0.00) & 0.00 (0.00) & 0.00 (0.00)\\
AdaDetect RF & 0.09 (0.10)  & 0.07	(0.09) & 0.08 (0.08) &  0.03 (0.11) & 0.00 (0.00) & 0.00 (0.00) \\
AdaDetect NN & 0.06 (0.12) & 0.06 (0.10) & 0.03 (0.07) &  0.03	 (0.09) &  0.00 (0.00) & 0.00 (0.00)\\
AdaDetect cv NN & 0.04 (0.10) & 0.05 (0.09) & 0.02	(0.05) & 0.02 (0.09) & 0.00 (0.00) & 0.00 (0.00)\\
\hline
CAD SVDD + CNN &  - & - & - &  - & - & 0.01 (0.10)  \\
AdaDetect CNN &  - & - & - &  - & - & 0.02 (0.14)  \\ 
\hline
\hline
& \multicolumn{6}{c}{TDR} \\
\hline
\hline
CAD SVM &  0.00 (0.00) &  0.00 (0.00) &  0.00 (0.00) & 0.00 (0.00) & 0.00 (0.00) &0.00 (0.00) \\
CAD IForest &  0.10 (0.20) & 0.04 (0.13) & 0.13 (0.31) &  0.00 (0.00) & 0.00 (0.00) &  0.00 (0.00) \\
AdaDetect parametric & 0.00 (0.00) & 0.00 (0.00) &  0.00 (0.00) &  0.00 (0.00) & 0.00 (0.00) &0.00 (0.00) \\
AdaDetect KDE & 0.00 (0.00) & 0.00 (0.00) &  0.00 (0.00) &  0.00 (0.00) & 0.00 (0.00) &0.00 (0.00) \\
AdaDetect SVM &  0.10 (0.27) & 0.11 (0.26) & 0.00 (0.00) &  0.00 (0.00) & 0.00 (0.00)  & 0.00 (0.00)\\
AdaDetect RF &  \textbf{0.60 (0.29)} & \textbf{0.64 (0.27)} & \textbf{0.98 (0.03)} &  0.05 (0.14) &  0.00 (0.00) &0.00 (0.00) \\
AdaDetect NN &  0.12 (0.24) & 0.38	(0.34) & 0.22 (0.38) &  0.04 (0.13) &  0.00 (0.00) &  0.00 (0.00)\\
AdaDetect cv NN &  0.08 (0.21) & 0.37 (0.35) & 0.18	 (0.36) &  0.03 (0.11) &  0.00 (0.00) & 0.00 (0.00)\\
\hline
CAD SVDD + CNN &  - & - & - &  - & - & 0.01 (0.09)   \\
AdaDetect CNN &  - & - & - &  - & -  & 0.01 (0.07) \\
\hline
\end{tabular}
}
\end{table}

\subsection{Simulated data}\label{sec:simdata}

We consider here experiments based on simulated data sets.  
In all experiments considered in this section, we generate measurements under Assumption~\ref{as:indep} with all novelties generated from the same distribution $P_1$. Unless otherwise specified, we set $n = 3000, m=1000$, $\pi_0=m_0/m=0.9$, and calculate FDR and TDR based on $100$ Monte-Carlo simulations. Following Remark \ref{rem:samplesize}, we set $k = 2m$ and $\ell = m$ for all AdaDetect methods. 

\paragraph{Gaussian setting}

We start with a setting where $P_0 = \mathcal{N}(0, I_d)$ and $P_1=\mathcal{N}(\mu,I_d)$, where $\mu \in \R^d$ is a sparse vector with the first $5$ coordinates equal to $\sqrt{2 \log(d)}$ and the remaining ones equal to $0$. The results are presented in Figure \ref{fig:gauss} with the dimension $d$ varying. First, we note that neither \texttt{SC~parametric} nor \texttt{SC~KDE} control the FDR, even if the model is correctly specified for the former, and the FDR inflation is substantial in high dimensions. By contrast, as implied by our theory, all other procedures control the FDR at level $\pi_0\alpha$. Next, we compare the TDR of procedures which control the FDR. In low dimensions ($d \leq 100$), \texttt{AdaDetect parametric} has the highest power that is close to \texttt{AdaDetect oracle}, which is expected since the model is correctly specified and parametric estimation is accurate when the dimension is low. In high dimensions ($d = 500$), \texttt{AdaDetect parametric} becomes much more noisy while \texttt{AdaDetect RF} maintains a stable and high power.

\begin{figure}
		\begin{minipage}{0.85\linewidth}
		\raisebox{-0.5 mm}{ \includegraphics[width=0.215\linewidth]{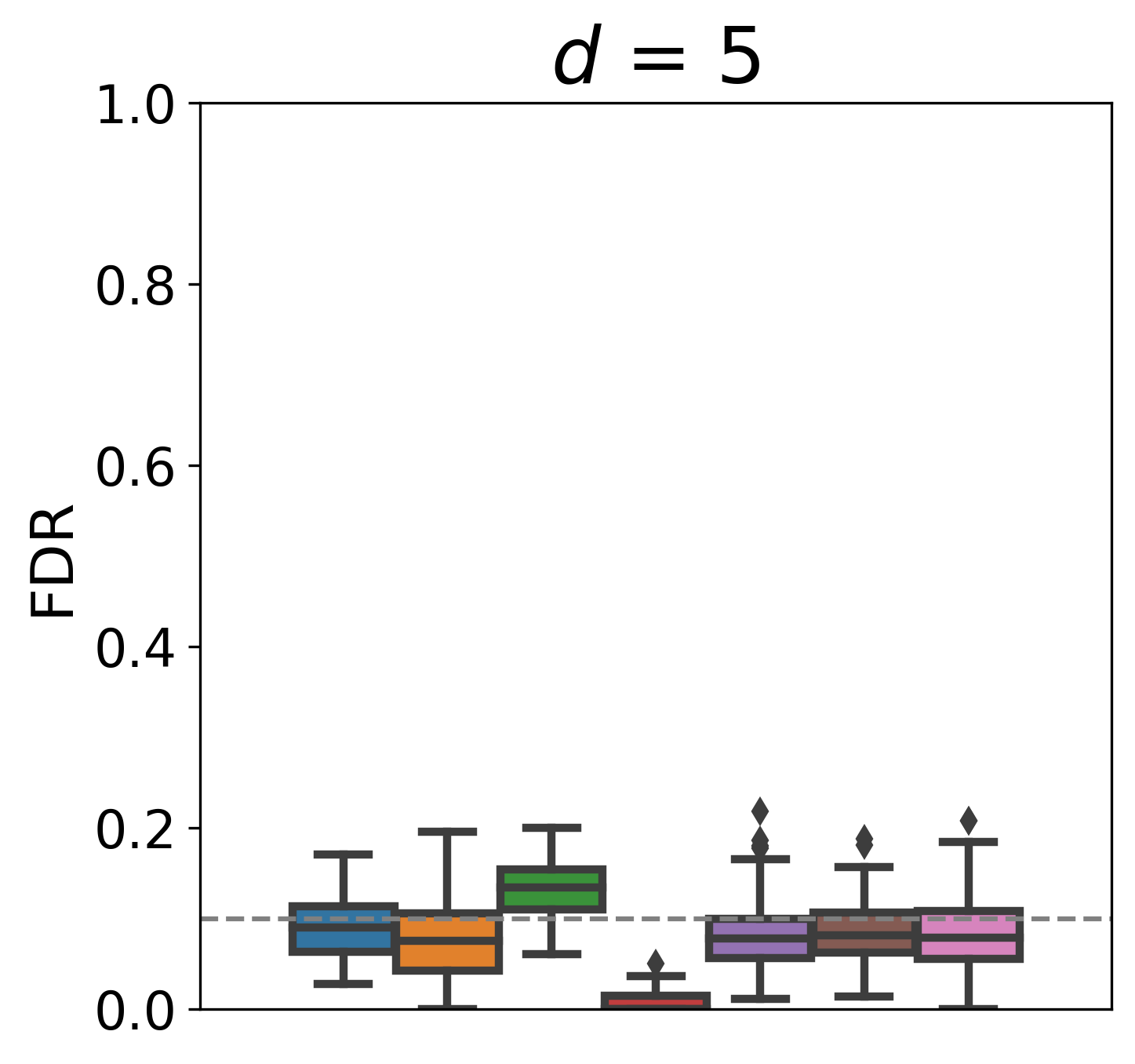} }
		\includegraphics[width=0.18\linewidth]{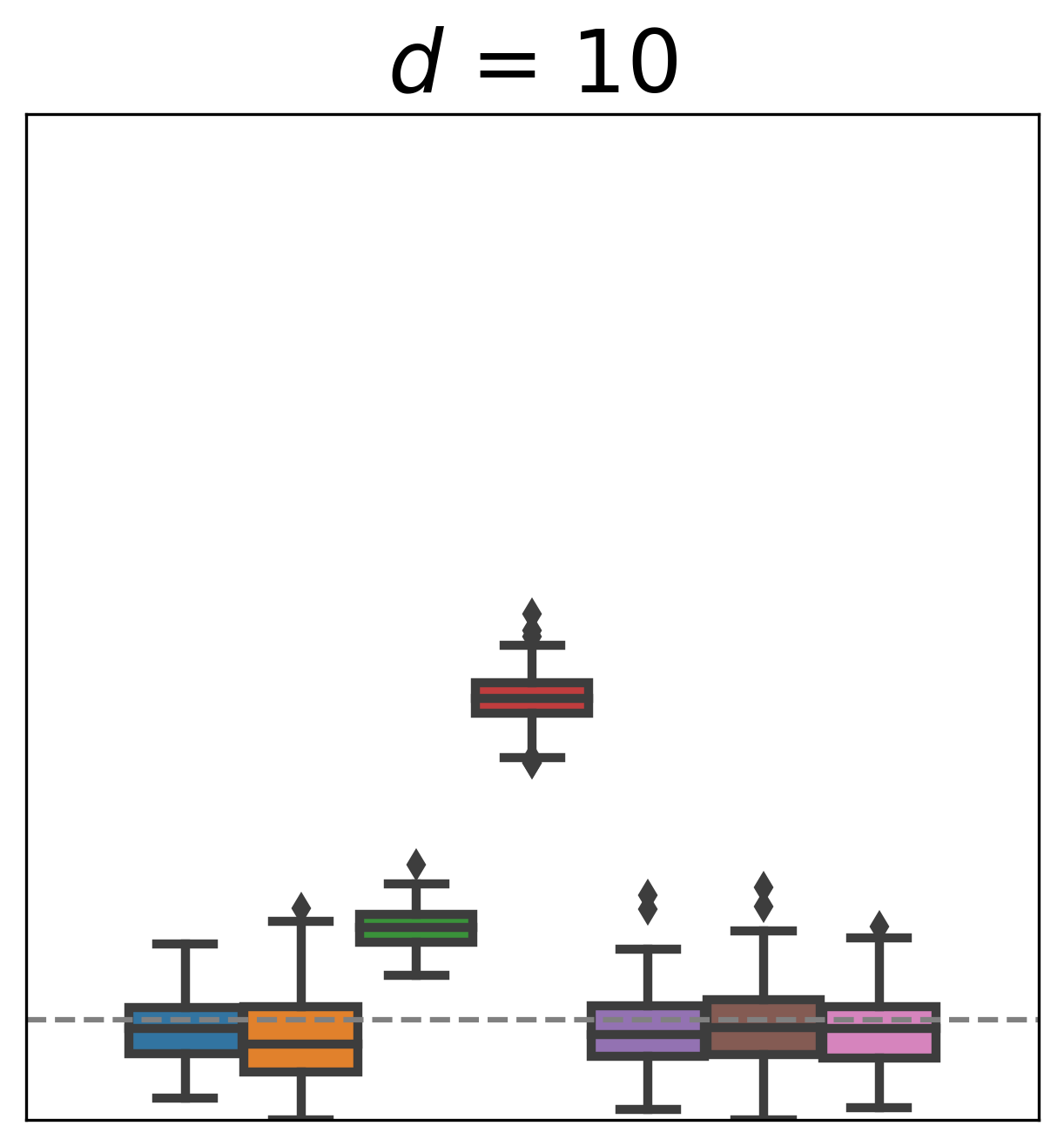} 
		\includegraphics[width=0.18\linewidth]{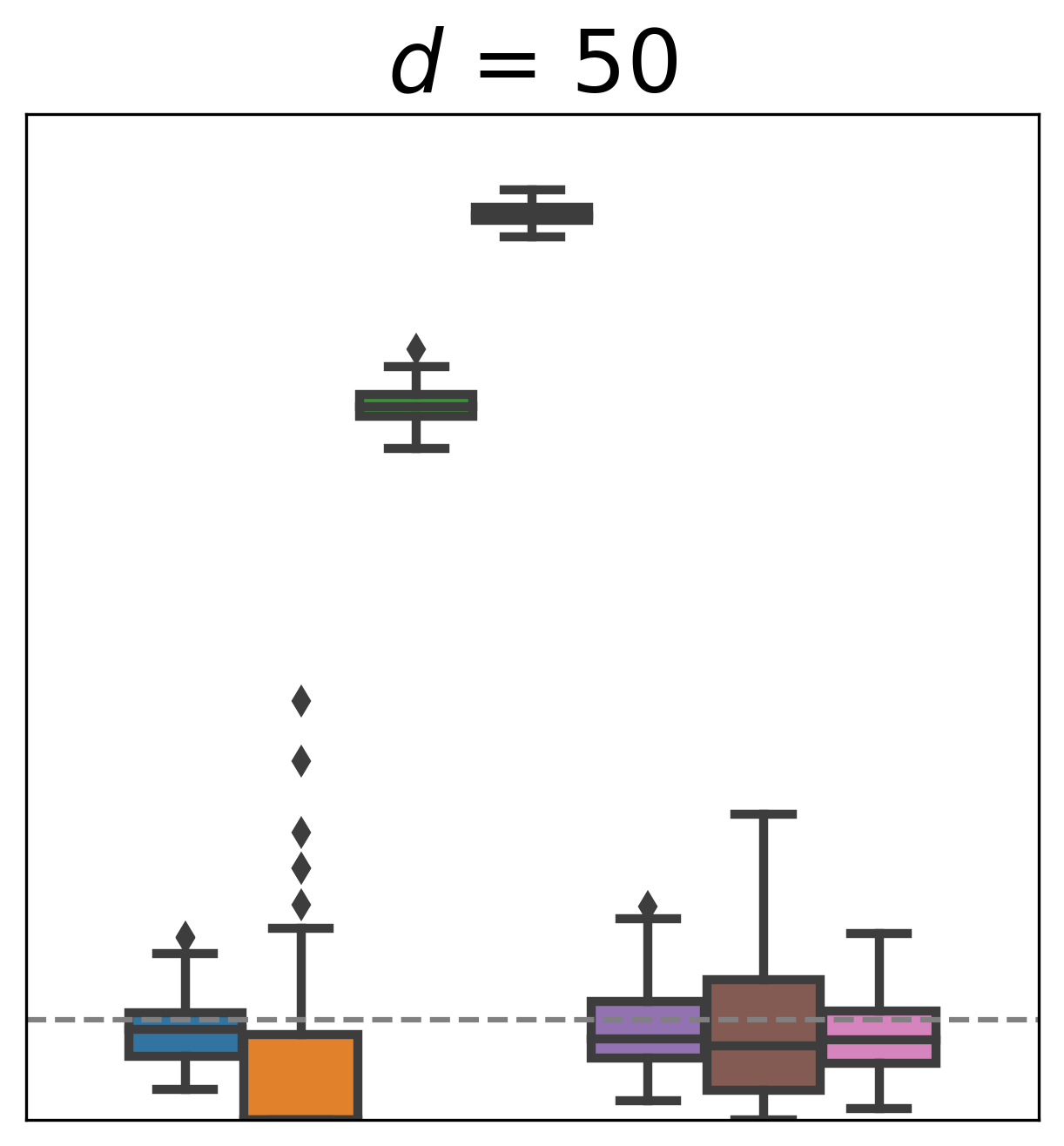} 
		\includegraphics[width=0.18\linewidth]{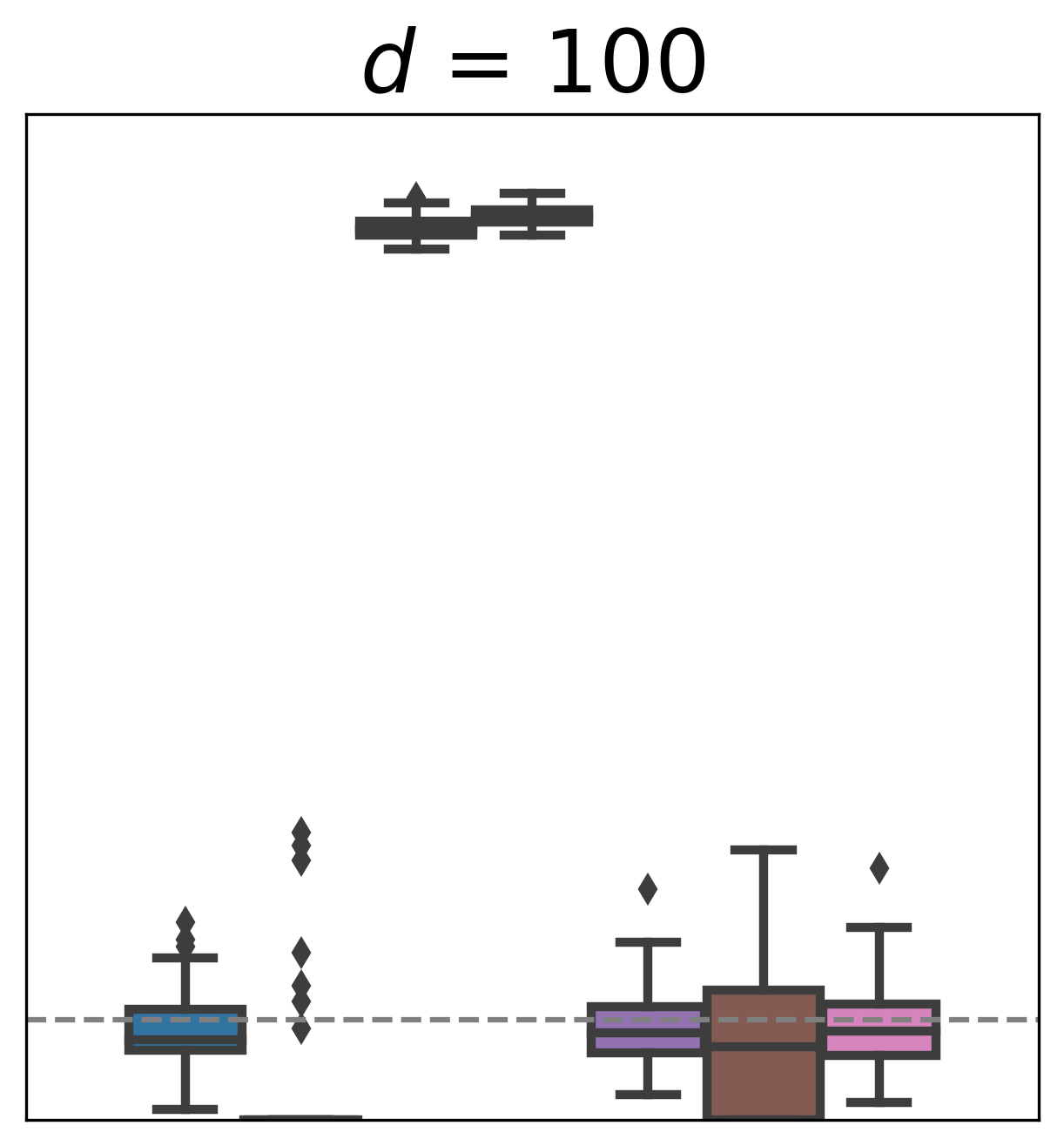} 
		\includegraphics[width=0.18\linewidth]{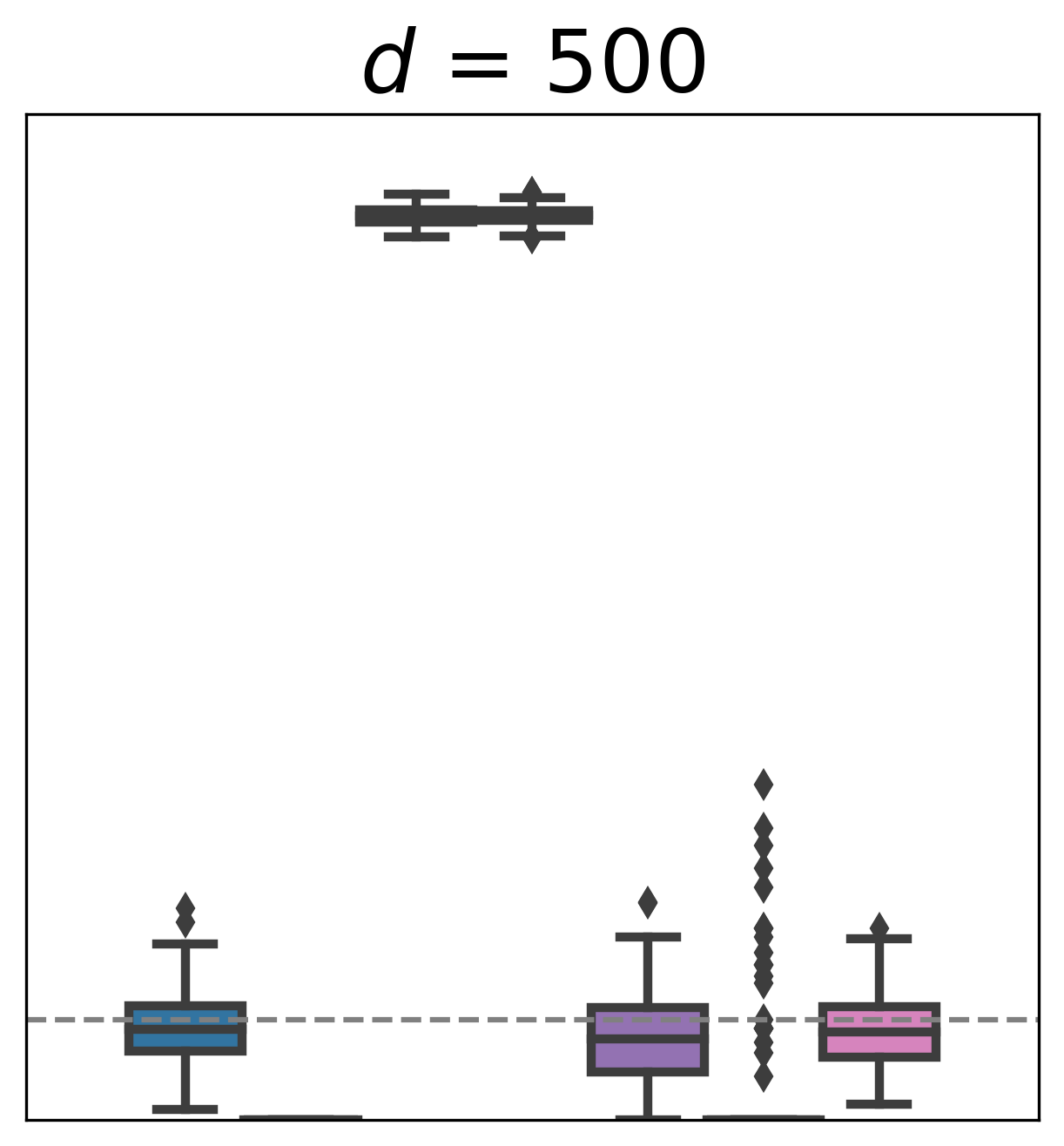} 
		\end{minipage}%
		\begin{minipage}[c]{0.15\linewidth}
		\includegraphics[width=\linewidth]{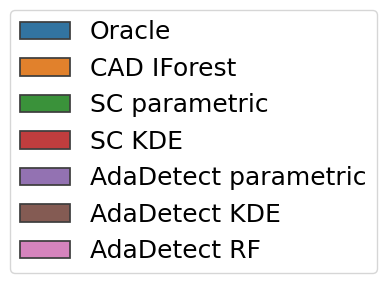} 
		\end{minipage}
		\raggedright
		\begin{minipage}{0.85\linewidth}
		\raisebox{-0.5 mm}{ \includegraphics[width=0.215\linewidth]{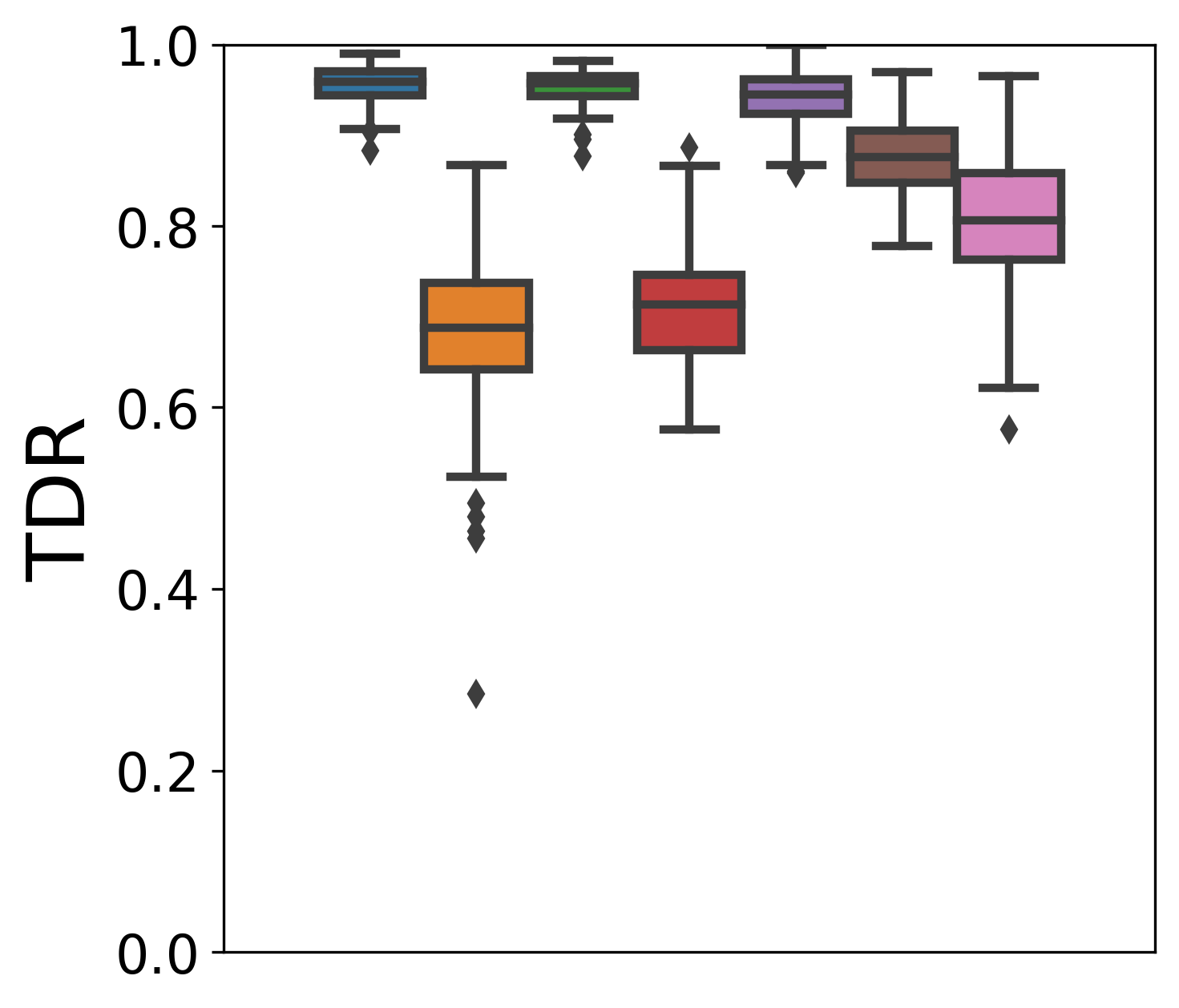} }
		\includegraphics[width=0.18\linewidth]{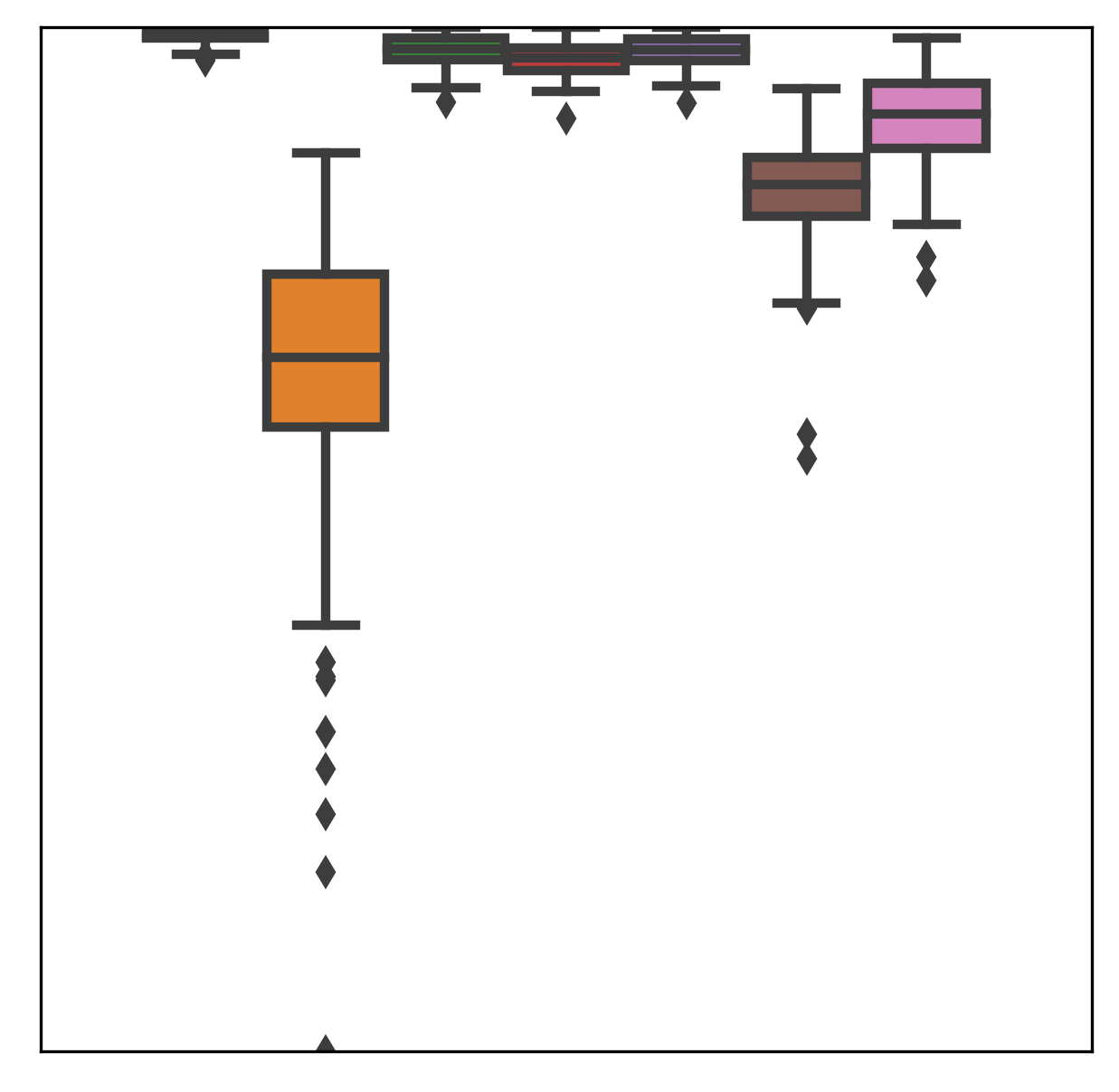} 
		\includegraphics[width=0.18\linewidth]{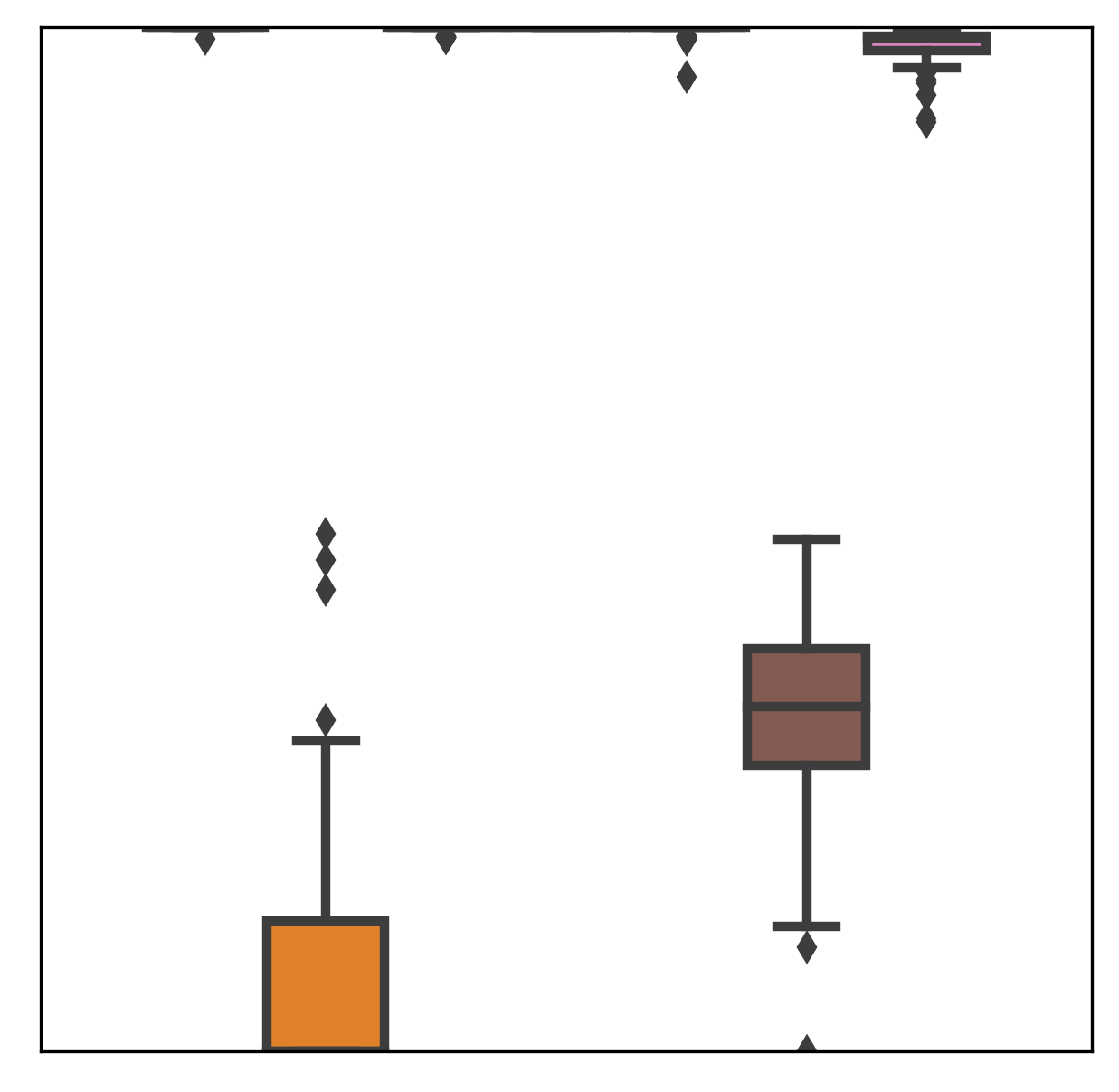} 
		\includegraphics[width=0.18\linewidth]{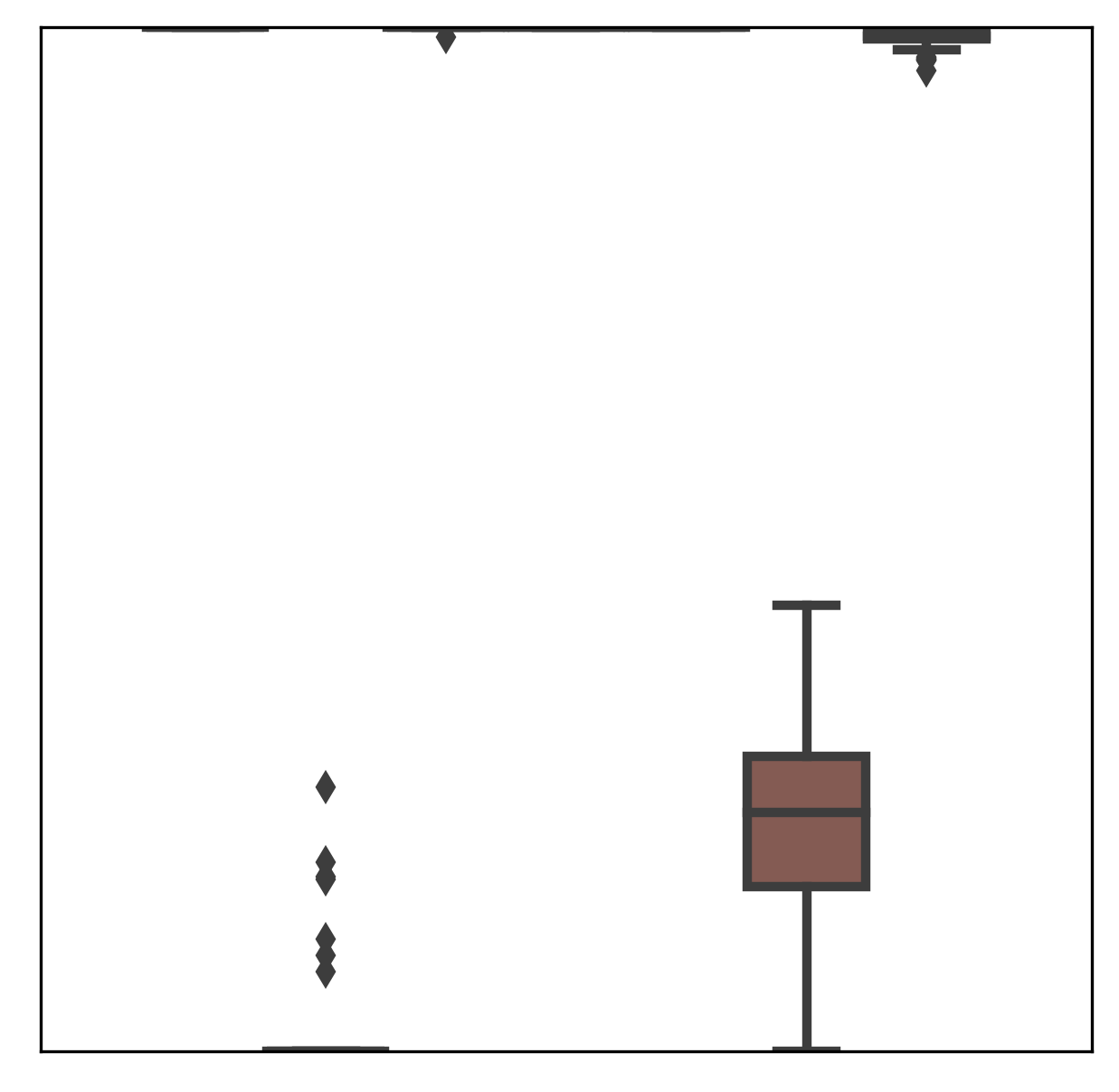} 
		\includegraphics[width=0.18\linewidth]{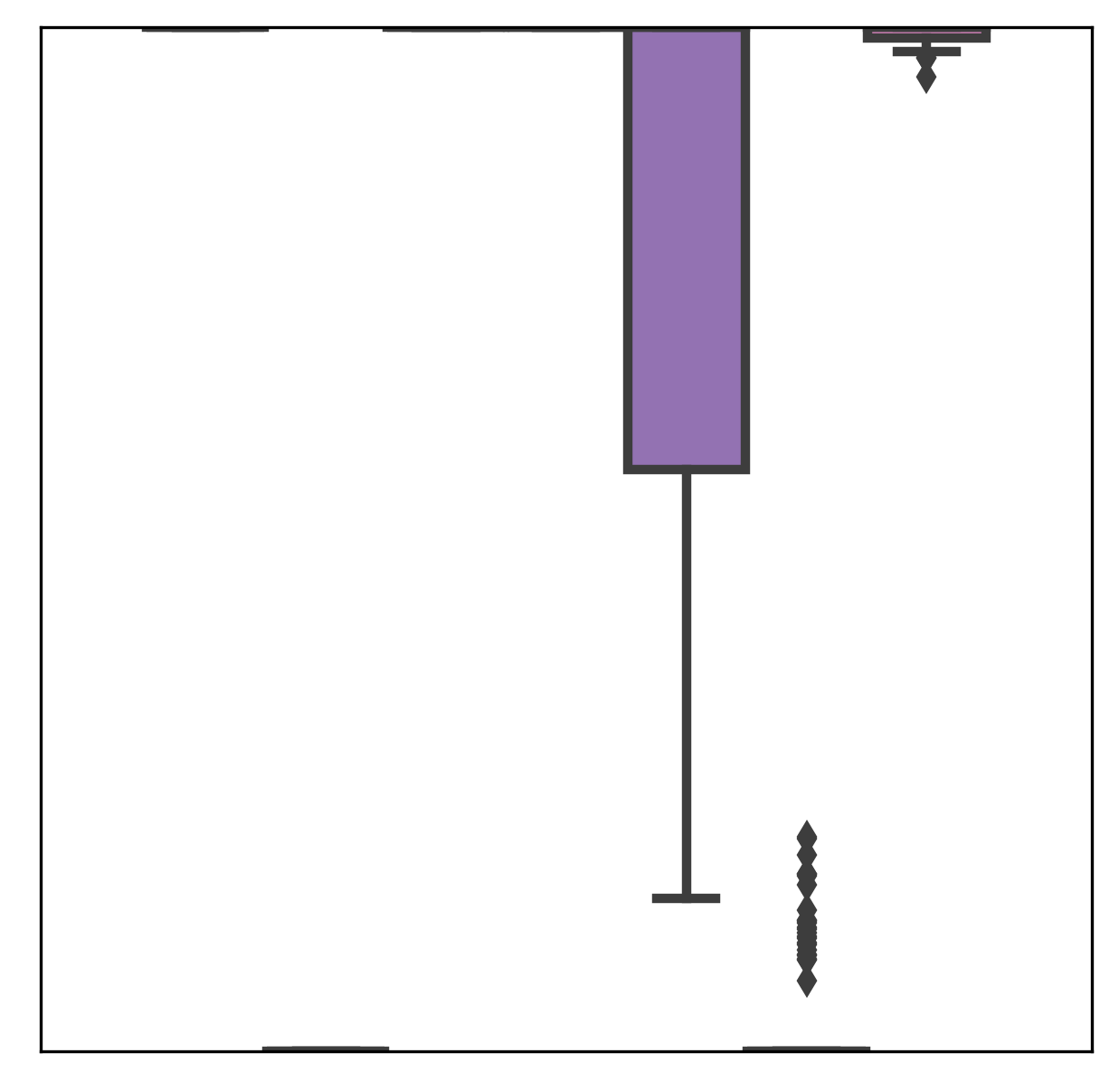} 
		\end{minipage}

	\caption{Gaussian setting. FDR (top) and TDR (bottom) as a function of $d$.  The dashed line indicates the target level. 
	\label{fig:gauss}} 
\end{figure} 

\paragraph{Non-gaussian setting}

We now consider a non-gaussian setting, where the first two coordinates of nulls and novelties are independent draws from $\text{Beta}(5, 5)$ and $\text{Beta}(1, 3)$, respectively, and the other coordinates are independent draws from $\text{Beta}(1, 1)$ for both nulls and novelties. Note that \texttt{AdaDetect parametric} is now based on a misspecified parametric model; see Section~\ref{sec:simumethod} for detail. Figure \ref{fig:beta} presents the results with the dimension $d$ varying. \texttt{AdaDetect parametric} is now clearly dominated by \texttt{AdaDetect RF}, especially in high dimensions. This shows that machine learning-based classification methods are more robust to model misspecification when combined with AdaDetect.

\begin{figure}
     	\begin{minipage}{0.85\linewidth}
		\raisebox{-0.5 mm}{ \includegraphics[width=0.215\linewidth]{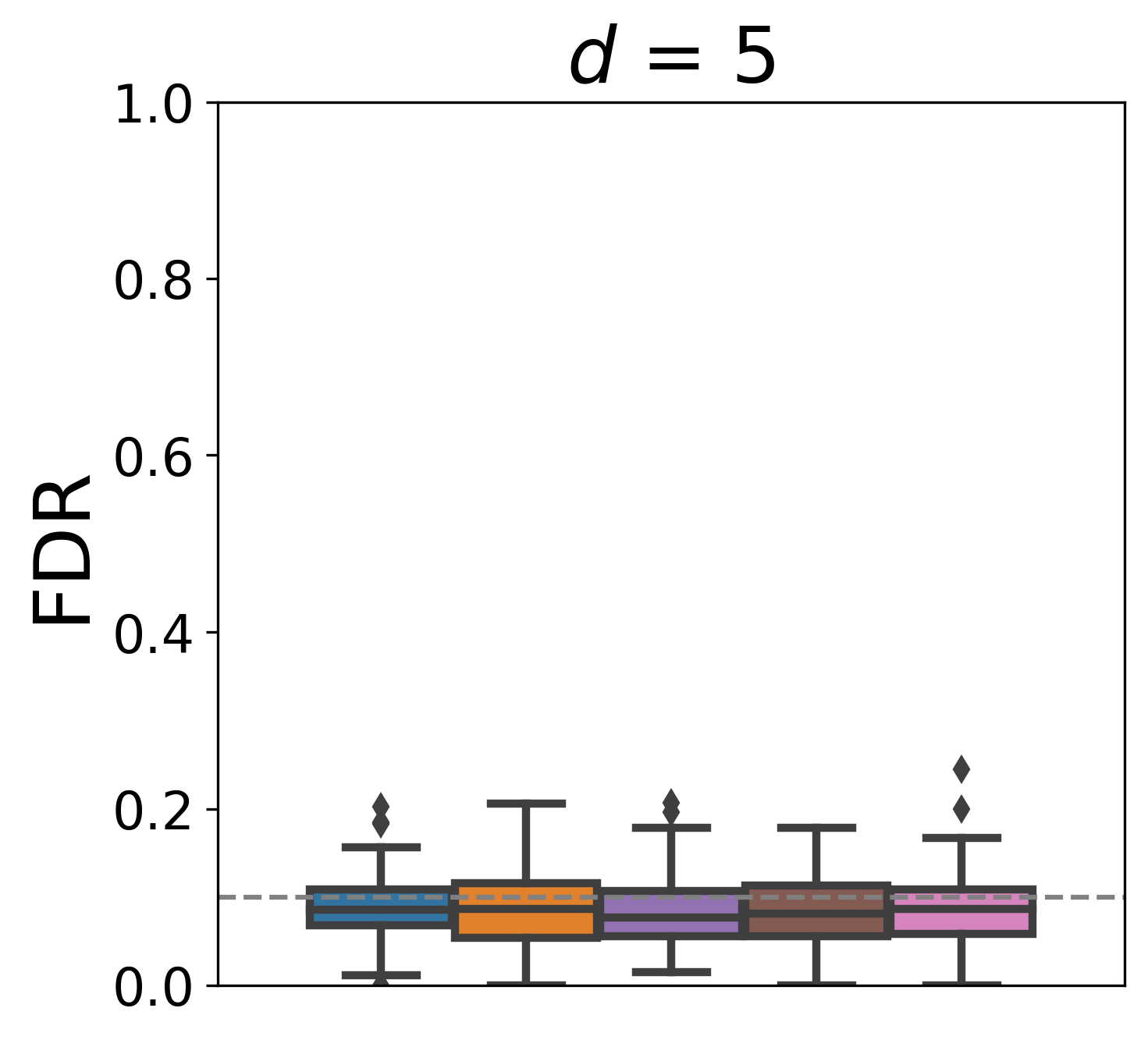} }
		\includegraphics[width=0.18\linewidth]{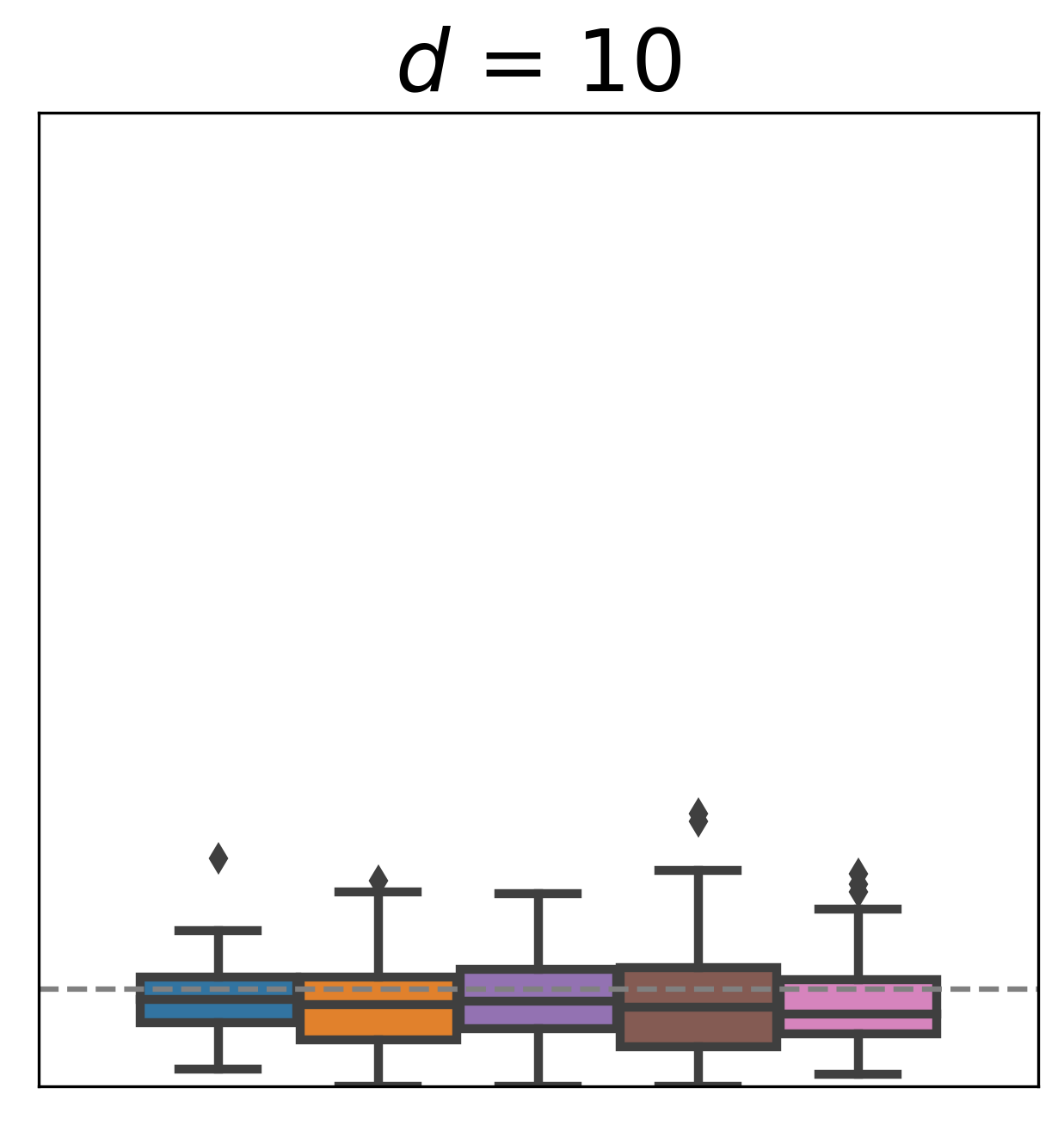} 
		\includegraphics[width=0.18\linewidth]{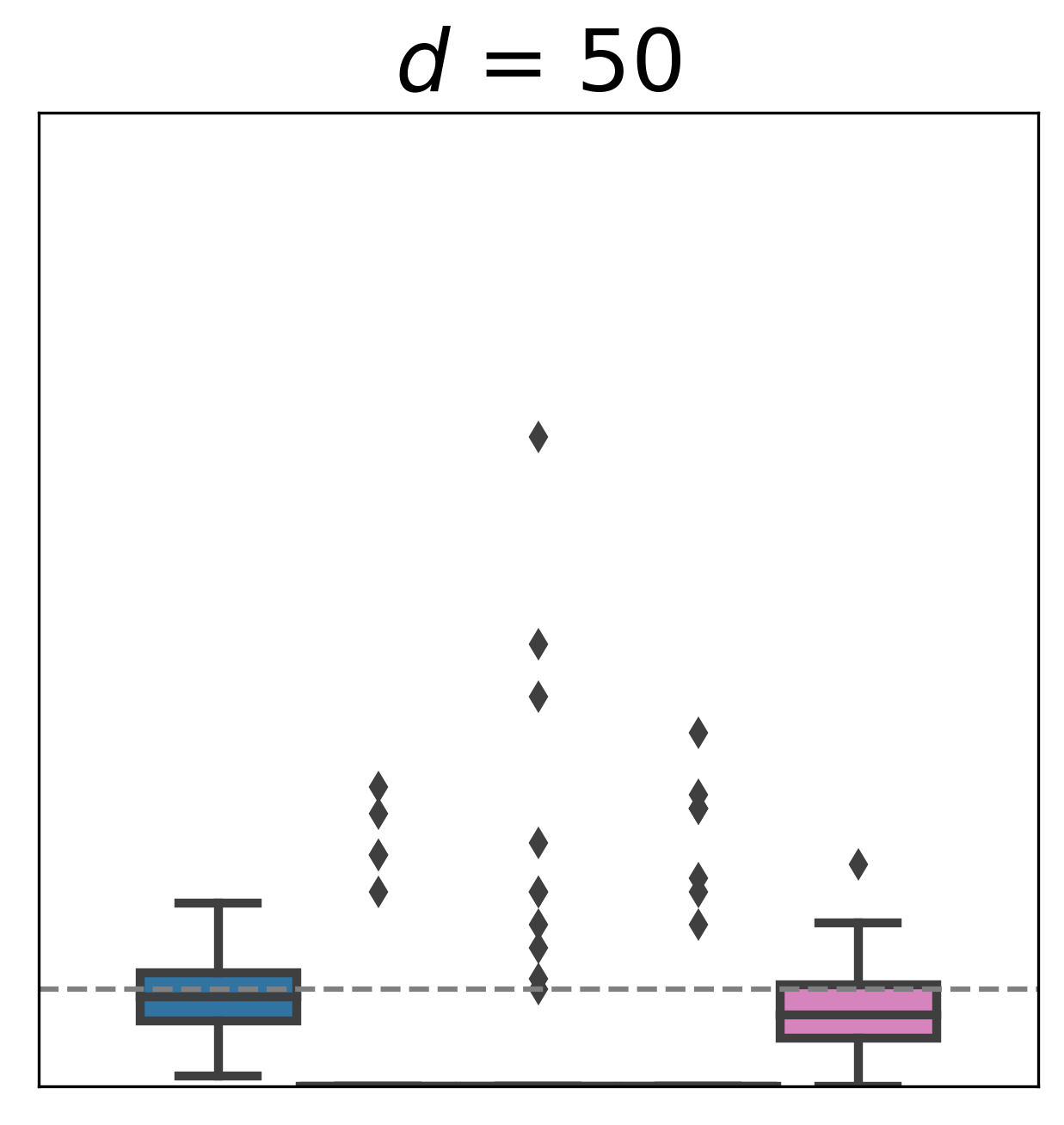} 
		\includegraphics[width=0.18\linewidth]{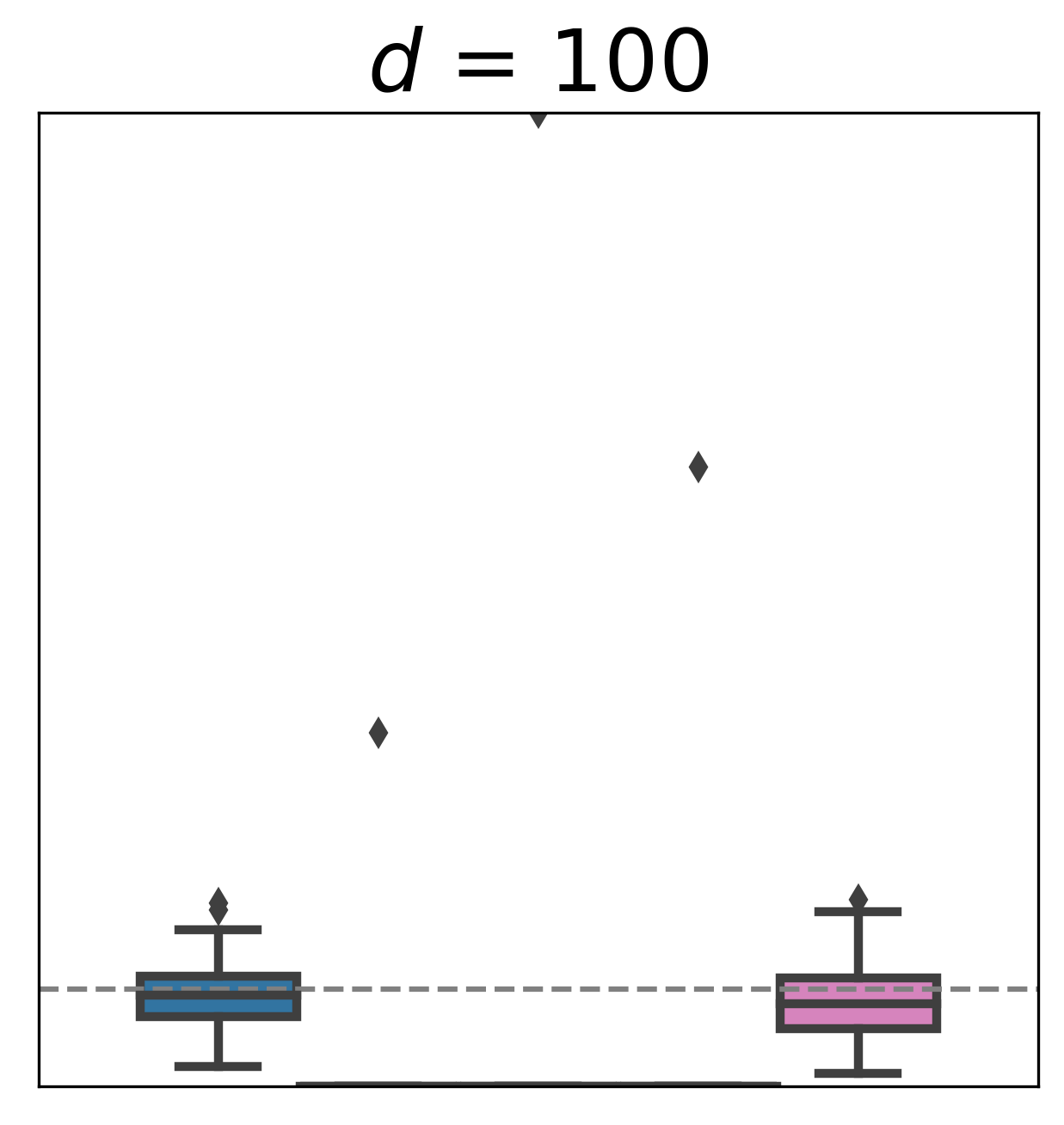} 
		\includegraphics[width=0.18\linewidth]{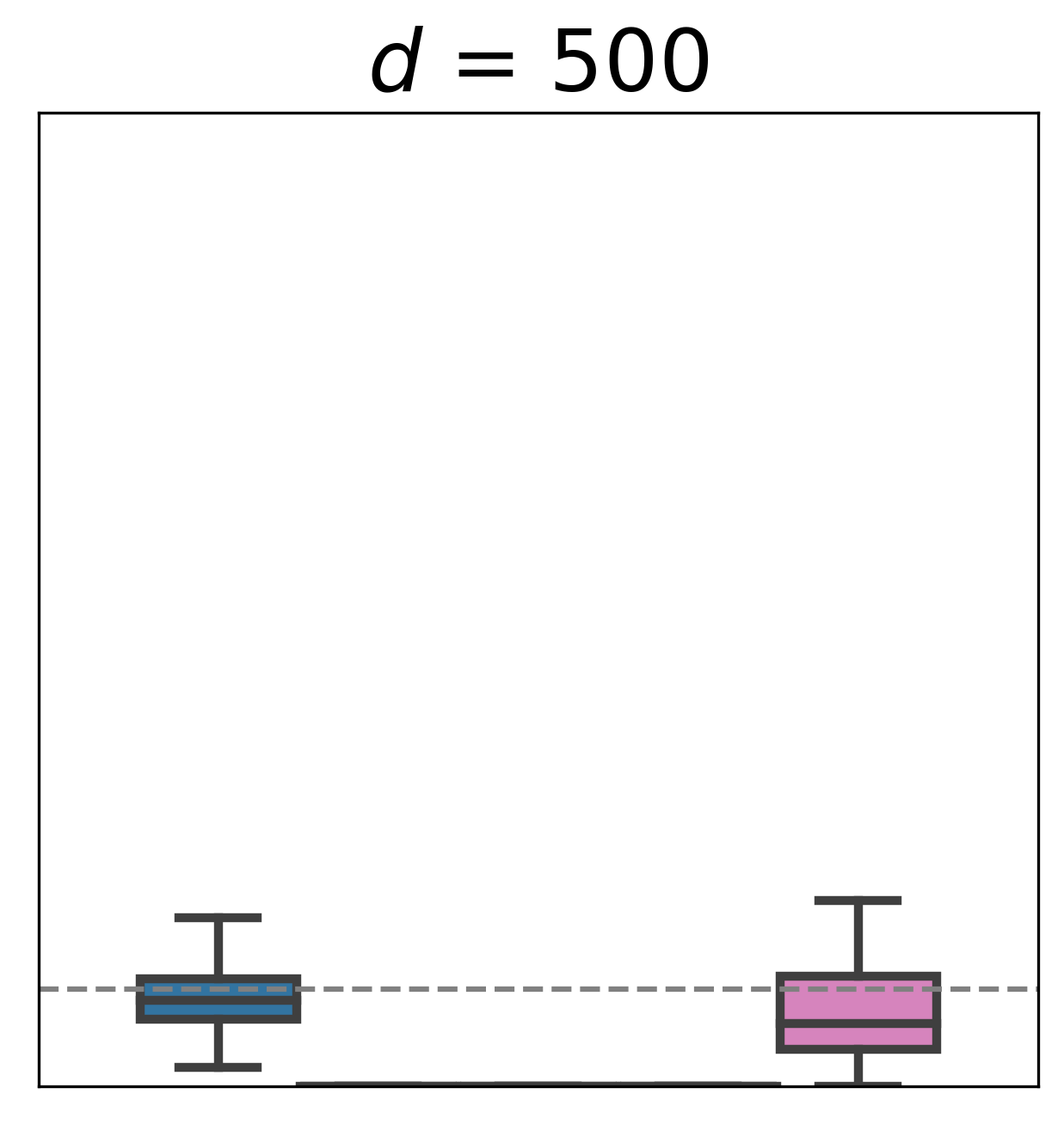} 
		\end{minipage}%
		\begin{minipage}[c]{0.15\linewidth}
		\includegraphics[width=\linewidth]{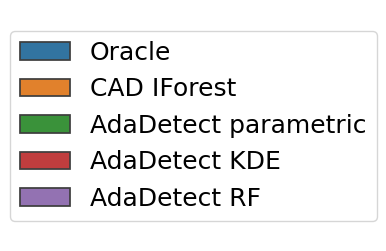} 
		\end{minipage}
		\raggedright
		\begin{minipage}{0.85\linewidth}
		\raisebox{-0.5 mm}{ \includegraphics[width=0.215\linewidth]{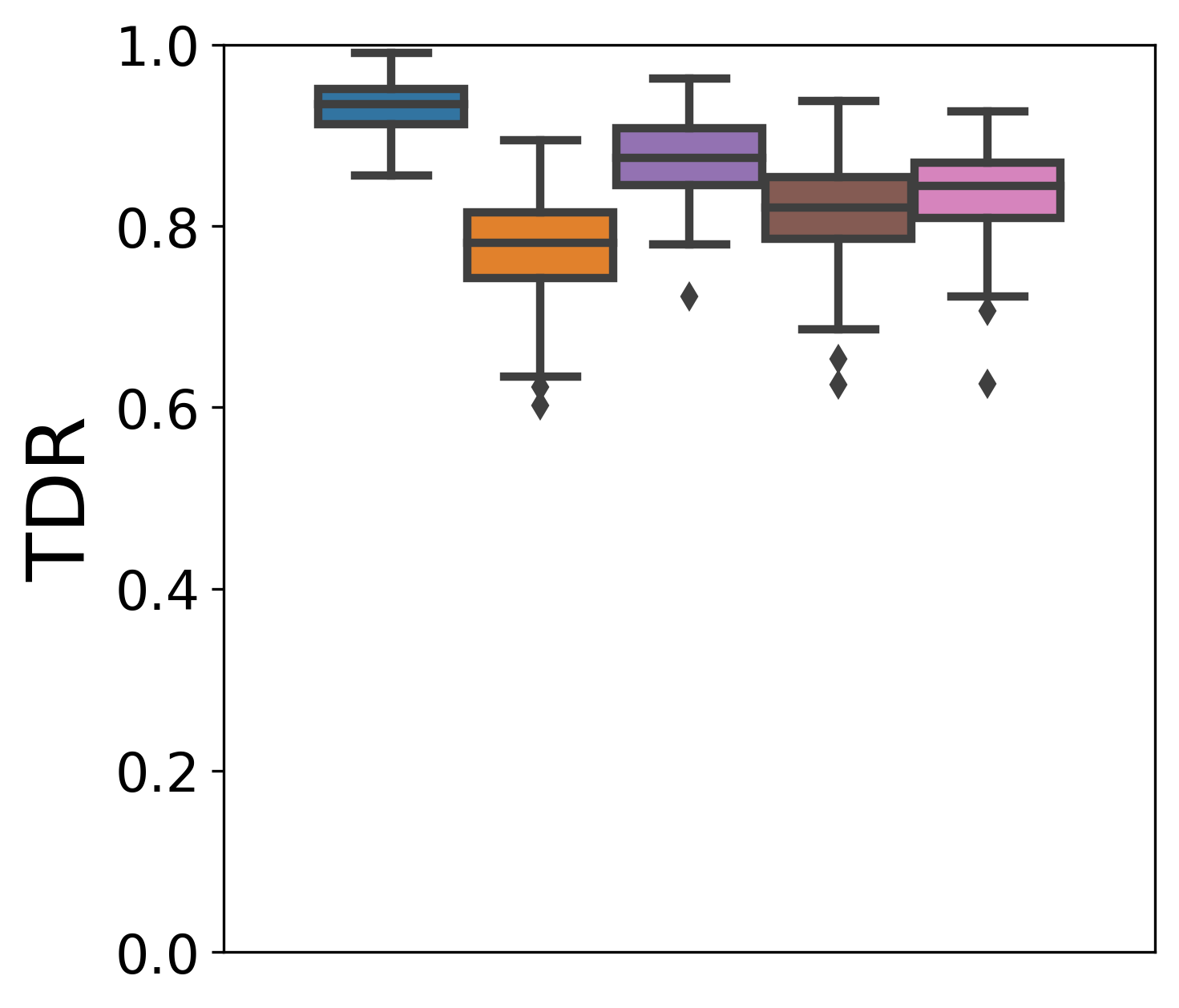} }
		\includegraphics[width=0.18\linewidth]{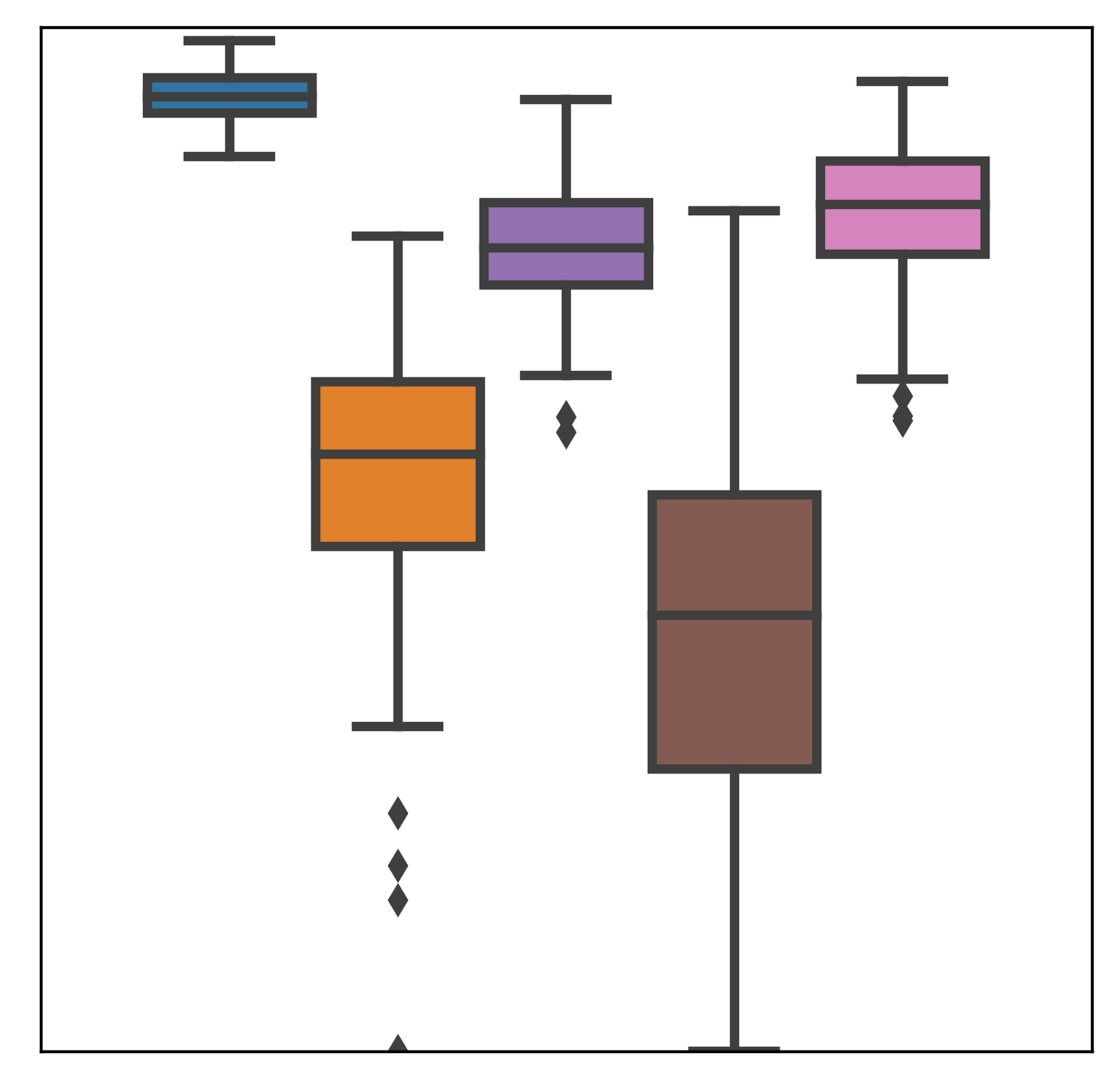} 
		\includegraphics[width=0.18\linewidth]{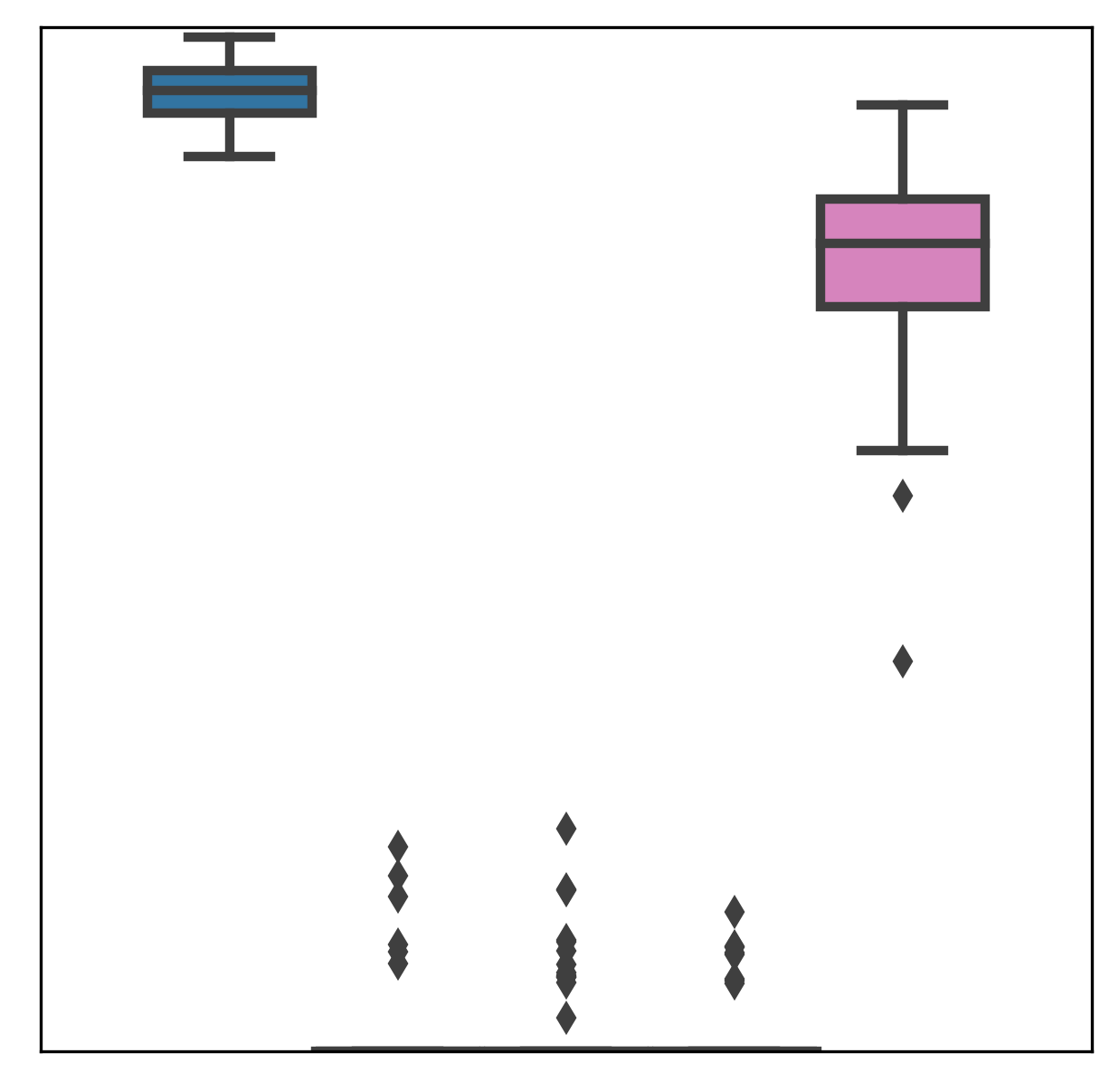} 
		\includegraphics[width=0.18\linewidth]{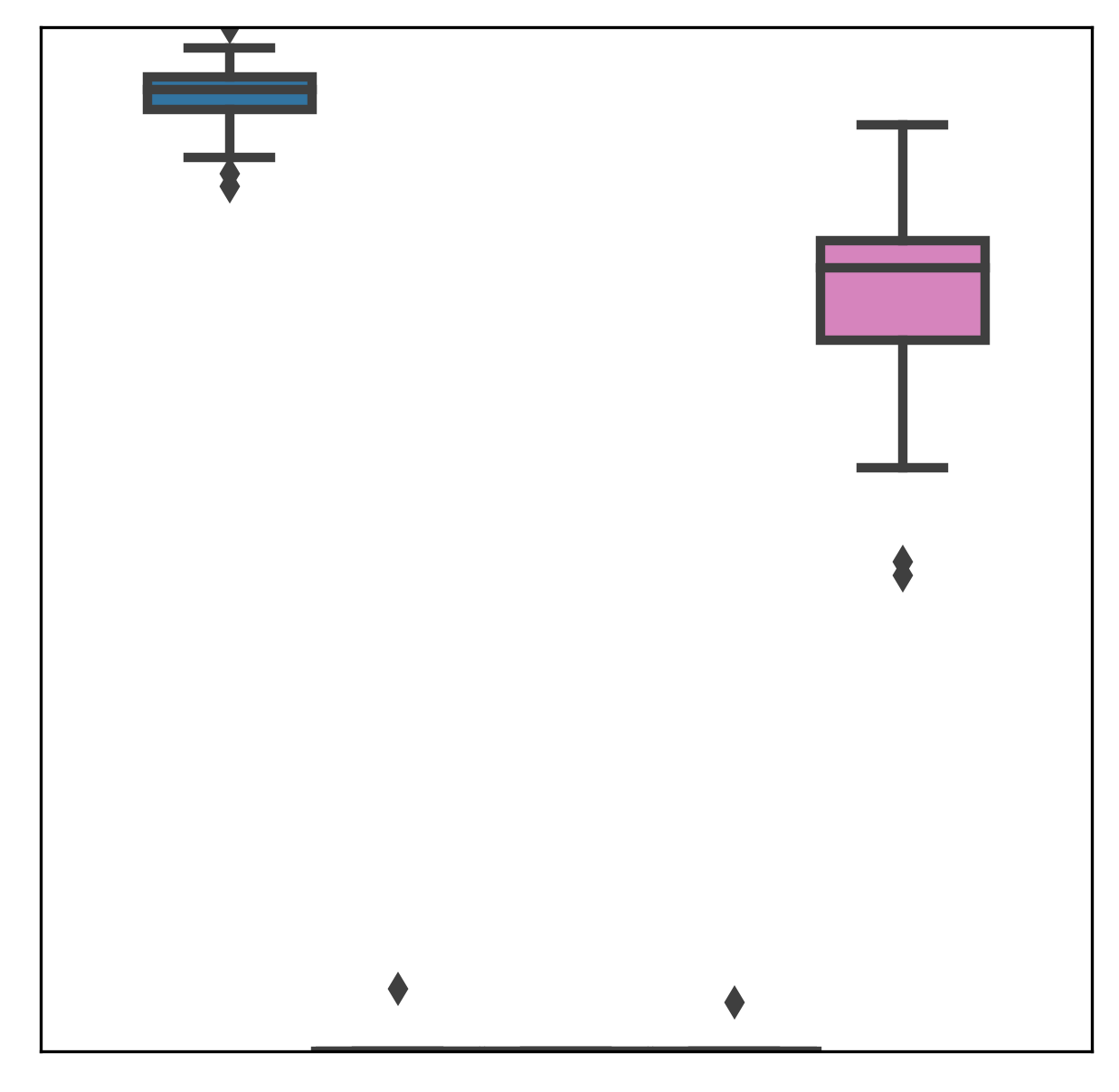} 
		\includegraphics[width=0.18\linewidth]{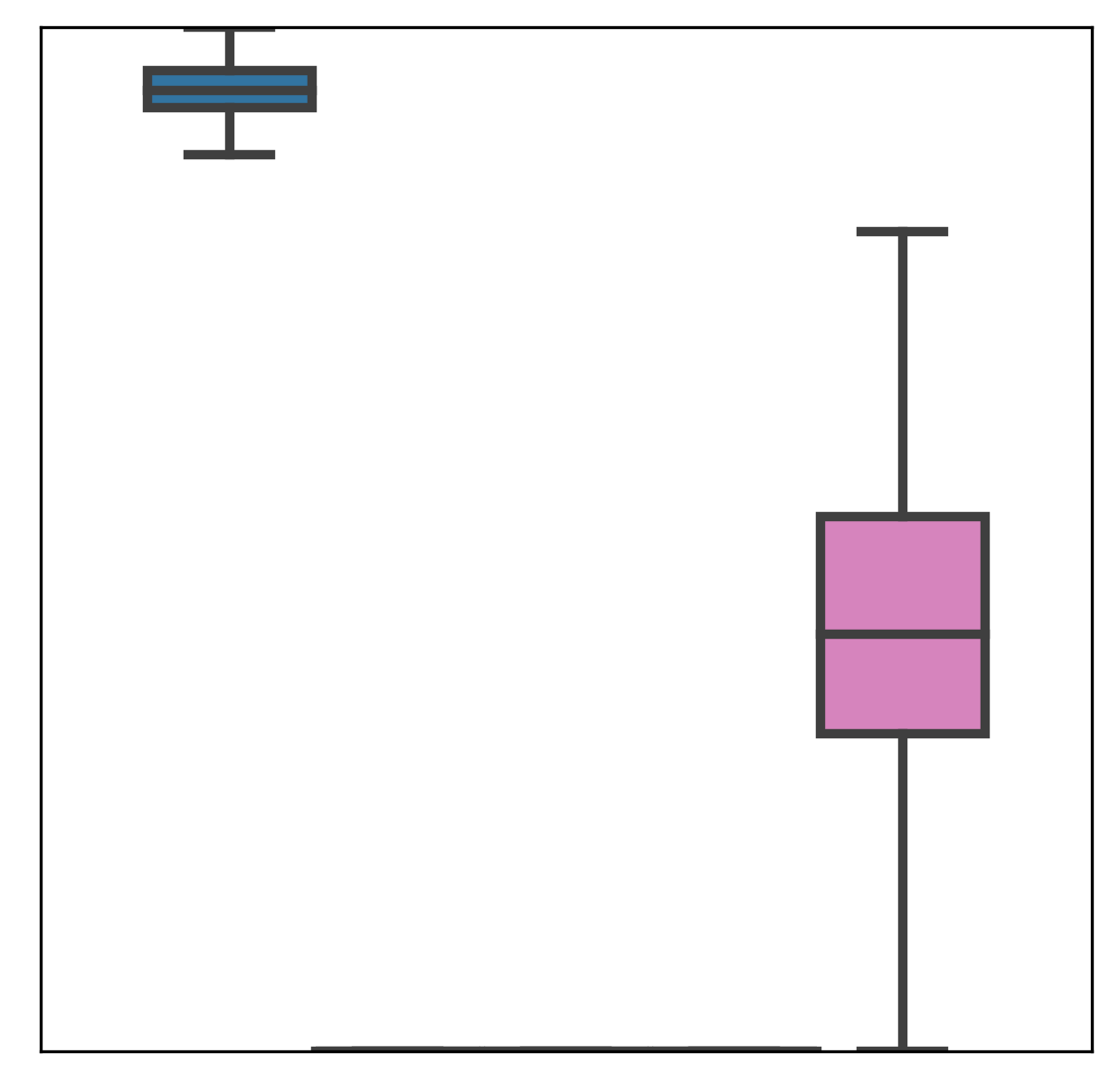} 
		\end{minipage}
		
	\caption{Non-gaussian (beta) setting. FDR (top) and TDR (bottom) as a function of $d$. The dashed line indicates the target level.} \label{fig:beta}
\end{figure}

\paragraph{Additional experiments with varying $k$ and $\ell$}%\label{sec:choosesamplesize}

To examine the effect of $k$ and $\ell$, we study the performance of \texttt{AdaDetect RF} with \texttt{AdaDetect oracle} in the simple setting where $P_0 = \mathcal{N}(0, I_d)$ and $P_1=\mathcal{N}(\mu,I_d)$ with $d=4$ and $\mu = (\sqrt{2}, \dots, \sqrt{2})$.

Figure \ref{fig:lvar} presents the results for $k=m=1000$ and varying $\ell$. Interestingly, the TDR is not monotone in $\ell$. This is because a small $\ell$ makes p-value inaccurate while a large $\ell$ dilutes the signal in the mixed sample $(X_{k+1},\dots,X_{m+n})$ and degrades the quality of classification-based score function.

For $\ell=m=1000$ and varying values of $k$, a similar pattern is observed in Figure \ref{fig:kvar}. This suggests that an imbalanced classification problem may arise due to the presence of extreme values of $k$.

% Motivated by these observations, we recommend setting $\ell=m$ and, under this constraint, $k=O(m)$ large. 

\begin{figure}[h!]
		\begin{minipage}{0.425\linewidth}
		\includegraphics[width=0.49\linewidth]{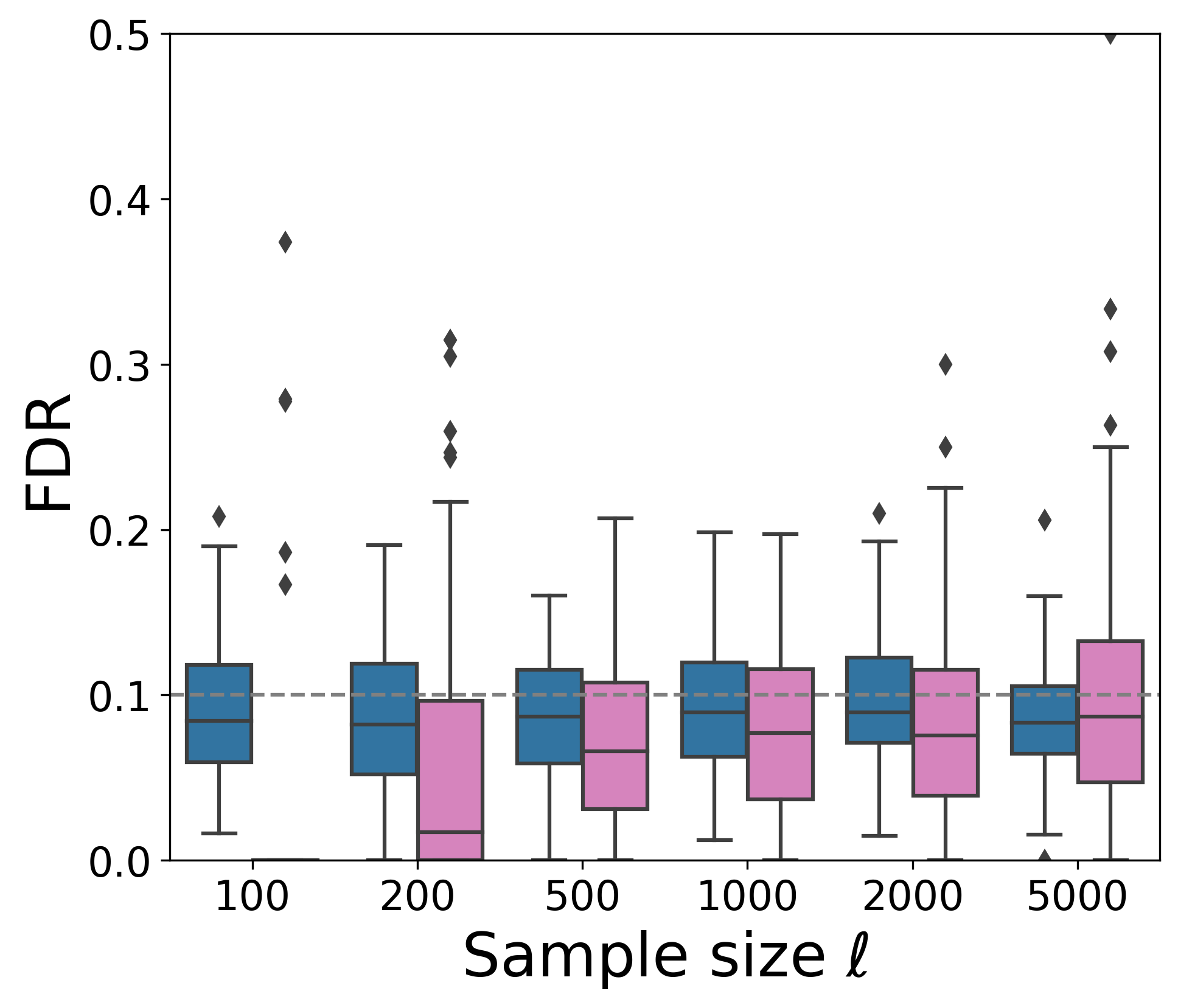} 
		\includegraphics[width=0.49\linewidth]{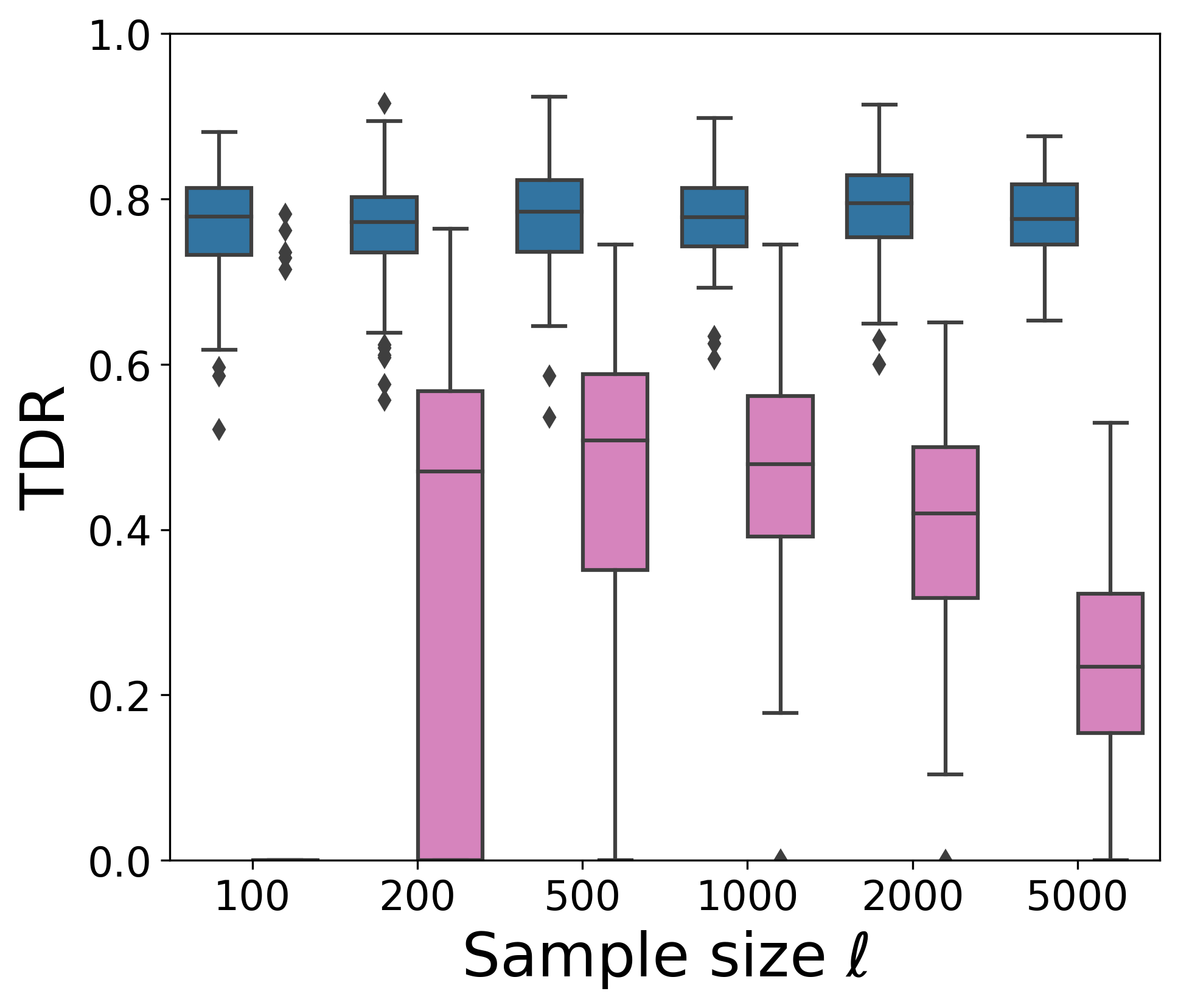} 
		\subcaption{}
		\label{fig:lvar}
		\end{minipage}%
		\begin{minipage}{0.425\linewidth}
		\includegraphics[width=0.49\linewidth]{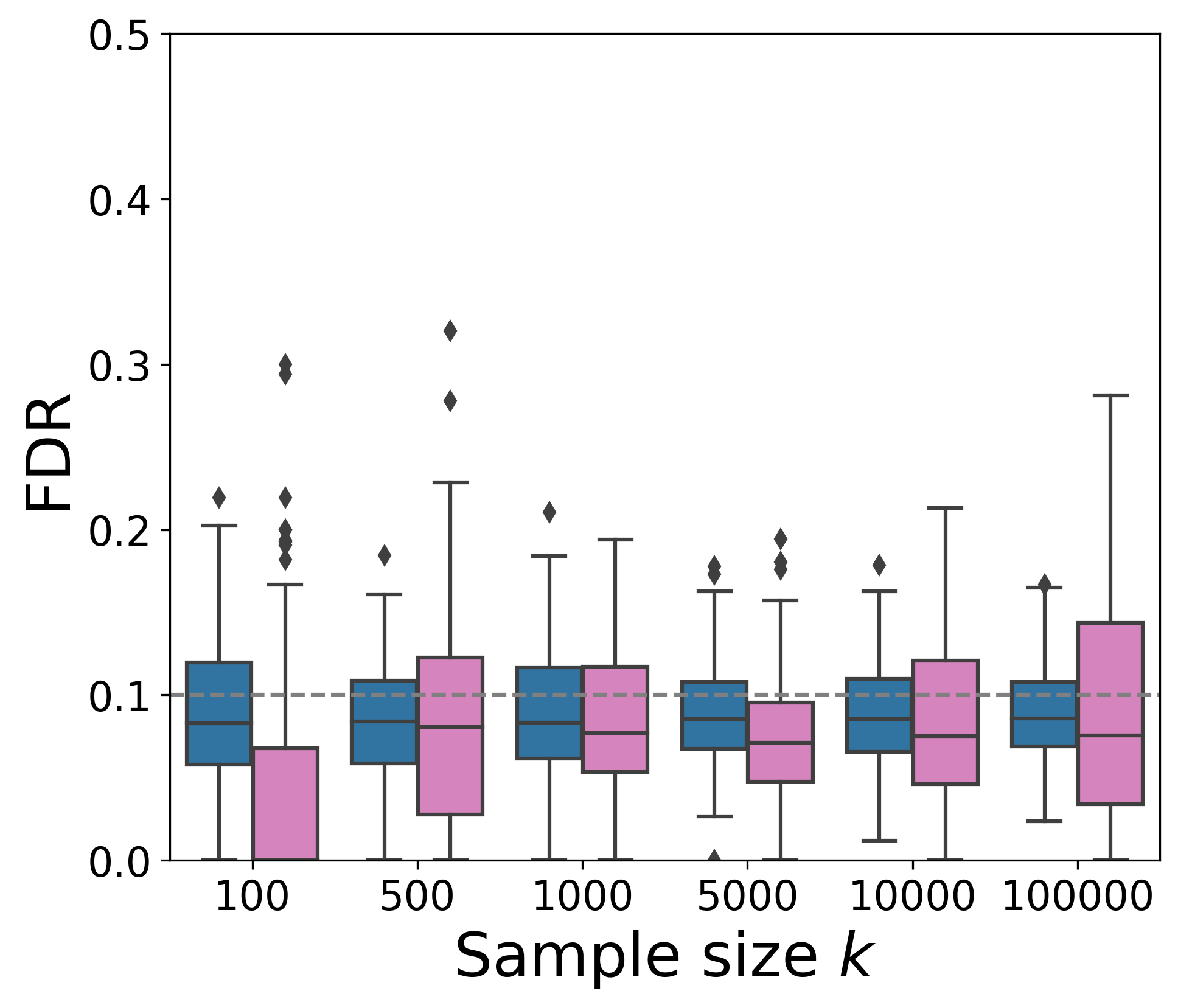} 
		\includegraphics[width=0.49\linewidth]{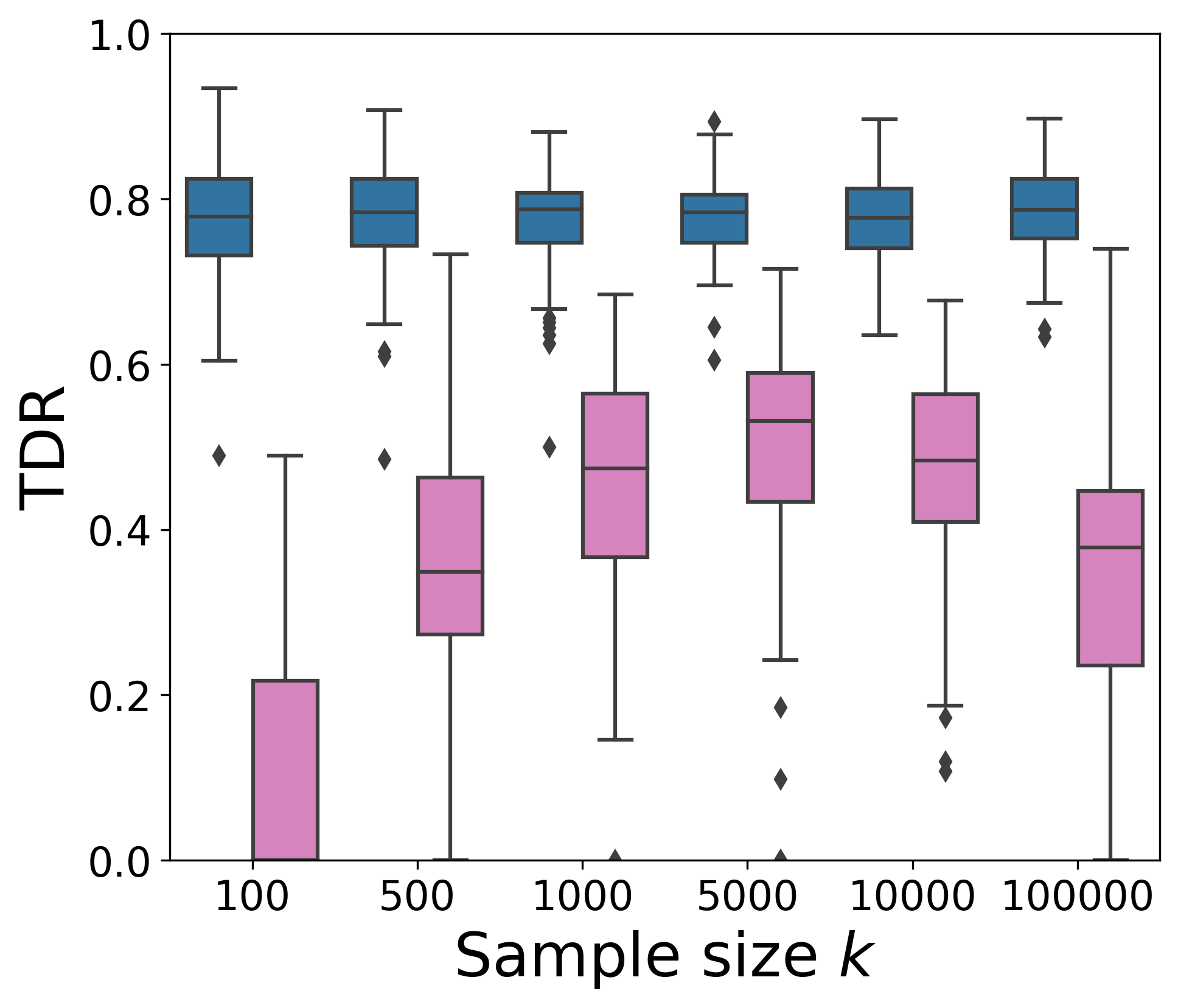} 
		\subcaption{}  
		\label{fig:kvar}
		\end{minipage}%
		\begin{minipage}[c]{0.15\linewidth}
		\includegraphics[width=\linewidth]{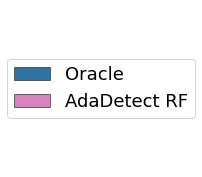} 
		\end{minipage}
	\caption{FDR and TDR as a function of $\ell$ (panel (a)) and $k$ (panel (b)). The dashed line indicates the nominal level. }\end{figure}

\newpage
\section{Additional experimental results for the astronomy application}\label{sec:appliplus}

In this section, we provide more results for more settings. Recall that, for each experiment, we sample $n$ nonvariable stars from as the NTS along with $m_1$ variable stars and $m_0 = m - m_1$ additional nonvariable stars as the test sample. For all experiments, we set $m = 100$.

First, we show how TDR varies with sparsity measured by $m_1$. Figure~\ref{app2} presents the TDR for $m_1 \in \{5, 15, 40, 90\}$ with different target FDR levels shown in the title of each panel. When $\alpha$ is low, the left two panels show that AdaDetect RF substantially outperforms the other methods. When the novelties are sparse, the panels in the middle column shows that AdaDetect RF still performs well when $\alpha = 0.2$ but underperforms when $\alpha = 0.5$, though $\alpha = 0.5$ is arguably less relevant in practice. When the novelties are dense, the right panels show that AdaDetect with adaptive scores outperform AdaDetect with non-adaptive scores when $\alpha = 0.05$ and they are nearly indistinguishable when $\alpha = 0.5$.

\begin{figure}[h!]
\centerline{
\includegraphics[scale=0.33]{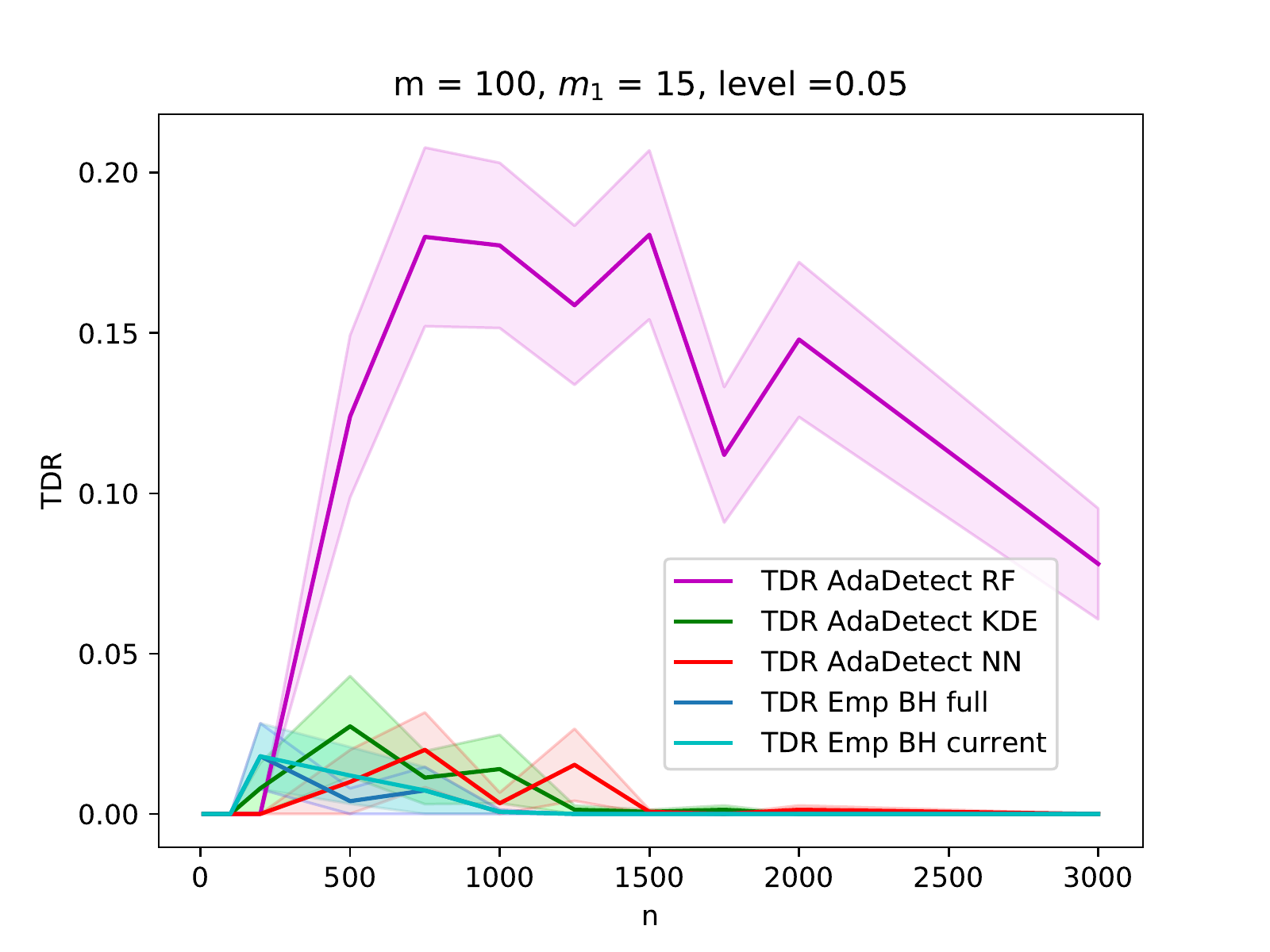}
\includegraphics[scale=0.33]{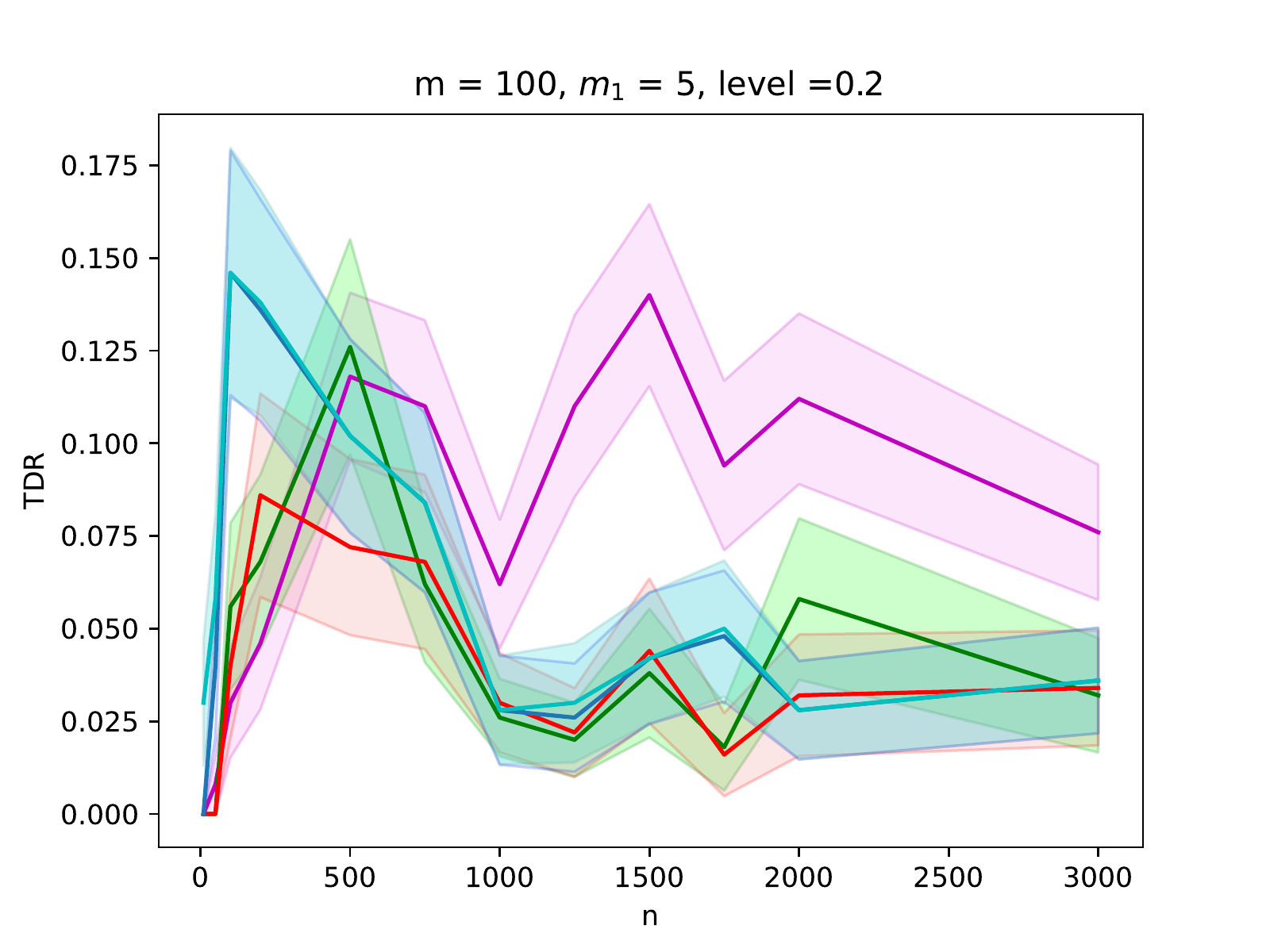}
\includegraphics[scale=0.33]{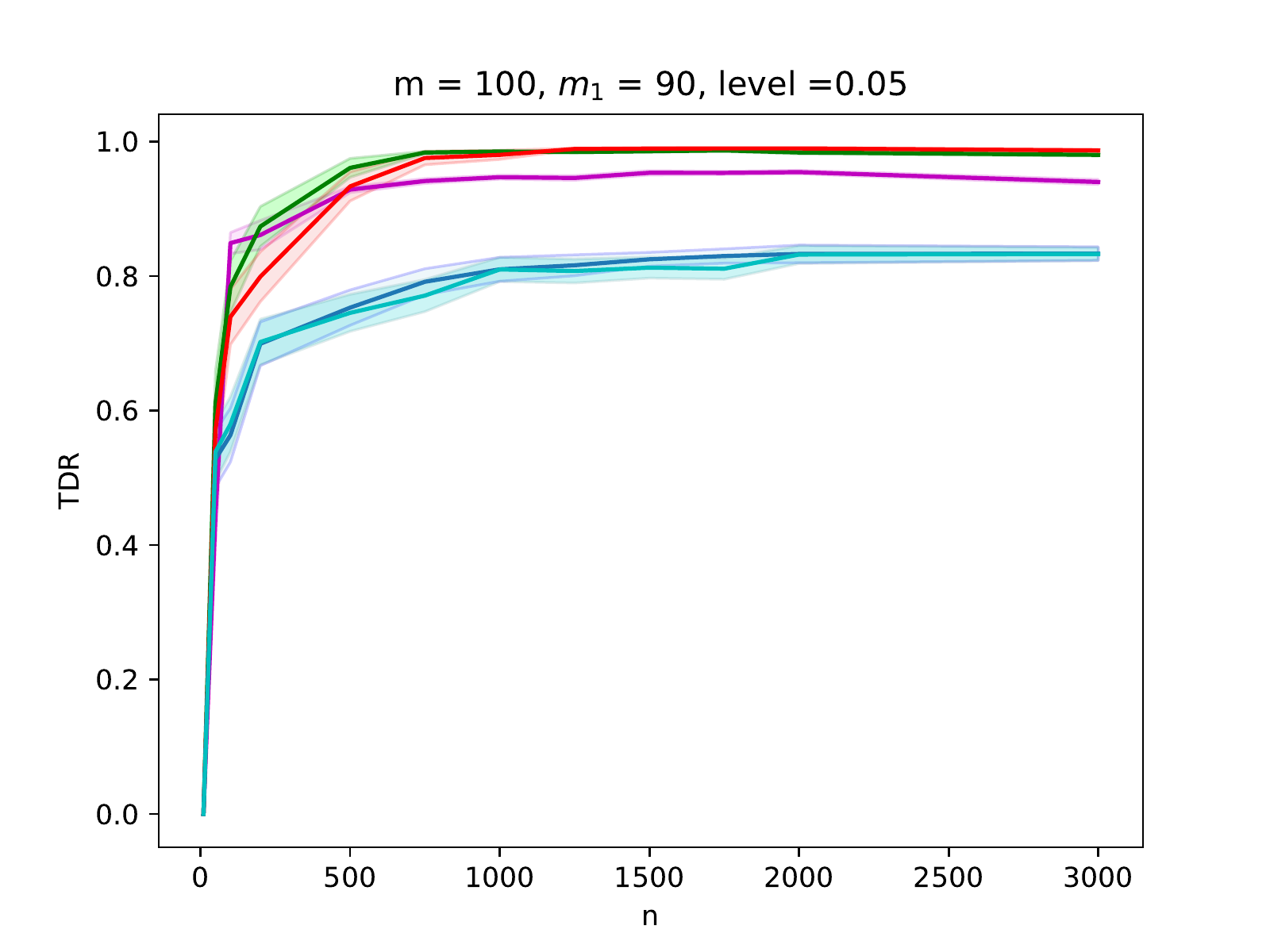}
}
\centerline{
\includegraphics[scale=0.33]{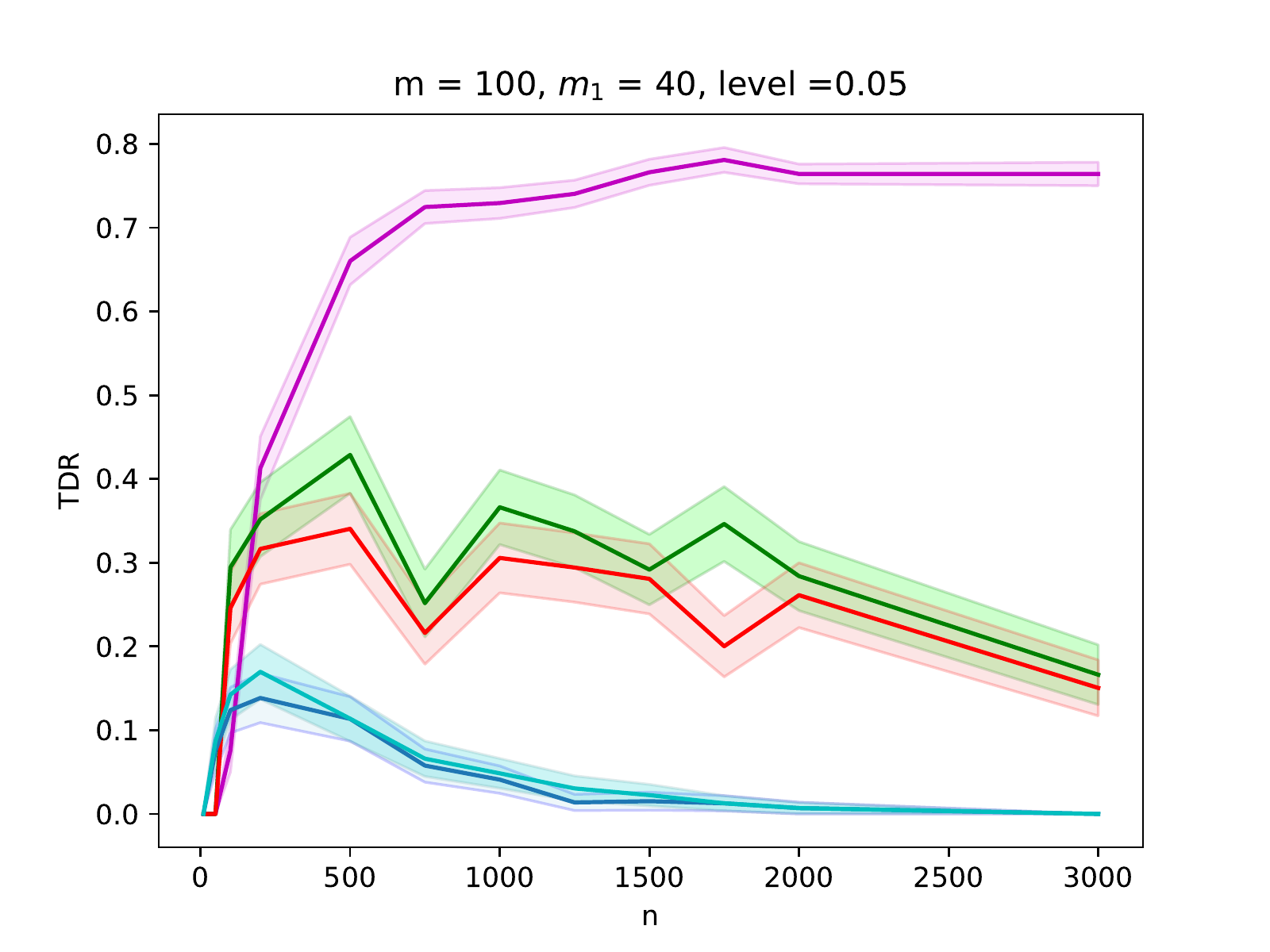}
\includegraphics[scale=0.33]{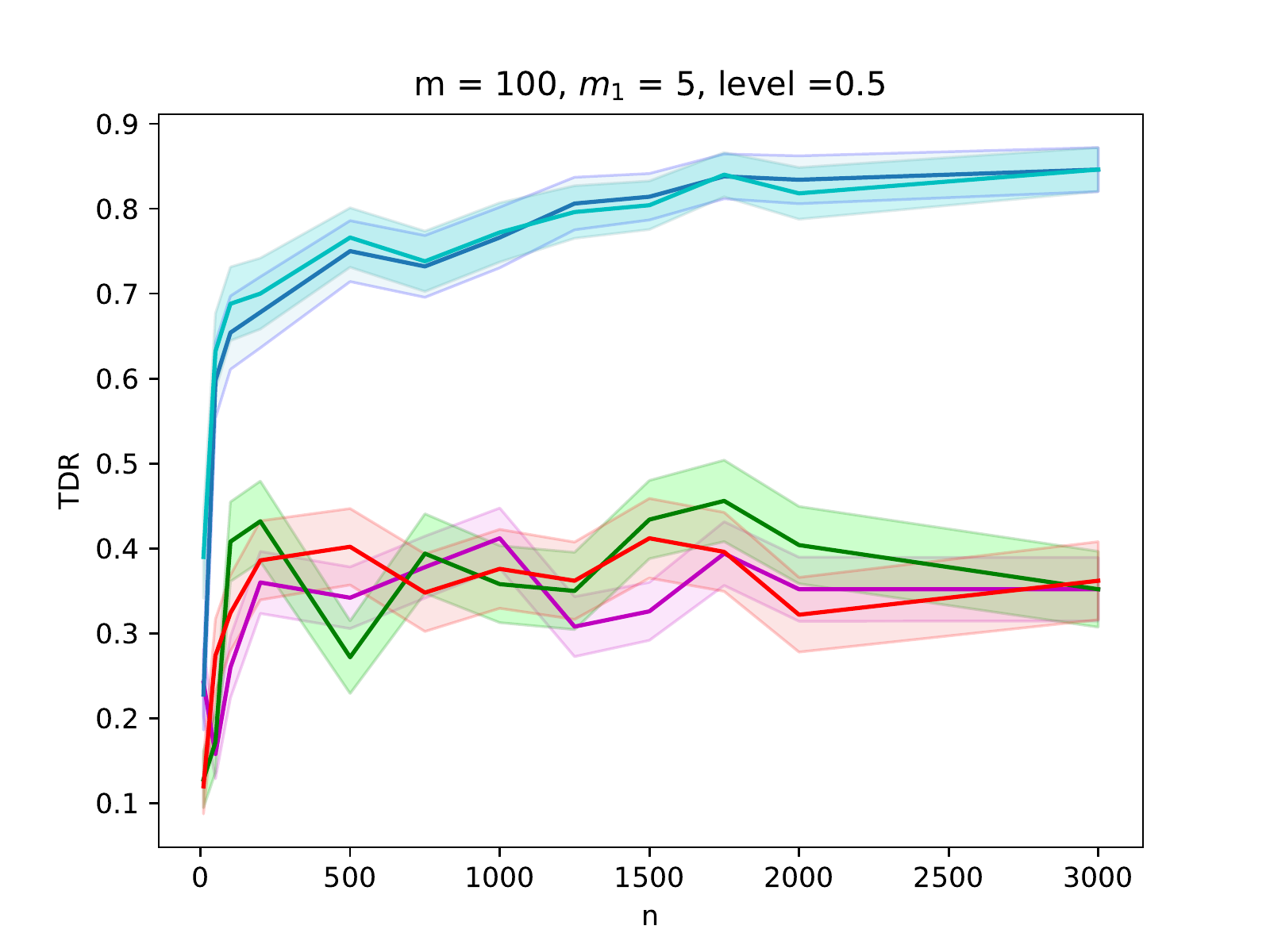}
\includegraphics[scale=0.33]{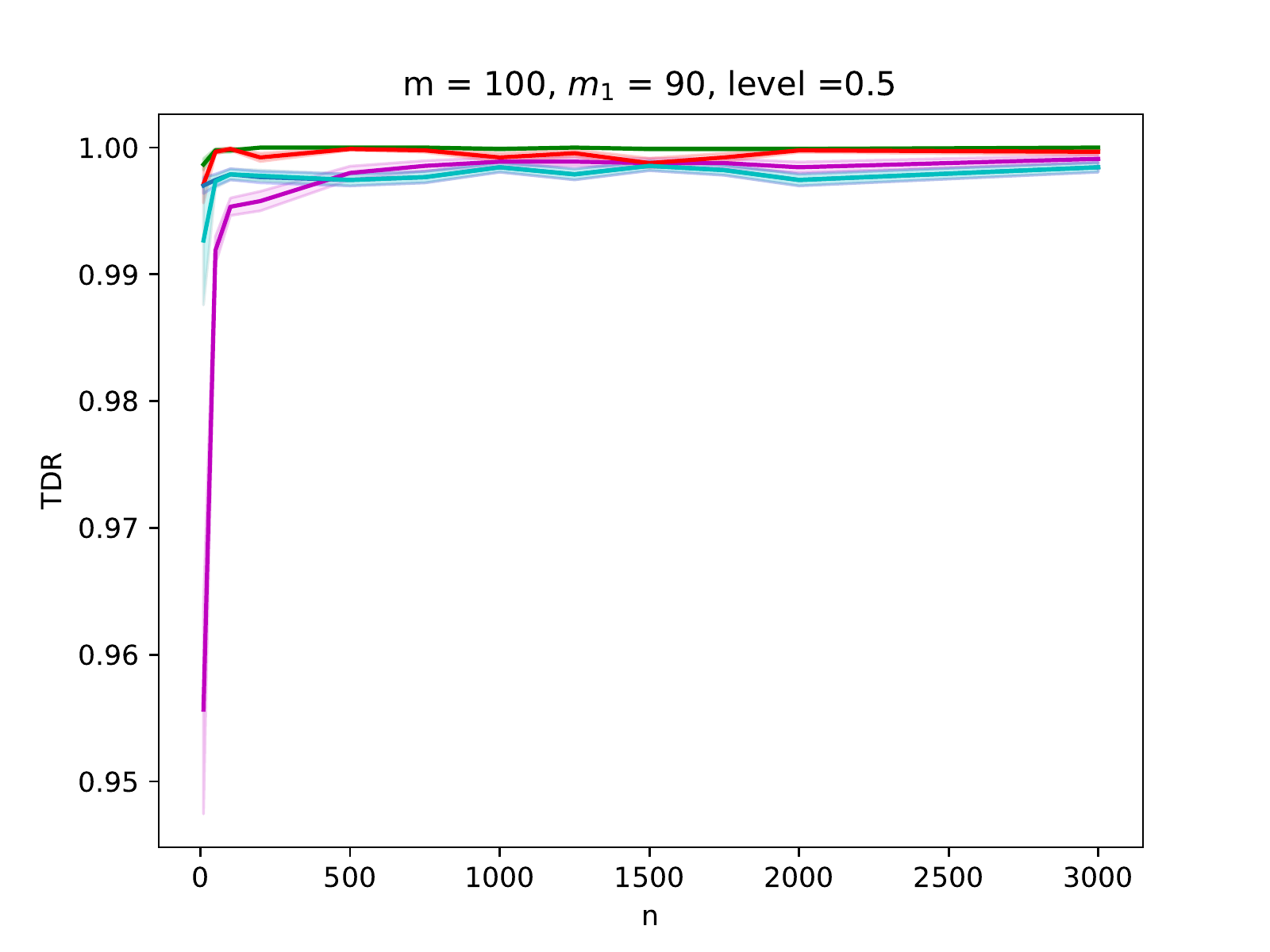}
}
\caption{ TDR for different sparsity  regimes: see $m_1$ and $\alpha$ in the titles. In all plots $m=100$.}
\label{app2}
\end{figure}

Next, we compare the FDR and TDR of AdaDetect KDE and Empirical BH with both $m_1$ and $n$ varying in Figure~\ref{app3}. For the purpose of visualization, we only show the point estimates of FDR and TDR without uncertainty measures. The left column shows that both methods control the FDR, though AdaDetect KDE is generally less conservative. The right column shows that AdaDetect KDE almost always has a higher power and it starts to reject at a higher sparsity level.
An interesting observation is that the power of AdaDetect KDE is decreasing in $n$ when the novelties are sparse (e.g., $m_1\approx 40$). This can be explained by the fact that the `contamination' of the NTS increases with increasing $n$, leading to a more noisy estimation of the test statistic.
% For Empirical BH (fixed test statistics), our interpretation is that the NTS may be contaminated by a few variables stars or stars with statistical properties that are different from  the vast majority of the NTS (i.e., with color vectors having relatively  higher $\ell_2$ norm).  When $n$ is small, the probability of having picked such contaminating stars in the NTS is also small, so that any variable present in the test sample will to tend have a comparatively higher norm, with small empirical $p$-value and will thus be likely detected. In contrast, the presence of vectors with unusually large $\ell_2$ norm in the NTS will make the $\ell_2$ norms of variable stars of the test sample more `less surprising', with larger $p$-values. This is a temptative explanation  and we leave an in-depth analysis of this effect for a further study.

\begin{figure}[h!]
\centerline{
\includegraphics[scale=0.5]{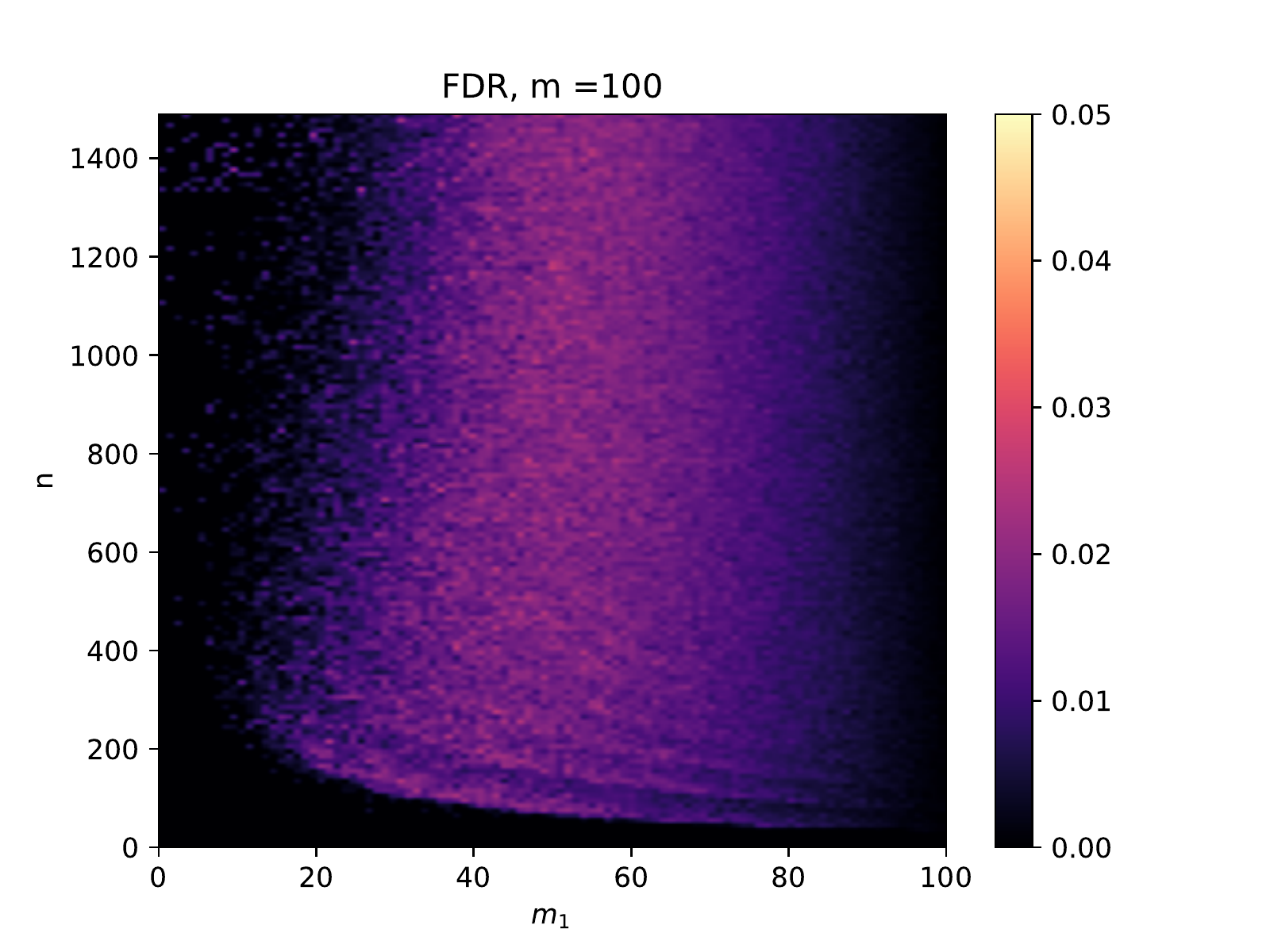}
\includegraphics[scale=0.5]{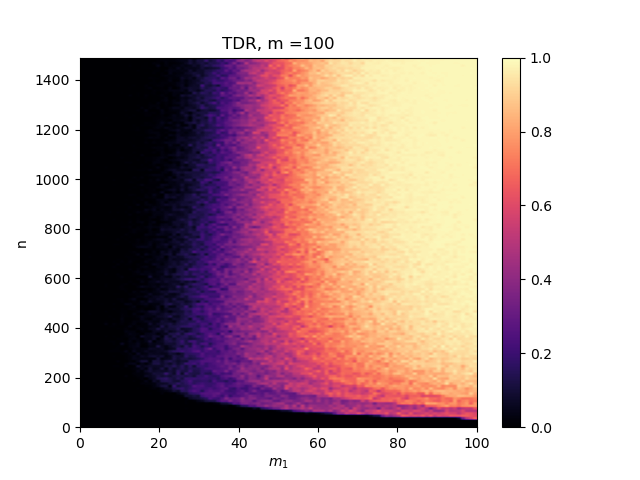}
}
\centerline{
\includegraphics[scale=0.5]{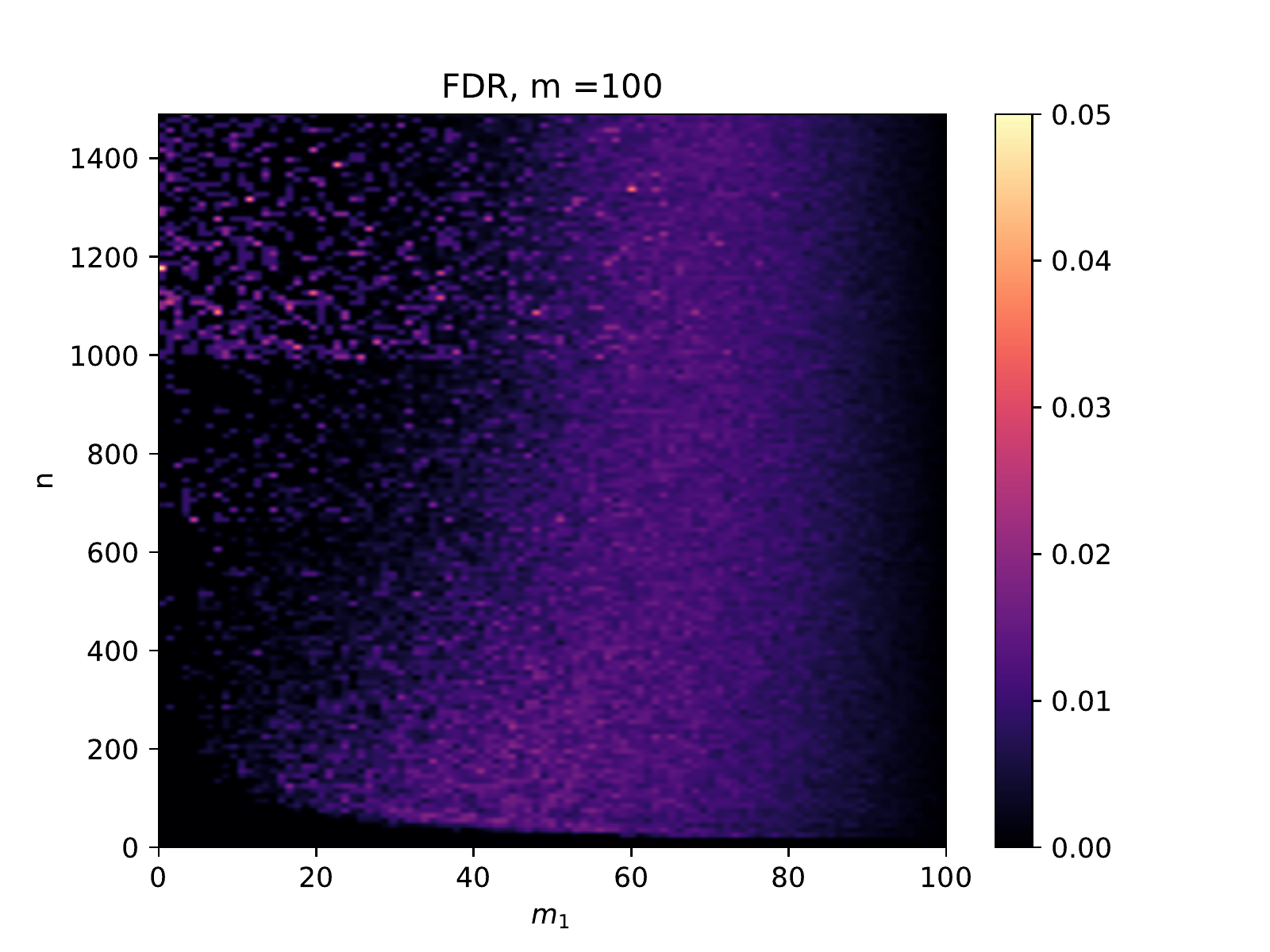}
\includegraphics[scale=0.5]{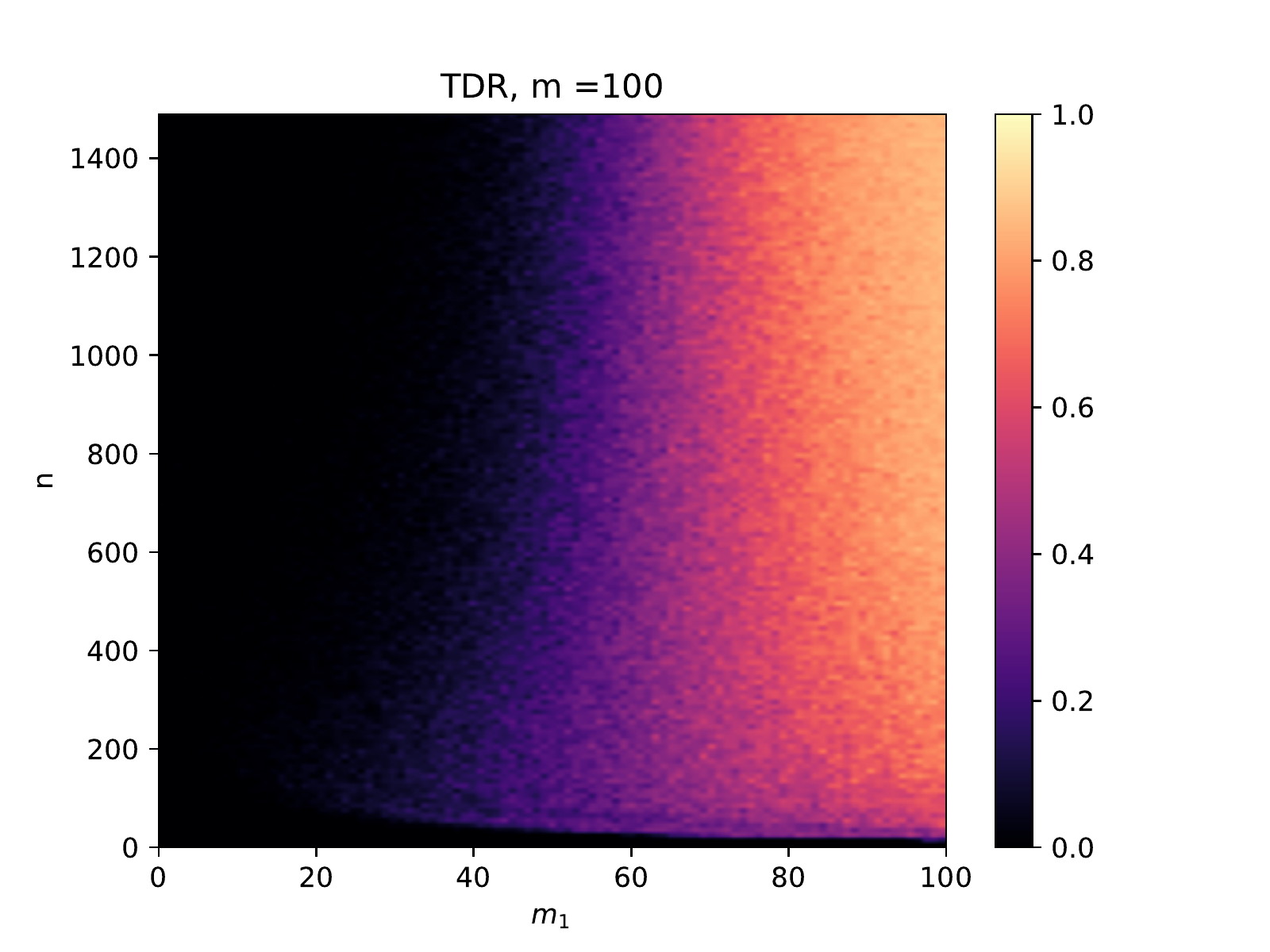}
}
\caption{ FDR (left column) and TDR (right column) for AdaDetect KDE (top row) and Empirical BH (bottom row). In all plots $m=100$ and $\alpha=0.05$.}
\label{app3}
\end{figure}

\end{document}